\author{Bader Abu Radi}{School of Computer Science and Engineering, The Hebrew University, Jerusalem, Israel\footnote{This work was completed after the submission of the PhD thesis of this author.}}{}{https://orcid.org/0000-0001-8138-9406}{}%
\author{Rüdiger Ehlers}{Clausthal University of Technology, Clausthal-Zellerfeld, Germany}{}{https://orcid.org/0000-0002-8315-1431}{Funded by Volkswagen Foundation}
\authorrunning{Bader Abu Radi and Rüdiger Ehlers} %
\keywords{Parity automata, Automaton minimization} %
\newcommand{\@hideLIPIcs}{j}
\renewcommand{\@oddfoot}{\relax}%
\newcommand{\buchi}{B\"uchi }
\newcommand{\A}{{\cal A}}
\newcommand{\B}{{\cal B}}
\newcommand{\C}{{\cal C}}
\newcommand{\D}{{\cal D}}
\newcommand{\SC}{{\mathcal{S}}}
\newcommand{\zug}[1]{\langle #1 \rangle}
\newcommand\bigcomment[1]{}
\title{Characterizing the Polynomial-Time Minimizable $\omega$-Automata}
\begin{document}

\maketitle

\begin{abstract}
A central question in the theory of automata is which classes of automata can be minimized in polynomial time. We close the remaining gaps for deterministic and history-deterministic automata over infinite words by proving that deterministic co-Büchi automata with transition-based acceptance are NP-hard to minimize, as are history-deterministic Büchi automata with transition-based acceptance.
\end{abstract}

\section{Introduction}

Automata over infinite words ($\omega$-automata) have proven to be an indispensible tool for the formal analysis and synthesis of reactive systems. They are useful to describe the unwanted (or wanted) behavior of a system under concern.
Starting from a specification in a temporal logic such as linear temporal logic (LTL), a translation procedure can be applied to obtain an automaton that accepts the words that are models of the LTL property. 
By following this path, a human engineer can use the simplicity of temporal logic to express the required system behavior
while the resulting automaton over infinite words can be used as input to a verification or synthesis algorithm, which are seldom applicable directly to a specification in some logic. Several different types of $\omega$-automata are known in the literature that differ by their branching modes and acceptance conditions. 
For some applications, particularly in probabilistic model checking and reactive synthesis, \emph{deterministic} or \emph{history-deterministic} automata are needed, whereas for other applications, such as model checking of finite-state systems, \emph{non-deterministic} automata suffice. 

Automata are most useful for practical applications when they have a small number of states, as the computation time and memory requirements of algorithms taking automata as input grow when automata are large.
Unfortunately, when translating from LTL to deterministic automata with a suitable acceptance condition (such as \emph{parity acceptance}), a doubly-exponential blow-up cannot be avoided in the worst case.
This worst-case blow-up however only showcases the succinctness of the logic used, and many languages of interest have quite small automata. This observation raises the question of how to \emph{minimize} automata in case due to the generality of the translation procedures from LTL to automata, their results are unnecessarily large.

\begin{table}
\centering
\begin{tabular}{c||c|c||c|c||c|c}
\textbf{Branching:} &
\multicolumn{2}{c||}{\textbf{Non-deterministic}} & 
\multicolumn{2}{c||}{\textbf{History-deterministic}} &
\multicolumn{2}{c}{\textbf{Deterministic}} \\ \hline
\textbf{Acceptance:} & SB & TB & SB & TB & SB & TB \\ \hline \hline
\textbf{Parity acc.:} & \multicolumn{2}{c||}{\multirow{6}{*}{PSPACE \cite{DBLP:reference/mc/Kupferman18}}} & \multicolumn{1}{l|}{\multirow{3}{*}{NP \cite{DBLP:conf/fsttcs/Schewe20}}} & \multirow{2}{*}{\textbf{NP}} & \multirow{3}{*}{NP \cite{DBLP:conf/fsttcs/Schewe10}} & \multirow{3}{*}{\textbf{NP}} \\ \cline{1-1}
\textbf{Büchi acc.:} & \multicolumn{2}{l||}{} & \multicolumn{1}{l|}{} & & & \\ \cline{1-1} \cline{5-5}
\textbf{co-Büchi acc.:} & \multicolumn{2}{l||}{} & \multicolumn{1}{l|}{} & \multirow{3}{*}{PTIME \cite{DBLP:journals/lmcs/RadiK22}} & &\\ \cline{1-1} \cline{4-4} \cline{6-7}
\textbf{Weak acc.:} & \multicolumn{2}{l||}{} & PTIME & & \multicolumn{2}{c}{\multirow{2}{*}{PTIME \cite{DBLP:journals/ipl/Loding01}} } \\ \cline{1-1}
\textbf{Safety acc.:} & \multicolumn{2}{l||}{ } & \cite{DBLP:conf/icalp/RadiK23} & &\multicolumn{2}{l}{ } \\ 
\end{tabular}
\caption{Complexity overview of automaton minimization, where we consider all  combinations between a large number of acceptance conditions and state-based (SB) or transition-based (TB) acceptance. All non-PTIME problems are complete for the given complexity classes (for the decision version of the minimization problem, i.e., checking if an equivalent automaton with a given number of states exists). All new results from this paper are written bold-faced. For the new results, the reasons for containment of the respective minimization problem in NP can be found at the end of Section~\ref{sec:Overview}.
}
\label{tab:complexities}
\end{table}

Unfortunately, minimization turns out to be a computationally difficult problem in general. For non-deterministic automata over infinite words, even for the simple \emph{Büchi} acceptance condition and the case of a \emph{weak} automata, where in every strongly connected component, there are only accepting states or only non-accepting states, checking if an automaton has an equivalent one with a given number of states is PSPACE-complete \cite{DBLP:reference/mc/Kupferman18}.%

For deterministic automata, the minimization complexity is quite nuanced, ranging from polynomial-time (for determistic weak automata, \cite{DBLP:journals/ipl/Loding01}) to NP-complete for deterministic Büchi and co-Büchi automata with state-based acceptance \cite{DBLP:conf/fossacs/Schewe09}. The exact boundary between automaton types that can be minimized in polynomial-time and those that cannot is however not yet known. Most importantly, for deterministic Büchi and co-Büchi automata with transition-based acceptance, the minimization complexity is currently unknown.

The situation is similar in the context of history-deterministic automata, which extend deterministic acceptance by a restricted form of non-determinism. Here, for co-Büchi automata with transition-based acceptance, a polynomial-time minimization algorithm is known, and for co-Büchi and Büchi automata with state-based acceptance, it is known that minimizing these automata is NP-complete. Yet, for transition-based acceptance and the Büchi acceptance condition, the minimization complexity is still open.

With the many branching modes and acceptance conditions known in the literature as well as the possibility to select state-based or transition based acceptance, the question arises where exactly the boundary between automaton types with polynomial-time and NP-hard minimization problems lie. While several individual results have shed some light on the complexity landscape, the precise boundary has not been identified yet.

In this paper, we provide the missing pieces.
We show that history-deterministic Büchi automata and deterministic co-Büchi  automata are NP-hard to minimize even in the case of transition-based acceptance. Together with the cases implied by these results, the  complexity landscape of automaton minimization is now complete.
Table~\ref{tab:complexities} shows the resulting complexity overview. To maintain a reasonable scope, we focus on those acceptance conditions which in the context of games permit positional winning strategies for both players, and refer to \cite{DBLP:conf/csl/Casares22} for NP-hardness results for Rabin (and Streett) acceptance conditions and a discussion of Muller automata with transition-based acceptance. We also exclude generalized co-Büchi and generalized Büchi acceptance conditions, as for them there can be a conflict between minimizing the number of states and the size of the acceptance condition (a.k.a the index of the automaton). We refer to recent work by Casares et al.~\cite{DBLP:conf/csl/CasaresIKM025} for several results on the complexity of minimizing both history-deterministic and deterministic such automata with transition-based acceptance.

Apart from completing the complexity overview of automaton minimization, our results prove that it is really the combination of history-determinism, transition-based acceptance, and the co-Büchi acceptance condition that enabled the polynomial-time minimization procedure by Abu Radi and Kupferman \cite{DBLP:journals/lmcs/RadiK22} -- having only two of these three aspects is now proven to be insufficient and this concrete combination is the only sweet spot (with polynomial-time minimization if P$\neq$NP) beyond the weak automata. 
Our results hence also provide an additional motivation for the chain-of-co-Büchi automaton representation for arbitrary $\omega$-regular languages in \cite{DBLP:conf/fsttcs/EhlersS22}, which employs minimized history-deterministic co-Büchi automata with transition-based acceptance for each chain element.

\section{Preliminaries}
\label{prelim}

\subsection{Graphs}
\label{regular graphs prelim}
For a natural number $n\geq 1$, let $[n]$ denote the set $\{ 1, 2, \ldots, n\}$.
We consider directed graphs $G = \zug{V, E}$ with a finite nonempty set $V$ of vertices and a set $E \subseteq V \times V$ of edges. For simplicity, we assume that $G$ has no self-loops or parallel edges. 
A {\em path} $p$ in $G$ is a finite sequence of vertices $p =  v_1, v_2, \ldots, v_k$, where $E(v_i, v_{i+1})$ for all $i \in [k-1]$. We say that $p$ is {\em simple} if it visits every vertex of $G$ at most once.
The {\em length} of $p$ is defined as the number of edges it traverses, thus $k-1$. A {\em cycle} in $G$ is a path from a vertex to itself.
We say that $G$ is {\em strongly-connected} when there is a path from $u$ to $v$ for every two vertices $u$  and $v$ in $V$.
Note that when $G$ is strongly-connected, it only takes polynomial time to compute a path of polynomial length that visits every vertex of $G$ using standard traversal algorithms. %

For vertices $u, v\in V$, we define the {\em distance from $u$ to $v$}, denoted $d(u, v)$, as the minimal path length from $u$ to $v$ when such a path exists, and define it as $\infty$ otherwise.  
We say that $v$ is a {\em neighbor} of $u$ when $E(u, v)$. Thus, $v$ is a neighbor of $u$ when $d(u, v) = 1$.
Then, the {\em neighborhood} of $u$, denoted $\eta(u)$, is defined as  the set of vertices 
of distance at most $1$ from $u$; thus $\eta(u) = \{ v\in V: d(u, v) \leq 1\}$;
in particular $u\in \eta(u)$.

\subsection{Automata}
For a finite nonempty alphabet $\Sigma$, an infinite {\em word\/} $w = \sigma_1 \cdot \sigma_2 \cdots \in \Sigma^\omega$ is an infinite sequence of letters from $\Sigma$. 
	A {\em language\/} $L\subseteq \Sigma^\omega$ is a set of words. We denote the empty word by $\epsilon$, and the set of finite words over $\Sigma$ by $\Sigma^*$.
 For $i\geq 0$, we use $w[1, i]$ to denote the (possibly empty) prefix $\sigma_1\cdot \sigma_2 \cdots  \sigma_i$ of $w$, use $w[i+1]$ to denote the $(i+1)$'th letter of $w$, $\sigma_{i+1}$, and use $w[i+1, \infty]$ to denote the infinite suffix $\sigma_{i+1} \cdot  \sigma_{i+2} \cdots$.

  A \emph{nondeterministic automaton} over infinite words is a tuple $\A = \langle \Sigma, Q, q_0, \delta, \alpha  \rangle$, where $\Sigma$ is an alphabet, $Q$ is a finite set of \emph{states}, $q_0\in Q$ is an \emph{initial state}, $\delta: Q\times \Sigma \to 2^Q\setminus \{\emptyset\}$ is a \emph{transition function}, and $\alpha$ is an \emph{acceptance condition}, to be defined below. For states $q$ and $s$ and a letter $\sigma \in \Sigma$, we say that $s$ is a $\sigma$-successor of $q$ if $s \in \delta(q,\sigma)$.  
	Note that we assume that $\A$ has a  total transition function, that is, $\delta(q, \sigma)$ is non-empty for all states $q$ and letters $\sigma$. %
	If $|\delta(q, \sigma)| = 1$ for every state $q\in Q$ and letter $\sigma \in \Sigma$, then $\A$ is \emph{deterministic}.
    We define the \emph{size} of $\A$, denoted $|\A|$, as its number of states, thus, $|\A| = |Q|$.
	
	When $\A$ runs on an input word, it starts in the initial state and proceeds according to the transition function. Formally, a \emph{run}  of $\A$ on $w = \sigma_1 \cdot \sigma_2 \cdots \in \Sigma^\omega$ is an infinite sequence of states $r = r_0,r_1,r_2,\ldots \in Q^\omega$, such that $r_0 = q_0$, and for all $i \geq 0$, we have that $r_{i+1} \in \delta(r_i, \sigma_{i+1})$. 
	We sometimes consider finite runs on finite words. In particular, 
	we sometimes extend $\delta$ to sets of states and finite words. Then, $\delta: 2^Q\times \Sigma^* \to 2^Q$ is such that for every $S \in 2^Q$, finite word $u\in \Sigma^*$, and letter $\sigma\in \Sigma$, we have that $\delta(S, \epsilon) = S$, $\delta(S, \sigma) = \bigcup_{s\in S}\delta(s, \sigma)$, and $\delta(S, u \cdot \sigma) = \delta(\delta(S, u), \sigma)$. Thus, $\delta(S, u)$ is the set of states that $\A$ may reach when it reads $u$ from some state in $S$. 
    The transition function $\delta$ induces a transition relation $\Delta \subseteq Q\times \Sigma \times Q$, where for every two states $q,s\in Q$ and letter $\sigma\in \Sigma$, we have that $\langle q, \sigma, s \rangle \in \Delta$ iff $s\in \delta(q, \sigma)$. 
	We sometimes view the run $r = r_0,r_1,r_2,\ldots$ on $w = \sigma_1 \cdot \sigma_2 \cdots$ as an infinite sequence of successive transitions $\zug{r_0,\sigma_1,r_1}, \zug{r_1,\sigma_2,r_2},\ldots \in \Delta^\omega$.
	The acceptance condition $\alpha$ determines which runs are ``good''. We consider here \emph{transition-based} automata, in which $\alpha$ is a set of transitions. specifically, $\alpha\subseteq \Delta$. We use the terms {\em $\alpha$-transitions\/} and  {\em $\bar{\alpha}$-transitions\/} to refer to  transitions in $\alpha$ and in $\Delta \setminus \alpha$, respectively.

 We say that $\A$ is \emph{$\alpha$-homogenous} if for every state $q\in Q$ and letter $\sigma \in \Sigma$, either all the $\sigma$-labeled transitions from $q$ are $\alpha$-transitions, or they are all $\bar{\alpha}$-transitions. %
	For a run $r \in \Delta^\omega$, let ${\it inf}(r)\subseteq \Delta$ be the set of transitions that $r$ traverses infinitely often. Thus, 
	${\it inf}(r) = \{  \langle q, \sigma, s\rangle \in \Delta: q = r_i, \sigma = \sigma_{i+1} \text{ and } s = r_{i+1} \text{ for infinitely many $i$'s}   \}$. 
    We consider {\em \buchi} and {\em co-\buchi} automata.
     In {\em B\"uchi\/} automata, $r$ is \emph{accepting} iff ${\it inf}(r)\cap \alpha \neq \emptyset$, thus if $r$ traverses transitions in $\alpha$ infinitely many times. Dually,
 in {\em co-B\"uchi\/} automata, $r$ is \emph{accepting} iff ${\it inf}(r)\cap \alpha = \emptyset$. %
 We refer to \cite{DBLP:reference/mc/Kupferman18} for a discussion of parity automata, which are a strict generalization of both co-Büchi and Büchi automata and used only in Table~\ref{tab:complexities} in this paper.
 \color{black} A run that is not accepting is \emph{rejecting}.  A word $w$ is accepted by $\A$ if there is an accepting run of $\A$ on $w$. The language of $\A$, denoted by $L(\A)$, is the set of words that $\A$ accepts. Two automata are \emph{equivalent} if their languages are equivalent. 
We use four-letter acronyms in $ \{\text{t}\}\times \{\text{D}, \text{N}\} \times \{\text{B}, \text{C}\}\times \{\text{W}\}$ to denote the different automata classes. The first letter $t$ indicates that we consider automata with transition-based acceptance; the second letter stands for the branching mode of the automaton (deterministic or nondeterministic); the third for the acceptance condition type (\buchi or co-\buchi); and the last indicates that we consider automata on words.

For a state $q\in Q$ of an automaton $\A = \langle \Sigma, Q, q_0, \delta, \alpha \rangle$, we define $\A^q$ to be the automaton obtained from $\A$ by setting the initial state to be $q$. Thus, $\A^q = \langle \Sigma, Q, q, \delta, \alpha \rangle$. 
	We say that two states $q,s\in Q$ are \emph{equivalent}, denoted $q \sim_{\A} s$, if $L(\A^q) = L(\A^s)$. %

An automaton $\A$ is \emph{history-deterministic} (\emph{HD}, for short) if its nondeterminism can be resolved based on the past, thus on the prefix of the input word read so far. Formally, $\A$ is \emph{HD} if there exists a {\em strategy\/} $f:\Sigma^* \to Q$ such that the following hold: 
	\begin{enumerate}
		\item 
		The strategy $f$ is consistent with the transition function. That is, $f(\epsilon)=q_0$, and for every finite word $u \in \Sigma^*$ and letter $\sigma \in \Sigma$, we have that $\zug{f(u),\sigma,f(u \cdot \sigma)} \in \Delta$. 
		\item
		Following $f$ causes $\A$ to accept all the words in its language. That is, for every infinite word $w = \sigma_1 \cdot \sigma_2 \cdots \in \Sigma^\omega$, if $w \in L(\A)$, then the run $f(w[1, 0]), f(w[1, 1]), f(w[1, 2]), \ldots$, which we denote by $f(w)$, is %
		an accepting run of $\A$ on $w$. 
	\end{enumerate}
	We say that the strategy $f$ \emph{witnesses} $\A$'s HDness. 
	Note that every deterministic automaton is HD. 
 We say that an automaton $\A_d$ is a {\em deterministic pruning of $\A$} of when $\A_d$ 
 is a deterministic automaton obtained by pruning some of $\A$'s transitions.

	Consider a directed graph $G = \langle V, E\rangle$. A \emph{strongly connected set\/} in $G$ (SCS, for short) is a set $C\subseteq V$ such that for every two vertices $v, v'\in C$, there is a path from $v$ to $v'$. A SCS is \emph{maximal} if it is maximal w.r.t containment, that is, for every non-empty set $C'\subseteq V\setminus C$, it holds that $C\cup C'$ is not a SCS. The \emph{maximal strongly connected sets} are also termed \emph{strongly connected components} (SCCs, for short). %
	An automaton $\A = \langle \Sigma, Q, q_0, \delta, \alpha\rangle$ induces a directed graph $G_{\A} = \langle Q, E\rangle$, where $\langle q, q'\rangle\in E$ iff there is a letter $\sigma \in \Sigma$ such that $\langle q, \sigma, q'\rangle \in \Delta$. The SCSs and SCCs of $\A$ are those of $G_{\A}$.

 For a tNCW $\A$, we refer to $\overline{\alpha}$-transitions as safe (or \emph{accepting}) transitions, and $\alpha$-transitions as rejecting transitions.
A run $r$ of $\A$ is \emph{safe} if it traverses only safe transitions. 
The \emph{safe language} of $\A$, denoted $L_{\it safe}(\A)$, is the set of infinite words $w$ such that there is a safe run of $\A$ on $w$, and the safe language of a state $q$, denoted $L_{\it safe}(q)$, is the language $L_{\it safe}(\A^q)$.
	We refer to the maximal SCCs we get by removing $\A$'s $\alpha$-transitions as the \emph{safe components} of $\A$ (a.k.a \emph{accepting SCCs}); that is, the \emph{safe components} of $\A$ are the SCCs of the graph $G_{\A^{\bar{\alpha}}} = \langle Q, E^{\bar{\alpha}} \rangle$, where $\zug{q, q'}\in E^{\bar{\alpha}}$ iff there is a letter $\sigma\in \Sigma$ such that $\langle q, \sigma, q'\rangle \in \Delta \setminus \alpha$.
    We denote the set of safe components of $\A$ by $\SC(\A)$.  
    For a state $q\in Q$, we denote the safe component of $q$ in $\A$ by $\SC^\A(q)$. When $\A$ is clear from the context, we write $\SC(q)$. 
	Note that an accepting run of $\A$ eventually gets trapped in one of $\A$'s safe components. Then,  $\A$ is \emph{normal} if
there are no safe transitions connecting different safe components. That is,
for all states $q$ and $s$ of $\A$, if there is a safe run  from $q$ to $s$, then there is also a safe run from $s$ to $q$.

\begin{example}
{\rm 
Consider the tDCW $\A$ on the right. Dashed transitions are rejecting transitions.
The tDCW $\A$ recognizes the language of all infinite words with finitely many $a$'s or finitely many $b$'s. Then,  $\A$ has three safe components: $\SC(\A) = \{ \{q_0\}, \{q_1\},\{q_2\}\}$.}, and
$\A$ being deterministic implies that it is HD. 

\noindent
\begin{minipage}{3.3in}
\vspace*{3pt}
{\rm %
Note that the word $w = a^\omega$ is such that the run $r$ of $\A^{q_0}$ on $w$ is safe, and thus $w\in L_{\it safe} (\A^{q_0})$, yet $r$ leaves the safe component $\{q_0\}$. If, however, we make $\A$ normal by turning the transitions $\zug{q_0, a, q_1}$ and $\zug{q_1, b, q_2}$ to be rejecting, then we get that safe runs from a state $q$ are exactly the runs from $q$  that do not leave the safe component $\SC(q)$.\hfill \qed} 
\end{minipage} \ \hspace{.1in}
\begin{minipage}{1.7in}
\begin{tikzpicture}
\node[state,fill=black!10!white] (q0) at (0,0) {$q_0$};
\node[state,fill=black!10!white] (q1) at (2,0) {$q_1$};
\node[state,fill=black!10!white] (q2) at (4,0) {$q_2$};
\draw[->,semithick] (q0) edge[bend left=10] node[above] {$a$} (q1);
\draw[->,semithick] (q1) edge[bend left=10] node[above] {$b$} (q2);
\draw[->,semithick] (q0) edge[loop above] node[above] {$b$} (q0);
\draw[->,semithick] (q1) edge[loop above] node[above] {$a$} (q1);
\draw[->,semithick,dashed] (q2) edge[bend left=30] node[below] {$a,b$} (q0);
\draw[fill=black] (-0.7,0.5) circle (0.05cm);
\draw[->,semithick] (-0.7,0.5) -- (q0);
\end{tikzpicture}
\end{minipage}
\end{example}

The tNCWs we define in the following are normal, thus satisfying that safe runs from a state are exactly the runs from it  that do not leave its safe component.

For a class $\gamma$ of  automata,  e.g., $\gamma \in \{ \text{tDBW}, \text{HD-tNCW} \}$ we say that a $\gamma$ automaton $\A$ is {\em minimal} if for every
 equivalent $\gamma$ automaton $\B$, it holds that $|\A|\leq |\B|$. %
In this paper, we study the {\em HD-negativity problem for a co-\buchi language}, where we are given a minimal HD-tNCW $\A$, and we need to decide whether there is a tDCW for $L(\A)$ of size $|\A|$.
Thus, the problem is to decide whether history-determinism is not beneficial to the  state-space complexity of an automaton representation of the given language. 
Then, we study the {\em $\gamma$-minimization problem}, for $\gamma\in \{\text{tDBW, tDCW, HD-tNBW}\}$, where we are given a $\gamma$ automaton $\A$ and a bound $k\geq 1$, and we need to decide whether $\A$ has an equivalent $\gamma$ automaton with at most $k$ states.

In Appendix~\ref{sec:additionalPrelims}, we define the notions that are used only in proofs in the appendix.

\section{Overview}
\label{sec:Overview}
In this section, we provide an overview of how the main results of the paper are derived, namely the NP-hardness proofs of minimizing deterministic transition-based co-Büchi automata and history-deterministic transition-based Büchi automata.

Both results build on a common foundation.
We start by defining a variant of the Hamiltonian path problem that we prove to be NP-hard. We then show how to translate a graph for this restricted Hamiltonian path problem to a history-deterministic co-Büchi automaton. This co-Büchi automaton is shown to be already minimized, and it has an equally sized deterministic co-Büchi automaton if and only if a Hamiltonian path exists in the graph. We then show that it is always possible to build a determinstic co-Büchi automaton of size polynomial in the size of the graph for the same language as the said minimal history-deterministic co-Büchi automaton. Doing so proves NP-hardness of deterministic co-Büchi automaton minimization as we can use a minimizer for deterministic co-Büchi automata on an automaton built from a graph and compare the resulting automaton size against the history-deterministic automaton size in order to determine whether the graph has a Hamiltonian path.

To support the understandability of the encoding of a graph to a co-Büchi automaton, we split it into two steps. The first step comprises the main ideas, but yields an automaton with an alphabet of size exponential in the graph size (Section~\ref{exp red sec}). The automaton built in this step has  accepting strongly connected components for each vertex in the graph such that two components can be connected in a deterministic automaton without additional helper states if there is an edge between the corresponding vertices in the graph. We add another accepting strongly connected component that is specifically engineered so that it needs to be included in every loop across multiple accepting strongly connected components in the automaton. The construction ensures that any minimal automaton has a loop comprising  \emph{all} accepting strongly connected components in the automaton, and the fact that helper states are not needed if there is a corresponding edge in the graph ensures that the automaton has an equivalent deterministic one of a known size if and only if the original graph has a Hamiltonian path (from some fixed predefined vertex). We then show that the construction also works with an alphabet of polynomial size, but there are a substantial number of technical details. In particular, the languages of the strongly connected components need to be altered substantially to ensure that the history-deterministic automaton built from the graph can never be reduced in size, as otherwise the state number of the minimized deterministic co-Büchi automaton would not allow us to read off whether the Hamiltonian path problem has a solution. Reducing the alphabet size comes with an increase in the number of states, but only a polynomial increase if every vertex of the graph only has a number of neighbors that is bounded by a constant, which is why we define a variant on the Hamiltonian Path problem enforcing this constraint in the first place.

NP-hardness of minimizing history-deterministic Büchi automata is then shown to follow from the result for deterministic co-Büchi automata. We employ a recent result by Abu Radi et al.~\cite{ATVA2024} on the relationship of the sizes of history-deterministic co-Büchi and history-deterministic Büchi automata, where the latter is for the complement of the language of the former. We show that for the automata built from graphs for our modified Hamiltonian path problem, the sizes of these two coincide, implying that we can also use a transition-based history-deterministic Büchi automaton minimizer to detect Hamiltonian paths in the original graph. For deterministic Büchi automata, NP-hardness follows by duality from the respective result for deterministic co-Büchi automata.

We do not need to prove that minimizing history-deterministic or deterministic Büchi or parity automata are contained in NP in this paper. While this fact is used to complete the characterization of the complexities in Table~\ref{tab:complexities}, it follows from previous results. In particular, Theorem 4 in a paper by Schewe \cite{DBLP:conf/fsttcs/Schewe20} states that given a history-deterministic parity automaton, checking if another parity automaton is history-deterministic and has the same language is in NP. This allows a nondeterministic algorithm for minimization to simply ``guess'' a transition-based Büchi or parity automaton as solution, to translate it to state-based form (with a blow-up by a factor that corresponds to the number of colors in the automaton, which is bounded by the number of transitions), and to verify the correctness of the resulting automaton. For transition-based deterministic co-Büchi, Büchi, or parity automata, the same approach can be used for showing the containment of the respective minimization problem in NP. In this case, the resulting automaton needs to be syntactically restricted to be deterministic. As an alternative results for this case, a result by Baarir and Lutz~\cite{DBLP:conf/lpar/BaarirD15} can be employed. They showed how to reduce such a minimization problem to the satisfiability problem in a way that it can be solved with modern satisfiability solver tools.

\section{The HD-negativity Problem}
\label{exp red sec}

We start by defining a variant of the \emph{Hamiltonian path problem} that is NP-hard to solve despite restricting the set of graphs over which the problem is defined.
\begin{definition}
\label{def:GraphProperties}

We consider graphs $G = \langle V, E \rangle$ with a designated vertex $v_1$ satisfying the following additional constraints:
\begin{itemize}
\item The graph $G$ is \emph{$(\leq 3)$-regular}, i.e, each vertex in $V$ has at most 3 neighbors. 

\item The vertex $v_1$ has a single neighbor in $G$, and no incoming edges.

\item The sub-graph $G|_{\neg v_1}$ obtained by removing $v_1$ from $G$ and all transitions that leave it, is strongly-connected.

\end{itemize}
\end{definition}
We consider a slightly modified Hamiltonian path problem in the following, where given a graph $G$ with a designated vertex $v_1$,  we need to decide whether there is a simple path starting at $v_1$ and visiting every vertex of $G$.
The following theorem is proven in Appendix~\ref{HP is hard thm app}:
\begin{theorem}\label{HP is hard thm}
    The Hamiltonian path problem is NP-hard for graphs satisfying Def.~\ref{def:GraphProperties}.
\end{theorem}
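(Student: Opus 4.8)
The plan is to reduce from the standard Hamiltonian path problem (or a suitably restricted variant already known to be NP-hard) to the Hamiltonian path problem on graphs satisfying Definition~\ref{def:GraphProperties}. The three constraints to engineer are: $(\leq 3)$-regularity, the existence of a designated source vertex $v_1$ with a single outgoing edge and no incoming edges, and strong-connectedness of the graph with $v_1$ removed. The main work is to show that each of these can be imposed by a polynomial-time gadget transformation that preserves the existence (and non-existence) of a Hamiltonian path.

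**First I would** secure the degree bound. The classical result here is that Hamiltonian path (equivalently Hamiltonian cycle) remains NP-hard on graphs of maximum degree $3$; I would invoke this directly, or derive it by the standard vertex-splitting gadget that replaces a high-degree vertex by a small path/cycle of degree-$3$ vertices, rerouting the incident edges so that any Hamiltonian path must traverse the entire gadget. Since we work with directed graphs (as $E \subseteq V \times V$ and the automaton construction cares about directed neighborhoods), I would phrase the degree constraint on out-neighbors and ensure the gadget respects orientation. This step fixes the $(\leq 3)$-regular condition.

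**Next I would** attach the designated source $v_1$. Given a degree-bounded instance in which we already know (or may assume) a candidate start vertex, I would add a fresh vertex $v_1$ with exactly one outgoing edge into that start vertex and no incoming edges. A simple path visiting all vertices and starting at $v_1$ then exists if and only if the original graph has a Hamiltonian path beginning at the chosen start vertex; to handle the case where the start vertex is not prescribed, I would either reduce from the rooted (fixed-endpoint) Hamiltonian path problem, which is also NP-hard, or add a small selection gadget. The delicate point is that adding $v_1$ must not disturb the degree bound on the start vertex, so I may need to route $v_1$'s edge through an auxiliary degree-$3$ vertex.

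**The hard part will be** simultaneously guaranteeing that $G|_{\neg v_1}$ is strongly-connected, since this is a global condition that the degree-bounding and source-attachment gadgets do not automatically provide, and naively adding edges to force strong connectivity could create spurious Hamiltonian paths. My plan is to add connectivity edges only in a way that cannot shortcut the Hamiltonicity check: I would add ``back edges'' returning to the start vertex (or to a single designated hub) from vertices that have no outgoing edge, chosen so that they can never be used inside a simple path that has not yet completed (for instance, edges into the already-fixed start vertex, which a simple starting-at-$v_1$ path can never re-enter). This makes the subgraph strongly-connected without enabling any new simple path that visits all vertices, so the if-and-only-if correspondence with the original instance is preserved. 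I would conclude by checking that every gadget runs in polynomial time and increases the instance size only polynomially, so NP-hardness transfers. Verifying that the added connectivity edges are genuinely ``unusable'' inside a valid Hamiltonian path is the step requiring the most care, and I expect the bulk of the formal argument to live there.
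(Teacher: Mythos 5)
Your overall reduction template (impose a degree bound, attach $v_1$, then force strong connectivity with ``unusable'' edges) breaks down at exactly the step you flagged as the hard one. The set of back edges you propose---edges into the start vertex, added only from vertices with no outgoing edge---does not make $G|_{\neg v_1}$ strongly connected in general. Consider the instance with vertices $s,a,b,c$ and edges $s\to a$, $a\to b$, $b\to c$, $c\to b$: it has a Hamiltonian path $s,a,b,c$, no vertex has out-degree $0$, so your construction adds nothing, and neither $b$ nor $c$ can reach $s$ or $a$; the output graph violates Def.~\ref{def:GraphProperties}, so the reduction is not even well-defined on this yes-instance. The natural repair---adding a back edge into $s$ from \emph{every} vertex (such edges are indeed unusable by a simple path starting at $v_1$, as you correctly observe)---collides with the degree constraint: in this paper ``neighbor'' means out-neighbor, so a back edge raises the out-degree of its source, and vertices already at out-degree $3$ cannot accept one. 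Moreover, strong connectivity also needs every vertex to be reachable \emph{from} the hub, a direction your proposal never addresses.

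The paper sidesteps this tension entirely by reducing from the Hamiltonian \emph{cycle} problem on $3$-regular graphs~\cite{DBLP:journals/siamcomp/GareyJT76}, where strong connectivity comes for free: a graph with a Hamiltonian cycle is necessarily strongly connected, so the reduction may test connectivity and emit a fixed no-instance otherwise. Given such a graph, one vertex $v$ is split into $(v,\mathit{in})$ (receiving $v$'s incoming edges) and $(v,\mathit{out})$ (emitting $v$'s outgoing edges), joined by the edge from $(v,\mathit{in})$ to $(v,\mathit{out})$, and $v_1$ is attached by a single edge to $(v,\mathit{out})$. Strong connectivity of the residual graph is then \emph{inherited} from the input graph (replace each visit of $v$ by the edge $(v,\mathit{in})\to(v,\mathit{out})$), the degree bound holds by construction, and Hamiltonian cycles through $v$ correspond exactly to Hamiltonian paths from $v_1$ ending at $(v,\mathit{in})$. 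The lesson, if you want to salvage your approach, is that the connectivity requirement should be pushed into the problem you reduce \emph{from}, rather than patched into the output graph afterwards.
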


We  show next that the HD-negativity problem is NP-hard.
Specifically, given a graph $G$ satisfying Def.~\ref{def:GraphProperties} with a designated vertex $v_1$, the following reduction returns a minimal HD-tNCW $\A^{G^{v_1}} = \zug{\Sigma, Q, q_0, \delta, \alpha}$ that has an equivalent tDCW of the same size if and only if $G$ has a Hamiltonian path starting at $v_1$.
The reduction however is such that $\A^{G^{v_1}}$ is defined over an alphabet $\Sigma$ that is exponential in $G$. In Section~\ref{hd neg sec}, we show how we can reduce the size of the alphabet to obtain a polynomial-time reduction, based on the properties of $G$ and $\A^{G^{v_1}}$.

\subsection{The Main Construction}

Consider an alphabet $\Sigma$, a special letter $a\notin \Sigma$, an infinite word  $w$ over $\Sigma$, and an infinite word $z$ over $\Sigma \cup \{ a\}$. We define the (possibly finite) word $z|_{\neg a}$ as the word over $\Sigma$ that is obtained from $z$  by removing  all occurrences of the letter $a$.
We say that the infinite word $w' \in (\Sigma \cup \{a\})^\omega$ is a \emph{mix} of the infinite word $w$ and the infinite word $a^\omega$ if either $(w')|_{\neg a} = w$, or $(w')|_{\neg a}$ is a prefix of $w$.
Thus, $w'$ is a mix of $w$ and $a^\omega$ when  $w'$ is obtained by injecting an arbitrary (possibly infinite) number of $a$'s into $w$ or into a prefix of $w$.

We now define the exponential reduction for the HD-negativity problem. The reduction is defined from the Hamiltonian path problem for graphs satisfying Def.~\ref{def:GraphProperties}. %
The reduction, given an input graph $G = \zug{V, E}$ and a designated vertex $v_1\in V$, returns a tNCW $\A^{G^{v_1}} = \zug{\Sigma, Q, q_0, \delta, \alpha}$ over the exponential alphabet $\Sigma= 2^{\mathsf{AP} } \cup \{\mathit{sleep}\}$, 
where $\mathsf{ AP}$ is the set of atomic propositions given by $\mathsf{AP} = V \uplus \{\hat v, \check v \mid v \in V\} \uplus \{ \mathit{extra} \}$. Thus, we have a special atomic proposition $\text{\it extra}$, and
for every vertex $v\in V$, we have three corresponding atomic propositions, $v, \hat v$, and $\check v$, to be explained below. Thus, a letter $\sigma \in \Sigma$, either equals the special letter $\text{\it sleep}$, or a subset of atomic propositions in $2^{\mathsf{AP}}$.
When $\sigma \in 2^{\mathsf{AP}}$, we also view it as an {\em assignment over} $\mathsf{AP}$ which is a function from $\mathsf{AP}$ to $\{ 0, 1\}$ where for all atomic propositions $a\in \mathsf{AP}$, we have that  $\sigma(a) = 1$ if and only if $a\in \sigma$. We say that $\sigma$ {\emph satisfies} $a$ when $\sigma(a) = 1$. In addition, we to refer $\sigma(a)$ as  the value of $a$ under the assignment $\sigma$.

We define next the rest of $\A^{G^{v_1}}$'s elements by first  defining its safe components, and then add rejecting transitions among them. For every vertex $v\in V$, we introduce a safe component $\SC(v)$ that traps infinite safe runs on words $w$ 
that are a mix of the word $(\mathit{sleep})^\omega$ and some word in
\begin{align*}
L_v = & \{\sigma_1 \cdot \sigma_2 \cdots \in (2^{\mathsf{AP}})^\omega \mid \text{for all $i\geq 1, u\in V$, it holds that } \\
& \quad\quad \quad\quad \quad\quad [E(v, u) \rightarrow (\sigma_i(u) = \sigma_{i+1}(\hat{u}))] \wedge [\sigma_i(v) = \sigma_{i+2}(\check v)]\}
\end{align*}

Thus, $L_v$ specifies consistent values in the neighborhood $\eta(v)$ across every three consecutive  letters, specifically, for all positions $i\geq 1$ and all neighbors $u$ of $v$, the assignment $\sigma_i$ satisfies $u$ if and only if the assignment $\sigma_{i+1}$ satisfies  $\hat{ u}$. In addition, the assignment $\sigma_{i+2}$ satisfies $\check v$ if and only if the assignment $\sigma_i$ satisfies $v$.
Note that every finite word $\sigma_1\cdot \sigma_2 \cdots \sigma_i \in (2^{\mathsf{AP}})^*$ that does not violate the consistency in the neighborhood $\eta(v)$ across every three letters can be extended to an infinite word $\sigma_1\cdot \sigma_2 \cdot \sigma_3 \cdots $ in $L_v$. Indeed, for all $j\geq i+1$, we can define $\sigma_j = \mathsf{AP} \cup \{ \hat v': v' \in \sigma_{j-1}\cap V\} \cup \{ \check v: v\in \sigma_{j-2}\}$.
In a way, the safe component $\SC(v)$ needs to remember the past-one value of every neighbor of $v$, and the past-two value of $v$.
Formally, we define $\SC(v)$ as follows. Let $\eta(v) = \{ v, u_1, u_2, \ldots, u_m \}$ denote the neighborhood of $v$. Then, $\SC(v)$ is defined on top of a state-space consisting  of vectors of length $m+3$ of the form $[p_1(u_1), p_1(u_2), \ldots, p_1(u_m), p_1(v), p_2(v), v]$, where for every vertex $v'$ in $\eta(v)$, $p_1(v')$ is a bit in $ \{0, 1\}$ that represents the value of the vertex $v'$  under the assignment by the last read letter. Similarly, $p_2(v)$ represents the value of $v$ assigned by the past-two letter, and the last  element in the vector contains $v$ to distinguish between vectors corresponding to distinct safe components.

In Appendix~\ref{exp red defs app}, we complete the definition of the safe component 
$\SC(v)$, and define safe transitions deterministically inside it in a way that allows us to trap infinite words $y$ from states $s$ in $\SC(v)$ when $y$ is a suffix of a word $w = x\cdot y$ that is a mix of $(\it sleep)^\omega$ and a word in $L_v$, $x$ contains at least two non-$\it sleep$ letters, and $s$ remembers the correct past values of vertices in $\eta(v)$ with respect to the prefix $x$ of $w$.
\bigcomment{
In Appendix~\ref{exp red defs app}, we complete the definition of the safe component 
$\SC(v)$, and define safe transitions deterministically inside it in a way that allows us to trap infinite words $y$ from states $s$ in $\SC(v)$ when $y$ is a suffix of a word $w = x\cdot y$ that is a mix of $(\it sleep)^\omega$ and \color{blue} $y$ is \color{red} a word in $L_v$, $x$ contains at least two non-$\it sleep$ letters, and $s$ remembers the correct past values of vertices in $\eta(v)$ with respect to the prefix $x$ of $w$.
}%
We also note that to implement the mixing with the infinite word $(\text{\it sleep})^\omega$, it suffices to add safe self-loops labeled with the special letter $\text{\it sleep}$ for every state in $\SC(v)$.

Consider an infinite word $w\in \Sigma^\omega$. We say that $w$ is \emph{extra-consistent} when for every infix $a_i \cdot a_{i+1}$ of $w|_{\neg \it sleep}$ consisting of two letters, 
it holds that $a_i(v_1) = a_{i+1}(\it extra)$.
Now in addition to the safe components $\{ \SC(v)\}_{v\in V}$, we define an additional synchronizing safe component $\SC_{\it Sync}$ to trap infinite safe runs on  words $w$ in 
\begin{align*}
L_\mathit{Sync} & =\{ w \in \Sigma^\omega \mid \text{$w$ is $\text{\it extra}$-consistent, and there exists }
b_1\cdot  b_2 \cdot b_3 \cdots \in ((\epsilon+0+00 \\&  \ \ \ \ \ \ +000)\cdot 1)^\omega \text{ such that }   \text{for all } i \geq 1, \text{it holds that } b_i \leftrightarrow (w[i] = \mathit{sleep}) \}
\end{align*}

Thus, $L_{\it Sync}$ specifies that the special letter $\text{\it sleep}$ appears every at most 4 letters; in addition, it specifies that the value of the special atomic proposition $\text{\it extra}$ always equals the previous past-one value of the vertex $v_1$. 
The safe component $\SC_{\it Sync}$ can be easily defined over the state-space $\{1, 2, 3, 4\}\times \{ 0, 1\}$, where the state $(t, k)$ in $\SC_{\it Sync}$ remembers that we have seen $\text{\it sleep}$ in the past-$t$ letter, and $k$ remembers the last value of $v_1$.
In Appendix~\ref{exp red defs app}, 
we complete the definition of the safe component 
$\SC_{\it Sync}$, and define safe transitions deterministically inside it 
in a way that allows us to trap infinite words $y$ from  states of the form $(1, k)$ in $\SC_{\it Sync}$ when $y$ is a suffix of a word $w = x\cdot y$ in $L_{\it Sync}$, $x$ contains at least one non-$\it sleep$ letter, and $(1, k)$ remembers that we have just seen $\it sleep$ and remembers the correct last value $k$ of $v_1$ as specified by the prefix $x$ of $w$. In addition, as  we detail in Appendix~\ref{exp red defs app},  every state in  $\SC_{\it Sync}$ has a safe outgoing $\it sleep$-transition.

We now end the definition of the tNCW $\A^{G^{v_1}}$ by choosing its initial state $q_0$ arbitrarily, and by defining rejecting transitions as follows.
Consider a state $q\in Q$, and a letter $\sigma \in \Sigma$. If there are no safe $\sigma$-labeled transitions going out from $q$, then we add the rejecting transitions $\{ q\} \times \{ \sigma\} \times Q$. Thus, if from $q$, we cannot traverse a safe $\sigma$-transition, then we branch nondeterministically to all states via rejecting transitions.
Essentially, the tNCW $\A^{G^{v_1}}$ consists of states specifying that the input word has a suffix in $ L_{\it Sync}$ or a suffix that is a mix of $(\it sleep)^\omega$ and some word in $(\bigcup_{v\in V} L_v)$, and they all have the same language. 
It is not hard to see that by definition $\A^{G^{v_1}}$ is already $\alpha$-homogenuous. In particular, since every state has an outgoing safe $\it sleep$-transition, $\A^{G^{v_1}}$ has no rejecting $\it sleep$-transition. 

In Appendix~\ref{useful properties app}, we prove that $\A^{G^{v_1}}$ is a minimal HD-tNCW. Intuitively, HDness follows from the fact that we branch to all states upon leaving a safe component, allowing an HD strategy to proceed to a safe component that traps the suffix to be read. The fact that $\A^{G^{v_1}}$ is minimal follows from the fact that all states have incomparable safe languages, except for states in $\SC_{\it Sync}$, in which case we show they must have distinct safe languages.
In addition, in Appendix~\ref{useful properties app}, we show that $\A^{G^{v_1}}$ satisfies some syntactical and semantical properties, allowing us to conclude the correctness of the reduction. 
Specifically, we have the following:

\begin{theorem}\label{corr thm}
    The graph $G = \zug{V, E}$ has a Hamiltonian path starting at $v_1$ if and only if the minimal HD-tNCW $\A^{G^{v_1}} = \zug{\Sigma, Q, q_0, \delta, \alpha}$ has an equivalent tDCW of size $|Q|$.
\end{theorem}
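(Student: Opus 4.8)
The plan is to prove both implications by pinning down the coarse structure that every tDCW for $L(\A^{G^{v_1}})$ must have. The guiding picture, consistent with the overview, is that a tDCW of size exactly $|Q|$ can afford no helper states, so it must reuse the safe components $\{\SC(v)\}_{v\in V}\cup\{\SC_{\it Sync}\}$ essentially verbatim and differ from $\A^{G^{v_1}}$ only by replacing the all-branching rejecting transitions with a single deterministic routing among these components. Correctness of the language then forces this routing to form one cyclic tour visiting every component, in which $\SC_{\it Sync}$ is traversed once and two vertex components $\SC(v),\SC(u)$ may be made adjacent only when $E(v,u)$. Reading the tour off as a vertex ordering yields, and is yielded by, a Hamiltonian path from $v_1$.

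For the forward direction, given a Hamiltonian path $v_1 = w_1, w_2, \ldots, w_n$, I would build a deterministic pruning $\D$ of $\A^{G^{v_1}}$: keep all states and all (already deterministic) safe transitions, and for each state $q$ and letter $\sigma$ with no safe $\sigma$-successor, retain a single rejecting $\sigma$-transition that advances the run along the cyclic order $\SC_{\it Sync}\to\SC(w_1)\to\SC(w_2)\to\cdots\to\SC(w_n)\to\SC_{\it Sync}$, landing in the state of the next component that correctly records the values carried by the last letters. Because consecutive $w_j,w_{j+1}$ are neighbors in $G$, such a compatible landing state exists inside $\SC(w_{j+1})$ without any auxiliary state, and $\SC_{\it Sync}$, being adjacent in the tour to $\SC(w_1)=\SC(v_1)$ and to $\SC(w_n)$, bridges the wrap-around. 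I would then verify $L(\D)=L(\A^{G^{v_1}})$: the inclusion $L(\D)\subseteq L(\A^{G^{v_1}})$ is immediate since a pruning only removes runs; for the converse, any accepted word has a suffix trappable in some component $C$, and the cyclic routing guarantees the deterministic run eventually enters $C$ in a state whose safe language contains the remaining suffix, after which it stays safe. Since $\D$ keeps exactly the states of $\A^{G^{v_1}}$, we have $|\D|=|Q|$.

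For the backward direction I would start from an equivalent tDCW $\D$ with $|\D|=|Q|$. Since deterministic automata are HD, $\D$ is itself an HD-tNCW of size $|Q|=|\A^{G^{v_1}}|$, hence also \emph{minimal}; invoking the canonical-form and minimality properties established in the appendix, $\D$ and $\A^{G^{v_1}}$ then share the same safe components up to isomorphism, so no merging of components or states is possible within the budget. Because $\D$ is deterministic and equivalent, its routing among safe components must let every component be reachable from every other — otherwise an accepted word with a safe suffix in an unreachable component would be rejected — so the cross-component transitions form a single recurrent tour. The engineered role of $\SC_{\it Sync}$ (the $\text{\it extra}$-consistency tying $\text{\it extra}$ to the past value of $v_1$, together with the bounded-gap $\it sleep$ timing) forces this tour through $\SC_{\it Sync}$, organizing the vertex components into a single chain $w_1,\ldots,w_n$ between successive visits to $\SC_{\it Sync}$. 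Finally I would show that a direct deterministic step from $\SC(v)$ into a non-degenerate state of $\SC(u)$ is possible within the state budget only when $E(v,u)$, so the chain satisfies $E(w_j,w_{j+1})$ for all $j$ and is a simple path in $G$; as $v_1$ has no incoming edges it cannot occur as any $w_j$ with $j>1$, forcing $w_1=v_1$ and hence a Hamiltonian path starting at $v_1$.

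The main obstacle is the last step of the backward direction: proving that non-adjacency in $G$ genuinely obstructs a direct, state-economical connection between the corresponding components, and that $\SC_{\it Sync}$ cannot be bypassed or duplicated. This is exactly where the carefully chosen languages $L_v$ and $L_{\it Sync}$ must be exploited. I would argue at the level of safe languages, showing that any deterministic transition bridging two non-adjacent vertex components either loses a word of $L(\A^{G^{v_1}})$ (because no state of $\D$ has the required safe language) or demands a fresh state, contradicting $|\D|=|Q|$; and symmetrically that the synchronization constraints make $\SC_{\it Sync}$ indispensable in every cross-component cycle. Verifying these irredundancy claims rigorously, rather than carrying out the two routing constructions, is where the bulk of the technical work lies.
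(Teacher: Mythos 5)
Your forward direction is essentially the paper's: build a deterministic pruning of $\A^{G^{v_1}}$ that routes rejecting transitions along the cycle $\SC_{\it Sync}\to\SC(v_1)\to\cdots\to\SC(v_n)\to\SC_{\it Sync}$, using adjacency to land in states with correct memory. Your backward direction also starts exactly as the paper does (determinism implies HD, so $\D$ is a minimal HD-tNCW; by the isomorphism theorem for minimal nice HD-tNCWs one may assume $\D$ is a pruning of $\A^{G^{v_1}}$). After that point, however, there is a genuine gap. Your claim that ``every component must be reachable from every other, so the cross-component transitions form a single recurrent tour'' does not follow: determinism is per letter, not per component, so a single state of $\SC(v)$ may exit to different components on different letters, and the inter-component structure is a relation, not a cyclic order; strong connectivity of that relation is nowhere near a tour visiting each component once. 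The paper never establishes such a global structural claim. Instead, it constructs one specific input word $z = z_1\cdot z_2\cdots z_n$ block by block (Appendix~\ref{corr thm app}, Lemmas~\ref{base lemm} and~\ref{step lemm}) and extracts the Hamiltonian path from the single run of $\D^{(1,1)}$ on $z$.

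The two ``irredundancy'' claims you defer are the entire proof, and the mechanism needed to discharge them is absent from your plan. The crux is an invariant maintained along the construction of $z$: every prefix $z_1\cdots z_i$ remains extendable to an infinite word in $L_u$ for \emph{every} vertex $u$ whose component has not yet been visited. With this invariant, if the run enters a non-neighbor's component, lands in a state with wrong memory, revisits a component, or returns to $\SC_{\it Sync}$ prematurely, one closes a non-safe cycle in $\D$ labeled by a finite word $x$ such that $x^\omega$ (after injecting $\it sleep$ letters via the safe self-loops and adjusting $\it extra$ values) is in $L(\A^{G^{v_1}})$ --- contradicting equivalence; this is Lemma~\ref{rej loop lemm} combined with the case analysis of Lemma~\ref{step lemm}, and it is also exactly where the $\it sleep$-mixing and $\it extra$-consistency are consumed. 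Note also that your proposed mechanism for the adjacency claim --- ``no state of $\D$ has the required safe language'' --- cannot work as stated: by the isomorphism you already invoked, $\D$'s states have exactly the same safe languages as $\A^{G^{v_1}}$'s, and a given suffix may lie in the safe languages of several components. The contradiction must be derived at the level of acceptance of whole words by the committed deterministic run, not at the level of safe languages of individual states.
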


One direction of Theorem~\ref{corr thm} is easy to obtain. Specifically, it is not hard to show that a Hamiltonian path $p = v_1, v_2, \ldots, v_n$ starting at $v_1$ in $G$ induces a deterministic pruning $\A_d$ of $\A^{G^{v_1}}$ that recognizes the same language. 
Indeed,
as $\A^{G^{v_1}}$ is $\alpha$-homogenuous and safe transitions inside its safe components are defined deterministically, we have that $\A^{G^{v_1}}$ has nondeterministic choices only among its rejecting transitions. 
Then, to obtain $\A_d$, we essentially do the following.
For a state $q\in \SC(v_i)$ that has no outgoing safe $\sigma$-transitions for some letter $\sigma\neq \it sleep$, we keep only a single rejecting transition $\zug{q, \sigma, s}$ from $q$ where $s\in \SC(v_{i+1})$ is such that $s$ encodes the  past-two value of $v_{i+1}$ as specified by the past-one value of $v_{i+1}$ that is encoded in $q$. Indeed, as $E(v_i, v_{i+1})$, the previous past-one value of $v_{i+1}$ is already encoded in the state $q$. 
Also, as expected, the past-one values encoded in $s$ are induced by the  assignment $\sigma$. Note that by definition, this rejecting transition exists since we branch to all states upon leaving a safe component.
Similarly, when $q$ belongs to $\SC_{\it Sync}$, then we  prune rejecting transitions from $\SC_{\it Sync}$ except for those that lead to $\SC(v_1)$ with the past-two value of $v_1$ as encoded in $\SC_{\it Sync}$. Indeed, $\SC_{\it Sync}$ encodes the last value of $v_1$, and in fact this is why the $\it extra$-consistency property is specified. Finally, for $\SC(v_n)$, we only keep the transitions to $(1, k)$ in $\SC_{\it Sync}$ remembering that we have just seen $\it sleep$, and the past-one value of $v_1$ as specified by the last letter read.
To sum up, since $\A_d$ always proceeds to the states storing the correct past values,  and since it follows the Hamiltonian path $p$ closing a simple cycle that visits all the safe components of $\A^{G^{v_1}}$, we get that it is well-defined and recognizes the correct language. Indeed, by following $p$, all runs of the resulting automaton cycle through the safe components until reaching a safe component that (safely) accepts the word, provided that the word is actually in the language of the automaton.
For a formal proof, we refer the reader to Section~\ref{min is hard sec}, where we actually prove a stronger claim as needed for showing the NP-hardness of the automaton minimization problems considered in this paper. In particular, we show that for every  (not necessarily simple)  path $p$ that starts at $v_1$ and visits every vertex of $G$, we can build a tDCW $\A^{G^{v_1}_p}$ for $L(\A^{G^{v_1}})$ by taking a copy of $\SC(v)$ for every occurrence of $v$ in $p$, and only one copy of $\SC_{\it Sync}$ (see Proposition~\ref{tDCW is equiv pro}).
Thus, we can conclude with the following:

 \begin{proposition}[Direction 1 of Theorem~\ref{corr thm}]
    If the graph $G = \zug{V, E}$ has a Hamiltonian path starting at $v_1$, then the minimal HD-tNCW $\A^{G^{v_1}} = \zug{\Sigma, Q, q_0, \delta, \alpha}$ has an equivalent tDCW of size $|Q|$.
\end{proposition}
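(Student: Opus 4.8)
The plan is to build a deterministic pruning $\A_d$ of $\A^{G^{v_1}}$ that keeps all states -- so that $|\A_d| = |Q|$ -- and recognizes the same language. Since $\A^{G^{v_1}}$ is $\alpha$-homogeneous with deterministic safe transitions inside each safe component, the only nondeterminism to resolve lies among the rejecting transitions, which by construction branch to \emph{all} states. Given a Hamiltonian path $p = v_1, v_2, \ldots, v_n$, I would use $p$ to pin down a single target for each such choice: from a state of $\SC(v_i)$ with $i < n$ that lacks a safe $\sigma$-transition I keep only the rejecting $\sigma$-transition into the state of $\SC(v_{i+1})$ carrying the correct past values; from $\SC(v_n)$ I route into the state $(1,k)$ of $\SC_{\it Sync}$ recording the last value of $v_1$; and from $\SC_{\it Sync}$ I route back into $\SC(v_1)$. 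This is well-defined exactly because consecutive vertices of a Hamiltonian path are neighbors, so the past-one value of $v_{i+1}$ is already stored in the source state, and because leaving a safe component branches to all states, so the desired target transition is present to be kept.

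First I would check that $\A_d$ is deterministic (each state keeps its unique safe $\sigma$-successor when one exists and the single pinned rejecting successor otherwise) and that it has the same state set as $\A^{G^{v_1}}$, giving $|\A_d| = |Q|$ at once. Then I would establish $L(\A_d) = L(\A^{G^{v_1}})$. The inclusion $L(\A_d) \subseteq L(\A^{G^{v_1}})$ is immediate since every run of a pruning is a run of the original with the same acceptance status. For the converse, the pinned transitions make the deterministic run of $\A_d$ cycle through the safe components in the order $\SC(v_1) \to \cdots \to \SC(v_n) \to \SC_{\it Sync} \to \SC(v_1)$, entering each component with the correct recorded memory; hence any $w \in L(\A^{G^{v_1}})$, which has a suffix that is safely trapped in some safe component, eventually reaches that component along $\A_d$'s run and stays safe, while any word outside the language forces a rejecting transition infinitely often. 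I expect the main obstacle to be this converse inclusion: one must verify that the memory pinned by the Hamiltonian ordering is always consistent with the mixing with $(\mathit{sleep})^\omega$ and with the $\it extra$-consistency enforced through $\SC_{\it Sync}$, so that a word whose suffix lies in the relevant safe language is genuinely trapped rather than slipping past its component.

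For the formal write-up I would prefer the shorter route of invoking the stronger Proposition~\ref{tDCW is equiv pro}. A Hamiltonian path is in particular a simple path that starts at $v_1$ and visits every vertex, so the proposition yields a tDCW $\A^{G^{v_1}_p}$ for $L(\A^{G^{v_1}})$ consisting of one copy of $\SC(v)$ per occurrence of $v$ in $p$ together with a single copy of $\SC_{\it Sync}$. Since $p$ is simple, each vertex occurs exactly once, so $\A^{G^{v_1}_p}$ has exactly one copy of each $\SC(v)$ and one copy of $\SC_{\it Sync}$ -- precisely the safe components of $\A^{G^{v_1}}$ -- whence $|\A^{G^{v_1}_p}| = \sum_{v \in V} |\SC(v)| + |\SC_{\it Sync}| = |Q|$, establishing the claim.
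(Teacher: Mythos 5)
Your proposal is correct and takes essentially the same route as the paper: the paper likewise sketches the Hamiltonian-path-induced deterministic pruning of $\A^{G^{v_1}}$ and then derives the formal claim from the stronger Proposition~\ref{tDCW is equiv pro}, noting that for a simple path starting at $v_1$ and visiting every vertex, the automaton $\A^{G^{v_1}_p}$ contains exactly one copy of each $\SC(v)$ and one copy of $\SC_{\it Sync}$, hence exactly $|Q|$ states.
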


We now proceed with the other direction. 
This direction is more involved and uses key properties of  safe component $\SC_{\it Sync}$ and the way $\it sleep$ transitions are defined.
Assume that there is a  tDCW $\A_d$   for $L(\A^{G^{v_1}})$ of size $|Q|$. We need to show that $\A_d$ induces a Hamiltonian path starting at $v_1$ in $G$.
In Appendix~\ref{corr thm app}, we show that since both $\A^{G^{v_1}}$ and $\A_d$ are minimal HD-tNCWs, then we can assume that $\A_d$ is a deterministic pruning of $\A^{G^{v_1}}$. In particular, 
$\A_d$ and $\A^{G^{v_1}}$ have isomorphic safe components. In addition, all the states in $\A_d$ are equivalent to $\A^{G^{v_1}}$.

Recall that $G = \zug{V, E}$, and let $n = |V|$.  
We prove that $\A_d$ induces a Hamiltonian path by showing the existence of a sequence of finite $n$ words $z_1, z_2, \ldots, z_n$ over $2^{\mathsf{AP}}$ such that the run of $(\A_d)^{(1, 1)}$ on $z = z_1\cdot z_2 \cdots z_n$ is of the form $ r_z = (1, 1) \xrightarrow{z_1} s_1\xrightarrow{z_2} s_2 \cdots \xrightarrow{z_n} s_n$, 
where for all $i \in [n]$, it holds that the state $s_i$ belongs to a safe component $\SC(v_i)$ for some vertex $v_i\in V$, the run $r_z$ leaves the current safe component and enters the next safe component $\SC(v_{i})$ after reading the last letter of $z_{i}$, and  $(v_1, v_2, \ldots, v_n)$ is a Hamiltonian path in $G$. 
Thus,  the run $r_z$ starts at the state $(1, 1)$ in $\SC_{Sync}$, and switches between safe components by simulating a Hamiltonian path in $G$ starting at $v_1$.

In Appendix~\ref{corr thm app}, we
 define the word $z = z_1\cdot z_2 \cdot z_3 \cdots z_n$ iteratively. Essentially, the idea is that for the tDCW $\A_d$ to recognize the correct language, its rejecting transitions must lead to states storing the correct past values with the respect to the prefix read so far, and that happens only when rejecting transitions in $\A_d$ follow edges of the input graph $G$. 
Essentially, a transition that leaves a safe component $\SC(v_i)$ should move to a safe component corresponding to a neighbor of $v_i$, as these are the only vertices that $\SC(v_i)$ keeps track of.
A property of the word $z$ is that every prefix 
$z_1\cdot z_2\cdots z_i$ can  be extended to an infinite word in $L_u$ for all vertices $u$ whose corresponding safe components were not yet visited by the path built so far. The latter allows us prove that the path we gradually build is Hamiltonian. %
The formal argument uses the fact that the safe component $\SC_{\it Sync}$ is designed in a way that forces it to be visited by every non-safe cycle in $\A_d$ that is labeled with a finite word whose infinite repetition is $\it extra$-consistent (in fact, for this reason the mixing with $(\it sleep)^\omega$ is crucial in the specified language). In particular, if the form of the run $r_z$ was not as stated above, 
we could close a non-safe cycle starting at $\SC_{\it Sync}$ (for some finite word $x$ along the cycle) before some state of $\mathcal{S}(u)$ is reached for some vertex $u$ while having that $x^\omega$ is a mix of $(\it sleep)^\omega$ and some word in $ \in L_u$; in particular $x^\omega \in L(\A^{G^{v_1}})$, 
contradicting the fact that it is rejected by $\A_d$. 

 Thus, we can conclude with the following:

 \begin{proposition}[Direction 2 of Theorem~\ref{corr thm}]
    If the minimal HD-tNCW $\A^{G^{v_1}} = \zug{\Sigma, Q, q_0, \delta, \alpha}$ has an equivalent tDCW of size $|Q|$, then 
    the graph $G = \zug{V, E}$ has a Hamiltonian path starting at $v_1$. 
\end{proposition}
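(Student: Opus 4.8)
The plan is to start from the structural reduction of Appendix~\ref{corr thm app}: since $\A^{G^{v_1}}$ and the assumed equivalent tDCW $\A_d$ are both minimal HD-tNCWs of the same size $|Q|$, I may assume that $\A_d$ is a deterministic pruning of $\A^{G^{v_1}}$. In particular $\A_d$ has exactly the same safe components — one copy of $\SC(v)$ for each $v\in V$ together with a single copy of $\SC_{\it Sync}$ — the safe transitions inside them are unchanged and deterministic, and from every state--letter pair lacking a safe successor $\A_d$ retains precisely one rejecting transition. The whole task then reduces to reading a Hamiltonian path off this deterministic pattern of retained rejecting transitions.

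The central tool I would isolate first is the synchronization property announced in the overview: every non-safe cycle of $\A_d$ whose finite label $x$ makes $x^\omega$ be $\mathit{extra}$-consistent must pass through $\SC_{\it Sync}$. The intuition I would formalize is that none of the languages $L_v$ constrains the proposition $\mathit{extra}$, so $\SC_{\it Sync}$ is the only component coupling $\mathit{extra}$ to the value of $v_1$; moreover $v_1$ has no incoming edges, so $v_1\in\eta(v)$ only for $v=v_1$ and, in the pruned $\A_d$, the component $\SC(v_1)$ can be entered only from $\SC_{\it Sync}$. Hence $\mathit{extra}$-consistency cannot be maintained around a non-safe cycle that stays among the vertex-components $\SC(v)$ with $v\neq v_1$, which forces the cycle through $\SC_{\it Sync}$. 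The freedom to inject $\mathit{sleep}$ letters (each state carries a safe $\mathit{sleep}$-self-loop) is what will later let me realize the cycles I need as mixes of $(\mathit{sleep})^\omega$ with words in the $L_v$'s.

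With this in hand, I would exhibit the run $r_z$ by building $z=z_1\cdots z_n$ inductively, maintaining that after the prefix $z_1\cdots z_i$ the deterministic run sits in a state $s_i\in\SC(v_i)$ with $v_1,\dots,v_i$ pairwise distinct and $E(v_j,v_{j+1})$ for $j<i$, and, crucially, that $z_1\cdots z_i$ can still be completed to an infinite word that is a mix of $(\mathit{sleep})^\omega$ with a word of $L_u$ for every not-yet-visited vertex $u$. Starting from $(1,1)\in\SC_{\it Sync}$, the $\mathit{extra}$-coupling together with $v_1$ having no incoming edges forces the first departure from $\SC_{\it Sync}$ into $\SC(v_1)$, matching the requirement that the path start at $v_1$. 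At a generic step I would choose $z_{i+1}$ so as to drive the run out of $\SC(v_i)$; determinism of $\A_d$ plus the requirement that the reached state store the correct past values (which $\SC(v_i)$ records only for the neighbors of $v_i$) forces the retained rejecting transition to land in some $\SC(v_{i+1})$ with $E(v_i,v_{i+1})$, so the visited vertices genuinely form a directed path in $G$, giving the run the form $r_z=(1,1)\xrightarrow{z_1}s_1\cdots\xrightarrow{z_n}s_n$.

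The heart of the argument, and the step I expect to be the main obstacle, is showing that this path never stalls before covering all $n$ vertices. I would argue by contradiction: if at some step the run were forced to loop among already-visited components or to return to $\SC_{\it Sync}$ while some vertex $u$ remained unvisited, then, using the extendability invariant, I would close a non-safe cycle whose label $x$ makes $x^\omega$ simultaneously $\mathit{extra}$-consistent and a mix of $(\mathit{sleep})^\omega$ with a word of $L_u$. By the synchronization property this cycle is legitimately anchored at $\SC_{\it Sync}$, and $x^\omega\in L(\A^{G^{v_1}})$ since it is safely trapped in $\SC(u)$; yet the deterministic run of $\A_d$ on $x^\omega$ traverses the non-safe cycle forever and hence rejects, contradicting equivalence. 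It is precisely to make such an $x^\omega$ both $\mathit{extra}$-consistent and a valid $L_u$-mix that the $(\mathit{sleep})^\omega$-mixing and the global $v_1$-tracking of $\SC_{\it Sync}$ were engineered, so the delicate part of the whole proof is the bookkeeping that preserves extendability into every unvisited $L_u$ across all $n$ inductive steps. Once the induction reaches $s_n$ after visiting every component, $(v_1,\dots,v_n)$ is the required Hamiltonian path starting at $v_1$.
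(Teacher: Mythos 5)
Your overall architecture is the paper's own: assume (via minimality and the isomorphism of safe components) that $\A_d$ is a deterministic pruning of $\A^{G^{v_1}}$; establish the synchronization lemma (every non-safe cycle of $\A_d$ whose label $x$ has $x^\omega$ $\mathit{extra}$-consistent must visit $\SC_{\mathit{Sync}}$); then build $z=z_1\cdots z_n$ inductively from $(1,1)$, maintaining the invariant that each prefix still extends into $L_u$ for every unvisited $u$, and kill any deviation by closing a non-safe cycle whose label's omega-power is both $\mathit{extra}$-consistent and a mix of $(\mathit{sleep})^\omega$ with a word of $L_u$, contradicting equivalence. This is precisely the paper's decomposition into its rejecting-loop lemma, base-case lemma, and inductive-step lemma.

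There is, however, a genuine flaw in your justification of the synchronization lemma, which is the linchpin of everything else. You argue that ``$\mathit{extra}$-consistency cannot be maintained around a non-safe cycle that stays among the vertex-components'' because no $L_v$ constrains $\mathit{extra}$. This has the logic backwards: precisely because no $\SC(v)$ tracks $\mathit{extra}$ (nor, for $v\neq v_1$, the value of $v_1$), such a cycle \emph{can} be labelled with an $\mathit{extra}$-consistent word --- nothing along it checks those values, so the lemma cannot be a syntactic consequence of the components' structure. The correct argument is semantic: if a non-safe cycle avoiding $\SC_{\mathit{Sync}}$ had an $\mathit{extra}$-consistent label, then, since every state it visits carries a safe $\mathit{sleep}$-self-loop, one could inject $\mathit{sleep}$ after each letter and obtain a non-safe cycle over a word $y$ with $y^\omega\in L_{\mathit{Sync}}\subseteq L(\A_d)$; determinism and the equivalence of all states of $\A_d$ then force $\A_d$ to reject $y^\omega$, a contradiction. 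Relatedly, your auxiliary premise that ``in the pruned $\A_d$, $\SC(v_1)$ can be entered only from $\SC_{\mathit{Sync}}$'' is an unproven structural assertion about $\A_d$ --- it is a \emph{consequence} of the completed analysis, so invoking it to prove the lemma is circular. The same caution applies to your claim that determinism ``forces'' rejecting transitions to land in a neighbor component storing the correct past values: this is not free but is itself established by the same contradiction machinery (the paper's trapping and enters-$\SC_{\mathit{Sync}}$ lemmas), case by case, inside the inductive step.
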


Note that since the graph $G$ is $(\leq 3)$-regular, we get that $\A^{G^{v_1}}$
has a size that is polynomial in the size of $G$.
We end this section by showing that the above exponential reduction implies
NP-hardness of the HD-negativity problem
with a modification of the construction so that it uses an alphabet of polynomial %
size. In this way, storing $\A^{G^{v_1}}$ requires space polynomial in the size of
$G$.

\subsection{Getting Rid of The Exponential Alphabet}

In Appendix~\ref{hd neg sec}, 
we show how to encode $\A^{G^{v_1}}$ into a  minimal HD-tNCW $e(\A^{G^{v_1}})$
whose description is at most polynomial in the input graph and that can be
computed in polynomial time from $G$ when applying it ``on-the-fly'' without ever
explicitly building $\A^{G^{v_1}}$.

The basic idea behind the modification of the reduction is to encode  letters in
$2^{\mathsf{AP}}$ as binary numbers.  Specifically, let $k = |\mathsf{AP}|$ denote
the number of the atomic propositions in $\mathsf{AP} = \{ a_1, a_2, \ldots,
a_k\}$. Then, we encode each letter $\sigma$ in $2^{\mathsf{AP}}$ as a binary
number  $\zug{\sigma}$ of length $k$, where the $i$'th bit of $\zug{\sigma}$ is 1
if and only if $a_i\in \sigma$.
The interesting part is that the encoding $e(\A^{G^{v_1}})$ of $\A^{G^{v_1}}$ can
simulate transitions of $\A^{G^{v_1}}$ by following paths of length $k$ labeled
with encoded letters using only polynomially many  new internal states.
The key insight is
that proposition values $a_i$ corresponding to vertices in $G$ that are
at least 3 edges away from some vertex $v$ can be ignored in the part of
$e(\A^{G^{v_1}})$ corresponding to $v$. Since $G$ is $(\leq 3)$-regular, this
means that the number of proposition values of relevance is constant, enabling the encoding to only have a polynomial blow-up.

Then, Appendix~\ref{hd neg sec} not only shows that we can construct the encoding
$e(\A^{G^{v_1}})$ in polynomial-time, but also proves the following Proposition:

\begin{proposition}\label{reduc applies prop}
     There is a Hamiltonian path in $G$ starting at $v_1$ iff the minimal HD-tNCW $e(\A^{G^{v_1}})$
has an equivalent tDCW of the same size.
\end{proposition}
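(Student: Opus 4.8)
The plan is to reduce the proposition to Theorem~\ref{corr thm} by showing that the encoding operation $e$ sets up a size-preserving correspondence between equivalent tDCWs for $\A^{G^{v_1}}$ and equivalent tDCWs for $e(\A^{G^{v_1}})$. Concretely, I would first record the two structural facts that make the encoding well-behaved: (i) $e(\A^{G^{v_1}})$ recognizes exactly the image of $L(\A^{G^{v_1}})$ under the letter-wise binary encoding $\sigma \mapsto \zug{\sigma}$ (with $\mathit{sleep}$ treated as its own symbol), so that the two languages are in bijection; and (ii) applying $e$ to any deterministic pruning $\A_d$ of $\A^{G^{v_1}}$ that retains all of $\A^{G^{v_1}}$'s states yields a deterministic pruning $e(\A_d)$ of $e(\A^{G^{v_1}})$ with the same number of states, namely $|e(\A^{G^{v_1}})|$ --- the internal, bit-reading gadget states are determined by the $(\leq 3)$-regular neighborhood structure and are independent of which rejecting transitions survive the pruning. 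Both facts follow from the on-the-fly construction of $e$ in Appendix~\ref{hd neg sec}.

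For the forward direction, I would start from a Hamiltonian path in $G$ starting at $v_1$ and invoke Theorem~\ref{corr thm} to obtain a tDCW $\A_d$ for $L(\A^{G^{v_1}})$ of size $|Q|$. As established for Theorem~\ref{corr thm}, $\A_d$ may be taken to be a deterministic pruning of $\A^{G^{v_1}}$. Encoding it, $e(\A_d)$ is then a deterministic pruning of $e(\A^{G^{v_1}})$ of size $|e(\A^{G^{v_1}})|$ whose language is the encoded image of $L(\A_d) = L(\A^{G^{v_1}})$, hence equal to $L(e(\A^{G^{v_1}}))$ by fact (i). This exhibits an equivalent tDCW of the required size.

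The backward direction is the harder one, and is where I expect the main obstacle. Given a tDCW $\B$ for $L(e(\A^{G^{v_1}}))$ of size $|e(\A^{G^{v_1}})|$, I would first use that $e(\A^{G^{v_1}})$ is a minimal HD-tNCW to argue --- exactly as in the Appendix~\ref{corr thm app} treatment of the corresponding step for $\A^{G^{v_1}}$ --- that $\B$ can be assumed to be a deterministic pruning of $e(\A^{G^{v_1}})$ with isomorphic safe components and all states equivalent to those of $e(\A^{G^{v_1}})$. The delicate part is then the decoding: I must show that $\B$ respects the gadget structure, i.e., that along every run the internal bit-reading states are traversed in blocks of length $k$ between the states corresponding to genuine states of $\A^{G^{v_1}}$, so that contracting each gadget block recovers a well-defined deterministic pruning $\A_d$ of $\A^{G^{v_1}}$ of size exactly $|Q|$ over the original alphabet $2^{\mathsf{AP}} \cup \{\mathit{sleep}\}$. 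The minimality of $\B$, together with the isomorphism of its safe components with those of $e(\A^{G^{v_1}})$, is what forces the gadget states of $\B$ to coincide with those of $e(\A^{G^{v_1}})$ and thus to decode cleanly; verifying that this decoding yields a deterministic, language-equivalent automaton over the un-encoded alphabet --- and in particular that determinism is preserved when the $k$-step gadget paths are contracted --- is the technical crux. Once $\A_d$ is obtained, fact (i) gives $L(\A_d) = L(\A^{G^{v_1}})$, and Theorem~\ref{corr thm} delivers the desired Hamiltonian path starting at $v_1$.
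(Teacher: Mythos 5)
Your skeleton matches the paper's own proof---forward direction: encode the deterministic pruning of $\A^{G^{v_1}}$ induced by the Hamiltonian path; backward direction: contract the $k$-bit gadget blocks of the small tDCW and invoke Theorem~\ref{corr thm}---but your ``fact (i)'' is false, and it carries the entire weight of your forward direction. The language $L(e(\A^{G^{v_1}}))$ is \emph{not} the letter-wise encoded image of $L(\A^{G^{v_1}})$: the encoding is defined over $\{0,1,\mathit{sleep}\}\cup\Sigma_\$$ and accepts many words that use the special letters (for instance, words with suffixes read along safe loops of the form $(x\cdot \$_{s_x})^\omega$); Proposition~\ref{B sharp-free prop} characterizes only the intersection $L(e(\A^{G^{v_1}}))\cap\{0,1,\mathit{sleep}\}^\omega$, so the encoded image is a proper subset of the language. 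Consequently $L(\A_d)=L(\A^{G^{v_1}})$ does not transfer, via any bijection of languages, to $L(e(\A_d))=L(e(\A^{G^{v_1}}))$, and this equivalence is genuinely nontrivial: a deterministic pruning of an HD-tNCW is in general \emph{not} equivalent to it---that failure is precisely what the whole reduction exploits---so the behaviour of $e(\A_d)$ on words containing special letters must be argued, not asserted. The paper does this in Appendix~\ref{prop enc corr app}: after fixing every rejecting $\$_{s_x}$-transition of the pruning to go to the state $s$ that $s_x$ corresponds to, it shows that an accepting run of $e(\A^{G^{v_1}})$ on some $w$ together with a rejecting run of the pruning on the same $w$ would, after excising the sub-runs on infixes containing special letters (which are necessarily safe loops), project to a word in $L(\A^{G^{v_1}})\setminus L(\A_d)$, a contradiction. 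A second, smaller omission in the same direction: you must check that $e(\A_d)$ is even well defined as a deterministic automaton, i.e., that two letters $\sigma_1\neq\sigma_2$ whose encodings share the internal state $q_{\zug{\sigma_1}[1,k-1]}=q_{\zug{\sigma_2}[1,k-1]}$ and the same final bit cannot be routed by $\A_d$ to different states; the paper proves this via the essential-proposition analysis, and your appeal to $(\leq 3)$-regularity addresses only the state count, not this conflict-freeness.

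Your backward direction is essentially the paper's and you flag the right crux, but the mechanism that resolves it is not ``minimality forcing the gadget states of $\B$ to coincide.'' After Theorem~\ref{iso thm} identifies $\B$ with an automaton on the state space $Q'$ that agrees with $e(\A^{G^{v_1}})$ on safe transitions, what forces a run of $\B$ on an encoding $\zug{\sigma}$ from a state $q\in Q$ to end in a state of $Q$ is \emph{semantic determinism}: iterating Proposition~\ref{SD prop}, the state $\delta_d(q,\zug{\sigma})$ is equivalent to the states of $\delta'(q,\zug{\sigma})\subseteq Q$, and neither the internal states nor $q_{rej}$ is equivalent to any state of $Q$. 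The contraction $\eta(q,\sigma)=\delta_d(q,\zug{\sigma})$ is then automatically deterministic, and its correctness needs only Proposition~\ref{B sharp-free prop}, because words of the form $\zug{w}$ never contain special letters---so the restriction of your fact (i) to special-letter-free words, which is the part of it that is actually true, suffices there.
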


Thus, we can conclude with the following:

\begin{theorem}
    The HD-negativity problem is NP-hard.
\end{theorem}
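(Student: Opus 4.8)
The plan is to assemble the results established above into a polynomial-time many-one reduction from the restricted Hamiltonian path problem to the HD-negativity problem. By Theorem~\ref{HP is hard thm}, the Hamiltonian path problem is NP-hard already on graphs satisfying Def.~\ref{def:GraphProperties}, so it suffices to exhibit such a reduction from this restricted problem.

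Given an instance $\zug{G, v_1}$ of the restricted Hamiltonian path problem, I would output the encoded automaton $e(\A^{G^{v_1}})$. The first thing to check is that this map is computable in polynomial time: as discussed preceding Proposition~\ref{reduc applies prop} (and carried out in Appendix~\ref{hd neg sec}), the construction exploits the fact that $G$ is $(\leq 3)$-regular so that each safe component needs to track only a constant number of proposition values, and encoding each letter of $2^{\mathsf{AP}}$ as a $k$-bit string with $k = |\mathsf{AP}|$ incurs only a polynomial blow-up in the number of states. Crucially, $e(\A^{G^{v_1}})$ is itself a minimal HD-tNCW, so it is a legal instance of the HD-negativity problem, and it can be produced ``on the fly'' without ever materializing the exponential-alphabet automaton $\A^{G^{v_1}}$.

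The correctness of the reduction is then immediate from Proposition~\ref{reduc applies prop}, which states exactly that $G$ has a Hamiltonian path starting at $v_1$ if and only if the minimal HD-tNCW $e(\A^{G^{v_1}})$ admits an equivalent tDCW of the same size, i.e., if and only if $e(\A^{G^{v_1}})$ is a yes-instance of the HD-negativity problem. Chaining this equivalence with the NP-hardness of the restricted Hamiltonian path problem (Theorem~\ref{HP is hard thm}) yields NP-hardness of the HD-negativity problem.

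The substantive content has been front-loaded into the earlier results, so there is no genuine obstacle remaining at this final step; the only points to verify are that $\zug{G, v_1} \mapsto e(\A^{G^{v_1}})$ runs in polynomial time and produces a valid minimal HD-tNCW instance, both of which are guaranteed by the construction in Appendix~\ref{hd neg sec}. The real difficulties — establishing minimality of $\A^{G^{v_1}}$, proving both directions of Theorem~\ref{corr thm}, and re-encoding the exponential alphabet into polynomial size while preserving minimality — are all discharged before this theorem is reached, and this statement merely records their combined consequence.
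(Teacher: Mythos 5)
Your proposal is correct and matches the paper's own reasoning: the paper likewise concludes this theorem immediately by combining Theorem~\ref{HP is hard thm} (NP-hardness of the restricted Hamiltonian path problem), the polynomial-time computability of the encoding $e(\A^{G^{v_1}})$ established in Appendix~\ref{hd neg sec} together with the fact that it is a minimal HD-tNCW (hence a valid instance), and the correctness equivalence of Proposition~\ref{reduc applies prop}. There is nothing further to add.
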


\section{Minimization of Transition-Based Automata is Computationally Hard}
\label{min is hard sec}
In this section, we show that the {\em $\gamma$-minimization problem}, for $\gamma\in \{\text{tDBW}, \allowbreak{} \text{tDCW}, \allowbreak{} \text{HD-tNBW}\}$, is NP-complete. 
The upper bound is quite obvious as language containment between HD automata can be checked in polynomial time for Büchi and co-Büchi acceptance conditions~\cite{DBLP:journals/iandc/HenzingerKR02,DBLP:conf/icalp/KuperbergS15}, and checking whether a given tNBW is HD can be done in polynomial time, too~\cite{DBLP:conf/fsttcs/BagnolK18}. Specifically, to show that the $\gamma$-minimization problem is in NP, we can, given a %
$\gamma$ automaton $\A$ and a bound $k\geq 1$, first guess a $\gamma$ automaton $\B$ of size at most $k$, and then check whether it is equivalent to $\A$. 
Note that when $\gamma = \text{HD-tNBW}$, then guessing $\B$ amounts to guessing a tNBW and then checking whether it is HD.
\bigcomment{%
The upper bound is quite obvious as language containment between HD automata can be checked in polynomial time~\cite{DBLP:journals/iandc/HenzingerKR02,DBLP:conf/icalp/KuperbergS15}, and checking whether a given Büchi or co-Büchi automaton is HD can be done in polynomial time, too~\cite{DBLP:conf/icalp/KuperbergS15,DBLP:conf/fsttcs/BagnolK18}. Specifically, to show that the $\gamma$-minimization problem is in NP, we can, given an HD $\gamma$ automaton $\A$ and a bound $k\geq 1$, first guess a $\gamma$ automaton $\B$ of size at most $k$, and then check whether it is HD and equivalent to $\A$. 
}%
Thus, we can conclude with the following:

\begin{proposition}\label{np prop}
    Consider $\gamma \in \{ \text{tDBW, tDCW, HD-tNBW}\}$. Then, the $\gamma$-minimization problem is in NP.
\end{proposition}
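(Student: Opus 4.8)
The plan is to establish membership in NP by exhibiting a nondeterministic polynomial-time verifier, following the standard ``guess-and-check'' template. Given an instance consisting of a $\gamma$ automaton $\A$ and a bound $k\geq 1$, the verifier first nondeterministically guesses a candidate $\gamma$ automaton $\B$ with at most $k$ states. Since $\B$ has at most $k\leq |\A|$ states over the same alphabet $\Sigma$, its transition relation $\Delta_\B\subseteq Q_\B\times\Sigma\times Q_\B$ and its acceptance set $\alpha_\B\subseteq\Delta_\B$ can be written down in size polynomial in the input, so the guess is legitimate. The remaining task is to verify in polynomial time that $\B$ is a genuine $\gamma$ automaton of the required kind and that $L(\B)=L(\A)$.

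The equivalence check splits by the branching mode of $\gamma$. The key fact is that language containment between two automata is decidable in polynomial time whenever both are history-deterministic, for both the Büchi and co-Büchi acceptance conditions; this is what the cited results~\cite{DBLP:journals/iandc/HenzingerKR02,DBLP:conf/icalp/KuperbergS15} provide. When $\gamma\in\{\text{tDBW},\text{tDCW}\}$, both $\A$ and the guessed $\B$ are deterministic, hence in particular HD, so each of the two containments $L(\A)\subseteq L(\B)$ and $L(\B)\subseteq L(\A)$ can be tested in polynomial time, and $L(\A)=L(\B)$ iff both hold. (For the deterministic classes one must also syntactically verify that $\B$ is deterministic, i.e.\ that $|\delta_\B(q,\sigma)|=1$ for every state $q$ and letter $\sigma$, which is a trivial scan of $\Delta_\B$.) When $\gamma=\text{HD-tNBW}$, the guess $\B$ is an arbitrary tNBW, so we additionally run the polynomial-time test of~\cite{DBLP:conf/fsttcs/BagnolK18} to confirm that $\B$ is HD; once this succeeds, both $\A$ (which is HD by hypothesis) and $\B$ are HD, and the two polynomial-time containment checks again settle equivalence.

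Putting the pieces together: the verifier guesses $\B$ of size at most $k$ (and, in the HD-tNBW case, verifies HDness; in the deterministic cases, verifies determinism), then checks $L(\A)\subseteq L(\B)$ and $L(\B)\subseteq L(\A)$, accepting iff all checks pass. Each subroutine runs in time polynomial in $|\A|+k$, and there are finitely many of them, so the whole verifier is polynomial-time; a correct $\B$ of size at most $k$ exists precisely when $\A$ has an equivalent $\gamma$ automaton with at most $k$ states. This is exactly the NP acceptance criterion for the decision version of the minimization problem.

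There is no substantial obstacle here, as the membership argument is entirely a matter of assembling known polynomial-time primitives under a single nondeterministic guess. The only point deserving care is ensuring that the appropriate polynomial-time primitive is invoked for each acceptance type and branching mode---in particular, that the HDness test is applied only in the HD-tNBW case where the guess is genuinely nondeterministic, and that the containment algorithms are applied to automata both of which are known to be HD (so that the efficient algorithms are actually applicable rather than the general PSPACE-complete containment problem for nondeterministic automata).
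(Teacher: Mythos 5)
Your proposal is correct and follows essentially the same route as the paper: nondeterministically guess a $\gamma$ automaton $\B$ of size at most $k$, verify determinism syntactically (for tDBW/tDCW) or HDness via the polynomial-time test of~\cite{DBLP:conf/fsttcs/BagnolK18} (for HD-tNBW), and then decide equivalence with $\A$ using the polynomial-time containment checks available for HD Büchi and co-Büchi automata~\cite{DBLP:journals/iandc/HenzingerKR02,DBLP:conf/icalp/KuperbergS15}. The extra details you supply (polynomial size of the guess, both containment directions, applicability of the HD containment algorithms because deterministic automata are HD) are exactly the points the paper leaves implicit.
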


For a tDBW $\A$, we use $\widetilde{\A}$ to denote the tDCW obtained by viewing $\A$ as a co-\buchi automaton. The \buchi and
co-\buchi acceptance conditions are dual in the sense that $L(\A) = \overline{L(\widetilde{\A})}$. Similarly, for
a tDCW $\A$, we let $\widetilde{\A}$ denote the dual deterministic \buchi automaton, which complements $\A$.

We proceed with showing that minimization of tDBWs and HD-tNBWs (and thus by duality also minimization of tDCWs) is NP-hard by reducing from instances of the HD-negativity problem. Consider a graph $G = \zug{V, E}$ with a designated vertex $v_1$ satisfying Def.~\ref{def:GraphProperties}, 
consider the HD-tNCW $\A^{G^{v_1}}$ defined in Section~\ref{exp red sec}, and recall that it has no rejecting $\it sleep$ transitions and
it has a safe component $\SC(v)$ corresponding to every vertex $v\in V$; in addition, it has a safe component $\SC_{\it Sync}$ corresponding to the language $L_{\it Sync}$. %
We show next that we can always construct a tDCW with polynomialy many states for $L(\A^{G^{v_1}})$ regardless of whether the minimal HD-tNCW $\A^{G^{v_1}}$ has an equivalent tDCW of the same size. %
Recall that the sub-graph $G|_{\neg v_1}$ is strongly-connected, and let $p = v_1, v_2, v_3, \ldots, v_l$ be a path of polynomial length starting at $v_1$ and  visiting every vertex of $G$. Note that such a path $p$ can be computed in polynomial-time from the input graph $G$.
Then, a tDCW $\A^{G^{v_1}_p}$ for $L(\A^{G^{v_1}})$ that has $l+1$ safe components each of constant size is obtained from $G$ as follows.
We first add a safe component $\SC_{\it Sync} \times \{ l+1\}$ for the safe language $L_{\it Sync}$, which is a copy of the safe component $\SC_{\it Sync}$ with its states labeled with the index $l+1$; that is, $\zug{\zug{q, l+1}, \sigma,\zug{s, l+1}}$ is a safe transition in $\SC_{\it Sync} \times \{ l+1\}$ only when $\zug{q, \sigma, s}$ is a safe transition in $\SC_{\it Sync}$.
Then, for every $i\in [l]$, we define a unique safe component $\SC(v_i) \times \{i\}$ which is copy of the safe component $\SC(v_i)$. 
Thus, $\SC(v_i) \times \{i\}$ is just a copy of the safe component $\SC(v_i)$ with its states labeled with the index $i$. 
We define next rejecting transitions among safe components. Recall that the safe component $\SC(v_i)$ traps infinite words $w$ that are a mix of $(\it sleep)^\omega$ and some word in $L_{v_i}$. Specifically, $\SC(v_i)$ has two memory cells to keep track of the past-two value of $v_i$ and one memory cell for the past-one value of every other vertex in $\eta(v_i)$.
We add  rejecting transitions by letting $\A^{G^{v_1}_p}$ follow the path $p$ in $G$.  
Formally, consider a  letter $\sigma \neq \it sleep$, and a state $\zug{q, i}$ in $\SC(v_i)\times \{i\}$ that has no outgoing safe $\sigma$-labeled transition. We distinguish between two cases. 
First, if $i < l$, then we  add a deterministic rejecting transition $\zug{\zug{q, i}, \sigma, \zug{q', i+1}}$ where $q'$ corresponds to memory cells encoding the past-two value of $v_{i+1}$  as specified by the past-one value of $v_{i+1}$ in $q$, and encoding the past-one values of every vertex in $\eta(v_{i+1})$ as specified by the letter $\sigma$. 
Note that the state $q'$ can be determined  based only on the letter $\sigma$ and the state $q$ as the vertex $v_{i+1}$ is a neighbor of  $v_i$ and so its previous past-one value is encoded already in the state $q$.  We proceed to the other case where $i = l$.
In this case, we add a deterministic rejecting transition $\zug{ \zug{q, i}, \sigma, \zug{q', l+1}}$ where $q'$ is a state in $\SC_{\it Sync}$ that remembers that we have just seen $\it sleep$ and remembers the value of the vertex $v_1$ as specified by the letter $\sigma$. Note that $q'$ exists and can be determined only based on $\sigma$. Indeed, the safe component $\SC_{\it Sync}$ keeps track of the last value of $v_1$ and whether $\it sleep$ has appeared in the last four letters. %
It is left to define $\sigma$-labeled rejecting transitions from a state $\zug{q, l+1}$ in $\SC_{\it Sync} \times \{ l+1\}$, and as expected, 
we move to the state $\zug{q', 1}$ in $\SC(v_1)\times \{1\}$ encoding the past-two value of $v_1$  as specified by that past-one value of $v_1$ in $q$, and encoding the past-one values of every vertex in $\eta(v_1)$ as specified by $\sigma$.
Finally, the initial state of $\A^{G^{v_1}_p}$ is chosen arbitrarily.

Thus, the tDCW $\A^{G^{v_1}_p}$ is obtained from $\A^{G^{v_1}}$ by duplicating its safe components in a way that simulates the polynomial path $p$.
In addition, as the graph $G$ is $(\leq 3)$-regular, then each safe component is of constant size, and so the state-space of the automaton is at most polynomial in $G$.

The following lemma is immediate from the definitions and suggests that a run in $\A^{G^{v_1}_p}$ on a word having at least two non-$\it sleep$ transitions eventually always enters safe components via states encoding the correct past values: 

\begin{lemma}\label{tracks lemm}
     Consider a run $r$ in $\A^{G^{v_1}_p}$. After traversing %
     two non-$\it sleep$ transitions, the run $r$ always
     enters a safe component by moving to the 
     state  encoding the correct past values with respect to the last read letters.
\end{lemma}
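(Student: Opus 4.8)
The plan is to establish, by induction on the number of non-$\mathit{sleep}$ letters consumed, an invariant that pins down exactly which past values the current state encodes, and then to read off the lemma as the special case of this invariant at the moments where the run crosses from one safe component into another. First I would record the structural facts that make the argument go through: since $\A^{G^{v_1}_p}$ is normal and has no rejecting $\mathit{sleep}$-transitions, every transition between two distinct safe components is a rejecting transition carrying a non-$\mathit{sleep}$ letter; every $\mathit{sleep}$-transition inside a copy $\SC(v)\times\{i\}$ is a self-loop; and each non-$\mathit{sleep}$ safe transition inside a copy deterministically updates the tracked cells to the values dictated by the letter just read. Hence a safe component is entered only via a rejecting transition, whose target is fixed by the deterministic definition of that transition.

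For the induction, let $\sigma_1,\sigma_2,\ldots$ be the non-$\mathit{sleep}$ letters read along $r$, in order, and consider the invariant $I(k)$: after $r$ has consumed exactly $k$ non-$\mathit{sleep}$ letters, the current state encodes, for every vertex it tracks, the past-one value given by $\sigma_k$, and, if $k\geq 2$, the past-two value of its own vertex given by $\sigma_{k-1}$. For the base case $I(1)$, whether $\sigma_1$ is consumed by a safe step inside the current copy or by a rejecting transition into a new copy, the construction loads the past-one cells from $\sigma_1$; the past-two cell still holds whatever the arbitrarily chosen initial state stored, but this is irrelevant for $k=1$. This is precisely the content of the ``two non-$\mathit{sleep}$ transitions'' hypothesis: the initial garbage in the past-two cell is flushed only once a second non-$\mathit{sleep}$ letter has been read.

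For the inductive step $I(k)\Rightarrow I(k+1)$, I would check the four kinds of transitions on $\sigma_{k+1}$. In each the past-one cells of the target are loaded from $\sigma_{k+1}$, which is correct. For a safe step inside $\SC(v_i)\times\{i\}$, the past-two cell of $v_i$ is set to the previous past-one value of $v_i$, which by $I(k)$ equals $\sigma_k(v_i)$, the correct past-two value. For a rejecting transition $\SC(v_i)\times\{i\}\to\SC(v_{i+1})\times\{i+1\}$, the past-two cell of $v_{i+1}$ is loaded from the source's past-one value of $v_{i+1}$; since $E(v_i,v_{i+1})$ holds along $p$, the vertex $v_{i+1}$ is tracked by $\SC(v_i)$, so by $I(k)$ this value is $\sigma_k(v_{i+1})$, again correct. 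The transition $\SC(v_l)\times\{l\}\to\SC_{\it Sync}\times\{l+1\}$ only needs the last value of $v_1$, which is read directly off $\sigma_{k+1}$, and the transition $\SC_{\it Sync}\times\{l+1\}\to\SC(v_1)\times\{1\}$ loads the past-two cell of $v_1$ from the value of $v_1$ stored in $\SC_{\it Sync}$, which by $I(k)$ is $\sigma_k(v_1)$. Thus $I(k+1)$ holds.

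Finally, I would specialize: a safe component is entered only by a rejecting transition, and if this transition is the run's $k$-th non-$\mathit{sleep}$ transition with $k\geq 2$, then $I(k)$ applied at the target gives that both the past-one cells and the past-two cell are correct with respect to $\sigma_k$ and $\sigma_{k-1}$, which is exactly the claim. I expect the only real work to lie in the inductive step's bookkeeping — verifying that each of the four transition types writes exactly the intended cells — and, within that, in confirming that the $\SC_{\it Sync}$ entry and exit recover the value of $v_1$ (and the sleep-timing) correctly, since $\SC_{\it Sync}$ stores different information from the $\SC(v)$ copies; the base-case flushing of the initial garbage is then what forces the two-transition delay in the statement.
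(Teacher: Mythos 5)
Your proof is correct and follows essentially the same reasoning as the paper's: the paper's (much shorter) argument is exactly your observation that every non-$\mathit{sleep}$ transition loads the past-one cells from the letter just read, so only past-two cells can be stale after one non-$\mathit{sleep}$ letter, and these are flushed by the second since they are updated from the previous past-one values. Your induction with the case split over transition types is just a more explicit rendering of that same idea.
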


\begin{proof}
    The definition of the transition function is such that when we read a non-$\it sleep$ letter $\sigma$ from any state, we move to a successor-state whose memory cells correspond to the correct past-one values as specified by $\sigma$. Therefore, after reading the first non-$\it sleep$ letter of the input word, we can have inconsistency only with the past-two values which will be resolved after reading the second non-$\it sleep$ input letter. Indeed, memories corresponding to the past-two values get updated according to the previous past-one values.
\end{proof}
The following proposition suggests that $\A^{G^{v_1}_p}$ recognizes the language $L(\A^{G^{v_1}})$.

\begin{proposition}\label{tDCW is equiv pro}
    Consider a graph $G = \zug{V, E}$ with a designated vertex $v_1$ satisfying Def.~\ref{def:GraphProperties},
    and consider a polynomial path $p = v_1, v_2, \ldots, v_l$ visiting every vertex of $G$. Then, the tDCW $\A^{G^{v_1}_p}$ recognizes the language $L(\A^{G^{v_1}})$.
\end{proposition}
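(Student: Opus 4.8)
The plan is to show that $L(\A^{G^{v_1}_p}) = L(\A^{G^{v_1}})$ by proving language containment in both directions, exploiting the fact that $\A^{G^{v_1}_p}$ is a deterministic automaton built by duplicating safe components of $\A^{G^{v_1}}$ along the path $p$. Since both automata share the same safe components (up to copying) and the same safe languages, the core of the argument is that a word is accepted if and only if it has a suffix that is trapped in some safe component, and this trapping behavior is preserved by the duplication. The key technical tool is Lemma~\ref{tracks lemm}, which guarantees that after two non-$\it sleep$ transitions, any run enters a safe component via the state encoding the correct past values with respect to the letters read so far.

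\medskip

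\noindent For the inclusion $L(\A^{G^{v_1}}) \subseteq L(\A^{G^{v_1}_p})$, I would take a word $w \in L(\A^{G^{v_1}})$. By the characterization of the language, $w$ has a suffix $y$ that is either in $L_{\it Sync}$ or a mix of $(\it sleep)^\omega$ with some word in $\bigcup_{v\in V} L_v$. Since $\A^{G^{v_1}_p}$ is deterministic and visits every vertex of $G$ via the path $p$ (with $\SC_{\it Sync}$ also present as a copy), its unique run on $w$ must, after finitely many rejecting transitions, reach a state in a copy $\SC(v)\times\{i\}$ (or $\SC_{\it Sync}\times\{l+1\}$) whose corresponding safe component traps $y$. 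The crucial point is that by Lemma~\ref{tracks lemm}, once two non-$\it sleep$ letters have been read the run always enters safe components with the correct past-value encoding, so the trapping condition required by the definition of $\SC(v)$ (that $s$ remembers the correct past values with respect to the prefix $x$) is met. Hence the run is eventually safe and $w$ is accepted.

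\medskip

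\noindent For the reverse inclusion $L(\A^{G^{v_1}_p}) \subseteq L(\A^{G^{v_1}})$, I would argue that the accepting run of $\A^{G^{v_1}_p}$ on a word $w$ eventually gets trapped in a single copy of some safe component, and since each copy is an exact duplicate of the corresponding safe component in $\A^{G^{v_1}}$ (including its safe transitions), the trapped suffix lies in the safe language of that component. This places the suffix either in $L_{\it Sync}$ or as a mix of $(\it sleep)^\omega$ with a word in some $L_v$, which is precisely the membership condition for $L(\A^{G^{v_1}})$. One must also handle the degenerate case of words with fewer than two non-$\it sleep$ letters separately, but since every state has a safe outgoing $\it sleep$-transition, a word with finitely many non-$\it sleep$ letters has a $(\it sleep)^\omega$ suffix that is safely trapped, so such words are accepted by both automata in tandem.

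\medskip

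\noindent \textbf{The main obstacle} I anticipate is verifying that the rejecting transitions of $\A^{G^{v_1}_p}$ are always \emph{well-defined and consistent} with the past-value encoding as the run crosses between consecutive copies $\SC(v_i)\times\{i\}$ and $\SC(v_{i+1})\times\{i+1\}$, particularly when $p$ is not simple and a vertex is visited multiple times. The design relies on $v_{i+1}$ being a neighbor of $v_i$ so that its past-one value is already encoded in the source state $q$; I would need to confirm that this local consistency, combined with the $\it extra$-consistency enforced by $\SC_{\it Sync}$ (which synchronizes the value of $v_1$ whenever the run re-enters $\SC(v_1)\times\{1\}$), is sufficient to guarantee that any safe suffix trappable in $\A^{G^{v_1}}$ can actually be reached in $\A^{G^{v_1}_p}$ along the fixed path order, and conversely that no spurious word becomes safely trapped. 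This amounts to checking that the path-following determinization does not lose any trapping behavior relative to the nondeterministic branching of $\A^{G^{v_1}}$, which is exactly where Lemma~\ref{tracks lemm} does the heavy lifting.
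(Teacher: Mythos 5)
Your proposal is correct and follows essentially the same route as the paper's proof: both directions of containment, the observation that the rejecting transitions of $\A^{G^{v_1}_p}$ cycle through copies of all safe components along $p$ (so a non-trapped run must eventually enter the component that can trap the relevant suffix), Lemma~\ref{tracks lemm} to guarantee entry with the correct past-value encoding, the separate treatment of words with a $(\mathit{sleep})^\omega$ suffix, and the straightforward reverse inclusion via the fact that every safe component of $\A^{G^{v_1}_p}$ is a copy of one in $\A^{G^{v_1}}$. The only difference is one of explicitness — the paper spells out the dichotomy (either the run is trapped early, or it traverses infinitely many rejecting transitions and hence revisits every component, forcing entry into the trapping one) — but your argument contains the same ideas.
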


\begin{proof}
We first show that $L(\A^{G^{v_1}}) \subseteq L(\A^{G^{v_1}_p})$.
Consider a word $w = \sigma_1\cdot \sigma_2 \cdots \in L(\A^{G^{v_1}})$, and consider the run $r = r_0, r_1, r_2, \ldots$ of the tDCW $\A^{G^{v_1}_p}$ on $w$. We need to show that $r$ eventually becomes safe, and thus $w\in L(\A^{G^{v_1}_p})$. 
If $w$ has a suffix in $(\it sleep)^\omega$, then we are done as every state has a safe run on $(\it sleep)^\omega$; indeed, the safe components of $\A^{G^{v_1}_p}$ are inherited from those of $\A^{G^{v_1}}$. 
We proceed to the case where $w$ has infinitely many non-$\it sleep$ letters.
By definition, the rejecting transitions of $\A^{G^{v_1}_p}$ induce a simple cycle $c = \SC(v_1)\times\{1\}, \SC(v_2)\times \{ 2\}, \ldots, \SC(v_l)\times \{l\}, \SC_{\it Sync} \times \{l+1\}$ that visits its safe components by following the path $p$ and then enters the safe component $\SC_{\it Sync} \times \{l+1\}$.  Therefore, upon traversing a rejecting transition, the run $r$ of $\A^{G^{v_1}_p}$ on $w$ moves to the next safe component as specified by the cycle $c$. 
We show that the run $r$  eventually gets stuck in some safe component and thus is accepting.
By the above, it is sufficient to show that there exists some $j\geq 0$ and a safe component $S\in \SC(\A^{G^{v_1}_p})$ such that if the run $r[j, \infty] = r_j, r_{j+1}, r_{j+2}$ on the suffix $w[j+1, \infty]$ %
enters $S$, it stays in there forever.
Recall that $w\in L(\A^{G^{v_1}})$; in particular, one of the following holds:

\begin{itemize}
    \item 
    There are $i\geq 1$ and $v\in V$ such that the suffix $w[i, \infty]$ of $w$ is a mix of $(\it sleep)^\omega$ and some word in $L_v$:  
    let $w[j+1, \infty]$ be a suffix of $w$ for some $j \geq i$ such that the infix $w[i, j]$ contains at least one non-$\it sleep$ letter.
     If the run $r[j, \infty] = r_j, r_{j+1}, \ldots$ on the suffix $w[j+1, \infty]$  enters a safe component $S$ that is a copy of $\SC(v)$ by traversing some rejecting non-$\it sleep$ transition, then, by applying Lemma~\ref{tracks lemm} on the run $r[i-1, \infty]$, we get that $r[j, \infty]$  enters $S$ by moving to a state encoding the correct values of the last two non-$\it sleep$ letters. Therefore, since $w[i, \infty]$ is a mix $(\it sleep)^\omega$ with some word in $L_v$, we get that $r[j, \infty]$ stays in $S$ once it enters it.

    \item There is $j\geq 0$ such that the suffix $w[j+1, \infty]$ of $w$ is in $L_{\it Sync}$: if the run $r[j, \infty] = r_j, r_{j+1}, \ldots$ on the suffix $w[j+1, \infty]$ enters the safe component $S = \SC_{\it Sync} \times \{ l+1\}$, then the definition of  rejecting transitions in $\A^{G^{v_1}_p}$ implies that $r[j, \infty]$  enters $S$ by traversing a non-$\it sleep$ transition leading to a state that remembers that we have just seen $\text{\it sleep}$ and remembers the correct past-one value of $v_1$ as specified by the last non-$\it sleep$ letter. Therefore, since $w[j+1, \infty]\in L_{\it Sync}$, we get that $r[j, \infty]$ stays in $S$ once it enters it.
      
\end{itemize}

The inclusion $  L(\A^{G^{v_1}_p})\subseteq L(\A^{G^{v_1}})$  is straight-forward as any word accepted by $\A^{G^{v_1}_p}$ has a run that eventually gets stuck in some safe component which is a copy of one of the safe components of $\A^{G^{v_1}}$ to begin with.
\end{proof}

Consider a graph $G = \zug{V, E}$ with a designated vertex $v_1$ satisfying Def.~\ref{def:GraphProperties}. 
Note that as the initial state of $\A^{G^{v_1}_p}$ is chosen arbitrarily, the latter proposition in fact implies that all the states in $\A^{G^{v_1}_p}$ are equivalent.
The equivalent automata  $\A^{G^{v_1}}$ and $\A^{G^{v_1}_p}$
are defined over the exponential alphabet $2^{\mathsf{AP}}\cup \{\it{sleep}\}$, and so 
the next step towards showing hardness of the minimization problem is to consider polynomial encodings of them. 
In Appendix~\ref{hd neg sec}, we have encoded the minimal HD-tNCW $\A^{G^{v_1}} = \zug{2^{\mathsf{AP}}\cup \{\text{\it sleep}\}, Q, q_0, \delta, \alpha}$ into a polynomial minimal HD-tNCW $e(\A^{G^{v_1}}) = \zug{\{ 0, 1, \text{ \it sleep}\} \cup \Sigma_\$, Q', q_0, \delta', \alpha' }$.
Essentially, the state-space $Q'$ includes $Q$ and contains polynomially many new internal states, %
every letter $\sigma$ in $2^{\mathsf{AP}}$ is encoded as a binary number $\zug{\sigma}$ of length $k = |\mathsf{AP}|$, and upon reading $\zug{\sigma}$ from a state $s\in Q$, the encoding $e(\A^{G^{v_1}})$ moves deterministically via a safe path to an internal state $s_x$ after reading a proper prefix $x$ of $\zug{\sigma}$.
Then, upon reading a bit $b$ from the state $s_x$, for some $x\in \{ 0, 1\}^{k-1}$; i.e., when reading the last bit $b$ of the encoding of some letter $\zug{\sigma}$, $e(\A^{G^{v_1}}) $ either proceeds via a safe transition to the only $\sigma$-successor $p$ of $s$ when $\zug{s, \sigma, p}$ is a safe transition in $\A^{G^{v_1}}$ , or proceeds nondeterministically to all states $p$ in $Q$ via rejecting transitions otherwise. 
Thus, the encoding $e(\A^{G^{v_1}})$ of $\A^{G^{v_1}}$  simulates non-$\it sleep$ transitions by following paths of length $k$ labeled with encoded letters. 

It is tempting to think that applying the same encoding scheme on the tDCW $\A^{G^{v_1}_p}$ induces equivalent encodings, that is $e(\A^{G^{v_1}}) = e(\A^{G^{v_1}_p})$. However%
, that does not hold since to make sure that $e(\A^{G^{v_1}})$ is a minimal HD-tNCW, we added a special unique letter $\$_d \in \Sigma_\$$  for every state $d$ in $Q'\setminus \{ q_{rej}\}$, where $q_{rej}$ is a rejecting sink in $e(\A^{G^{v_1}})$: for every internal state $s_x$, for some $|x|\leq k-1$, we added
the safe transition $\zug{s_x, \$_{s_x}, s}$, and added the rejecting transitions $\{s_x\} \times \{\$_{\neq s_x}\} \times Q$.
However, 
as the tDCW $\A^{G^{v_1}_p}$ is obtained by duplicating $\A^{G^{v_1}}$'s safe components, it is not surprising that 
the above encoding scheme can be adapted to the tDCW $\A^{G^{v_1}_p}$  by 
duplicating the safe components of $e(\A^{G^{v_1}})$,
and since $\A^{G^{v_1}_p}$ need not be minimal, both encodings can be defined over the same alphabet by
duplicating also special letters.
In Appendix~\ref{adapt enc app}, we adapt the encoding scheme to $\A^{G^{v_1}_p}$ by encoding $\A^{G^{v_1}_p}$ into a tDCW $e(\A^{G^{v_1}_p})$ that is induced naturally from the encoding $e(\A^{G^{v_1}})$ and the polynomial path $p$. Essentially, the encoding $e(\A^{G^{v_1}_p})$ is identical to $e(\A^{G^{v_1}})$ in the way it encodes safe components, the exception is how it handles rejecting transitions among them: as expected, for non-special letters $\sigma \in 2^{\mathsf{AP}}$, $e(\A^{G^{v_1}_p})$ follows (deterministically) the transition function of $\A^{G^{v_1}_p}$, and for rejecting transitions labeled with a special letter $\$_{s_x}$, it always proceeds
to the same state %
in the same safe component to preserve determinism.
In particular, we can conclude with the following, suggesting that the encoding $e(\A^{G^{v_1}_p})$ satisfies the desired properties (see proof in Appendix~\ref{equiv enc lemm app}):

\begin{lemma}\label{equiv enc lemm}
   Consider the automata $\A^{G^{v_1}}$ and  $\A^{G^{v_1}_p}$. Then, their encodings satisfy the following properties:
   \begin{enumerate}
        \item 
        The encoding $e(\A^{G^{v_1}_p})$ is a tDCW.   
        
       \item 
       The encodings are such that $L(e(\A^{G^{v_1}})) = L(e(\A^{G^{v_1}_p}))$.

   \end{enumerate}
   
\end{lemma}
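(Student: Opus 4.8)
The plan is to prove the two items separately: item~1 (determinism) is essentially a structural check on the redefined transitions, while item~2 (language equality) follows by lifting Proposition~\ref{tDCW is equiv pro} through the encoding.

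For item 1, I would verify that $e(\A^{G^{v_1}_p})$ has a total, deterministic transition function by a case analysis on the source state and letter. The internal states on the length-$k$ encoding paths, together with their bit-transitions and special-letter transitions, are copied verbatim from $e(\A^{G^{v_1}})$, where they are already deterministic; inside each copied safe component the safe transitions are deterministic because the safe transitions of $\A^{G^{v_1}_p}$ (hence of $\A^{G^{v_1}}$) are defined deterministically. The only transitions genuinely redefined are the rejecting ones taken on the last bit of an encoded non-$\it sleep$ letter when no safe successor exists, and those taken on ``wrong'' special letters $\$_{\neq s_x}$; by construction both are sent to a single fixed successor (following the deterministic transition function of $\A^{G^{v_1}_p}$ for the former, and a fixed state in the same safe component for the latter). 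Totality is inherited from the totality of $\A^{G^{v_1}_p}$ and from the fact that every state retains a safe outgoing $\it sleep$-transition. Since rejecting transitions form the co-B\"uchi condition by construction, this establishes that $e(\A^{G^{v_1}_p})$ is a tDCW.

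For item 2, the crucial observation is that $e(\A^{G^{v_1}})$ and $e(\A^{G^{v_1}_p})$ share exactly the same encoded safe-component structure: the latter's safe components are isomorphic copies of the former's, laid out along $p$, so a state of $e(\A^{G^{v_1}_p})$ and the corresponding state of $e(\A^{G^{v_1}})$ have the same safe language, and the two encodings differ only on rejecting transitions. First I would record the encoded analogue of Lemma~\ref{tracks lemm}: after reading two encoded non-$\it sleep$ letters, the unique run of the deterministic $e(\A^{G^{v_1}_p})$ always enters a safe-component copy at the state encoding the correct past values. With this the two inclusions mirror Proposition~\ref{tDCW is equiv pro}. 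For $L(e(\A^{G^{v_1}})) \subseteq L(e(\A^{G^{v_1}_p}))$, an accepted word has a run of $e(\A^{G^{v_1}})$ that eventually stays safe inside some encoded $\SC(v)$ or $\SC_{\it Sync}$; the tracking property forces the unique run of $e(\A^{G^{v_1}_p})$ to eventually enter the corresponding copy at the correct state, and, since that copy is isomorphic and the relevant suffix is safely accepted there, the run stays safe. The reverse inclusion is immediate, because any safe component of $e(\A^{G^{v_1}_p})$ is an isomorphic copy of a safe component of $e(\A^{G^{v_1}})$, and $e(\A^{G^{v_1}})$ can reproduce the accepting run by branching to the matching state when it leaves a safe component. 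Combining both inclusions with $L(\A^{G^{v_1}}) = L(\A^{G^{v_1}_p})$ from Proposition~\ref{tDCW is equiv pro} yields the claimed equality.

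The hard part will be item 2, and within it the careful treatment of words over the polynomial alphabet $\{0,1,\it sleep\}\cup \Sigma_\$$ that are not clean encodings of words over $2^{\mathsf{AP}}\cup\{\it sleep\}$ --- in particular words carrying special letters of $\Sigma_\$$ or misaligned bit blocks. I would handle these by noting that the special-letter transitions and the internal-state transitions are identical in both encodings and affect only whether a run remains inside a single safe-component copy; hence they can change neither automaton's acceptance verdict relative to the other. Concretely, I expect to argue that acceptance of any such word depends only on the shared safe-component structure and on whether some suffix lies in the safe language of a reachable state, a condition that the encoded tracking property makes coincide for the deterministic $e(\A^{G^{v_1}_p})$ and the branch-to-all $e(\A^{G^{v_1}})$.
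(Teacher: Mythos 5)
Your decomposition matches the paper's, but both items have a gap at exactly the point where the real work lies. For item 1, determinism of $e(\A^{G^{v_1}_p})$ does not follow ``by construction'' from determinism of $\A^{G^{v_1}_p}$: the rejecting last-bit transitions are added once per rejecting $\sigma$-transition of $\A^{G^{v_1}_p}$, i.e.\ once per letter $\sigma\in 2^{\mathsf{AP}}$, while the encoding \emph{merges} internal states, so two distinct letters $\sigma_1\neq\sigma_2$ with $\zug{\sigma_1}[k]=\zug{\sigma_2}[k]$ may satisfy $q_{\zug{\sigma_1}[1,k-1]}=q_{\zug{\sigma_2}[1,k-1]}$ and hence both install a transition from the same internal state on the same bit. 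One must show these installations never conflict; the paper does so by observing that internal states are merged only for letters that disagree on propositions \emph{not essential} to $q$, and that the rejecting transitions of $\A^{G^{v_1}_p}$ are determined by $q$ and the essential propositions alone, so the two candidate transitions coincide. This compatibility argument is the crux of item 1 and is missing from your proposal, where the claim ``sent to a single fixed successor'' is precisely what has to be proven.

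For item 2, your treatment of words containing special letters rests on a false premise: the $\Sigma_\$$-transitions are \emph{not} identical in the two encodings. The safe ones are, but on a ``wrong'' letter $\$_{s_x}$ the automaton $e(\A^{G^{v_1}})$ branches via rejecting transitions to all of $Q$, whereas $e(\A^{G^{v_1}_p})$ moves deterministically to one fixed copy $\zug{s,i_s}$. More importantly, special letters cannot be waved away: they carry the inclusion $L(e(\A^{G^{v_1}}))\subseteq L(e(\A^{G^{v_1}_p}))$ in the case where the eventually-safe suffix of the accepting run of $e(\A^{G^{v_1}})$ contains some $\$_{s_x}$. There the decoding/tracking argument does not apply (such a suffix does not decode to a word over $2^{\mathsf{AP}}\cup\{\mathit{sleep}\}$); instead one argues that after reading $\$_{s_x}$ the deterministic run of $e(\A^{G^{v_1}_p})$ sits at \emph{some} copy of $s$ (whether it took the safe or the rejecting $\$_{s_x}$-transition), while the safe run of $e(\A^{G^{v_1}})$ must sit at $s$ itself, so the two runs are synchronized at corresponding states of isomorphic safe components and safety transfers from then on. Your non-special-letter case is fine -- the paper factors it through Propositions~\ref{B sharp-free prop}, \ref{B sharp-free prop 2} and \ref{tDCW is equiv pro} instead of re-proving a tracking lemma at the encoded level, and your variant would also work -- but the special-letter case needs this synchronization argument, not the assertion that such letters ``can change neither automaton's acceptance verdict.''
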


We can now conclude NP-hardness for the minimization problem. For example, to show the hardness of  tDCW minimization, 
we can reduce the Hamiltonian Path problem for a graph
$G = \zug{V, E}$ with a designated vertex $v_1$ and that satisfies Def.~\ref{def:GraphProperties}
to the tDCW minimization problem with the tDCW $e(\A^{G^{v_1}_p})$ and the size bound $|e(\A^{G^{v_1}})|$. Indeed, Lemma~\ref{equiv enc lemm} and Proposition~\ref{reduc applies prop} imply that $e(\A^{G^{v_1}_p})$ has an equivalent tDCW of size at most $|e(\A^{G^{v_1}})|$ if and only if $G$ has a Hamiltonian path starting at $v_1$. 
As HD-tNBWs can be complemented into HD-tNCWs on the same state-space~\cite{ATVA2024}, the following proposition suggests that 
this reduction applies also to the minimization of HD-tNBWs when viewing $e(\A^{G^{v_1}_p})$ as a complemented HD-tNBW:

\begin{proposition}\label{np hard prop}
    Minimization of tDBWs and HD-tNBWs is NP-hard.
\end{proposition}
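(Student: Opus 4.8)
The plan is to reduce, in polynomial time, the Hamiltonian path problem for graphs satisfying Def.~\ref{def:GraphProperties} to each of the two minimization problems, reusing the polynomially-sized tDCW $e(\A^{G^{v_1}_p})$ and the size bound $m := |e(\A^{G^{v_1}})|$. Write $L := L(e(\A^{G^{v_1}})) = L(e(\A^{G^{v_1}_p}))$, the equality being Lemma~\ref{equiv enc lemm}. Since $e(\A^{G^{v_1}})$ is a minimal HD-tNCW of size $m$ and every tDCW is an HD-tNCW, Prop.~\ref{reduc applies prop} says that $L$ has a tDCW of size at most $m$ if and only if $G$ has a Hamiltonian path from $v_1$. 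For tDBWs I would argue by duality: the map $\A \mapsto \widetilde{\A}$ is a size-preserving bijection between tDCWs for $L$ and tDBWs for $\overline{L}$, so $\overline{L}$ has a tDBW of size at most $m$ iff $L$ has a tDCW of size at most $m$, i.e. iff $G$ is Hamiltonian. Thus the instance $\zug{\widetilde{e(\A^{G^{v_1}_p})}, m}$, computable in polynomial time from $G$, is a correct reduction for tDBW minimization (and tDCW minimization follows by the same duality).

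For HD-tNBWs I would feed the same automaton $\widetilde{e(\A^{G^{v_1}_p})}$, now read as an HD-tNBW for $\overline{L}$ (every deterministic automaton is HD), with the same bound $m$, and prove that $\overline{L}$ has an HD-tNBW of size at most $m$ iff $G$ is Hamiltonian. The forward direction is immediate: a Hamiltonian path yields a tDCW for $L$ of size $m$ whose dual is a tDBW, hence an HD-tNBW, for $\overline{L}$ of size $m$. For the converse I would invoke the complementation of~\cite{ATVA2024}: an HD-tNBW for $\overline{L}$ of size at most $m$ produces an HD-tNCW for $L$ on the same state space, and hence of size at most $m$; since $e(\A^{G^{v_1}})$ is minimal, this co-Büchi automaton must have size exactly $m$ and is therefore itself minimal.

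The main obstacle is the last step of the converse, namely promoting this size-$m$ HD-tNCW for $L$ to an actual \emph{tDCW} of size $m$, so that Prop.~\ref{reduc applies prop} delivers a Hamiltonian path. This cannot be achieved by counting states alone, since a minimal HD-tNCW for $L$ of size $m$ exists unconditionally; the argument must instead exploit the structure of $\A^{G^{v_1}}$ together with the size relationship of~\cite{ATVA2024} between the co-Büchi and Büchi worlds. The plan is to show that, for the languages produced by our construction, history-determinism provides no advantage for $\overline{L}$ below the deterministic co-Büchi bound — equivalently, that the minimal HD-tNBW for $\overline{L}$ has the same size as the minimal tDCW for $L$ — so that an HD-tNBW of size at most $m$ for $\overline{L}$ can exist only when that deterministic bound already equals $m$, which by Prop.~\ref{reduc applies prop} happens exactly when $G$ is Hamiltonian. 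Establishing this coincidence, and in particular ruling out a genuinely nondeterministic HD-tNBW of size $m$ for $\overline{L}$ in the non-Hamiltonian case, is where the difficulty concentrates.
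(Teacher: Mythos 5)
Your reduction instance $\zug{\widetilde{e(\A^{G^{v_1}_p})}, m}$ with $m = |e(\A^{G^{v_1}})|$ is exactly the paper's, the tDBW case via duality is handled correctly, and the forward direction of the HD-tNBW case is also right. The converse for HD-tNBWs, however, is where your proposal stalls, and the gap you flag is genuine: you invoke the complementation result of~\cite{ATVA2024} only in the weakened form ``an HD-tNBW for $\overline{L}$ of size at most $m$ yields an HD-tNCW for $L$ of size at most $m$,'' and, as you yourself observe, this is vacuous as a lower-bound tool, since a minimal HD-tNCW for $L$ of size $m$ exists whether or not $G$ is Hamiltonian. Your proposed repair --- a bespoke analysis showing that history-determinism gives no advantage for $\overline{L}$ in the non-Hamiltonian case --- is not carried out and is not how the argument goes.

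The missing idea is that the result of~\cite{ATVA2024} is structurally stronger than the form you quote, and that extra structure is precisely what converts the size count into determinism. Given the hypothesized HD-tNBW $\B$ with state set $Q_\B$, $|Q_\B| \le m$, recognizing $\overline{L}$, the theorem provides: (i) an equivalent HD-tNBW $\B'$ obtained from $\B$ by \emph{pruning transitions}; (ii) a nonempty subset $Q_{\it det} \subseteq Q_\B$ of states each of which has a \emph{singleton} transition function in $\B'$ for every letter; and (iii) an HD-tNCW $\C$ for $\overline{L(\B)} = L$ defined on top of the state space $Q_{\it det}$. Minimality of $e(\A^{G^{v_1}})$ then gives the sandwich $m \le |\C| = |Q_{\it det}| \le |Q_\B| \le m$, which forces $Q_{\it det} = Q_\B$. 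But every state in $Q_{\it det}$ is deterministic in $\B'$, so $Q_{\it det} = Q_\B$ means $\B'$ is outright a tDBW of size $m$ for $\overline{L}$, and its dual $\widetilde{\B'}$ is a tDCW of size $m$ for $L$; Proposition~\ref{reduc applies prop} then delivers the Hamiltonian path. In other words, the ``promotion'' from history-deterministic to deterministic that you were trying to extract from the structure of $\A^{G^{v_1}}$ comes for free from the deterministic-states part of the~\cite{ATVA2024} theorem combined with the size sandwich; no separate treatment of the non-Hamiltonian case is needed.
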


\begin{proof}
We start by describing a reduction from the Hamiltonian path problem for graphs satisfying Def.~\ref{def:GraphProperties}, 
proven to be NP-hard in Theorem~\ref{HP is hard thm}. 
Given a graph $G= \zug{V, E}$ with a designated vertex $v_1$, the
reduction computes a polynomial path $p$ that starts at $v_1$ and visits every vertex of $G$, computes the size of the minimal HD-tNCW $e(\A^{G^{v_1}})$, and returns the tDBW $\widetilde{e(\A^{G^{v_1}_p})}$ and the integer $|e(\A^{G^{v_1}})|$. 
Note that as tDBWs are a special case of HD-NBWs, then we can view $\widetilde{e(\A^{G^{v_1}_p})}$ as an HD-tNBW. We first prove that the reduction is valid for the minimization of HD-tNBWs.
Specifically, we prove that $G$ has a a Hamiltonian path starting at $v_1$ if and only if $\widetilde{e(\A^{G^{v_1}_p})}$ has an
equivalent HD-tNBW $\B$ of size at most $|e(\A^{G^{v_1}})|$. Then, we argue that the proof applies also for the minimization of tDBWs.

For the first direction, if $G$ has a Hamiltonian path starting at $v_1$, then by Proposition~\ref{reduc applies prop}, there is a tDCW $\A$ for $L(e(\A^{G^{v_1}}))$ of size $|e(\A^{G^{v_1}})|$. By duality and Lemma~\ref{equiv enc lemm}, we get that $\B = \widetilde{\A}$ is a tDBW for $\overline{L(e(\A^{G^{v_1}}))} = \overline{L(e(\A^{G^{v_1}_p}))}  =  L(\widetilde{e(\A^{G^{v_1}_p})})$ whose size is at most $|e(\A^{G^{v_1}})|$.
For the other direction, assume that there is an HD-tNBW $\B$ with  state-space $Q_\B$,  $L(\B) = L(\widetilde{e(\A^{G^{v_1}_p})})$, and $|Q_\B| \leq |e(\A^{G^{v_1}})|$.
By \cite{ATVA2024}, there is an HD-tNBW $\B'$  equivalent to $\B$ with the following properties: (1) $\B'$ can be obtained from $\B$ by pruning transitions, (2) there is a non-empty subset of states $Q_{\it det}\subseteq Q_\B$ such that for every state $q\in Q_{\it det}$ and  letter $a$, it holds that $\delta_{\B'}(q, a)$ is a singleton; In addition, (3) an HD-tNCW $\C$ for $\overline{L(\B)}$ can be defined on top of the state-space $Q_{\it det}$. Thus, $Q_{\it det}$ is a set of deterministic states in $\B'$ whose size bounds the size of a minimal HD-tNCW for $\overline{L(\B)}$ from above. 
By the assumption, Lemma~\ref{equiv enc lemm} and duality, it holds that $L(\C) = \overline{L(\B)} = L(e(\A^{G^{v_1}_p})) = L(e(\A^{G^{v_1}}))$. In addition, $|Q_\B| \leq |e(\A^{G^{v_1}})|$. On the one hand, as $e(\A^{G^{v_1}})$ is a minimal HD-tNCW, we get that $ |e(\A^{G^{v_1}})| \leq |\C|$. On the other hand, as $\C$ is defined on top of $Q_{\it det}$ and $Q_{\it det} \subseteq Q_\B$, we have that $|\C| \leq |\B|$. So we get in total that $|e(\A^{G^{v_1}})| \leq |\C| \leq |\B| \leq |e(\A^{G^{v_1}})|$. Hence, $Q_{\it det} = Q_\B$. In particular, $\B'$ is a tDBW for $L(\widetilde{e(\A^{G^{v_1}_p})})$ of size $|e(\A^{G^{v_1}})|$. By duality and item 2 of Lemma~\ref{equiv enc lemm}, we get that $\widetilde{\B'}$ is a tDCW for $L(e(\A^{G^{v_1}}))$ of size $|e(\A^{G^{v_1}})|$, and so we conclude by Proposition~\ref{reduc applies prop} that $G$ has a Hamiltonian path starting at $v_1$.

To conclude that the reduction is also valid for the minimization problem of tDBWs, 
note that in the first direction of the above proof, the automaton $\B$ is already a tDBW. 
Also, 
a tDBW $\B$ of size at most $|e(\A^{G^{v_1}})|$ that recognizes $L(\widetilde{e(\A^{G^{v_1}_p})})$ is an HD-tNBW, and
so the second direction applies also to the tDBW case.
\end{proof}

By duality of tDCWs and tDBWs, and Propositions~\ref{np prop} and \ref{np hard prop}, we can conclude with the following:

\begin{theorem}
    Consider $\gamma \in \{ \text{tDBW, tDCW, HD-tNBW}\}$. Then, the $\gamma$-minimization problem is NP-complete.
\end{theorem}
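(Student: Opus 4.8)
The plan is to assemble the two halves of NP-completeness from the machinery already in place, so that this final statement becomes essentially a bookkeeping step. For the upper bound, Proposition~\ref{np prop} directly gives containment in NP for all three classes $\gamma \in \{\text{tDBW}, \text{tDCW}, \text{HD-tNBW}\}$, so nothing further is required there. For the lower bound, Proposition~\ref{np hard prop} already supplies NP-hardness of the minimization problem for tDBWs and for HD-tNBWs. Hence the only genuinely new work is to transfer NP-hardness to the tDCW case, which I would obtain via the \buchi/co-\buchi duality recalled just before the theorem.

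For the tDCW case, I would argue that tDCW minimization and tDBW minimization are polynomial-time interreducible through the complementation operator $\widetilde{\cdot}$. Given an instance $(\A, k)$ of the tDCW-minimization problem, I form the dual tDBW $\widetilde{\A}$, which has exactly the same state set and satisfies $L(\widetilde{\A}) = \overline{L(\A)}$. Since dualizing preserves the number of states, a tDCW $\B$ of size at most $k$ is equivalent to $\A$ if and only if the tDBW $\widetilde{\B}$ of size at most $k$ is equivalent to $\widetilde{\A}$; consequently $(\A, k)$ is a positive instance of tDCW minimization precisely when $(\widetilde{\A}, k)$ is a positive instance of tDBW minimization. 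As this transformation is computable in polynomial time and NP-hardness of tDBW minimization is given by Proposition~\ref{np hard prop}, NP-hardness of tDCW minimization follows.

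Combining the NP upper bound of Proposition~\ref{np prop} with NP-hardness established for all three classes then yields NP-completeness for every $\gamma \in \{\text{tDBW}, \text{tDCW}, \text{HD-tNBW}\}$, which is the claim. I do not expect any substantial obstacle at this point: all the difficulty has been front-loaded into the reduction from the restricted Hamiltonian path problem (Theorem~\ref{HP is hard thm}), the HD-negativity construction, and Proposition~\ref{np hard prop}. The only point warranting a moment's care is verifying that complementation of deterministic transition-based automata leaves the state count unchanged, which is immediate since $\widetilde{\cdot}$ merely flips the acceptance set $\alpha$ and does not touch the transition structure.
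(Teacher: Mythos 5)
Your proposal is correct and takes essentially the same route as the paper, which likewise assembles the theorem from Proposition~\ref{np prop} (NP membership for all three classes), Proposition~\ref{np hard prop} (NP-hardness for tDBW and HD-tNBW), and tDBW/tDCW duality for the remaining tDCW case. One wording caution: to get NP-hardness of tDCW minimization you need the reduction \emph{from} tDBW minimization \emph{to} tDCW minimization, i.e., map a tDBW instance $(\B,k)$ to the tDCW instance $(\widetilde{\B},k)$ — your stated map goes the opposite way, but since dualization is a state-preserving involution and your equivalence is an iff, instantiating it at $\A=\widetilde{\B}$ immediately yields the correctly oriented reduction.
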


\section{Discussion / Conclusion}

In this paper, we showed that minimizing history-deterministic Büchi automata as well as deterministic Büchi and co-Büchi automata is NP-complete when the automata have transition-based acecptance.

While for automata with state-based acceptance, the complexity of automaton minimization with these acceptance condition-types was known, for transition-based acceptance, the complexities of these minimization problems were previously open. 
It should be noted that while the difference between state-based and transition-based acceptance may appear to be minor, the fact that for history-deterministic co-Büchi automata, transition-based acceptance lowers the minimization complexity from NP-complete to polynomial-time shows that distinguishing between state-based and transition-based acceptance is relevant when searching for an efficiently minimizable automaton model.
Furthermore, transition-based acceptance can lower the size of an automaton by a factor of at most 2 in Büchi and co-Büchi automata; in particular, a polynomial-time minimization of deterministic automata with transition-based acceptance implies a factor-2 approximation algorithm with polynomial computation time for minimization of state-based automata.
The NP-hardness proof for state-based deterministic co-Büchi automaton minimization by Schewe~\cite{DBLP:conf/fsttcs/Schewe10} is based on a reduction from the \emph{vertex-cover} problem that has a factor-2 approximation algorithm with polynomial computation time. With both factors being precisely the same, Schewe's result hints that minimization of deterministic transition-based automata might be tractable. 
However, by a careful observation, one obtains that Schewe's reduction outputs co-Büchi automata that recognize languages that 
do not benifit from the transition to the HD model in terms of state-space complexity when using transition-based acceptance, regardless of whether the input graph has a vertex cover of the specified size; in particular, the reduction has no implications for automata with transition-based acceptance.
Taking this one step further, the new results from this paper prove NP-hardness of deterministic co-Büchi automaton minimization for transition-based acceptance, shattering the remaining hope for an efficient automaton minimization procedure in the 
transition-based setting.

The key ingredient to the results obtained in this paper is that we did not look at determinism and history-determinism in isolation, but rather investigated the relationship between the state-space complexity of minimal history-deterministic and deterministic co-Büchi automata.
By exploiting the characterization of canonical minimal history-deterministic co-Büchi automata with transition-based acceptance by Abu Radi and Kupferman~\cite{DBLP:journals/lmcs/RadiK22}, we defined a class of history-deterministic automata that are already provably minimal, encode the graph of a (modified) Hamiltonian path problem, and for which determinizing the automaton without a blow-up in the state-space requires to solve the Hamiltonian path problem. 
In this way,
the study of history-deterministic automata led to understanding the complexity of a problem over deterministic automata.%

Given that polynomial-time minimization of $\omega$-automata is highly useful in \emph{compositional} reasoning over reactive systems %
and the complexity landscape depicted in Table~\ref{tab:complexities} proves that co-Büchi languages are the only ones that have a suitable automaton model, our results suggest that for compositional reasoning over $\omega$-regular languages, one has to look beyond the deterministic, history-deterministic, and non-deterministic automata. 
A candidate in this context are \emph{families of deterministic automata over finite words}~\cite{DBLP:journals/lmcs/AngluinBF18}, at the expense of their languages being defined over ultimately periodic words. Alternatively, $\omega$-regular languages can be decomposed  
into a \emph{chain-of-co-Büchi automaton representation}~\cite{DBLP:conf/fsttcs/EhlersS22}, which employs minimized history-deterministic co-Büchi automata with transition-based acceptance for each chain element, so that they can be minimized with the polynomial-time minimization approach for this automaton class.
We hope that as a consequence of our results showing that for polynomial-time minimization, one has to trade deterministic automata against some other model, interest in using these alternative $\omega$-regular language representations for compositional reasoning increases in the future.

\bibliography{bib}

\clearpage
\appendix

\section{Notes on the complexity overview in Table~\ref{tab:complexities}}

In Table~\ref{tab:complexities}, we did not refer to the original publications if they are older than 30 years, but rather referred to more accessible sources. 
This includes referring to \cite{DBLP:reference/mc/Kupferman18} for the PSPACE-completeness of the minimization problem for the non-deterministic versions of all automaton types considered and mentioning \cite{DBLP:journals/ipl/Loding01} for the minimization of deterministic safety automata, even though for the state-based case, this already follows from the classical DFA minimization algorithm.

The corresponding claim in \cite{DBLP:reference/mc/Kupferman18} (Theorem 17) only states that universality checking of non-deterministic Büchi automata with state-based acceptance is PSPACE-complete. However, the PSPACE-completeness of the minimization problem for all cases considered here follow immediately from the arguments from that proof.

To see this, note that PSPACE-containment of the minimization problems is simple to show: An algorithm can non-deterministically guess a smaller automaton and check it for equivalence, which is in PSPACE for non-deterministic Büchi automata and all automata types that can be translated to such with only polynomial blow-up (as the ones considered here). Finally, Savitch' Theorem can be applied.

For the PSPACE-hardness of the automaton types considered here, it can be observed that the proof of Theorem 17 in \cite{DBLP:reference/mc/Kupferman18} actually reduces from the problem of universality checking of a weak automaton in which there is only one absorbing accepting state. As in weak automata, the sizes of transition-based and state-based automata are exactly the same, the result follows for all acceptance conditions except for the safety acceptance condition.

For the safety acceptance condition, some additional details are needed: the argument in \cite{DBLP:reference/mc/Kupferman18} reduces the non-acceptance of a polynomially-spaced Turing machine to the universality problem of an automaton. The word to be accepted is encoded into the input of the automaton. The construction can be altered so that once an accepting state of the Turing machine is reached, the run of the automaton ends, so that the corresponding run can only be accepted if the automaton can detect an illegal computation. The automaton is then a safety automaton.

\section{Additional Preliminaries}
\label{sec:additionalPrelims}

For the detailed proofs in the following, we employ some notation that was not introduced in the preliminaries section of the main document because it was not needed there. We state the additional preliminaries here.

\subsection{Automata}
Given a finite word $w \in \Sigma^*$, as also say that $w^\omega$ is the \emph{omega} of $w$.

An automaton $\A$ is \emph{semantically deterministic} if different nondeterministic choices lead to equivalent states. Thus, for every state $q\in Q$ and letter $\sigma \in \Sigma$, all the $\sigma$-successors of $q$ are equivalent.

The following proposition follows easily from the definitions and suggests that from equivalent states in a semantically deterministic automaton, we can move only to equivalent successor states upon reading the same letter. In particular, for all finite words $x\in \Sigma^*$, all the states in $\delta(q_0, x)$ are equivalent: 

\begin{proposition}\label{SD prop}
    Consider a semantically deterministic automaton $\A = \zug{\Sigma, Q, q_0, \delta, \alpha}$, and transitions $\zug{q, \sigma, q'}$ and $\zug{s, \sigma, s'}$ of $\A$. If $q\sim_\A s$, then $q' \sim_\A s'$. 
\end{proposition}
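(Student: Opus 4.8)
The plan is to characterize the language of each successor state as a \emph{left quotient} of the language of its predecessor, and then to invoke the hypothesis $L(\A^q) = L(\A^s)$. Concretely, I would prove the identity $L(\A^{q'}) = \{ w \in \Sigma^\omega : \sigma \cdot w \in L(\A^q)\}$, and symmetrically $L(\A^{s'}) = \{ w \in \Sigma^\omega : \sigma \cdot w \in L(\A^s)\}$. Since $q \sim_\A s$ gives $L(\A^q) = L(\A^s)$, these two quotient sets coincide, so $L(\A^{q'}) = L(\A^{s'})$, that is $q' \sim_\A s'$, as required.

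To establish the quotient characterization for $q'$, I would argue the two inclusions separately. For the inclusion from left to right, take an accepting run of $\A^{q'}$ on some $w$ and prepend the single transition $\zug{q, \sigma, q'}$; this yields a run of $\A^q$ on $\sigma \cdot w$. The key point, valid for both the \buchi and co-\buchi conditions, is that adding one transition at the front does not change ${\it inf}$ of the run, since that transition is traversed only once; hence the prolonged run has the same acceptance status, and $\sigma \cdot w \in L(\A^q)$. For the reverse inclusion, take an accepting run of $\A^q$ on $\sigma \cdot w$; its first transition is $\zug{q, \sigma, q''}$ for some $\sigma$-successor $q''$ of $q$, and deleting it leaves an accepting run of $\A^{q''}$ on $w$ (again ${\it inf}$ is unchanged), so $w \in L(\A^{q''})$.

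This is where semantic determinism enters: $q''$ and $q'$ are both $\sigma$-successors of $q$, so by definition they are equivalent, $q'' \sim_\A q'$, whence $L(\A^{q''}) = L(\A^{q'})$ and $w \in L(\A^{q'})$. This closes the quotient characterization for $q'$. The identical argument, using the transition $\zug{s, \sigma, s'}$ and semantic determinism at $s$, gives the characterization for $s'$, and the proposition then follows from the equality of quotients noted in the first paragraph.

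The only subtlety, and the one step I would state explicitly rather than leave implicit, is the invariance of acceptance under prepending or deleting a single initial transition. This is immediate because both acceptance conditions are defined through ${\it inf}(r)$, the set of transitions visited infinitely often, which is insensitive to any finite prefix of the run. Everything else is a routine unwinding of the definitions of $\sim_\A$ and of semantic determinism, so I do not anticipate a genuinely hard obstacle here; the care required is only in keeping the two inclusions and the use of semantic determinism clearly separated.
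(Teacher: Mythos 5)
Your proof is correct. The paper offers no explicit proof of this proposition (it is stated as following ``easily from the definitions''), and your argument -- characterizing $L(\A^{q'})$ as the left quotient $\{ w \in \Sigma^\omega : \sigma \cdot w \in L(\A^q)\}$, where semantic determinism is used precisely to make that quotient independent of which $\sigma$-successor of $q$ is taken, and the prefix-independence of ${\it inf}(r)$ handles the prepend/delete steps -- is exactly the routine unwinding of the definitions that the authors intend.
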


For an automaton $\A$, we say that a state $q$ of $\A$ is \emph{HD} if $\A^q$ is HD. 
We say that $q$ is {\em reachable in $\A$} if $q\in  \delta(q_0, x)$ for some finite word $x\in \Sigma^*$.
Recall that two states $q,s\in Q$ are equivalent ($q \sim_{\A} s$) if $L(\A^q) = L(\A^s)$. 
Then, $q$ and $s$ are \emph{strongly-equivalent}, denoted $q \approx _{\A} s$, if $q \sim_{\A} s$ and $L_{\it safe}(\A^q) = L_{\it safe}(\A^s)$. 
When $\A$ is clear from the context, we omit it from the notations, thus write $L_{\it safe}(q)$,
$q\sim s$, $q\approx s$, etc.
Then, we say that a tNCW $\A$ is \emph{safe deterministic} if by removing its rejecting transitions, we remove all nondeterministic choices. Thus, for every state $q\in Q$ and letter $\sigma\in \Sigma$, it holds that there is at most one safe $\sigma$-labeled transition from $q$.

We now combine several properties defined above and say that an HD-tNCW $\A$ is {\em nice\/} if all the states in $\A$ are reachable and HD, and $\A$ is normal, safe deterministic, and semantically deterministic.

    \begin{example}
{\rm 
Consider the tDCW $\A$ on the right. Dashed transitions are rejecting transitions.
The tDCW $\A$ recognizes the language of all infinite words with finitely many $a$'s or finitely many $b$'s. Then,  $\A$ has three safe components: $\SC(\A) = \{ \{q_0\}, \{q_1\},\{q_2\}\}$.}
All the states of $\A$ are reachable, and $\A$ being deterministic implies that it is already safe-deterministic, semantically-deterministic, and all its states are HD. 

\noindent
\begin{minipage}{3.3in}
\vspace*{3pt}
{\rm %
Note that the word $w = a^\omega$ is such that the run $r$ of $\A^{q_0}$ on $w$ is safe, and thus $w\in L_{\it safe} (\A^{q_0})$, yet $r$ leaves the safe component $\{q_0\}$. If, however, we make $\A$ normal by turning the transitions $\zug{q_0, a, q_1}$ and $\zug{q_1, b, q_2}$ to be rejecting, then we get that $\A$ is nice, in particular, safe runs from a state $q$ are exactly the runs from $q$  that do not leave the safe component $\SC(q)$.\hfill \qed} 
\end{minipage} \ \hspace{.1in}
\begin{minipage}{1.7in}
\begin{tikzpicture}
\node[state,fill=black!10!white] (q0) at (0,0) {$q_0$};
\node[state,fill=black!10!white] (q1) at (2,0) {$q_1$};
\node[state,fill=black!10!white] (q2) at (4,0) {$q_2$};
\draw[->,semithick] (q0) edge[bend left=10] node[above] {$a$} (q1);
\draw[->,semithick] (q1) edge[bend left=10] node[above] {$b$} (q2);
\draw[->,semithick] (q0) edge[loop above] node[above] {$b$} (q0);
\draw[->,semithick] (q1) edge[loop above] node[above] {$a$} (q1);
\draw[->,semithick,dashed] (q2) edge[bend left=30] node[below] {$a,b$} (q0);
\draw[fill=black] (-0.7,0.5) circle (0.05cm);
\draw[->,semithick] (-0.7,0.5) -- (q0);
\end{tikzpicture}
\end{minipage}
\end{example}

The following theorem suggests that nice minimal equivalent HD-tNCWs have isomorphic safe components~\cite{DBLP:journals/lmcs/RadiK22}, in particular, their structures differ only in the way rejecting transitions are defined across safe components:

\begin{theorem}\label{iso thm}
    Consider nice HD-tNCWs $\A = \zug{\Sigma, Q, q_0, \delta, \alpha}$ and $\B=\zug{\Sigma, Q', q'_0, \delta', \alpha'}$. If $\A$ and $\B$ are both minimal and equivalent, then there is a bijection $\gamma: Q\to Q'$ with the following properties: 
    \begin{itemize}
        \item 
        For all states $q, s\in Q$ and letter $\sigma\in \Sigma$, it holds that $\zug{q, \sigma, s}$ is a safe transition of $\A$ if and only if   $\zug{\gamma(q), \sigma, \gamma(s)}$ is a safe transition of $\B$. Thus, the bijection $\gamma$ induces an isomorphism between the safe components of $\A$ and $\B$.

        \item
        For all states $q\in Q $, it holds that $q\approx \gamma(q)$. Thus, $\gamma$ maps a state $q$ to a state that not only has the same safe language as $q$, but also the same co-\buchi language.
    \end{itemize}
\end{theorem}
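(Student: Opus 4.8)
The plan is to reduce the whole statement to two structural facts about nice minimal HD-tNCWs and then assemble the bijection from them. The two facts are: \textbf{(A)} \emph{rigidity} --- no two distinct states of a nice minimal HD-tNCW are strongly equivalent; and \textbf{(B)} \emph{completeness of matching} --- if $\A$ and $\B$ are equivalent nice minimal HD-tNCWs, then every state of $\A$ has a strongly-equivalent partner in $\B$, and symmetrically. Granting (A) and (B), I would define $\gamma(q)$ to be the unique state $s'\in Q'$ with $q\approx s'$: existence comes from (B), and uniqueness from (A) applied to $\B$ (two partners of $q$ would be strongly equivalent to each other, hence equal). The map $\gamma$ is injective, since $\gamma(q)=\gamma(p)$ forces $q\approx p$ and then (A) for $\A$ gives $q=p$; and since minimality and equivalence give $|Q|=|Q'|$, an injection between equal-size finite sets is a bijection. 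The second bullet, $q\approx\gamma(q)$, then holds by construction.

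It remains to verify that $\gamma$ preserves safe transitions, which I expect to be routine. Fix a safe transition $\zug{q,\sigma,s}$ of $\A$. By safe determinism $s$ is the unique safe $\sigma$-successor of $q$, and unwinding the definition of $L_{\it safe}$ gives $L_{\it safe}(s)=\sigma^{-1}L_{\it safe}(q)$, the left-residual by $\sigma$. Since $q\approx\gamma(q)$ we have $L_{\it safe}(\gamma(q))=L_{\it safe}(q)$, and because $\SC(s)$ is strongly connected (normality) this residual is nonempty, so $\gamma(q)$ has a safe $\sigma$-transition to its unique safe $\sigma$-successor $s'$ with $L_{\it safe}(s')=\sigma^{-1}L_{\it safe}(\gamma(q))=L_{\it safe}(s)$. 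For the full languages, semantic determinism (Proposition~\ref{SD prop}) yields $L(\A^{s})=\sigma^{-1}L(\A^{q})$ and $L(\B^{s'})=\sigma^{-1}L(\B^{\gamma(q)})$, and $L(\A^q)=L(\B^{\gamma(q)})$ since $q\approx\gamma(q)$; hence $L(\A^s)=L(\B^{s'})$. Thus $s\approx s'$, so $s'=\gamma(s)$ and $\zug{\gamma(q),\sigma,\gamma(s)}$ is safe in $\B$. The converse direction is symmetric using $\gamma^{-1}$, so $\gamma$ is a safe-transition isomorphism.

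The genuine work --- and the main obstacle --- is proving (A). I would argue by contradiction from minimality: assume $q\approx s$ with $q\neq s$ in a nice minimal HD-tNCW. Since $L_{\it safe}(q)=L_{\it safe}(s)$ and each safe component is, by normality and safe determinism, a strongly-connected \emph{deterministic} safety automaton, the two safe components are isomorphic through a safe-language-preserving map $\phi$ with $\phi(s)=q$. Using semantic determinism exactly as above, every pair $t,\phi(t)$ is in fact strongly equivalent (the $L_{\it safe}$-residuals match through the isomorphism, and the full-language residuals match because $L(\A^q)=L(\A^s)$). I would then form the quotient that identifies each state $t\in\SC(s)$ with $\phi(t)$, redirecting all incoming transitions accordingly, obtaining a strictly smaller automaton since $\phi(s)=q\neq s$. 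The task is to show this quotient is still an equivalent nice HD-tNCW, contradicting minimality. Safe determinism and normality are readily preserved; the delicate points are that redirecting (rejecting) transitions onto strongly-equivalent targets preserves the language, and above all that history-determinism survives, which one shows by transporting a witnessing strategy along $\phi$. This merging step is exactly the crux of the Abu Radi--Kupferman minimization machinery and is where the bulk of the effort lies.

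Finally, for (B) I would show that the collection of safe languages realized by the states of a nice minimal HD-tNCW for $L$ is an intrinsic invariant of $L$ --- namely the \emph{maximal} safe languages contained in residuals of $L$ that arise from strongly-connected safe behavior. Combined with reachability (every residual $x^{-1}L$ is the language of some reached state, by semantic determinism), this forces $\A$ and $\B$ to realize exactly the same pairs $(L(\A^q),L_{\it safe}(q))$, which by rigidity (A) are in bijection with the states on each side; hence each state of $\A$ has a strongly-equivalent partner in $\B$. I expect (B) to be less subtle than (A), but it still rests on the canonicity of the maximal safe components.
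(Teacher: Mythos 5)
First, a point of comparison: the paper does not prove Theorem~\ref{iso thm} at all --- it is stated as a known result and attributed to Abu Radi and Kupferman~\cite{DBLP:journals/lmcs/RadiK22}, so your attempt has to stand as a self-contained argument (or as an explicit reduction to lemmas of that work). Your assembly layer is sound: granting your facts (A) (rigidity) and (B) (matching), defining $\gamma$ by strong equivalence and transporting safe transitions via safe determinism, semantic determinism, and normality (the latter giving nonemptiness of the safe residual) is exactly how one would finish, and I see no error in that part.

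The genuine gaps are inside (A) and (B), which is where the entire content of the theorem lives. Concretely, your proof of (A) is circular at the step that produces the isomorphism $\phi$ between $\SC(s)$ and $\SC(q)$. Two strongly-connected \emph{deterministic} safety automata with the same language need not be isomorphic --- a single state with a safe $a$-self-loop and a two-state safe $a$-cycle both have safe language $a^\omega$ --- unless neither component contains two distinct states with the same safe language, which is precisely the rigidity you are trying to prove; at that point of the proof by contradiction you cannot invoke it. The same circularity appears if you instead define $\phi$ by parallel transport along safe paths: from $\delta_{\it safe}(s,x)=\delta_{\it safe}(s,y)$ you can conclude only that $\delta_{\it safe}(q,x)\approx\delta_{\it safe}(q,y)$ (via safe plus semantic determinism), and promoting $\approx$ to $=$ is again rigidity. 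The standard repair --- and what \cite{DBLP:journals/lmcs/RadiK22} actually does --- is to drop $\phi$ and quotient the whole automaton by $\approx$, which is a congruence for safe transitions; but then the crux, namely that the strictly smaller quotient is still \emph{equivalent} and still \emph{HD}, is exactly the step you defer with ``transporting a witnessing strategy along $\phi$.'' That step is not routine: the quotient introduces fresh nondeterminism among rejecting transitions, showing that a witnessing strategy still exists and that the language is unchanged is the bulk of the Abu Radi--Kupferman minimization proof, and their quotient is taken only after a separate \emph{safe-centralization} stage that your sketch omits. Fact (B) is in the same position: that the family of pairs $(L(\A^q), L_{\it safe}(q))$ realized by a minimal automaton is an intrinsic invariant of the language is the other half of their canonicity theorem, asserted but not argued. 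So the proposal is a correct reduction of Theorem~\ref{iso thm} to the two main theorems of \cite{DBLP:journals/lmcs/RadiK22}, together with one construction ($\phi$) that fails as stated; as a proof it does not go through.
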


We say that a tNCW $\A = \zug{\Sigma, Q, q_0, \delta, \alpha}$ is $\alpha$-saturated if for all rejecting transitions $\zug{q, \sigma, s}$ of $\A$, and states $p$ with $p\sim s$, it holds that $\zug{q, \sigma, p}$ is also a transition of $\A$. 
The following theorem is, in a sense, dual to the fact that we can assume that every HD-tNCW is nice~\cite{DBLP:journals/lmcs/RadiK22},
and it is used in the several proofs that show that
the tNCWs we define in this paper are already HD:

\begin{theorem}\label{are GFG thm}    
    Consider a tNCW $\A = \zug{\Sigma, Q, q_0, \delta, \alpha}$. If $\A$ is normal, safe-deterministic, semantically-deterministic, $\alpha$-homogeneous, and $\alpha$-saturated, then it is HD.
\end{theorem}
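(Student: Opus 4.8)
The plan is to exhibit an explicit history-deterministic strategy $f\colon \Sigma^*\to Q$ and prove it winning, i.e.\ that it produces an accepting run on every word of $L(\A)$. First I would record the structural consequences of the five hypotheses that pin down where $f$ has any freedom at all. By semantic determinism and Proposition~\ref{SD prop}, after reading a prefix $u$ all states reachable on $u$ are equivalent, so the residual language $\ell_u := L(\A^{q})$ (for any reachable $q$) is well defined and depends only on $u$; I would maintain the invariant that $L(\A^{f(u)}) = \ell_u$. By $\alpha$-homogeneity together with safe-determinism, whenever the current state $f(u)$ has a safe $\sigma$-transition this is the \emph{unique} $\sigma$-transition out of $f(u)$, so $f$ is forced to follow it. The only genuine choice occurs at a \emph{jump}: a position where $f(u)$ has no safe $\sigma$-transition, in which case every $\sigma$-transition out of $f(u)$ is rejecting (by $\alpha$-homogeneity), its target $s$ has $L(\A^s)=\ell_{u\sigma}$, and, crucially, $\alpha$-saturation makes every state $p \sim s$ a $\sigma$-successor of $f(u)$. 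Hence at a jump $f$ may move to an arbitrary state of the class $S_{\ell_{u\sigma}}=\{p : L(\A^p)=\ell_{u\sigma}\}$, and nowhere else does it choose.

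Next I would fix the jump rule so as to guarantee that the run settles. Fix once and for all a list of all states of $Q$; at a jump with residual class $S$, let $f$ move to the state of $S$ appearing earliest in the current list and then move that state to the end of the list (a move-to-back / latest-appearance-record discipline). The elementary combinatorial property I would prove about this rule is: any state that is an eligible jump target (i.e.\ lies in the current residual class) at infinitely many jumps is in fact chosen at all but finitely many of those jumps. Indeed, each time such a state is eligible but not chosen, some state ahead of it in the list is chosen and sent behind it, and (since it itself is never moved while not chosen) this can decrease the number of states ahead of it only finitely often before it becomes the front-most eligible state.

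For correctness I would take $w\in L(\A)$ and argue that $f$ jumps only finitely often on $w$. Since $\A$ is a co-\buchi automaton, $w=x\cdot y$ with $y\in L_{\it safe}(\A^s)$ for some state $s$ reachable on $x$; let $C^* = \SC(s)$ and let $s=t_0, t_1, t_2,\ldots$ be the safe run of $s$ on $y$ (so $t_{k+1}$ is the safe $y[k+1]$-successor of $t_k$), which by normality never leaves $C^*$ and satisfies $y[k+1,\infty]\in L_{\it safe}(t_k)$ for every $k$. Suppose for contradiction that $f$ jumps infinitely often. At every jump position $j>|x|$ the state $t_{j-|x|}$ lies in $C^*$ and, by Proposition~\ref{SD prop}, has residual $\ell_{w[1,j]}$, hence is an eligible target; as $C^*$ is finite, some fixed $g\in C^*$ equals $t_{j-|x|}$ at infinitely many such jumps, so $g$ is eligible infinitely often. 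By the move-to-back property $g$ is chosen at one of these jumps $j$, where $g=t_{j-|x|}$ and therefore $w[j+1,\infty]\in L_{\it safe}(g)$; by normality the ensuing safe run stays in $C^*$ forever, so $f$ never jumps again, contradicting that it jumps infinitely often. Thus $f$ takes finitely many rejecting transitions on $w$, so $f(w)$ is accepting, and $f$ witnesses the HDness of $\A$.

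The main obstacle is exactly the jump rule and its termination analysis: because a residual class may contain several states with pairwise \emph{incomparable} safe languages, a greedy rule such as ``jump to a state with $\subseteq$-maximal safe language'' can oscillate forever and never commit to the component that serves the suffix. The ingredients used only for termination, namely normality (safe runs stay inside their component, giving the never-jump-again conclusion), finiteness of $Q$ (so that a single good target $g$ recurs), and the fairness of the move-to-back rule, are precisely what I would combine to rule this out; verifying that the chosen $g$ really carries the tail in its safe language is the one place where the semantic-determinism bookkeeping of Proposition~\ref{SD prop} must be applied with care.
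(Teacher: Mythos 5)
Your reduction of the problem to the jump rule is set up exactly as in the paper (safe moves are forced by safe-determinism and $\alpha$-homogeneity; at a jump, semantic determinism plus $\alpha$-saturation make the whole residual class available), but the move-to-back rule you then impose does not prove the theorem, and the fairness lemma you base it on is where the argument breaks. First, the lemma is false as stated: with two states that are both eligible at every jump, move-to-back alternates between them, so neither is chosen at ``all but finitely many'' of its eligible jumps. More importantly, even the correct weakening (eligible at infinitely many jumps implies chosen at infinitely many jumps) is not what your final step uses. You need $g$ to be chosen at a jump $j$ in the specific subsequence $J_g=\{j \,:\, g=t_{j-|x|}\}$, because only there do you know $w[j+1,\infty]\in L_{\it safe}(g)$. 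Move-to-back only guarantees that $g$ is chosen at jumps where it is \emph{eligible}, and those choices can all fall outside $J_g$: the jumps where $g$ is chosen and the jumps where $g$ is the safe-run state can stay perfectly out of phase forever.

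This is not a repairable bookkeeping issue; the strategy itself is wrong. Take $Q=\{g,h\}$, $\Sigma=\{a,b\}$, safe transitions $g\xrightarrow{a}h$ and $h\xrightarrow{b}g$, and rejecting transitions $\zug{g,b,p}$ and $\zug{h,a,p}$ for every $p\in Q$. This automaton is normal (one safe component), safe-deterministic, $\alpha$-homogeneous, $\alpha$-saturated, and all states are equivalent (both accept exactly the words with an eventually alternating suffix), hence it is semantically deterministic. Let $q_0=h$, let the initial list be $(g,h)$, and feed it $w=(ab)^\omega\in L(\A)$. At $h$ reading $a$ there is no safe transition, so your rule jumps to the front state $g$ and rotates the list to $(h,g)$; at $g$ reading $b$ it jumps to the front state $h$ and rotates back; this repeats forever, producing a run consisting solely of rejecting transitions on a word in the language. (Here $g$ and $h$ are each chosen infinitely often, but always exactly at the jumps where the \emph{other} one is the safe-run state, which is the phase-mismatch described above.) The paper avoids this by ranking jump targets not by a rotating list but by the \emph{age} of runs, i.e., how long a run has already been safe on the actual input: the accepting run's age grows without bound, every competitor that takes a rejecting transition becomes permanently younger, and since only finitely many candidate runs exist over the prefix, the oldest run stabilizes and the strategy latches onto it. In the example this age-based rule self-corrects after two jumps, precisely because the priority is coupled to the word's safety history rather than to an external rotation scheme; any repair of your proof would have to introduce this (or an equivalent) coupling.
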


\begin{proof}
    Let $<$ be a total order on the set of states $Q$.
Consider a finite word $x = \sigma_1\cdot \sigma_2 \cdots \sigma_n$ and a run $r = r_0, r_1, \ldots, r_n$ of $\A$ on $x$.
Let $i_{r, x}$ denote the minimal index in $\{ 0, 1, \ldots, n\}$ such that $r_i, r_{i+1}, \ldots, r_n$ is a safe run on the suffix $x[i+1, n]$.
For a finite word $x$ of length $n$, and runs $r = r_0, r_1, \ldots, r_n$ and $r' = r'_0, r'_1, \ldots, r'_n$ of $\A$ on $x$, we say that {\em $r$ is older than $r'$ with respect to $x$} if either $i_{r, x} < i_{r', x}$, or $i_{r, x} = i_{r', x}$ and $r_{i_{r, x}} < r'_{i_{r', x}}$. 
Thus, older runs are runs that have been safe on a longer suffix of $x$. We let $r_{\it old} (x)$ denote the oldest run %
of $\A$ on $x$.

We show that $\A$ is HD by defining a function $f: \Sigma^* \to Q$ that is consistent with $\delta$ and is such that the induced run $f(w)$ of $\A$ on $w$ is an accepting run for all $w\in L(\A)$.
We first define $f(\epsilon) = q_0$. Then, for a finite word $u\in \Sigma^*$ and letter $\sigma\in \Sigma$, we define $f(u\cdot \sigma)$ as follows.
If $\zug{f(u), \sigma, p}$ is a safe transition in $\A$, then $\A$ being safe-deterministic and $\alpha$-homogeneous implies that $p$ is the only $\sigma$-successor of $f(u)$, and we define $f(u\cdot \sigma) = p$.
Otherwise, namely, if all $\sigma$-transitions from $f(u)$ in $\A$ are rejecting, we define $f(u\cdot \sigma) = s$, where $s$
is the state that the run $r_{\it old}(u\cdot \sigma)$ ends in. %
Thus, $f$ tries to follow the deterministic safe components of $\A$ when possible, and when it needs to guess, it follows the oldest run of $\A$ with respect to the prefix of the input word read so far.
We show next that $f$  is well-defined. In fact, we show a stronger claim, namely, we show by induction on $|u|$ that for all finite words $u\in \Sigma^*$ and letters $\sigma\in \Sigma$, it holds that $f(u)$ is equivalent to all the states in $\delta(q_0,u)$, and $\zug{f(u), \sigma, f(u\cdot \sigma)}\in \Delta$. 
For the base case, we have that $f(\epsilon) = q_0$, and indeed $q_0$ is equivalent to all the states in $\delta(q_0, \epsilon) = \{q_0\}$. Then, for a finite word $u$ such that $f(u)$ is equivalent to all the states in $\delta(q_0,u)$, and a letter $\sigma$, we distinguish between two cases. The first case is when $f(u)$ has a safe $\sigma$-labeled transition $\zug{f(u), \sigma, p}$, and in this case, we defined $f$ such that $f(u\cdot \sigma) = p$. The other case is when all the $\sigma$-labeled transitions from $f(u)$ are rejecting.   
Recall that $f(u)$ is equivalent to all the states in $\delta(q_0, u)$. Then, 
$\A$ being semantically deterministic, we get by Proposition~\ref{SD prop} that all the states in $\delta(q_0, u\cdot \sigma)$ are equivalent to the $\sigma$ successors of $f(u)$. In particular, the state $q$ that the run 
$r_{\it old}(u\cdot \sigma)$ ends in %
is equivalent to the $\sigma$-successors of $f(u)$.
 Therefore, as the $\sigma$-labeled transitions from $f(u)$ are rejecting, and $\A$ is $\alpha$-saturated, it follows that $\zug{f(u), \sigma, q}$ is a transition in $\Delta$, and $f$ is defined to follow that transition. That is $f(u\cdot \sigma) = q$. Hence, in both cases, we have that $\zug{f(u), \sigma, f(u\cdot \sigma)} \in \Delta$. In particular, by Proposition~\ref{SD prop},  we have that $f(u\cdot \sigma)$ is equivalent to all the states in $\delta(q_0, u\cdot \sigma)$, and the induction is concluded.

To conclude the proof, consider a word $w\in L(\A)$, and let $r = r_0, r_1, r_2, \ldots$ be an accepting run of $\A$ on $w$. %
Thus, there is $i \geq 0$ such that $r[i, \infty] = r_i, r_{i+1}, \ldots$ is a safe run on the suffix $w[i+1, \infty]$.
We claim that the run $f(w)$ traverses finitely many rejecting transitions upon reading the suffix $w[i+1, \infty]$, and thus is an accepting run of $\A$ on $w$:
fix a finite run $p = p_0, p_1, \ldots, p_i$ of $\A$ on the prefix $w[1, i]$. We have one of two possible scenarios: 
\begin{enumerate}
    \item 
    Either the finite run $p$ can be extended to a run on $w$ by appending an infinite safe run $p_i, p_{i+1}, \ldots$ on the suffix $w[i+1, \infty]$. In this case, as $\A$ is safe-deterministic and alpha-homogeneous, we get that $p_i, p_{i+1}, \ldots$ is the only run from $p_i$ on the suffix $w[i+1, \infty]$, and so there is a unique extension of $p$. 

    \item
    Or, the finite run $p$  cannot be extended to a run on $w$ by appending a safe run $p_i, p_{i+1}, \ldots$ on the suffix $w[i+1, \infty]$. In this case, consider the minimal $j \geq i$ such that $p_i \xrightarrow{w[i+1, j]} p_j$ is a safe run and the state $p_{j}$ has no safe $w[j+1]$-labeled transition. Then, as $r[i, \infty]$ is a safe run on the suffix $w[i+1, \infty]$, it follows that for all $k > j$, the run $r[0, k] = r_0, r_1, \ldots, r_k$ on $w[1, k]$ is older than all runs that extend the finite run $p$ to a run on $w[1, k]$. Thus, the run $r$ eventually becomes always older than all runs that extend the finite run $p$.
\end{enumerate}
By the argument above, it follows that if the run $p = p_0, p_1, \ldots, p_i$ on $w[1, i]$ is the oldest run with respect to $w[1, i]$, then it either can be extended to an infinite run on $w$ so that its suffix on $w[i+1, \infty]$ is safe, and thus its unique extension remains the oldest run with respect to every prefix $w[1, j]$ for $j\geq i$, or at some point, upon reading a letter $w[j+1]$, %
we get that the run $r[0, k]$ on $w[1, k]$ for all $k \geq  j+1$, is older than all the runs that extend $p$ to a run on $w[1, k]$. 
Therefore, as there are finitely many candidate runs $p = p_0, p_1, \ldots, p_i$ on the prefix $w[1, i]$, and as one of these candidates can be extended to the accepting run $r$, it follows that the oldest run, while reading the suffix $w[i+1, \infty]$, can switch only finitely many times to a run extending some of the candidate runs $p$ by appending a safe run from $p_i$. %
In particular, as $r[i, \infty]$ is safe, and as the strategy $f$ switches to follow the oldest run upon reading a rejecting $w[j+1]$-labeled transition from the state $f(w[1, j])$ for $j \geq i$, it follows that $f$ eventually follows the same oldest run that is safe on the suffix $w[i+1, \infty]$; In particular $f(w)$ is accepting, and we're done.
\end{proof}

\section{Proof of Theorem \ref{HP is hard thm}}
\label{HP is hard thm app}
 A graph $G = \zug{V, E}$ is {\em 3-regular} when every vertex of $G$ has exactly 3 neighbors.
    For proving Theorem~\ref{HP is hard thm}, we employ an existing result on the NP-hardness of checking if a $3$-regular graph has a Hamiltonian cycle is NP-hard~\cite{DBLP:journals/siamcomp/GareyJT76}.
    We first note that we can assume that the latter problem is NP-hard already for strongly-connected 3-regular graphs. Indeed, the reduction that witnesses NP-hardness of the problem can, before returning a 3-regular graph $G$, check the connectivity of $G$  and output instead of $G$ a constant strongly-connected 3-regular graph $G'$ that has no Hamiltonian-cycle when $G$ is not strongly-connected.

    We reduce from the Hamiltonian-cycle problem as follows. Given a strongly-connected 3-regular graph $G = \zug{V, E}$, the reduction returns a
    graph $G'$ with a designated vertex $v_1$ that is obtained from $G$ by taking an arbitrary vertex $v\in V$ and splitting it into two vertices $(v, in)$ and $(v, out)$, where all edges that enter $v$ in $G$ now enter $(v, in)$ in $G'$,  all edges that leave $v$ in $G$ now leave $(v, out)$ in $G'$,  we add an edge from $(v, in)$ to $(v, out)$, and 
    finally,  we add a new vertex $v_1$, and an edge from it to $(v, out)$.
    Thus, the graph $G'$ is obtained from $G$ by splitting the vertex $v$ according to its incoming and outgoing edges, adding an edge from $(v, in)$ to $(v, out)$, and adding a new vertex $v_1$ with the edge $(v_1, (v, out))$. 
    Formally, let $v$ be an arbitrary vertex in $G$. Then,
    we define $G' = \zug{V', E'}$ with a designated vertex $v_1$, where: 
    \begin{itemize}
        \item 
        The vertex $v_1$ is not in $V$, and $V' = (V\setminus \{ v\}) \cup \{(v, in), (v, out), v_1 \}$.

        \item 
        The edges $E'\subseteq V'\times V'$ are defined as
        $E' = (E \setminus ( (\{v\}\times V) \cup (V \times \{v\})) \cup \{ (u, (v, in)): E(u, v) \} \cup \{ ( (v, out), u): E(v, u) \} \cup \{  (v_1, (v, out))\}$.
    \end{itemize}

    Before proving the correctness of the reduction, let us verify that $G'$ satisfies the desired properties.
    First, it is not hard to verify that $G'$ is $(\leq 3)$-regular.
    Indeed, the vertex $(v, out)$ has 3 neighbors as the vertex $v$ in $G$, the vertex $(1, in)$ has a single neighbor which is $(v, out)$, all the vertices $u$ in $V\setminus \{ v\}$ have exactly 3 neighbors (their neighbors in $G$), and finally the vertex $v_1$ has  $(v, out)$ as its only neighbor. Then, note that $v_1$ has no incoming edges.
    In addition, the fact that $G$ is strongly-connected implies that the sub-graph $G'|_{\neg v_1}$ is strongly-connected as well as any path $p$ that visits all the vertices of $G$ induces a path $p'$ that visits all the vertices in $G'|_{\neg v_1}$ where every visit to $v$ in $p$ is replaced by the edge $((v, in) , (v, out))$ in $p'$.

    We conclude the proof by showing the correctness of the reduction.
    To begin with, it is not hard to see that a Hamiltonian cycle $p$ from $v$ to $v$ in $G$ induces a Hamiltonian path from $v_1$ to $(v, in)$ in $G'$ by first traversing the edge $(v_1, (v, out))$, then following the same sequence of vertices that $p$ follows to visit all the vertices in $V\setminus \{ v\}$, and then end in $(v, in)$.
    For the other direction, if $p$ is a Hamiltonian path starting at $v_1$ in $G'$, then after traversing the edge $(v_1, (v, out))$, $p$ must visit first every vertex in $V\setminus \{v\}$ before visiting  $(v, in)$. Indeed, $(v, in)$ leads to $(v, out)$ without visiting any vertex in $V\setminus \{v\}$ in between.
    Therefore, as $G'|_{\neg v_1}$ preserves original edges of $G$, it follows that $p$ induces a Hamiltonian cycle from $v$ to $v$ in $G$, and we are done.

\bigcomment{

\section{Proof of Theorem \ref{HP is hard thm}}
\label{HP is hard thm app}
    
    A graph $G = \zug{V, E}$ is {\em 3-regular} when every vertex of $G$ has exactly 3 neighbors.
    For proving Theorem~\ref{HP is hard thm}, we employ an existing result on the NP-hardness of checking if a $3$-regular graph has a Hamiltonian cycle is NP-hard~\cite{DBLP:journals/siamcomp/GareyJT76}.
    We first note that we can assume that the latter problem is NP-hard already for strongly-connecte 3-regular graphs. Indeed, the reduction that witnesses NP-hardness of the problem can, before returning a 3-regular graph $G$, check the connectivity of $G$  and output instead of $G$ a constant strongly-connected 3-regular graph $G'$ that has no Hamiltonian-cycle when $G$ is not strongly-connected.
    Note that we can also assume w.l.o.g that the latter reduction returns not only a strongly-connected graph, but also a graph $G$ that has distinct vertices $u$ and $v$ such that there is no edge from $u$ to $v$. Indeed, the problem is easy to solve for graphs with at most $4$ vertices, and the desired property holds already for $3$-regular graphs with at least $5$ vertices.

    We reduce from the Hamiltonian-cycle problem as follows. Given a strongly-connected 3-regular graph $G = \zug{[n], E}$, where there is no edge from $2$ to $1$ in $G$, the reduction returns a graph $G'$ with a designated vertex $(1, out)$ that is obtained from $G$ by taking the vertex $1\in [n]$ and splitting it into two vertices $(1, in)$ and $(1, out)$ where all edges that enter $1$ in $G$ now enter $(1, in)$ in $G'$, and all edges that leave $1$ in $G$ now leave $(1, out)$ in $G'$. Finally, for every vertex $j\in [n]\setminus \{1\}$, we add a new vertex $j'$, add an edge $\zug{j, j'}$ from $j$ to $j'$, and add the following edges $\zug{(1, in), 2'}, \zug{n', (1, out)}$, and $\zug{j', (j+1)'}$ for all $2\leq j \leq n-1$. 
    Thus, the graph $G'$ is obtained from $G$ by splitting the vertex $1$ according to its incoming and outgoing edges, connecting every other vertex $j$ to a new vertex $j'$, and finally connecting $(1, in), 2', 3', \ldots, n', (1, out)$ in this order through a simple path $s_{(1, in) \to (1, out)}$.

    Before proving the correctness of the reduction, let us verify that $G'$ satisfies the desired properties.
    First, it is not hard to verify that $G'$ is $(\leq 4)$-regular.
    Indeed, the vertex $(1, out)$ has 3 neighbors as the vertex $1$ in $G$, the vertex $(1, in)$ has a single neighbor which is $2'$, all the vertices $j$ in $[n]\setminus \{ 1\}$ have exactly 4 neighbors (their neighbors in $G$ and $j'$), and finally every vertex $j'$ has a single neighbor which is the one  it leads to in the simple path $s_{(1, in) \to (1, out)}$.
    In addition, the fact that $G$ is strongly-connected implies that $G'$ is strongly-connected as well as any path $p$ that visits all the vertices of $G$ induces a path $p'$ that visits all the vertices in $G'$ where every visit to $1$ in $p$ is replaced by the path $(1, in), 2', 3', \ldots, n', (1, out)$ in $p'$.
    Finally, we show next that every distinct vertices $u$ and $v$ in $G'$ have incomparable sets of neighborhoods. We distinguish between  cases:
    \begin{itemize}
        \item  
        The vertices $u$ and $v$ are both in $\{j', (1, in): j\in [n]\setminus \{ 1\}\}$: in this case the vertices $u$ and $v$ are distinct vertices on the simple path $s_{(1, in) \to (1, out)}$. 
        Let $v'$ denote the single neighbor of $v$, and $u'$ denote the single neighbor of $u$. 
        Note that both $\eta(u)$ and $\eta(v)$ are of size 2. In particular, one of these neighborhoods cannot be contained in the other, and we're done. Indeed, if we assume that $\eta(v)\subseteq \eta(u)$. Then, as both sets are of size 2, we get  that $\eta(v) = \eta(u)$ which happens only when $u = v$.

        \item 
        The vertices $u$ and $v$ are both in $\{j, (1, out):j \in [n]\setminus \{ 1\}\}$:

        if one of the vertices, w.l.o.g $u$, equals $(1, out)$, then $u$ has 3 neighbors, and as argued above, the other vertex $v$ has 4 neighbors. Therefore, $\eta(v)$ cannot be contained in $\eta(u)$. In addition, 
        by the definition of $G'$, non of the neighbors of $v$ can be $(1, out)$, and so the vertex $ u = (1, out)$ is in $\eta(u)\setminus\eta(v)$.
        
        We proceed to the case where $u = j_1$ and $v = j_2$. Here, as $j_2\neq j_1$, it follows that both vertices have incomparable sets of neighborhoods as $j'_1$ is a neighbor of $u$, yet not a neighbor of $v$, and $j'_2$ is a neighbor of $v$, yet not a neighbor of $u$.

        \item 
        The vertex $u$ is in $\{j', (1, in): j\in [n]\setminus \{ 1\}\}$ and the vertex $v$ is in $\{j, (1, out):j \in [n]\setminus \{ 1\}\}$: in this case, $u$ has a single neighbor, and $v$ has at least three neighbors, and so $\eta(v)$ cannot be contained in $\eta(u)$. We show next that $\eta(u)$ cannot be contained in $\eta(v)$ as well, and this follows from the fact that the input graph $G$ is such that there is no edge from the vertex $2$ to the vertex $1$.
        Indeed, 
        assume towards contradiction that $\eta(u)\subseteq \eta(v)$. As $u$ belongs to the simple path  $(1, in), 2', 3', \ldots, n'$, $|\eta(u)| =2$, and $v$ has at most one neighbor in $\{j': j\in [n]\}$, we get that $u$ must be $(1, in)$, and that $u$ is a neighbor of $v$. Therefore, $\eta(u) = \{ (1, in), 2'\}$. Then, $\eta(u)\subseteq \eta(v)$ implies also that $v = 2$ as $2$ is the only vertex  in $\{j, (1, out):j \in [n]\setminus \{ 1\}\}$ that has an outgoing edge to $2'$.
        So we got in total that $(1, in)$ is a neighbor of $2$ in $G'$, and we have reached a contradiction to the fact that there is no edge from $2$ to $1$ in $G$.
    \end{itemize}

    We conclude the proof by showing the correctness of the reduction.
    To begin with, it is not hard to see that a Hamiltonian cycle $p$ from $1$ to $1$ in $G$ induces a Hamiltonian path from $(1, out)$ to $n'$ in $G'$ by first following the same sequence of vertices that $p$ follows to visit all the vertices in $[n]\setminus \{ 1\}$, and then end in $(1, in)$, and finally, follow the simple path $s_{(1, in) \to (1, out)}$. 
    For the other direction, if $p$ is a Hamiltonian path starting at $(1, out)$ in $G'$, then $p$ must visit first all of the original vertices $u$ in $([n] \setminus \{ 1\}) \cup \{ (1, in)\}$ before visiting any vertex of the form $j'$. Indeed, vertices of the form $j'$ can only follow the simple path $s_{(1, in) \to (1, out)}$ that leads back to $(1, out)$ without visiting any original vertex $u$. 
    In addition, note that the last original vertex that $p$ visits must be $(1, in)$ as this vertex leads only to the vertex $2'$. Therefore, as $G'$ preserves original edges of $G$, it follows that $p$ induces a Hamiltonian cycle from $1$ to $1$ in $G$, and we are done.

}%

\bigcomment{

\section{Proof of Theorem~\ref{HP is hard thm}}

For proving Theorem~\ref{HP is hard thm}, we employ an existing result on the NP-hardness of checking if a \emph{planar 3-regular triply-connected symmetric graph} has a Hamiltonian cycle is NP-hard~\cite{DBLP:journals/siamcomp/GareyJT76}.
To prepare employing this result, let us first define what this means.

A graph is a tuple $G = (V,E)$ with a finite set of vertices $V$ and an edge relation $E \subseteq V \times V$. We say that $G$ is \emph{symmetric} if for all $v,v' \in V$, if $(v,v') \in E$, then we also have $(v',v) \in E$. 
We say that $G$ is $n$-regular for some $n \in \mathbb{N}$ if for each node $v$, we have that $|\mathsf{Neighbors}(v)|=n$, where we define $\mathsf{Neighbors}(v) = \{v' \in V \mid (v,v') \in E\}$ for all $v \in V$.
We say that a graph is strongly connected if for each pair of vertices $v, v' \in V$, there exist some paths $p_1, \ldots, p_k$ (for some $1 \leq k \leq |V|$) such that for all $1 \leq i < k$, we have $(p_i,p_{i+1}) \in E$, $p_1 = v$, and $p_k = v'$.
A graph is triply-connected if for each pair of edges, removing the edges from the graph leaves the graph strongly connected, where in the context of symmetric graphs, we remove not only the edges but also the reversed edges.
Finally, we say that a graph is planar if it can be drawn on a plane without crossing edges. 

\begin{figure}
\centering
\begin{tikzpicture}[xscale=0.6,yscale=0.8]

\node[shape=circle,fill=black,inner sep=0pt,minimum size=0.15cm] (a) at (0,0) {};
\node[shape=circle,fill=black,inner sep=0pt,minimum size=0.15cm] (b) at (-1,-1) {};
\node[shape=circle,fill=black,inner sep=0pt,minimum size=0.15cm] (c) at (1,-1) {};
\node[shape=circle,fill=black,inner sep=0pt,minimum size=0.15cm] (d) at (-2,-2) {};
\node[shape=circle,fill=black,inner sep=0pt,minimum size=0.15cm] (e) at (0,-2) {};
\node[shape=circle,fill=black,inner sep=0pt,minimum size=0.15cm] (f) at (2,-2) {};
\node[shape=circle,fill=black,inner sep=0pt,minimum size=0.15cm] (g) at (0,-3) {};
\node[shape=circle,fill=black,inner sep=0pt,minimum size=0.15cm] (h) at (-2,-4) {};
\node[shape=circle,fill=black,inner sep=0pt,minimum size=0.15cm] (i) at (2,-4) {};
\node[shape=circle,fill=black,inner sep=0pt,minimum size=0.15cm] (j) at (-2,-5.5) {};
\node[shape=circle,fill=black,inner sep=0pt,minimum size=0.15cm] (k) at (2,-5.5) {};
\node[shape=circle,fill=black,inner sep=0pt,minimum size=0.15cm] (l) at (5.5,-5.5) {};
\node[shape=circle,fill=black,inner sep=0pt,minimum size=0.15cm] (m) at (-7,-7) {};
\node[shape=circle,fill=black,inner sep=0pt,minimum size=0.15cm] (n) at (-2,-7) {};
\node[shape=circle,fill=black,inner sep=0pt,minimum size=0.15cm] (o) at (7,-7) {};

\draw[thick,dashed] (a) -- node[left] {$a$} +(0,1);
\draw[thick,dashed] (m) -- node[above left] {$b$} +(-1,-1);
\draw[thick,dashed] (o) -- node[above right] {$c$} +(1,-1);

\draw[thick] (a) -- (b);
\draw[thick] (a) -- (c);
\draw[thick] (b) -- (d);
\draw[thick] (b) -- (e);
\draw[thick] (c) -- (e);
\draw[thick] (c) -- (f);
\draw[thick] (e) -- (g);
\draw[thick] (g) -- (h);
\draw[thick] (g) -- (i);
\draw[thick] (d) -- (h);
\draw[thick] (f) -- (i);
\draw[thick] (f) -- (l) -- (k);
\draw[thick] (h) -- (j) -- (k) -- (i);
\draw[thick] (d) -- (m) -- (n) -- (j);
\draw[thick] (n) -- (o) -- (l);

\end{tikzpicture}
\caption{A Hamilatonian cycle gadget used by Garey et al.~\cite{DBLP:journals/siamcomp/GareyJT76} with 15 vertices. The gadget ensures that any Hamiltonian cycle through a graph containing it must use the edge $a$ and either use the edge $b$ or the edge $c$ (but not both). A correctness argument for the gadget can be found in the publication by Garey et al.~\cite{DBLP:journals/siamcomp/GareyJT76}.}
\label{fig:Gadget}
\end{figure}

Let us now prove Theorem~\ref{HP is hard thm}:

\begin{proof}
    We perform a reduction from the Hamiltonian cycle problem for a symmetric graph $G$ in which every vertex has exactly 3 neighbors and that is planar and triply-connected, proven to be HP-hard by Garey et al.~\cite{DBLP:journals/siamcomp/GareyJT76}. In their proof, they only considered graphs in which no  vertex has an edge back to itself, which we we will also do here as well.

    Let $G = (V,E)$ be given. Without loss of generality, we assume that $G$ is of size at least $11$ and that $V = \{1, \ldots, n\}$ for some $n \in \mathbb{N}$.

    First we note that $\mathcal{G}$ either has an interesting property or can be translated to a graph with an interesting property with a blow-up of a factor of at most $\frac{15}{4}$: For every pair of vertices $v$ and $v'$, the neighbors of $v$ and $v'$ are incomparable, i.e., we have that $v$ has some neighbor that $v'$ does not, and vice versa.
    To show this, we only need to prove that for two vertices $v$ and $v'$, the neighbor sets are not \emph{exactly} the same, as every vertices has exactly three neighbors, and hence the neighbor sets not being exactly identical implies that each of $v$ and $v'$ have some exclusive neighbors.
    Let us consider any two (different) vertices $v$ and $v'$ and distinguish between the following cases in the planar representation of $G$ (which exists by the assumption that $G$ is planar).
    \begin{enumerate}
    \item Case one is that there is an edge between $v$ and $v'$. Let the other neighbors be $a$ and $b$. We distinguish two sub-cases: 
    \begin{enumerate}
    \item The first case is $b$ being outside of the triangle of $v$, $v'$ and $a$, shown in Figure~\ref{fig:planarity}(a). In this case, $a$ and $b$ each need to have one more edge each to vertices other than $v$ and $v'$ (shown dashed in the figure), and any Hamiltonian cycle containing $v$ and $v'$ has to take the additional edges of $a$ and $b$, and some edges in between covering $v$ and $v'$ in arbitrary order. In this case, we can replace this graph part by a graph part shown in Figure~\ref{fig:planarity}(b) without changing the properties of the graph and while ensuring that in the new graph path, no two vertices have the same neighbor set. The graph part uses a ``gadget'' from Garey et al.~\cite{DBLP:journals/siamcomp/GareyJT76} (Figure 1a in their publication) that ensure that the edge pointed to by the arrow needs to be contained in any Hamiltonian cycle and whose internal structure ensures that no two vertices have the same successors in the replaced graph part. The gadget has different vertices connected to the edges not pointed to, so that the two edges between the gadget copies are actually separate edges in the graph after the replacement. Every gadget copy contains 15 vertices (and is shown in Figure~\ref{fig:Gadget} for the reader's reference). Within the replacement from Figure~\ref{fig:planarity}(b), this case 1.a. cannot occur, and the fact that the gadget copies are connected with single edges to the outside ensures that the replaced vertices cannot be part of the same set of neighbors of some vertices $v$, $v'$, so this case 1.a. cannot apply again to parts of the graph that result from replacements.
    \item The second case is $b$ being in the triangle of $v$, $v'$ and $a$ in the planar representation of $G$, shown in Figure~\ref{fig:planarity}(c). In this case, because $b$ needs three neighbors and only $a$ is available as an additional neighbor, we need to connect $a$ to $b$. But then, the vertices $v,v',a,b$ cannot accommodate other edges, contradicting that $G$ is strongly connected.
    \end{enumerate}
    \item The next case is that $v$ and $v'$ are separate. Then we have the scenario shown in Figure~\ref{fig:planarity}(d). One of the three joint neighbors of $v$ and $v'$ needs to be trapped in a rectangle consisting of the other vertices, and w.l.o.g., we assume this to be $b$. Since $b$ is trapped but still needs to have exactly $3$ neighbors, it needs to be connected to either $a$ or $c$. Let this vertex be, w.l.o.g., vertex $c$. Then, the vertices $v$, $c$, and $v'$ all already have three neighbors, but $a$ has only two, and hence needs one more neighbor from elsewhere. But then, $G$ cannot have a Hamiltonian circuit as the vertices $\{v,v',a,b,c\}$ form a dead-end of the graph.
    \end{enumerate}
    
	\begin{figure}
    \begin{tikzpicture}

    \node (a) at (0,0) {
    \begin{tikzpicture}
    \node[minimum size=0.75cm,shape=circle,draw,semithick] (v) at (0,0) {$v$};
    \node[minimum size=0.75cm,shape=circle,draw,semithick] (vp) at (2,0) {$v'$};
    \node[minimum size=0.75cm,shape=circle,draw,semithick] (a) at (1,-1) {$a$};
    \node[minimum size=0.75cm,shape=circle,draw,semithick] (b) at (1,1) {$b$};
    \draw[thick] (v) -- (vp) -- (a) -- (v);
    \draw[thick] (v) -- (b) -- (vp);
    \draw[thick,dashed] (a) -- +(0,-0.7);
    \draw[thick,dashed] (b) -- +(0,0.7);
    
    \end{tikzpicture}
    };
    \node[anchor=north] at (a.south) {(a)};
    
     \node (a) at (3,0) {
    \begin{tikzpicture}
    \node[minimum size=1cm,shape=circle,draw,semithick] (a) at (1,-1) {$\quad$};
    \node[minimum size=1cm,shape=circle,draw,semithick] (b) at (1,1) {$\quad$};
    
    \draw[->,thick] (1,0.7) -- (1,1.3);
    \draw[->,thick] (1,-0.7) -- (1,-1.3);
    
    \draw[thick] (a) edge[bend left=30] (b);
    \draw[thick] (b) edge[bend left=30] (a);
    \draw[thick,dashed] (b) -- +(0,+1);
    \draw[thick,dashed] (a) -- +(0,-1);
    
    \end{tikzpicture}
    };
    \node[anchor=north] at (a.south) {(b)};

    \node (c) at (6,0) {
    \begin{tikzpicture}
    \node[minimum size=0.75cm,shape=circle,draw,semithick] (v) at (0,0) {$v$};
    \node[minimum size=0.75cm,shape=circle,draw,semithick] (vp) at (2,0) {$v'$};
    \node[minimum size=0.75cm,shape=circle,draw,semithick] (b) at (1,-0.7) {$b$};
    \node[minimum size=0.75cm,shape=circle,draw,semithick] (a) at (1,-2) {$a$};
    \draw[thick] (v) -- (vp) -- (a) -- (v);
    \draw[thick] (v) -- (b) -- (vp);
    
    \end{tikzpicture}
    };
    \node[anchor=north] at (c.south) {(c)};

        \node (d) at (10,0) {
    \begin{tikzpicture}
    \node[minimum size=0.75cm,shape=circle,draw,semithick] (v) at (0,0) {$v$};
    \node[minimum size=0.75cm,shape=circle,draw,semithick] (vp) at (3,0) {$v'$};
    \node[minimum size=0.75cm,shape=circle,draw,semithick] (b) at (1.5,0) {$b$};
    \node[minimum size=0.75cm,shape=circle,draw,semithick] (a) at (1.5,-1) {$a$};
    \node[minimum size=0.75cm,shape=circle,draw,semithick] (c) at (1.5,1) {$c$};
    \draw[thick] (v) -- (a) -- (vp);
    \draw[thick] (v) -- (b) -- (vp);
    \draw[thick] (v) -- (c) -- (vp);

    \end{tikzpicture}
    };
    \node[anchor=north] at (d.south) {(d)};
    
    \end{tikzpicture}
    \caption{Planarity scenarios for the proof of Theorem~\ref{HP is hard thm}}
    \label{fig:planarity}
    \end{figure}

	Note that after replacing part of the graph according to case 1.1, the graph still has at least 11 vertices and because every replacement replaces $4$ vertices by $15$ vertices and within replaced vertices, no more replacement is possible, we have a maximum blowup of a factor of $\frac{15}{4}$, which is constant.
  
    We now find a pair of vertices $b_1, b_2$ of $V$ that have no common neighbors, i.e., $\{v \in V \mid (b_1,v) \in E\} \cap \{v \in V \mid (b_2,v) \in E\} = \emptyset$.
     To find it, fix $b_1 = 1$ and then let $b_2$ iterate through $\{2, \ldots, n\}$ until a suitable value of $b_2$ is found. 
    This yields a suitable value after at most 10 steps. To see this, assume that $\{v_1,v_2,v_3\}$ is the list of neighbors of $b_1$.
    Within the vertices $2, 3, \ldots, 11$, at most three can have $v_1$ as neighbor as otherwise $G$ would not be $3$-regular. The same applies for $v_2$ and $v_3$. But then, within $2, \ldots, n$, at least one vertex has no common neighbor with $b_1$, and we found a suitable pair $b_1,b_2$.
        
    Let us now translate $G$ into some graph $G'$ with a designated vertex $v_1 = \mathit{start}$ as follows:
    \begin{align*}
    \!\!\!\!\!\!  V' & = \{(v,\mathit{in}),(v,\mathit{mid}),(v,\mathit{out}) \mid v \in V\} \cup \{\mathit{start} \}\\
\!\!\!\!\!\!     E' &=  \{((v,\mathit{out}),(v',\mathit{in})) \mid (v,v') \in E \} \cup 
           \{((v,\mathit{in}),(v,\mathit{mid})),((v,\mathit{mid}),(v,\mathit{out})) \mid v \in V \} \\
           & \cup \{((v,\mathit{mid}),(v,\mathit{in})) \mid v \in V\} \cup \{(\mathit{start},(b_1,\mathit{out})),(\mathit{start},(b_2,\mathit{out}))\}
    \end{align*}
    For the simplicity of presentation, we call the elements of $V$ vertices and the elements of $V'$ nodes henceforth.
    Intuitively, in the construction, every vertex is split into three nodes (in, mid, out), where the $\mathit{in}$ node of the triple takes the incoming edges and the $\mathit{out}$ node of the triple is the source of the outgoing edges. The middle node is connecting the $\mathit{in}$ and $\mathit{out}$ nodes.
    There is an addition $\mathit{start}$ node from which the $(b_1,\mathit{out})$ and $(b_2,\mathit{out})$ nodes can be reached, but it has no incoming edge.
	Finally, there are some additional transitions that for the scope of the reduction are actually superfluous, and whose role is only to let $G'$ satisfy Def.~\ref{def:GraphProperties}. These are the transitions from the $\mathit{mid}$ nodes to the respective $\mathit{in}$ nodes. 
    
    To complete the proof, we need to show that (a) $G'$ has a Hamiltonian path if and only if $G$ has a Hamiltonian cycle, and (b) $G'$ has the properties of Def.~\ref{def:GraphProperties}.
    
    \textbf{Correctness:}
    Assume that $p = p_1, \ldots, p_n$ is a Hamiltonian cycle in $G$. Let, for notational convenience be $p_0 = p_n$. We assume that $p$ starts with $b_1$ (which we can ensure by rotating the cycle accordingly). Then, we have that $\mathit{start}, (p_0,\mathit{out}),(p_1,\mathit{in}),(p_1,\mathit{mid}),(p_1,\mathit{out}),(p_2,\mathit{in}),\allowbreak{}(p_2,\mathit{mid}),(p_2,\mathit{out}),(p_3,\mathit{in}),\ldots,(p_{n-1},\mathit{in}),\allowbreak{}(p_{n-1},\mathit{mid}),(p_{n-1},\mathit{out}),(p_{n},\mathit{in}),(p_{n},\mathit{mid})$ is a Hamiltonian path, as it covers all nodes of $G'$ and contains only of edges in $G'$. In particular, we have $(\mathit{start}, (p_0,\mathit{out})) \in E'$ (as $p_0 = b_1$) and for all $1 \leq i \leq n$, we have $((p_i,\mathit{in}),(p_i,\mathit{mid})) \in E'$ and   $(p_i,\mathit{mid}),(p_i,\mathit{out}) \in E'$.
    Furthermore, for all $0 \leq i < n$, we have that $((p_i,\mathit{out}),(p_{i+1},\mathit{in})$ by the construction of $E'$ and by the fact that $(p_i,p_{i+1}) \in E$ (since $p$ is a Hamiltonian cycle), which completes showing that this path is indeed Hamiltonian.

    On the other hand, let $p'=p'_1, \ldots, p'_{3n+1}$ be a Hamiltonian path in $G'$. Note that this is the correct length of such a path, as if $|V|=n$, then $|V'|=3n+1$. We show that we can read off a Hamiltonian path in $G$ from $p'$. 
    First of all, note that $\mathit{start}$ has no incoming edge, so we have that $p'_1=\mathit{start}$ as no Hamiltonian path can include that node without starting in it.
    
    We can now show by induction that for all $0 \leq i < n$, we have that:
    \begin{enumerate}
    \item $p'_{3i+2}$ is of the form $(p_i,\mathit{out})$ for some $p_i \in V$,
    \item $p'_{3i+3}$ is of the form $(p_{i+1},\mathit{in})$ for some $p_{i+1} \in V$ with $(p_{i},p_{i+1}) \in E$,
    \item and $p'_{3i+4}$ is of the form $(p_{i+1},\mathit{mid})$.
    \end{enumerate}
	The inductive argument can be done as follows:
	\begin{enumerate}
	\item For $i=0$, because $\mathit{start}$ is the predecessor node, we either have $p'_2 = (b_1,\mathit{out})$ or $p'_2 = (b_2,\mathit{out})$. In both cases, the claim holds.
	For $i>0$, assume that the inductive hypothesis holds. Then we have that $p'_{3i+1}$ is of the form $(p_{i},\mathit{mid})$ for some $p_i$. That node has only two successors, namely $(p_{i},\mathit{in})$, which by the inductive hypothesis is however already part of $p'$ at some step before, so it cannot be selected again. Hence, we can only have $(p_{i},\mathit{out})$ as successor.
	\item Since the only successors of a node $(p_i,\mathit{out})$ are nodes of the form $(p_{i+1},\mathit{in})$ for some $p_{i+1} \in V$ with $(p_{i},p_{i+1}) \in E$, by the previous case already holding for $i$, we have that this case holds as well.
	\item Since the only successors of a node $(p_{i+1},\mathit{in})$ are nodes of the form $(p_{i+1},\mathit{mid})$, by the previous case already holding for $i$, we have that this case holds as well.
	\end{enumerate}
	We can now observe that $p = p_0, \ldots, p_{n-1}$ is a Hamiltonian cycle in $G$ and $p_n = p_0$. 
	To see this, note that for all $0 \leq i < n$, we have that $(p_{i},p_{i+1}) \in E$ by the argument above. So it only remains to show $p_n = p_0$. By the structure of $p'$ above, the transitions from some node $(v,\mathit{mid})$ to $(v,\mathit{in})$ are never taken. Since the only remaining successor of some node $(v,\mathit{mid})$ is $(v,\mathit{out})$, we need to have that if $p'_2 = (b_1,\mathit{out})$, then $p'_{3n+1} = (b_1, \mathit{mid})$, as that is the only remaining way to integrate $(b_1, \mathit{mid})$ into a Hamiltonian path. Similarly, if $p'_2 = (b_2,\mathit{out})$, then $p'_{3n+1} = (b_2, \mathit{mid})$. In both cases, we have $p_n = p_0$, which completes the Hamiltonian cycle.
			
\textbf{Properties of $G'$:}
It remains to prove that $G'$ satisfies Def.~\ref{def:GraphProperties}. First of all, if every node in $G$ has exactly $3$ neighbors, then every node in $G'$ has at most $4$ successors: the $\mathit{out}$-nodes have 3, the $\mathit{in}$ nodes have 1 each, the $\mathit{mid}$ nodes have $2$ each, and the $\mathit{start}$ node has two.

	Then, there is a node that only has outgoing edges, namely $\mathit{start}$, which can be used as $v_1$.
	
	Finally, we need to show that all pairs of nodes have incomparable neighbor sets. We perform a case split over pairs of nodes $v$ and $v'$ and mention the successor nodes $\tilde v$ and $\tilde v'$ such that $(v,\tilde v) \in E'$, $(v',\tilde v') \in E'$, $(v,\tilde v') \notin E'$, $(v',\tilde v) \notin E'$:
	\begin{itemize}
	\item For every $v \in V$: $\mathit{start}$/$(v,\mathit{in})$ $\rightarrow$  $(b_1,\mathit{out})$ /$(v,\mathit{mid})$
	\item For every $v \in V$: $\mathit{start}$/$(v,\mathit{mid})$ $\rightarrow$  $(v',\mathit{out})$/$(v,\mathit{in})$ for $v' = b_1$ if $v \neq b_1$ and $v' = b_2$ otherwise
	\item For every $v \in V$: $\mathit{start}$/$(v,\mathit{out})$ $\rightarrow$  $(b_1,\mathit{out})$/$(\tilde v,\mathit{in})$ for some $\tilde v$ with $(v,\tilde v) \in E$
	\item For every $v,v' \in V$ with $v\neq v'$: $(v,\mathit{in})$/$(v',\mathit{in})$: $(v,\mathit{mid})$/$(v',\mathit{mid})$
	\item For every $v,v' \in V$: $(v,\mathit{in})$/$(v',\mathit{mid})$: $(v,\mathit{mid})$/$(v',\mathit{out})$
	\item For every $v,v' \in V$: $(v,\mathit{in})$/$(v',\mathit{out})$ $\rightarrow$ $(v,\mathit{mid})$ / $(\tilde v,\mathit{in})$	for some $\tilde v$ with $(v',\tilde v) \in E$
	\item For every $v,v' \in V$ with $v\neq v'$: $(v,\mathit{mid})$/$(v',\mathit{mid})$: $(v,\mathit{out})$/$(v',\mathit{out})$
	\item For every $v,v' \in V$: $(v,\mathit{mid})$/$(v',\mathit{out})$: $(v,\mathit{out})$/$(\tilde v',\mathit{in})$ for some $\tilde v'$ with $(v',\tilde v') \in E$ and $\tilde v' \neq v$
	\item For every $v,v' \in V$ with $v \neq v'$: $(v,\mathit{out})$/$(v',\mathit{out})$: $(\tilde v,\mathit{mid})$/$(\tilde v',\mathit{mid})$ for some $\tilde v, \tilde v'$ with $(v,\tilde v) \in E$, $(v',\tilde v') \in E$, $(v',\tilde v) \notin E$, $(v,\tilde v') \notin E$, which exist by our analysis of $\mathcal{G}$ above (namely that no two vertices have comparable sets of neighbors).
	\end{itemize}
\end{proof}

}%

\section{Missing Details from Section~\ref{exp red sec}}
\label{exp red sec app}

\subsection{Missing Details in the Definition of $\A^{G^{v_1}}$}
\label{exp red defs app}
Consider a vertex $v\in V$. We start with the formal definition of the safe component $\SC(v)$.  
Let $\eta(v) = \{ v, u_1, u_2, \ldots, u_m \}$ denote the neighborhood of $v$. Then, $\SC(v)$ is defined on top of a state-space consisting  of vectors of length $m+3$ of the form $[p_1(u_1), p_1(u_2), \ldots, p_1(u_m), p_1(v), \allowbreak{} p_2(v), v]$, where for every vertex $v'$ in $\eta(v)$, $p_1(v')$ is a bit in $ \{0, 1\}$ that represents the value of the vertex $v'$  under the assignment by the last read letter. Similarly, $p_2(v)$ represents the value of $v$ assigned by the past-two letter, and the last coordinate in the vector is labeled by $v$ to distinguish between vectors corresponding to distinct safe components. 

We define next safe transitions inside $\SC(v)$. For a letter $\sigma \in 2^{\mathsf{AP}}$, and a vector  $vec = [p_1(u_1), p_1(u_2), \ldots, p_1(u_m), p_1(v), p_2(v), v]$, we say that $\sigma$ is {\emph consistent} with the vector
$vec$ when $\sigma(\hat u_j) = p_1(u_j)$ for all $j\in [m]$, and $\sigma(\check v) = p_2(v)$. Thus, the letter $\sigma$ is consistent with the vector $vec$ when the values specified by $\sigma$ are consistent with the past values encoded in $vec$, as required by $L_v$.
Then, for every letter $\sigma\in 2^{\mathsf{AP}}$, and vector $vec = [p_1(u_1), p_1(u_2), \ldots, p_1(u_m), p_1(v), p_2(v), v]$, we add (deterministically) a safe transition $vec \xrightarrow{\sigma}  [\sigma(u_1), \sigma(u_2), \ldots, \sigma(u_m), \sigma(v), p_1(v), v]$ only when $\sigma$ is consistent with the vector $vec$. 
Note that we update the next past-one values as specified by the letter $\sigma$, and we update the next past-two value of $v$ according to its recent past-one value.
Thus, if a run enters a vector $vec$ in $\SC(v)$ after reading a finite word $x\in (2^{\mathsf{AP}})^*$ containing at least two letters, and $vec$ encodes the correct past values as specified by $x$, then for every infinite word $y \in (2^{\mathsf{AP}})^\omega$ with $x\cdot y\in L_v$,  
a safe run for $y$ exists from $vec$.

\sloppypar Next, we add safe self-loops labeled with the special letter $\text{\it sleep}$ for every vector $vec$ to implement the mixing with $(\text{\it sleep})^\omega$. 
Finally, note that $\SC(v)$ forms a safe component, that is, it is strongly-connected. Indeed, it is not hard to verify that every two vectors $vec^1  = [p^1_1(u_1), p^1_1(u_2), \ldots, p^1_1(u_m), p^1_1(v), p^1_2(v), v]$ and $vec^2  = [p^2_1(u_1), p^2_1(u_2), \ldots, p^2_1(u_m), p^2_1(v), \allowbreak{} p^2_2(v), \allowbreak{} v]$ are such that 

$$ vec^1 \xrightarrow{\{ v: p^2_2(v) = 1 \} \cup \{ \hat u_j: j\in [m] \ \wedge \  p^1_1(u_j) = 1 \} \cup \{ \check v: p^1_2(v) = 1\}} [0, 0, \ldots, 0, p^2_2(v), p^1_1(v), v]$$
 and 
$$
 [0, 0, \ldots, 0, p^2_2(v), p^1_1(v), v] \xrightarrow{\{ u_j: j\in [m] \ \wedge \ p^2_1(u_j) = 1 \} \cup \{ v: p^2_1(v) = 1\}  \cup \{ \check v: p^1_1(v) = 1\}} vec^2
$$
are safe transitions in $\SC(v)$ inducing a safe run from  $vec^1$ to  $vec^2$.

We proceed  with the formal definition of the safe component $\SC_{\it Sync}$. 
The safe component $\SC_{\it Sync}$ is formally defined over the state-space $\{1, 2, 3, 4\}\times \{ 0, 1\}$. The state $(t, k)$ in $\SC_{\it Sync}$ remembers that we have seen $\text{\it sleep}$ in the past-$t$ letter, and $k$ remembers the last value of $v_1$. In particular, the state $(1, 1)$ remembers that we have just read the special letter $\text{\it sleep}$ and the last value of $v_1$ is $1$, while the state $(2, 0)$ remembers the last time ${\it sleep}$ appeared was in the past-two letter and the past-one value of $v_1$ is 0.
 Then, safe transitions inside $\SC_{\it Sync}$ are defined as follows. Consider a state $(t, k)$, and a letter $\sigma\in 2^{\mathsf{AP}}$. We add the safe transition $\zug{(t, k), \it{sleep}, (1, k)}$. In addition,  if $t < 4$ and $\sigma(\text{\it extra}) = k$, then we add the safe transition $\zug{(t, k), \sigma, (t+1, \sigma(v_1)) }$.
Note that the right-coordinate always stores the last value of $v_1$ as specified by the last non-$\text{\it sleep}$ letter, and there are no safe non-$\it sleep$ transitions from states whose first state tuple element  is 4. Also, when $t< 4$, then safe $\sigma$-transitions are defined only when the ${\it extra}$-consistent property was not violated.
In particular, if a run enters the state $(1, k)$ in $\SC_{\it Sync}$ after reading a finite word $x\in \Sigma^*$, where $x$ contains at least one non-$\it sleep$ letter and $k$ encodes the correct past-one value of $v_1$ as specified by $x$, then $(1, k)$ has a safe run on any infinite word $y\in \Sigma^\omega$ with $x\cdot y\in L_{\it Sync}$.

Finally, note also that the safe component $\SC_{\it Sync}$ is strongly-connected as the following is a safe cycle visiting all its states:
\begin{multline*}
 (1, 0) \xrightarrow{\emptyset} (2, 0) \xrightarrow{\emptyset} (3, 0) \xrightarrow{\emptyset} (4, 0) \xrightarrow{\text{\it sleep} } (1, 0) \xrightarrow{\{v_1\}} (2, 1) \xrightarrow{\{ v_1, \text {\it extra}\}} (3, 1) \\ \xrightarrow{\{ v_1, \text{\it extra}\}} (4, 1)  \xrightarrow{\text{\it sleep}} (1, 1) \xrightarrow{\{ \text{\it extra}\}} (2, 0) \xrightarrow{\text{\it sleep}} (1, 0)
 \end{multline*}

\subsection{Useful Properties of $\A^{G^{v_1}}$}
\label{useful properties app}

We prove in this section some useful properties of the tNCW $\A^{G^{v_1}}$ that are used only in the appendices. 
The following Lemma suggests that $\A^{G^{v_1}}$ recognizes the expected language and satisfies some immediate syntactic properties:

\begin{lemma}\label{useful prop lem}
    \begin{enumerate}
        \item 
        The tNCW $\A^{G^{v_1}}$ is normal, safe-deterministic, and $\alpha$-homogenuous.

        \item 
        Consider a state $q\in Q$ of $\A^{G^{v_1}}$. Then, it holds that $L(q)$ is the language of infinite words $w$ having a suffix 
        in $L_{\it Sync}$ or a suffix that  is a mix of $(\it sleep)^\omega$ and some word in $(\bigcup_{v\in V} L_v)$. In particular, all the states of $\A^{G^{v_1}}$ are language-equivalent.
    \end{enumerate}
\end{lemma}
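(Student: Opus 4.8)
The plan is to handle the two parts separately, with Part 1 being a direct bookkeeping check against the construction and Part 2 carrying the real content; the key definitions to lean on are the placement of safe vs.\ rejecting transitions, the consistency condition for vectors of $\SC(v)$, and the $L_v$/$L_{\it Sync}$ membership constraints.

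For Part 1 I would read the three properties straight off the construction. By definition the only safe transitions are those placed \emph{inside} the components $\{\SC(v)\}_{v\in V}$ and $\SC_{\it Sync}$, and inside each component they were defined deterministically (one safe $\sigma$-transition per state, counting the $\it sleep$-self-loops). A rejecting bundle $\{q\}\times\{\sigma\}\times Q$ was added for a pair $(q,\sigma)$ exactly when $q$ has \emph{no} safe $\sigma$-transition. Hence for every $(q,\sigma)$ either there is a unique safe $\sigma$-transition and no rejecting one, so all $\sigma$-transitions from $q$ lie in $\overline{\alpha}$, or there is no safe $\sigma$-transition and all $\sigma$-transitions are rejecting; this is exactly \emph{$\alpha$-homogeneity}, and it gives at most one safe $\sigma$-transition per state, i.e.\ \emph{safe-determinism}. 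For \emph{normality} I would note that safe transitions never leave a component and that each $\SC(v)$ and $\SC_{\it Sync}$ is strongly connected (via the explicit safe cycles exhibited when the components were defined), so any safe run from $q$ to $s$ keeps $q,s$ in a common component, whence a safe run from $s$ back to $q$ exists as well.

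For Part 2, let $L^\star$ denote the target language of words having a suffix in $L_{\it Sync}$ or a suffix that is a mix of $(\it sleep)^\omega$ and a word in $\bigcup_{v\in V} L_v$; I must show $L(\A^{G^{v_1}})^q=L^\star$ for every state $q$. For the inclusion $L(\A^{G^{v_1}})^q\subseteq L^\star$ I would decode an accepting run. Since the automaton is co-\buchi, an accepting run is eventually safe, and by Part 1 a safe run is trapped in a single component. If it is trapped in some $\SC(v)$, the read suffix consists of $\it sleep$ and letters consistent with the traversed vectors, and the consistency between consecutive non-$\it sleep$ letters is \emph{precisely} the defining constraints of $L_v$; so the non-$\it sleep$ projection of the suffix lies in $L_v$ (or is a consistency-respecting finite prefix, which is extendable to $L_v$), making the suffix a mix of $(\it sleep)^\omega$ and an $L_v$-word. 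If it is trapped in $\SC_{\it Sync}$, the $t$-coordinate forces a $\it sleep$ at least every four letters while the $k$-coordinate forces $\it extra$-consistency, so the suffix lies in $L_{\it Sync}$. Either way $w\in L^\star$.

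The reverse inclusion $L^\star\subseteq L(\A^{G^{v_1}})^q$ is where the difficulty lies, and this is the step I expect to be the main obstacle. Given $w\in L^\star$ I would build an accepting run from $q$ that follows safe transitions greedily and uses the branch-to-all-states rejecting transitions to jump into the component that traps the relevant suffix. The pure-$(\it sleep)^\omega$-suffix case is immediate, since every state has safe $\it sleep$-behaviour. The delicate point is that when the greedy run first hits a letter with no safe transition (necessarily inside the suffix, else the run is already safe), the rejecting jump may land anywhere, but to run safely forever afterward we must land in a state whose recorded one-step and two-step history is consistent with the next \emph{two} non-$\it sleep$ letters of the suffix, whereas a single jump seems to fix only the one-step history. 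The resolution I would use is that the consistency test for a vector of $\SC(v)$ does not constrain the recorded past-one value of $v$ itself; this slack lets me preload the past-two value demanded by the second upcoming non-$\it sleep$ letter, so one well-chosen jump suffices to synchronize over two letters (for $\SC_{\it Sync}$ one letter suffices). Once synchronized, the constraints defining $L_v$ (resp.\ $L_{\it Sync}$) guarantee the remaining suffix is read entirely by safe transitions, yielding an accepting run. Finally, since $L(\A^{G^{v_1}})^q=L^\star$ independently of $q$, all states are language-equivalent, concluding the lemma.
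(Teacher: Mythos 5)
Your proposal is correct and follows essentially the same route as the paper: Part 1 is read directly off the construction, the forward inclusion of Part 2 decodes an eventually-safe run trapped in a single safe component, and the reverse inclusion modifies an arbitrary run at a rejecting (branch-to-all-states) transition inside the suffix so as to jump into the component that traps it, after which the $L_v$ (resp.\ $L_{\it Sync}$) constraints keep the run safe. The only cosmetic difference is that you describe the jump target forward, preloading it from the next two non-$\it sleep$ letters, whereas the paper describes the same state backward as the one encoding the correct past values of the (at least two) non-$\it sleep$ letters already read; by the consistency constraints defining $L_v$ these two descriptions yield the same state.
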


\begin{proof}
    \begin{enumerate}
    \item
    Immediate from the definition of $\A^{G^{v_1}}$.

    \item 
    We start with the easy direction, namely, showing that every infinite word $w = \sigma_1 \cdot \sigma_2 \cdots $ in $L(q)$ has a suffix in $L_{\it Sync}$ or a suffix that is a mix of $(\it sleep)^\omega$ and some word in $(\bigcup_{v\in V} L_v)$. 
    Let $r = r_0, r_1, r_2, \ldots$ be an accepting run of $q$ on $w$. 
    As $r$ is accepting, there is some $i\geq 0$ such that the run $r[i, \infty] = r_i, r_{i+1}, r_{i+2},  \ldots$ is a safe run on the suffix $w[i+1, \infty]$. Either $r_i\in \SC_{\it Sync}$, or  $r_i \in \SC(v)$ for some vertex $v\in V$. In both cases, the fact that the run $r[i, \infty]$ is safe implies that it moves to a state encoding the correct past-one values upon reading a non-$\it sleep$ letter. Therefore, since $r[i, \infty]$ is safe, we have that 
      the word $w[i+1, \infty]$ is in $ L_{\it Sync}$ when $r_i \in \SC_{\it Sync}$, and that   the word 
    $w[i+1, \infty]|_{\neg \it sleep}$ is  either infinite and in $L_v$ or can be extended to a word in $L_v$ when $r_i\in \SC(v)$. 
    Indeed, in the former case, the fact that $r[i, \infty]$ is safe implies that $w[i+1, \infty]$ contains $\it sleep$ in every at most 4 letters and  does not violate the $\it extra$-consistent property, and in the latter case, the fact that $r[i, \infty]$ is safe implies that the past-two values get updated according to the previous past-one values, and thus $r[i, \infty]$ being safe also witnesses that there is consistency in propositional values in $\eta(v)$ across every three consecutive non-$\it sleep$ letters in $w[i+1, \infty]$. 
    
    We proceed with the other direction. Consider an infinite word $w$ having a suffix $w[i, \infty]$ that is either in $ L_{\it Sync}$ or is a mix of $(\it sleep)^\omega$ and some word in $(\bigcup_{v\in V} L_v)$. We need to show that $q$ has an accepting run on $w$. If $w$ has a suffix in $(\it sleep)^\omega$, then we're done as every state has a safe run on $(\it sleep)^\omega$; in particular, any run of $q$ on $w$ can eventually branch to an accepting run. 
    We proceed to the case where $w$ has infinitely many non-$\it sleep$ letters. Let 
     $r = r_0, r_1, r_2, \ldots$ be an arbitrary run of $q$ on $w$. If $r$ is accepting, then $w\in L(q)$ and we're done. 
    Otherwise,
    if the run $r$ is not accepting, we show that we can modify it by letting it branch to an accepting run of $q$ on $w$. %
    Consider some $j$ such that $j\geq i$ and the infix $w[i, j]$ contains at least two non-$\it sleep$ letters, and $\zug{r_{j-1}, w[j], r_{j}}$ is a rejecting transition of $r$.
    We distinguish between two cases. Either $w[i, \infty]$ is a mix of $(\it sleep)^\omega$ and some word in $L_v$ for some vertex $v\in V$, or $w[i, \infty] \in L_{\it Sync}$. In the former case, we modify the run $r$ by proceeding to the state $s\in \SC(v)$ instead of $r_{j}$, where $s$ is the state that encodes the correct 
    past values  of the vertices in $\eta(v)$ with respect to the infix  $w[i, j]$.  In the latter case, we modify the run $r$ by proceeding to the state $s\in \SC_{\it Sync}$ instead of $r_{j}$, where $s = (1, k)$ is the state that encodes  the %
    past value $k$ of $v_1$ as specified by last non-$\it sleep$ letter in $w[i, j]$, and encodes that we have just seen $\it sleep$. In both cases, we moved to a state $s$ encoding the correct past values of atomic propositions  as specified by the last two non-$\it sleep$ letters in  $w[i, j]$, and thus since  %
    $s$ belongs to a safe component that can trap the suffix to be read, 
    we get that  $s$ has a safe run on $w[j+1, \infty]$. 
    Thus, $r$ can be modified to follow an accepting run of $q$ on $w$, as required.
    Note that by the definition of rejecting transitions, $r$ can indeed move to any state in $Q$ upon traversing a rejecting transition, and so the latter runs exist.
 \qedhere
\end{enumerate}
\end{proof}

     We proceed with the following proposition suggesting that $\A^{G^{v_1}}$ is already a nice minimal HD-tNCW:

\begin{proposition}\label{A is nice prop}
  The tNCW  $\A^{G^{v_1}}$ is $\alpha$-saturated, $\alpha$-homogenuous, nice, minimal, and HD. %
\end{proposition}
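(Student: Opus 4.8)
The plan is to dispatch the structural properties quickly and then concentrate all the real work on minimality. By Lemma~\ref{useful prop lem}, $\A^{G^{v_1}}$ is already normal, safe-deterministic, and $\alpha$-homogeneous, and all its states are language-equivalent. Semantic determinism is then immediate: since any two states of $\A^{G^{v_1}}$ have the same language, in particular all $\sigma$-successors of any state are equivalent. The automaton is also trivially $\alpha$-saturated, because by construction a rejecting $\sigma$-transition out of a state $q$ is only ever introduced together with the full set $\{q\}\times\{\sigma\}\times Q$ of branches, so $\zug{q,\sigma,p}$ is present for every $p\in Q$. Having verified that $\A^{G^{v_1}}$ is normal, safe-deterministic, semantically deterministic, $\alpha$-homogeneous, and $\alpha$-saturated, Theorem~\ref{are GFG thm} yields that it is HD; since all five of these properties are structural and independent of the choice of initial state, applying the same theorem to $\A^{q}$ shows that every state $q$ is HD.

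For reachability I would observe that every state has at least one outgoing rejecting transition: a vector of some $\SC(v)$ has no safe $\sigma$-transition for a non-$\mathit{sleep}$ letter $\sigma$ inconsistent with its tracked values (flip one tracked bit), and a state $(t,k)$ of $\SC_{\it Sync}$ has no safe $\sigma$-transition for a non-$\mathit{sleep}$ letter with $\sigma(\mathit{extra})\neq k$ (or for any non-$\mathit{sleep}$ letter when $t=4$). In each case the construction branches to all of $Q$, so in particular from the chosen initial state $q_0$ a single such letter reaches every state. Hence all states are reachable, and together with the previous paragraph this establishes that $\A^{G^{v_1}}$ is nice.

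The substantive part is minimality, for which I would invoke the characterization of minimal HD-tNCWs of Abu Radi and Kupferman~\cite{DBLP:journals/lmcs/RadiK22}: every HD-tNCW can be made nice without increasing its size, and a nice HD-tNCW is minimal once it is \emph{safe minimal} (distinct states inside the same safe component have distinct safe languages) and \emph{safe centralized} (no two equivalent states lying in different safe components have comparable safe languages). Since all states of $\A^{G^{v_1}}$ are equivalent, these two conditions reduce exactly to requiring that the safe languages $L_{\it safe}(q)$ be pairwise distinct inside each safe component and pairwise incomparable across distinct safe components. Thus it suffices to analyze these safe languages.

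The main obstacle is precisely this safe-language analysis, carried out by reading the deterministic safe transition structure of the components. For a state $vec$ of $\SC(v)$, a safe run on a mix of $\mathit{sleep}$'s constrains the first non-$\mathit{sleep}$ letter through the stored past-one values of the neighbors of $v$ and the stored past-two value of $v$ (via $\check v$), and the second non-$\mathit{sleep}$ letter through the stored past-one value of $v$; flipping any stored bit therefore yields words accepted from one vector but not the other in both directions, so distinct states of $\SC(v)$ even have \emph{incomparable} safe languages. Across distinct $\SC(v)$ and $\SC(v')$ the constraints involve the distinct propositions $\check v$ and $\check{v'}$ (and distinct neighborhoods), which again lets me build distinguishing words in both directions; and $\SC_{\it Sync}$ differs from every $\SC(v)$ since it bounds the gaps between consecutive $\mathit{sleep}$'s and constrains $\mathit{extra}$, neither of which $\SC(v)$ does, giving incomparability there as well. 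Finally, inside $\SC_{\it Sync}$ the states $(t,k)$ have pairwise distinct safe languages: varying $k$ changes the required value of $\mathit{extra}$ on the next non-$\mathit{sleep}$ letter, while varying $t$ changes the maximal number of consecutive non-$\mathit{sleep}$ letters allowed before a $\mathit{sleep}$ is forced; here distinctness is all that safe minimality requires (these languages are in fact nested rather than incomparable, which is why only distinctness is claimed). With safe minimality and safe centralization thus established, the cited characterization gives that $\A^{G^{v_1}}$ is minimal, completing the proof.
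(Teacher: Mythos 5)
Your proposal is correct and takes essentially the same route as the paper's own proof: structural properties are dispatched via Lemma~\ref{useful prop lem}, HDness of every state follows from Theorem~\ref{are GFG thm} using $\alpha$-homogeneity and $\alpha$-saturation, and minimality is reduced to the safe-language criterion of \cite{DBLP:journals/lmcs/RadiK22}, verified by the same case analysis (incomparability within each $\SC(v)$, across distinct $\SC(v),\SC(v')$, and against $\SC_{\it Sync}$ -- including the two-letter argument for vectors differing only in the past-one value of $v$ -- with mere distinctness inside $\SC_{\it Sync}$). Your reachability argument is even slightly more careful than the paper's, since you explicitly verify that every state has some letter with no safe outgoing transition before invoking the branch-to-all-states rule.
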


\begin{proof}

As we argue below, $\A^{G^{v_1}}$ is $\alpha$-saturated. Thus, it suffices by Lemma~\ref{useful prop lem} to show that $\A^{G^{v_1}}$ is nice, minimal, and HD.
We start by showing that $\A^{G^{v_1}}$ is nice. First, the fact that all the states of $\A^{G^{v_1}}$ are reachable is immediate from the fact that rejecting transitions lead to all states in $Q$. In particular, $q_0$ can reach any other state $q$ by leaving the safe component $\SC(q_0)$ via traversing a rejecting transition that leads to $q$. Then, the fact that $\A^{G^{v_1}}$ is semantically-deterministic is immediate from item 2 of Lemma~\ref{useful prop lem} stating that all its states are equivalent. Next, by item 1 of Lemma~\ref{useful prop lem}, we have that $\A^{G^{v_1}}$ is normal and safe-deterministic. To conclude that $\A^{G^{v_1}}$ is nice, it is left to show that every state $q\in Q$ is  HD, and this follows by Theorem~\ref{are GFG thm}. Indeed, item 1 of Lemma~\ref{useful prop lem} implies that $(\A^{G^{v_1}})^q$ is $\alpha$-homogenuous, and since all  states in $Q$ are equivalent and rejecting transitions lead to all states, we have that $(\A^{G^{v_1}})^q$ is $\alpha$-saturated as well. 

    To show that $\A^{G^{v_1}}$ is minimal, it is sufficient by~\cite{DBLP:journals/lmcs/RadiK22} to show that  every distinct states $q$ and $s$ in $Q$ are such that (1) $L_{\it safe}(q) \neq L_{\it safe}(s)$, and (2) if $L_{\it safe}(q) \subseteq L_{\it safe}(s)$, then $q$ and $s$ belong to the same safe component.
    We distinguish between cases: 
    \begin{itemize}
        \item There are distinct vertices $v$ and $v'$ in $V$ such that $q\in \SC(v)$ and $s\in \SC(v')$:
        we show that there is an infinite word $L_{\it safe}(q) \setminus L_{\it safe}(s)$. As the latter holds for every distinct states  $q$ and $s$ that belong to distinct safe components, and are not in $\SC_{\it Sync}$, we conclude that the safe languages of $q$ and $s$ are incomparable. 
        Let $\eta(v) = \{ v, u_1, u_2, \ldots, u_m\}$ denote the neighborhood of $v$,  $q = [p_1(u_1), p_1(u_2), \ldots, p_1(u_m), p_1(v), p_2(v), v]$, and let $p_2(v')$ denote the past-two value of $v'$ as encoded in the state $s$. Then, the letter $\sigma = \{ \hat u_j: \ j\in [m] \wedge p_1(u_j) = 1  \} \cup \{ \check v: p_2(v) = 1\} \cup \{ \check v': p_2(v') = 0\}$ is such that $q$ has a safe outgoing $\sigma$-transition, yet all $\sigma$ transitions from $s$ are not safe. 
        As $\A^{G^{v_1}}$ is normal, then $\sigma$ can be extended to an infinite word in $L_{\it safe}(q) \setminus L_{\it safe}(s)$.

        \item There is a vertex $v$ in $V$ such that both $q$ and $s$ belong to $\SC(v)$: as in the previous item, we show that the safe langauges of $q$ and $s$ are incomparable.
        Again, as $\A^{G^{v_1}}$ is normal, it is sufficient to show the existence of a finite word $x$ such that $q$ has a safe run on $x$, yet all runs of $s$ on $x$ are not safe.
        Let $\eta(v) = \{ v, u_1, u_2, \ldots, u_m\}$ denote the neighborhood of $v$, let $q = [p_1(u_1), p_1(u_2), \ldots, p_1(u_m), p_1(v), p_2(v), v]$ and $s = [p'_1(u_1), p'_1(u_2), \ldots, p'_1(u_m), p'_1(v), p'_2(v), v]$. Consider the letter $\sigma = \{ \hat u_j: \ j\in [m] \wedge p_1(u_j) = 1  \} \cup \{ \check v: p_2(v) = 1\} $. Clearly, $q$ has an outgoing safe $\sigma$-transition. If all the $\sigma$-transitions from $s$ are rejecting, then by choosing $x = \sigma$, we're done. Otherwise, if $s$ has an outgoing safe $\sigma$-transition, then it must be the case that $s = [p_1(u_1), p_1(u_2), \ldots, p_1(u_m), p'_1(v), p_2(v), v]$, where $p_1(v) \neq p'_1(v)$. Thus, $q$ and $s$ disagree only in the past-one value of $v$. Consider the safe transitions $q \xrightarrow{\sigma}  [0, 0, \ldots, 0, p_1(v), v] $ $s \xrightarrow{\sigma}  [0, 0, \ldots, 0, p'_1(v), v]$, and note that the letter $\sigma' = \{ \check v: p_1(v) = 1\}$ is such that former transition can be extended to a safe run on $\sigma\cdot \sigma'$, yet the latter transition cannot. Hence, as $\A^{G^{v_1}}$ is safe-deterministic, it follows that all the runs of $s$ on $\sigma\cdot \sigma'$ are not safe, and so by choosing $x = \sigma \cdot \sigma'$, we're done.
    
        \item 
        $q\in \SC(v)$ for some vertex $v\in V$ and $s\in \SC_{\it Sync}$:  we  first claim that $L_{\it safe}(q)$ is not contained in $L_{\it safe}(s)$, and this is immediate from the fact that we can take an infinite run in $\SC(v)$ from $q$ that is labeled with a word that has no $\it sleep$ letters. Next,  we claim that $L_{\it safe}(s)$ is not contained in $L_{\it safe}(q)$ either. Indeed,  let $s = (t, k)$, then the word $x = \text{\it sleep} \cdot (\{ \it extra:  k = 1 \}\cup \{ \check v:  p_2(v) = 0\})$, where $p_2(v)$ is the past-two value of $v$ as encoded in $q$ is such that there is a safe run from $s$ on $x$, yet all the runs from $q$ on $x$ are not safe. As $\A^{G^{v_1}}$ is normal, then $x$ can be extended to an infinite word in $L_{\it safe}(s) \setminus L_{\it safe}(q)$.
        Thus, we have established that $L_{\it safe}(q)$ and $L_{\it safe}(s)$ are incomparable.

        \item 
        Both states $q$ and $s$ are in $\SC_{\it Sync}$:
        let $q = (t, k)$ and $s = (t', k')$. 
        We show that $L_{\it safe}(q) \neq L_{\it safe}(s)$. 
        As $q$ and $s$ are distinct, then we have two cases. 
        The first case is when $k \neq k'$. In this case, by considering the word $ x = \text{\it sleep }\cdot \{\it extra\}$, we get that there is a safe run  from only one of the states $q$ and $s$ on $x$. 
        The other case is when $k = k' = b$, yet w.l.o.g $t < t'$. In this case, consider the word $x = \{ extra: b = 1\} \cdot (\emptyset)^{4-t'} $. 
        Since $|x| > 4 - t'$, we get that 
        by reading $x$ from $s$, we leave $\SC_{\it Sync}$, yet since $4-t > 4-t' = |x| - 1$, we get that $4-t \geq |x|$, and thus by reading $x$ from $q$ we do not leave $\SC_{\it Sync}$.
     \qedhere   
    \end{itemize}

\end{proof}

\subsection{Missing Details in the Proof of Theorem~\ref{corr thm}}
\label{corr thm app}

In this section, we complete the missing details in the second direction of the proof of Theorem~\ref{corr thm}.
This direction is more involved and uses key properties of the safe component $\SC_{\it Sync}$.
Assume that there is a tDCW $\A_d$  for $L(\A^{G^{v_1}})$ of size $|Q|$. We need to show that $\A_d$ induces a Hamiltonian path starting at $v_1$ in $G$.
First, note that as $\A^{G^{v_1}}$ is a nice minimal HD-tNCW, and $\A_d$ is deterministic (and thus HD), then it is a minimal HD-tNCW as well. In addition, we can assume that $\A_d$ is nice as we can make it normal by modifying safe transitions connecting distinct safe components to be rejecting.
Therefore, by Theorem~\ref{iso thm}, we have that $\A^{G^{v_1}}$ and $\A_d$ have isomorphic safe components. Thus, as $\A^{G^{v_1}}$ is $\alpha$-saturated, and all its states are equivalent,  we can then assume that $\A_d = \zug{\Sigma, Q, q_0, \delta', \alpha'}$ is a deterministic pruning of $\A^{G^{v_1}}$. Indeed, by Theorem~\ref{iso thm}, we have that all the states of $\A_d$ are equivalent as well, and so the choice of its initial can be arbitrary.

Recall that $G = \zug{V, E}$, and let $n = |V|$.  
We prove that $\A_d$ induces a Hamiltonian path by showing the existence of a sequence of $n$ finite words $z_1, z_2, \ldots, z_n$ over $\Sigma$ such that the run of $(\A_d)^{(1, 1)}$ on $z = z_1\cdot z_2 \cdots z_n$ is of the form $ r_z = (1, 1) \xrightarrow{z_1} s_1\xrightarrow{z_2} s_2 \cdots \xrightarrow{z_n} s_n$,
where for all $i \in [n]$, it holds that the state $s_i$ belongs to a safe component $\SC(v_i)$ for some vertex $v_i\in V$, we enter the safe component $\SC(v_{i})$ after reading the last letter of $z_i$, and  $(v_1, v_2, \ldots, v_n)$ is a Hamiltonian path in $G$.
Thus,  the run $r_z$ starts at the state $(1, 1)$ in $\SC_{Sync}$, and switches between safe components by simulating a Hamiltonian path in $G$ starting at $v_1$.
\bigcomment{
\color{blue}As all states in $\mathcal{A}_d$ are language-equivalent, assuming that the run starts in state $(1,1)$ is without loss of generality.\color{black}
}%

Before defining the word $z = z_1\cdot z_2 \cdot z_3 \cdots z_n$ iteratively,  
we proceed with some definitions and lemmas  stating useful properties of the tDCW $\A_d$. The following lemma formalizes the fact that $\SC_{\it Sync}$ is a synchronizing safe component
suggesting that non-safe cycles in $\A_d$, labeled with finite words whose omega is $\it extra$-consistent,  must enter $\SC_{\it Sync}$:

\begin{lemma}\label{rej loop lemm}
    Consider a finite run $r = r_0, r_1, \ldots, r_m$ over a finite word $x = \sigma_1\cdot \sigma_2 \cdot \sigma_m$ in $\A_d$. 
    If $x^\omega$ is $\it extra$-consistent, 
    and $r$ is a non-safe cycle, then $r$ enters the safe component $\SC_{\it Sync}$. 
\end{lemma}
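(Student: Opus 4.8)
The plan is to argue by contradiction: assume that $r$ is a non-safe cycle whose associated word satisfies that $x^\omega$ is $\it extra$-consistent, but that $r$ \emph{avoids} $\SC_{\it Sync}$, and then exhibit a single word that $\A_d$ both accepts and rejects. First I would record what the non-safe cycle gives us. Since $\A_d$ is deterministic and $r_0 = r_m$, the unique run of $(\A_d)^{r_0}$ on $x^\omega$ is the periodic run $r_0, r_1, \ldots, r_{m-1}, r_0, r_1, \ldots$; as $r$ is non-safe it contains at least one rejecting transition, which this periodic run therefore traverses infinitely often, so $x^\omega \notin L((\A_d)^{r_0})$. Because all states of $\A_d$ are equivalent and $\A_d$ recognizes $L(\A^{G^{v_1}})$, this yields $x^\omega \notin L(\A^{G^{v_1}})$.

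The key step is to pad $x^\omega$ with $\it sleep$ letters so as to land inside $L_{\it Sync}$ while keeping the deterministic run rejecting. Concretely, I would let $w$ be obtained from $x^\omega$ by injecting $\it sleep$ letters so that at most three non-$\it sleep$ letters occur between two consecutive $\it sleep$'s (and infinitely many $\it sleep$'s are inserted, which is possible as $x^\omega$ has infinitely many non-$\it sleep$ letters coming from the rejecting transitions of $r$). Since $w|_{\neg \it sleep} = x^\omega|_{\neg \it sleep}$ and $x^\omega$ is $\it extra$-consistent, $w$ is $\it extra$-consistent as well; together with the imposed $\it sleep$-frequency this gives $w \in L_{\it Sync}$, and hence $w \in L(\A^{G^{v_1}})$ by Lemma~\ref{useful prop lem}. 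On the other hand, recall that $\A^{G^{v_1}}$, and therefore its pruning $\A_d$, has no rejecting $\it sleep$-transitions, and that every state of a component $\SC(v)$ carries a safe $\it sleep$ self-loop. As the periodic run stays inside $\bigcup_{v\in V}\SC(v)$ (it avoids $\SC_{\it Sync}$), reading each injected $\it sleep$ merely loops the deterministic run at its current state. Thus the run of $\A_d$ on $w$ is the original periodic run with added safe self-loops, and it still traverses the rejecting transition of $r$ infinitely often, so $w \notin L(\A_d) = L(\A^{G^{v_1}})$ — contradicting $w \in L(\A^{G^{v_1}})$.

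I expect the only delicate point to be the observation that padding with $\it sleep$ is harmless to the run, which is exactly where the two hypotheses are consumed: the run avoids $\SC_{\it Sync}$ (so that every $\it sleep$ is read as a safe self-loop rather than as the non-self-looping safe $\it sleep$-transitions of $\SC_{\it Sync}$, which would alter the state), and there are no rejecting $\it sleep$-transitions (so no injected letter can create or destroy rejecting behaviour). The $\it extra$-consistency hypothesis, in turn, is precisely what guarantees that the padded word meets the $\it extra$-consistency requirement of $L_{\it Sync}$, so I would make sure to state clearly that injecting $\it sleep$'s leaves the subsequence of non-$\it sleep$ letters unchanged.
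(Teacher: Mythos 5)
Your proposal is correct and takes essentially the same approach as the paper's proof: argue by contradiction, use the safe $\it sleep$ self-loops available at every state outside $\SC_{\it Sync}$ to pad the cycle's word into an $\it extra$-consistent word with $\it sleep$ occurring regularly (hence a word in $L_{\it Sync} \subseteq L(\A^{G^{v_1}})$), and then use determinism together with the equivalence of all states of $\A_d$ to conclude that this padded word is nevertheless rejected. The only cosmetic difference is that the paper inserts a $\it sleep$ after every letter of the finite word $x$ and takes the omega of the result, whereas you inject $\it sleep$'s into $x^\omega$ directly.
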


\begin{proof}
   Assume towards contradiction that $r$ does not enter $\SC_{\it Sync}$.
    In particular, every state that $r$ visits has a self-loop labeled with $\it sleep$. Now consider the non-safe cycle over the word $ y= \sigma_1 \cdot \it sleep \cdot \sigma_2 \cdots \it sleep \cdot \sigma_m \cdot \it sleep$ that is obtained from $r$ by traversing the self-loop labeled with $\it sleep$ after each transition of $r$. On the one hand, the latter cycle is non-safe, yet it is labeled with a word $y$ such that $y^\omega$ is $\it extra$-consistent, and $y$ contains $\it sleep$ regularly. %
    In particular, $y^\omega \in  L_{\it Sync} \subseteq L(\A_d)$. On the other hand, since all the states of $\A_d$ are equivalent, and $y$ is labeled on a non-safe cycle, we get that $y^\omega \notin L({\A_d})$, and we have reached a contradiction.
\end{proof}

Consider a transition $t = s_0 \xrightarrow{\sigma} s_1$ in $\A_d$ for some letter $\sigma\in 2^{\mathsf{AP}}$. We say that the transition $t$ is \emph{$\it extra$-consistent} when
    the state $s_0 $ is in  $\SC(v_1)$ and the past-one value of $v_1$ encoded in $s_0$ equals $\sigma(\it extra)$. 
    Note that since $v_1$ has no incoming edges in the graph $G$, then, among all the states in $\bigcup_{v\in V}\SC(v)$, only the states in $\SC(v_1)$ keep track of the past-one value of $v_1$.
The following lemma states sufficient conditions to when a finite cycle of safe components must visit $\SC_{\it Sync}$. It is key in the construction of the word $z = z_1\cdot z_2 \cdots z_n$ %
and is based on Lemma~\ref{rej loop lemm}:

\begin{lemma}\label{enters sync lemm}
    Consider a rejecting transition $t = \zug{s_0, \sigma_1, s_1}$ in $\A_d$, and a finite run of the form $r = s_0 \xrightarrow{\sigma_1} s_1 \xrightarrow{\sigma_2\cdot \sigma_3 \cdots \sigma_k} s_k$ where $\SC(s_0) = \SC(s_k)$, the word $\sigma_1\cdot \sigma_2 \cdots \sigma_k$ is over $2^{\mathsf{AP}}$, and $\{v_1, \it extra\} \subseteq \sigma_i$ for all $2\leq i \leq k$. Thus, $r$ starts and ends in the same safe component, and the (possibly empty) word $\sigma_2\cdot \sigma_3 \cdots \sigma_k $ is such that $v_1$ and $\it extra$ belong to every letter in it. If the transition $t = \zug{s_0, \sigma_1, s_1}$ satisfies one of the following conditions: 
    \begin{enumerate}
        \item  The transition $t = \zug{s_0, \sigma_1, s_1}$ is $\it extra$-consistent and $v_1\in \sigma_1$, or
        \item $\SC(s_0) = \SC(v)$ for some vertex $v\neq v_1$, and $\{v_1, \it extra\} \subseteq \sigma_1$.
    \end{enumerate}

    then the run $r$ enters the safe component $\SC_{\it Sync}$.
\end{lemma}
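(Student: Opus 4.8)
The plan is to reduce the statement to Lemma~\ref{rej loop lemm} by closing $r$ into a genuine non-safe cycle whose label has an $\it extra$-consistent $\omega$-power, and then to argue that the forced visit to $\SC_{\it Sync}$ must lie on $r$ rather than on the closing segment. Concretely, since $\SC(s_0) = \SC(s_k)$ is a safe component that is strongly connected even when restricted to its non-$\it sleep$ safe transitions, I would pick a safe run $\rho = s_k \xrightarrow{\tau_1 \cdots \tau_l} s_0$ of length $l\ge 1$ using only non-$\it sleep$ safe transitions, and form the cycle $c = s_0 \xrightarrow{\sigma_1} s_1 \xrightarrow{\sigma_2 \cdots \sigma_k} s_k \xrightarrow{\tau_1 \cdots \tau_l} s_0$ with label $w = \sigma_1 \cdots \sigma_k \cdot \tau_1 \cdots \tau_l$. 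This cycle is non-safe because it contains the rejecting transition $t$, so once $w^\omega$ is shown to be $\it extra$-consistent, Lemma~\ref{rej loop lemm} yields that $c$ enters $\SC_{\it Sync}$. As $\rho$ stays inside $\SC(s_0)$, which equals $\SC(v_1)$ in case~1 and $\SC(v)$ with $v\neq v_1$ in case~2 and in both cases differs from $\SC_{\it Sync}$, the intermediate states of $\rho$ never lie in $\SC_{\it Sync}$; hence the visited $\SC_{\it Sync}$-state of $c$ is one of $s_1,\ldots,s_{k-1}$, and so $r$ itself enters $\SC_{\it Sync}$, as required.

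The real work is to choose $\rho$ (and the free proposition-values along it) so that $w^\omega$ is $\it extra$-consistent, i.e. $w_i(v_1) = w_{i+1}(\it extra)$ for every cyclically consecutive pair. The pairs inside the $\sigma$-block are immediate from the hypotheses: for every $1\le i\le k-1$ we have $\sigma_i(v_1)=1$ (using $v_1\in\sigma_1$, which holds in both cases, and $v_1\in\sigma_i$ for $i\ge 2$) and $\sigma_{i+1}(\it extra)=1$ (since $\it extra\in\sigma_{i+1}$). The point I would stress is that $\it extra$ is a \emph{free} proposition inside every $\SC(v)$: it never occurs among the propositions $\{u : u\in\eta(v)\}$, nor as some $\hat u$ or $\check v$, so it does not affect the safe transitions of $\SC(v)$ and may be set arbitrarily along $\rho$ without changing which transitions are taken. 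This lets me satisfy the boundary pair $\sigma_k,\tau_1$ by setting $\tau_1(\it extra)=1$, and every internal pair $\tau_j,\tau_{j+1}$ by setting $\tau_{j+1}(\it extra)=\tau_j(v_1)$.

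The only genuinely constrained pair is the wrap-around $\tau_l,\sigma_1$, which demands $\tau_l(v_1)=\sigma_1(\it extra)$; this is exactly where the two cases diverge and where the notion of an $\it extra$-consistent transition is needed. In case~2, since $v_1$ has no incoming edges it is not an out-neighbor of $v$, so $v_1\notin\eta(v)$ and the $v_1$-value is also free inside $\SC(v)$; I would simply set the $v_1$- and $\it extra$-values of every $\tau_j$ to $1$ (which keeps $\rho$ a valid safe run from $s_k$ to $s_0$), whence $\tau_l(v_1)=1=\sigma_1(\it extra)$ because $\it extra\in\sigma_1$. In case~1 the $v_1$-value is tracked by $\SC(v_1)$, so I cannot override $\tau_l(v_1)$ and instead keep the $v_1$-values that $\rho$ already carries. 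The key observation, and the step I expect to be the main obstacle to state cleanly, is that the last transition of $\rho$ arrives at $s_0$, and by the definition of safe transitions in $\SC(v_1)$ the past-one value of $v_1$ stored in the target state equals $\tau_l(v_1)$; hence $\tau_l(v_1)$ equals the past-one value of $v_1$ encoded in $s_0$, which by $\it extra$-consistency of $t$ is precisely $\sigma_1(\it extra)$. Thus the wrap-around closes automatically in case~1. With $w^\omega$ established to be $\it extra$-consistent in both cases, the argument of the first paragraph applies and the lemma follows. (The degenerate possibility $s_k=s_0$ causes no trouble, since strong connectivity still provides a non-trivial non-$\it sleep$ safe loop, making $\tau_l$ a genuine letter.)
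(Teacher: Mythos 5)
Your proposal is correct and follows essentially the same route as the paper's proof: close $r$ into a non-safe cycle by appending a nonempty safe run over non-$\mathit{sleep}$ letters inside $\SC(s_0)$, exploit that $\mathit{extra}$ (and, in case~2, also $v_1$) is untracked inside that safe component to freely adjust values so the cycle's label has an $\mathit{extra}$-consistent $\omega$-power, and in case~1 use that the last letter's $v_1$-value must equal the past-one value stored in $s_0$, which equals $\sigma_1(\mathit{extra})$ by $\mathit{extra}$-consistency of $t$, before invoking Lemma~\ref{rej loop lemm}. Your explicit final step that the forced visit to $\SC_{\it Sync}$ cannot lie on the closing segment is left implicit in the paper but is the same argument.
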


\begin{proof}
As the transition $t = \zug{s_0, \sigma_1, s_1}$ is rejecting, then, by Lemma~\ref{rej loop lemm}, it is sufficient to show the existence of a finite nonempty word $y$ such that $(\sigma_1\cdots \sigma_k \cdot y)^\omega$ is $\it extra$-consistent and  $r_c = s_0 \xrightarrow{\sigma_1} s_1 \xrightarrow{\sigma_2\cdot \sigma_3 \cdots \sigma_k} s_k \xrightarrow{y} s_0$ is a cycle in $\A_d$ that ends by traversing the safe run $ s_k \xrightarrow{y} s_0$ inside $\SC(s_0)$. %
To show the existence of $y$, we distinguish between two cases:
\begin{enumerate}
    \item 
    The transition $t = \zug{s_0, \sigma_1, s_1}$ is $\it extra$-consistent, and $v_1\in \sigma_1$:
    as $t$ is $\it extra$-consistent, then $s_0\in \SC(v_1)$. 
    Let $p^{s_0}_1(v_1)$ denote the encoded past-one value of $v_1$ in the state $s_0$. As $\SC(s_0) = \SC(s_k) = \SC(v_1)$, then 
there is a finite safe run $r'$ from $s_k$ to $s_0$ in $\SC(v_1)$ labeled with some finite nonempty word $y = y_1\cdot y_2 \cdots y_m$  over $2^{\mathsf{AP}}$. Indeed, the safe component $\SC(v_1)$ is non-trivial and contains $\it sleep$ transitions only as self-loops. As $r'$ is safe, labeled with a nonempty word, and ends in $s_0$, it follows that $y_m(v_1) = p^{s_0}_1(v_1)$. What we have so far is the following non-safe cycle $r_c = s_0 \xrightarrow{\sigma_1} s_1 \xrightarrow{\sigma_2\cdot \sigma_3 \cdots \sigma_k} s_k \xrightarrow{y_1\cdot y_2\cdots y_m} s_0$, where $y_m(v_1)  = p^{s_0}_1(v_1) = \sigma_1(\it extra) $. Indeed, the latter equation follows from the fact that the transition $\zug{s_0, \sigma_1, s_1}$ is $\it extra$-consistent.
Therefore, since the sub-run $r' = s_k \xrightarrow{y_1\cdot y_2\cdots y_m} s_0$ is a safe run in $\SC(v_1)$, and $\SC(v_1)$ does not track the value of $\it extra$, we can then modify the $\it extra$ values in $y$ without affecting transitions of the  sub-run $r'$ being safe, yet while guaranteeing  that $\sigma_k (v_1) = y_1(\it extra) $, and $y_i (v_1) = y_{i+1}(\it extra)$ for all $i \in [m-1]$. In particular, it is not hard to verify that $(\sigma_1\cdots \sigma_k \cdot y)^\omega$ is $\it extra$-consistent and we're done.

    \item 
    $\SC(s_0) = \SC(v)$ for some vertex $v\neq v_1$, and $\{v_1, \it extra\} \subseteq \sigma_1$: 
   as $v\neq v_1$, and $v_1$ has no incoming edges in $G$, we have that  $\SC(v)$ does not track past values of $v_1$. In particular, if $r' = s_k \xrightarrow{y_1\cdot y_2\cdots y_m} s_0$ is a safe run from $s_k$ to $s_0$ inside $\SC(v)$ over a nonempty word in $(2^{\mathsf{AP}})^*$, then we can assume w.l.o.g that every letter of $y$ contains $v_1$ and $\it extra$. Then, it is not hard to verify that the non-safe cycle  $r_c = s_0 \xrightarrow{\sigma_1} s_1 \xrightarrow{\sigma_2\cdot \sigma_3 \cdots \sigma_k} s_k \xrightarrow{y_1\cdot y_2\cdots y_m} s_0$ is such that $(\sigma_1\cdot \sigma_2 \cdots \sigma_k \cdot y)^\omega$ is $\it extra$-consistent. Indeed, every letter in the latter word contains $v_1$ and $\it extra$, and we're done. \qedhere
\end{enumerate}

\end{proof}

The following Lemma suggests essentially that for a state $q\in \SC(v)$ for some vertex $v\in V$, there is a finite word and a safe run from $q$ on it, yet the run of 
any state distinct from $q$ on the same word leads to the safe component $\SC_{\it Sync}$. It is crucial in showing that the run $r_z$ in $\A_d$ from which we read off the Hamiltonian path enters states that remember the correct past values with respect to the input read so far: 
\bigcomment{
The following Lemma suggests essentially that a state $q\in \SC(v)$ for some vertex $v\in V$ can force, while staying in $\SC(v)$, any distinct state to follow a run that leads to the safe component $\SC_{\it Sync}$ while reading the same word, and is crucial in showing that the run $r_z$ in $\A_d$ from which we read off the Hamiltonian path enters states that remember the correct past values with respect to the input read so far:
}%

\begin{lemma}\label{trap lemm}
        Consider a state $q\in \SC(v)$ for some vertex $v\in V$, and consider a state $s\neq q$ that  does not belong to $\SC_{\it Sync}$. 
        If there is a rejecting transition in $\A_d$ of the form $t = \zug{s^{-1}, \sigma, s}$, where $s^{-1}\in \SC(v)$ and either
        \begin{enumerate}
            \item 
            $v = v_1$, $v_1\in \sigma$  and the transition $t = \zug{s^{-1}, \sigma, s}$ is $\it extra$-consistent, or

            \item 
            $v\neq v_1$ and  $\{v_1, \it extra\} \subseteq \sigma$,
        \end{enumerate}
        then there is a nonempty finite word $w_q$ such that the run of $\A_d^q$ on $w_q$ is safe, yet the run of $\A_d^s$ on $w_q$ enters $\SC_{\it Sync}$ for the first time upon reading the last letter of $w_q$.
\end{lemma}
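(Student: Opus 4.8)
The plan is to reduce the claim to producing a single $\it sleep$-free infinite word $W \in (2^{\mathsf{AP}})^\omega$, every letter of which contains both $v_1$ and $\it extra$, such that the run of $\A_d^q$ on $W$ stays safe inside $\SC(v)$ while the run of $\A_d^s$ on $W$ reaches $\SC(v)$. Before constructing $W$, I would first record that $s\notin \SC(v)$: otherwise, using strong connectivity of $\SC(v)$ and the freedom to label safe transitions in $\SC(v)$ with letters containing $\{v_1,\it extra\}$ (the plain value $v_1$ and $\it extra$ are not constrained by safety in $\SC(v)$), one could extend the rejecting witness transition $t=\zug{s^{-1},\sigma,s}$ by a safe path back to $s^{-1}$, obtaining a run as in Lemma~\ref{enters sync lemm} that lies entirely inside $\SC(v)$; since conditions (1)/(2) of Lemma~\ref{enters sync lemm} coincide with those assumed on $t$, that lemma would force this run to meet $\SC_{\it Sync}$, which is impossible inside $\SC(v)$. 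Hence $s\notin\SC(v)$, so the first visit of $\A_d^s$ to $\SC(v)$ happens at a positive time.

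Given such a $W$, let $\rho$ be the prefix of $W$ up to (and including the letter producing) the first state of $\A_d^s$ lying in $\SC(v)$, and let $s_k\in\SC(v)$ be that state. Then $s^{-1}\xrightarrow{\sigma}s\xrightarrow{\rho}s_k$ is a run beginning with the rejecting transition $t$, with $\SC(s^{-1})=\SC(s_k)=\SC(v)$ and with every letter of $\rho$ containing $\{v_1,\it extra\}$; by Lemma~\ref{enters sync lemm} it enters $\SC_{\it Sync}$, and since its endpoints lie in $\SC(v)\neq\SC_{\it Sync}$, the entry occurs strictly inside the portion traced by $\A_d^s$. Truncating $\rho$ at the first letter that drives $\A_d^s$ into $\SC_{\it Sync}$ yields the desired $w_q$: it is nonempty (as $s\notin\SC_{\it Sync}$), the run of $\A_d^q$ on it is safe (a prefix of the safe run on $W$), and the run of $\A_d^s$ enters $\SC_{\it Sync}$ for the first time exactly at its last letter.

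It remains to build $W$. I would pick its letters to keep $q$ safe: at each step set $\hat u$ to the previous plain value of $u$ for every out-neighbor $u$ of $v$ and set $\check v$ to the plain value of $v$ two steps earlier, matching the values encoded in $q$ at the start; this is exactly $L_v$-consistency, so the unique safe run of $\A_d^q$ stays in $\SC(v)$ forever and $W\in L(\A_d^q)=L(\A_d)$. I set plain $v_1=1$ and $\it extra=1$ in every letter, which is harmless since neither is tracked by safety in $\SC(v)$. Simultaneously, for every vertex $v''\neq v$ I break the check-consistency of $L_{v''}$ at infinitely many positions, e.g.\ by keeping the plain value of $v''$ and $\check{v''}$ at incompatible constant values; because $v''\neq v$, the proposition $\check{v''}$ is disjoint from the propositions $\{\hat u:E(v,u)\}\cup\{\check v\}$ that govern safety in $\SC(v)$, so these choices do not disturb $q$.

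Finally, since all states of $\A_d$ are language-equivalent and $W\in L(\A_d)$, the run of $\A_d^s$ on $W$ is eventually safe and hence trapped in a single safe component. It cannot be $\SC_{\it Sync}$ ($W$ is $\it sleep$-free, whereas being trapped there requires $\it sleep$ within every four letters), and it cannot be any $\SC(v'')$ with $v''\neq v$, since a $\it sleep$-free safe run in $\SC(v'')$ would force its word to satisfy $\sigma_i(v'')=\sigma_{i+2}(\check{v''})$, which $W$ violates infinitely often. Therefore the run of $\A_d^s$ is trapped in $\SC(v)$ and in particular visits it, completing the construction. The main obstacle, and the reason the vertex-indexed propositions are designed as they are, is precisely the coupling in the previous two paragraphs: the construction must keep $\A_d^q$ safe and simultaneously route $\A_d^s$ into $\SC(v)$ using one common word, and this succeeds only because the propositions constraining safety in $\SC(v)$ are disjoint from those ($\check{v''}$ for $v''\neq v$, together with $v_1$ and $\it extra$) used to exclude every other trapping component.
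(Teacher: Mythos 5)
Your proof is correct, and it takes a genuinely different route from the paper's. The paper proves the lemma by an iterative forcing construction: it repeatedly builds finite blocks $x_1, x_2, \ldots$ (all of whose letters contain $v_1$ and $\it extra$) that keep the run from $q$ safe while forcing the run from $s$ to traverse a rejecting transition at each block's last letter; Lemma~\ref{enters sync lemm} shows the forced run can never re-enter $\SC(v)$, and if it never reached $\SC_{\it Sync}$, finiteness of the state space would yield a non-safe cycle labelled by a word whose omega is $\it extra$-consistent, contradicting Lemma~\ref{rej loop lemm}. You instead construct a single infinite $\it sleep$-free word $W$ that keeps the run from $q$ safe in $\SC(v)$ forever while making every other component incapable of trapping a run (no $\it sleep$ for $\SC_{\it Sync}$, perpetual $\check{v''}$-inconsistency for $v''\neq v$), and then use the global facts available in the setup --- all states of $\A_d$ are language-equivalent, and $\A_d$ is deterministic and normal --- to conclude that the run of $\A_d^s$ on $W$ is eventually trapped in $\SC(v)$; a single application of Lemma~\ref{enters sync lemm} to $s^{-1}\xrightarrow{\sigma}s\xrightarrow{\rho}s_k$ then forces a visit to $\SC_{\it Sync}$ strictly inside the portion traced from $s$, and truncating $W$ at the first such visit gives $w_q$. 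Both proofs hinge on Lemma~\ref{enters sync lemm}; the difference is how the run from $s$ is steered back into $\SC(v)$. Your trapping argument buys a shorter proof that avoids the paper's letter-by-letter case analysis of state encodings (including the delicate sub-case where $s$ lies in $\SC(v)$ and differs from $q$ only in the past-one value of $v$) as well as the pigeonhole step, at the price of invoking the semantic equivalence of all states; the paper's construction buys explicit, local control over exactly which rejecting transitions are traversed. One small remark: your preliminary step excluding $s\in\SC(v)$ does not need (and, for $v=v_1$, cannot always have) a safe path back to $s^{-1}$ itself labelled with letters containing $\{v_1, \it extra\}$, since for $v=v_1$ such letters pin the tracked past values of $v_1$ to $1$, which may disagree with the values encoded in $s^{-1}$; this is harmless, however, because Lemma~\ref{enters sync lemm} explicitly allows the continuation word to be empty, so the one-transition run $s^{-1}\xrightarrow{\sigma}s$ with both endpoints in $\SC(v)$ already yields the desired contradiction.
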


\begin{proof}
We first show the existence of a nonempty finite word $x_1$ such that every letter of $x_1$ contains $v_1$ and $\it extra$,  the run of $q$ on $x_1$ is safe, yet the run of $s$ on $x_1$ traverses a rejecting transition upon reading the last letter of $x_1$. Let $\eta(v) = \{ v, u_1, u_2, \ldots, u_m\}$ denote the neighborhood of $v$, and let
     $q = [p_1(u_1), p_1(u_2), \ldots, p_1(u_m), p_1(v), p_2(v), v]$. 
    Let $v'\in V$ be such that $s$ belongs to $\SC(v')$.
    We distinguish between two cases:
    \begin{itemize}
        \item $\SC(v) \neq \SC(v')$: as $\SC(v')$ is the only safe component that keeps track of the past-two value of $v'$, then the letter  $\sigma_1 = V\cup \{ \it extra\}\cup \{ \hat u_j: j\in [m] \ \wedge \ p_1(u_j) = 1 \} \cup \{ \check v: p_2(v) = 1\} \cup \{ \check v': p'_2(v') = 0\} $, where $p'_2(v')$ is the past-two value of $v'$ as encoded in $s$, is such that there is a safe transition from $q$ labeled with $\sigma_1$, yet no safe transition labeled with $\sigma_1$ from $s$. Also, $\{v_1, \it extra\} \subseteq \sigma_1$, and so in this case, we can choose $x_1 = \sigma_1$.

        \item $\SC(v) = \SC(v')$: 
        consider the letter $\sigma_1 = V\cup \{ \it extra\} \cup \{ \hat u_j: j \in [m] \wedge  p_1(u_j) = 1 \} \cup \{ \check v: p_2(v) = 1\}$, and note that the transition $q \xrightarrow{ \sigma_1} [1, 1, \ldots, 1, p_1(v), v]$ is safe. If there is no safe $\sigma_1$-transition from $s$, then we can choose $x_1 = \sigma_1$. Otherwise, if $s$ has an outgoing safe $\sigma_1$-transition, then since $s\neq q$, we have that $s$ must equal the vector  $[p_1(u_1), p_1(u_2), \ldots, p_1(u_m), p'_1(v), p_2(v),v]$ where $p_1(v) \neq p'_1(v)$. Thus, $q$ and $s$ differ only in the encoded past-one value of $v$. In this case, we have the safe transition $s \xrightarrow{\sigma_1} [1 , 1, \ldots, 1, p'_1(v), v]$.  Then, the word $x_1 = \sigma_1 \cdot (V\cup \{\it extra\} \cup \{ \hat u: u\in V\cap\sigma_1\} \cup \{ \check v: p_1(v) = 1 \}) $ satisfies the desired properties, namely,  every letter of $x_1$ contains $v_1$ and $\it extra$,  the run of $q$ on $x_1$ is safe, yet the run of $s$ on $x_1$ traverses a rejecting transition upon reading the last letter of $x_1$.
    \end{itemize}
    
    Now let $r_q = q_0 \xrightarrow{x_1} q_1$ and $r_s = s_0 \xrightarrow{x_1} s_1$ denote the runs of $q$ and $s$ on $x_1$, respectively. 
    If the run $r_s$ already entered $\SC_{\it Sync}$, then by choosing $w_q = x_1$, we're done.
    Otherwise, if $s_1\notin \SC_{\it Sync}$, then we claim that the case where $\SC(s_1) = \SC(v)$ is impossible.   Indeed, in this case,
    we can consider the run $r = s^{-1} \xrightarrow{\sigma} s \xrightarrow{x_1} s_1$    from $\SC(v)$ to $\SC(v)$. Then, recall that every letter of $x_1$ contains $v_1$ and $\it extra$, and note that either  $v =  v_1$, $v_1\in \sigma$, and the transition $\zug{s^{-1}, \sigma, s}$ is $\it extra$-consistent, or $v\neq v_1$ and the transition $\zug{s^{-1}, \sigma, s}$ is such that $\{v_1, \it extra \} \subseteq \sigma$. In both cases, Lemma~\ref{enters sync lemm} implies that the run $r$ visits $\SC_{\it Sync}$, contradicting the fact that $r$ only visits states in $\SC(s^{-1}) = \SC(v)$ and $\SC(s) = \SC(v')$.

     What we've shown so far is that if $r_s$ did not enter $\SC_{\it Sync}$, then $s_1 \in \SC(u)$ for some vertex $u \neq v$. In particular, since $q_1\in \SC(v)$, we have that $s_1\neq q_1$.
      Therefore, we can repeat the same considerations iteratively again from $q_1$ and $s_1$ to get two runs $r_q = q_0\xrightarrow{x_1}q_1\xrightarrow{x_2} q_2$ and $r_s = s_0\xrightarrow{x_1}s_1\xrightarrow{x_2} s_2$, where the former run is a safe run from $q$, and the latter run is from $s$ and traverses a rejecting transition only upon reading the last letter of every infix $x_i$.  In addition, every letter in the infixes $x_1$ and $x_2$ contains $v_1$ and $\it extra$.
      If $\SC(s_2)= \SC_{\it Sync}$, then we are done as we can choose $w_q =x_1\cdot x_2$.
      Otherwise, similarly to the above argument, it cannot be the case that $\SC(s_2) = \SC(v) $ as that implies that $r_s$ must have entered $\SC_{\it Sync}$ before.
    Finally, assume towards contradiction that a repetitive application of the previous considerations  results in infinite runs $r_q = q_0\xrightarrow{x_1}q_1\xrightarrow{x_2} q_2 \xrightarrow{x_3} \cdots $ and $r_s = s_0\xrightarrow{x_1}s_1\xrightarrow{x_2} s_2\xrightarrow{x_3} \cdots$
    where 
    all the  states $s_i$ do not belong to $\SC(v)$ nor to $\SC_{\it Sync}$; thus the run $r_s$  does not visit $  \SC_{\it Sync}$. 
    Then, as the state-space of $\A_d$ is finite, we get eventually that there are $i < j$ such that $s_i = s_j$. Hence,  the sub-run $s_{i} \xrightarrow{x_{i+1}} \cdots \xrightarrow{ x_j} s_j$ is a non-safe cycle from $s_i$ to $s_i$ over the word $w = x_{i+1}\cdot x_{i+2} \cdots x_j$. As $i<j$, then $w$ is nonempty. Now since the construction of the runs is such that every letter in $w$ contains $v_1$ and $\it extra$, we get that $w^\omega$ is $\it extra$-consistent, so by Lemma~\ref{rej loop lemm}, we have the latter cycle enters $\SC_{\it Sync}$, and we have reached a contradiction. 
\end{proof}

We are now ready to construct the word $z = z_1\cdot z_2 \cdots z_n$ and show that we can read off a Hamiltonain-path from the run $r_z$ of $\A_d^{(1, 1)}$ on it.
The following lemma forms the base case of our iterative construction of the word $z = z_1\cdot z_2 \cdots z_{n}$:

\begin{lemma}\label{base lemm}
    Consider the  finite word $z_1 = (\mathsf{AP})^4$. Then, for all vertices $v\neq v_1$, it holds that $z_1$ can be extended to an infinite word in $L_v$. In addition, the run $r$ of $\A_d^{(1, 1)}$ on $z_1$ leaves the safe component $\SC_{\it Sync}$ after reading the last letter of $z_1$,  and ends in the state $s_1 \in \SC(v_1)$ that remembers the correct past values of every vertex in $\eta(v_1)$ with respect to $z_1$. %
\end{lemma}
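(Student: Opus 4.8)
The plan is to prove the two assertions in turn: the extendability of $z_1$ into each $L_v$ is a routine consistency check, whereas the statement about the run of $\A_d^{(1,1)}$ is where the synchronizing role of $\SC_{\it Sync}$ must be exploited. First I would dispatch Part~1. The word $z_1 = (\mathsf{AP})^4$ consists of four copies of the all-true assignment, so $\sigma_i(a)=1$ for every index $i\in[4]$ and every $a\in\mathsf{AP}$. Hence for every vertex $v$, every neighbour $u$ of $v$, and every index $i$ with $1\le i\le 3$ we have $\sigma_i(u)=1=\sigma_{i+1}(\hat u)$, and for $i\in\{1,2\}$ we have $\sigma_i(v)=1=\sigma_{i+2}(\check v)$. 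Thus $z_1$ violates no consistency constraint of $\eta(v)$ across any three consecutive letters, and by the extendability remark accompanying the definition of $L_v$ (extending by $\sigma_j=\mathsf{AP}\cup\{\hat{v'}:v'\in\sigma_{j-1}\cap V\}\cup\{\check v:v\in\sigma_{j-2}\}$ for $j>4$) it extends to an infinite word in $L_v$. This holds for every vertex, in particular for every $v\neq v_1$.

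Next I would compute the run of $\A_d^{(1,1)}$ on $z_1$. Since $\A_d$ is a deterministic pruning of $\A^{G^{v_1}}$, the safe transitions of $\SC_{\it Sync}$ are retained unchanged, and because $\mathsf{AP}(\mathit{extra})=1$ and $\mathsf{AP}(v_1)=1$, the safe rule $\zug{(t,k),\sigma,(t{+}1,\sigma(v_1))}$ (applicable while $t<4$ and $\sigma(\mathit{extra})=k$) drives the run $(1,1)\xrightarrow{\mathsf{AP}}(2,1)\xrightarrow{\mathsf{AP}}(3,1)\xrightarrow{\mathsf{AP}}(4,1)$. From $(4,1)$ there is no safe non-$\mathit{sleep}$ transition, so reading the fourth letter $\mathsf{AP}$ forces a rejecting transition to some state $s_1$; consequently the run takes a non-safe transition out of $\SC_{\it Sync}$ exactly upon reading the last letter of $z_1$, and it remains only to identify $s_1$ (which will also yield $s_1\notin\SC_{\it Sync}$, so that the run genuinely \emph{leaves} $\SC_{\it Sync}$).

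The hard part will be to show that $s_1$ is the all-ones vector of $\SC(v_1)$, i.e.\ that $s_1\in\SC(v_1)$ and encodes the correct past values with respect to $z_1$. I would argue by contradiction, using the structural fact that $v_1$ has no incoming edges (Def.~\ref{def:GraphProperties}): then $v_1\in\eta(u)$ only for $u=v_1$, so $\SC(v_1)$ is the \emph{unique} safe component among $\{\SC(v)\}_{v\in V}$ that tracks the value of $v_1$. Assuming $s_1$ is not the correct state of $\SC(v_1)$, I would use the extendability from Part~1 to build a \emph{periodic} infinite word $w=z_1\cdot y$ that lies in $L(\A^{G^{v_1}})$ (via an $L_{v_1}$-mix) while keeping both $v_1$ and $\mathit{extra}$ in every letter of $y$, so that every non-safe cycle the deterministic run of $\A_d$ closes along $w$ is labelled by an $\mathit{extra}$-consistent word. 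By Lemma~\ref{rej loop lemm} any such non-safe cycle must pass through $\SC_{\it Sync}$, and by the trapping behaviour of Lemma~\ref{enters sync lemm} and Lemma~\ref{trap lemm} a run that is not at the correct $\SC(v_1)$-state can be repeatedly forced back into $\SC_{\it Sync}$ on a suffix that $\SC_{\it Sync}$ cannot trap (it is $\mathit{sleep}$-sparse and $L_{v_1}$- but not $L_{\it Sync}$-consistent), so the run traverses infinitely many rejecting transitions and rejects an in-language word — contradicting $L(\A_d)=L(\A^{G^{v_1}})$.

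I expect the genuine obstacle to be precisely this last step: because co-Büchi acceptance tolerates finitely many rejecting transitions, one cannot force correct routing from a single letter, and an incorrectly routed run could in principle re-synchronize into some other $\SC(v)$ and still accept. Ruling this out is exactly what the synchronizing component was engineered for; combining the uniqueness of $v_1$-tracking with the $\mathit{extra}$-consistency constraint and the cycle-forcing of Lemma~\ref{rej loop lemm} is what closes the gap, pinning the run into $\SC(v_1)$ at the all-ones vector and, a fortiori, establishing $s_1\notin\SC_{\it Sync}$.
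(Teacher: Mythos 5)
Your Part~1 and your computation of the run inside $\SC_{\it Sync}$ are correct and coincide with the paper's, and you have assembled the right toolkit (argument by contradiction, Lemmas~\ref{rej loop lemm}, \ref{enters sync lemm} and \ref{trap lemm}, $\mathit{extra}$-consistency, and the fact that only $\SC(v_1)$ tracks $v_1$). The gap is in the core of Part~3, and it is genuine. First, your witness word is not tailored to the wrong state: a generic $L_{v_1}$-mix keeping $v_1$ and $\mathit{extra}$ in every letter (essentially the all-ones word interleaved with $\mathit{sleep}$) can be safely trapped from the all-ones state of \emph{every} $\SC(v)$, so a wrongly routed run that lands in such a state of some $\SC(v)$ with $v\neq v_1$ simply accepts, and no contradiction arises. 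Even if you choose $y\in L_{v_1}$ so as to violate the consistency of every other component infinitely often, prefix-independence of co-Büchi acceptance defeats you: nothing prevents the deterministic run from landing, after finitely many \emph{further} rejecting transitions, in the correct state of $\SC(v_1)$ and then accepting --- which is perfectly consistent with $s_1$ being wrong, i.e., with the negation of the lemma. You flagged exactly this obstacle yourself, but then only asserted that the three lemmas overcome it; they do not, as invoked. Indeed, and this is the second failure, Lemma~\ref{trap lemm} cannot even be applied at the point where you need it: its hypothesis requires the rejecting transition leading into $s$ to originate in the \emph{same} safe component $\SC(v)$ as the reference state $q$, whereas the only rejecting transition available after $z_1$ is $(4,1)\xrightarrow{\mathsf{AP}}s_1$, whose source lies in $\SC_{\it Sync}$.

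What actually closes the argument in the paper is a three-way case split on where $s_1$ lies ($\SC_{\it Sync}$; $\SC(v)$ for $v\neq v_1$; $\SC(v_1)$ with wrong memory), and, in each case, the construction of a \emph{finite} word $x$ that labels a safe cycle inside the component containing $s_1$ --- that component, not $\SC(v_1)$, supplies the membership certificate --- while the run of $\A_d^{(1,1)}$ on $x$ is forced into $\SC_{\it Sync}$; appending $\mathit{sleep}$ then closes a non-safe cycle at $(1,1)$ (or at $(1,0)$, after flipping the $\mathit{extra}$-bit of the first letter, which no $\SC(v)$ tracks). The periodic word $(x\cdot\mathit{sleep})^\omega$ is accepted by $\A^{G^{v_1}}$, but the deterministic run of $\A_d$ repeats the non-safe cycle forever and rejects --- periodicity is precisely what neutralizes the prefix-independence problem. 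Setting this up needs the per-case tailoring you omitted: e.g., when $s_1\in\SC(v)$ with $v\neq v_1$, one flips the value of $v$ in the third letter $\sigma_3$, which leaves the run's $\SC_{\it Sync}$-prefix untouched (that component ignores $v$) but makes $s_1$'s memory inconsistent with the word actually read, thereby manufacturing a rejecting transition whose source lies in $\SC(v)$, so that Lemma~\ref{trap lemm} becomes applicable with a reference state in $\SC(v)$. Without the cycle-closing and this tailoring, your sketch does not yield the claimed contradiction.
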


\begin{proof}
    First, since all atomic propositions belong to every letter in $z_1$, then it follows that for all vertices $v\neq v_1$, the word $z_1$ can be extended to an infinite word in $L_v$. Indeed, $z_1$ does not violate consistency of propositional values in the neighborhood $\eta(v)$.
Then, by the definition of the safe component $\SC_{\it Sync}$, we have that the run $r$ of $\A_d^{(1, 1)}$ on $z_1$ is of the from
$  r =   (1, 1)  \xrightarrow{\mathsf{AP}} (2, 1) \xrightarrow{\mathsf{AP}} (3, 1) \xrightarrow{\mathsf{AP}} (4, 1) \xrightarrow{\mathsf{AP}} r_4$.
Indeed, since $z_1$ does not violate the $\it extra$-consistency property, and  $(1, 1)$ remembers that the last value of $v_1$ is $1$ as $\mathsf{AP}(\it extra)$, then the run $r$ remains safe while reading the prefix $z_1[1, 3]$. Therefore, as  $z_1$ is of length 4 and does not contain the ${\it sleep}$-letter,  the run  $r$ traverses a rejecting transition for the first time when reading the last letter of $z_1$.
We claim that the state $r_4$ is a state in $\SC(v_1)$ that encodes the correct past values of every vertex in the neighborhood $\eta(v_1) = \{ v_1, v_1'\}$. Thus, $r_4 = [1, 1, 1, v_1]$. 
Specifically, in all other cases, we can point to a contradiction:
\begin{enumerate}
    \item 
    The state $r_4 = (t, k)$ is in $\SC_{\it Sync}$:
    consider the run $r'$ of $\A_d^{(1, 1)}$ on $z_1\cdot \it sleep$.
    The run 
    $ r' = (1, 1)  \xrightarrow{\mathsf{AP}} (2, 1) \xrightarrow{\mathsf{AP}} (3, 1) \xrightarrow{\mathsf{AP}} (4, 1) \xrightarrow{\mathsf{AP}} (t, k) \xrightarrow{\text{\it sleep}} (1, k)$ is a non-safe run from the state $(1, 1)$ to the state $(1, k)$. 
    We distinguish between two cases. Either $k=1$ or $k = 0$.
    If $k= 1$, we get that $r'$ is a non-safe cycle from $(1, 1)$ labeled with the word $x =  (\mathsf{AP})^4\cdot \text{\it sleep}$ in $\A_d$. 
    As $x^\omega$ is a mix of $(\it sleep)^\omega$ and the word $(\mathsf{AP})^\omega$, which is in $L_{v_1}$, we get that $x^\omega \in L(\A^{{G^{{v_1}}}})$. Hence,  as all the states of $\A_d$ are equivalent to $\A^{G^{v_1}}$, we conclude that  $ x^\omega \in L(\A^{(1 , 1)}_d)$, contradicting that the latter cycle is non-safe.
    Similarly, if $k = 0$, then we can modify the run $r'$ by changing the state $(1, 1)$ to $(1, 0)$ and the label of the first transition to $ \mathsf{AP}\setminus \{ \it extra\}$ to get the following non-safe cycle $ r'= (1, 0)  \xrightarrow{\mathsf{AP}\setminus \{ \it extra\}} (2, 1) \xrightarrow{\mathsf{AP}} (3, 1) \xrightarrow{\mathsf{AP}} (4, 1) \xrightarrow{\mathsf{AP}} (t, k) \xrightarrow{\text{\it sleep}} (1, k)$  from $(1, 0)$ labeled with the word $x = (\mathsf{AP}\setminus \{ \it extra\}) \cdot (\mathsf{AP})^3  \cdot \text{\it sleep}$ satisfying %
    that $x^\omega$ is a mix of $(\it sleep) ^\omega$ and the word $ ((\mathsf{AP}\setminus \{ \it extra\}) \cdot (\mathsf{AP})^3)^\omega\in L_{v_1}$. %
    Indeed, by changing the state $(1, 1)$ to $(1, 0)$, we only need to flip the value of $\it extra$ across the first transition so that the  modified run $r'$ coincides with the non-modified $r'$ after traversing its first safe transition. %
    
    \item 
    The state $r_4$ is a state in $\SC(v)$ for some vertex $v\neq v_1$: let $\eta(v) = \{ v, u_1, u_2, \ldots, u_m\}$ denote the neighborhood of $v$, let $r_4 = [p_1(u_1), p_1(u_2), \ldots, p_1(u_m), p_1(v), p_2(v), v]$, and 
    let $z_1 = \sigma_1\cdot \sigma_2 \cdot \sigma_3\cdot  \sigma_4 = (\mathsf{AP})^4$. 
    Now if $p_2(v) = 1$, then 
    we modify the letter $\sigma_3$ in $z_1$ to $\sigma_3  = \mathsf{ AP}\setminus \{ v\}$. Thus, we modified $\sigma_3$ so that the run $r$ enters a state $r_4$ that does not remember the correct past-two value of $v$. 
    Note that the modified run $r$ of $\A_d^{(1, 1)}$ on the modified word $z_1$ agrees with the non-modified run $r$ in the first three transitions although we modified the letter $\sigma_3$. Indeed, the values of vertices distinct from $v_1$ do not affect safe transitions taken inside $\SC_{\it Sync}$.
    Consider now extending the word $z_1$ and the modified run $r$ by reading the letter  %
    $\sigma_5 = V \cup \{ \it extra\} \cup \{\hat v': v' \in V \cap \sigma_4\} \cup \{\check v' \mid v' \in V\cap \sigma_3\}$ \color{black} from the state $r_4$, and note that while reading $\sigma_5$, the run  traverses a rejecting transition and moves to the state $r_5$. Indeed, the run $r$ we have so far is of the form %
    $$ r = (1, 1)  \xrightarrow{\mathsf{AP}} (2, 1) \xrightarrow{\mathsf{AP}} (3, 1) \xrightarrow{\sigma_3} (4, 1) \xrightarrow{\mathsf{AP}} [p_1(u_1), p_1(u_2), \ldots, p_1(u_m), p_1(v), p_2(v), v] $$ $$  \xrightarrow{V \cup \{\it extra\}\cup \{\hat v': v' \in V\cap \sigma_4\} \cup \{\check v'\ \mid v'\ \in V\cap \sigma_3\}} r_5$$
    where $p_2(v) \neq \sigma_3(v)$. %
    In fact, we claim that 
    $r_5 \notin \SC(v) = \SC(r_4)$. That is, upon reading the letter $\sigma_5$ from $r_4$, the safe component changes. 
    Indeed, otherwise, if $r_5\in \SC(v)$, then since $v\neq v_1$ and $v_1, \it extra\in \sigma_5$, then we can apply Lemma~\ref{enters sync lemm} on the rejecting transition 
    $t = \zug{r_4, \sigma_5, r_5}$ and the run $r_4 \xrightarrow{\sigma_5} r_5$ that starts and ends in $\SC(v)$, to obtain that $r_4 \xrightarrow{\sigma_5} r_5$ enters $\SC_{\it Sync}$, contradicting the fact that both $r_4$ and $r_5$ do not belong to $\SC_{\it Sync}$.

    Note that the modified $z_1 = \sigma_1\cdot \sigma_2 \cdots \sigma_5$ can be extended to an infinite word in $L_{v}$. Indeed, the language $L_{v}$ does not specify consistency in the past-one value of $v$ and so the letter $\sigma_4$ need not be modified. 
    To reach a contradiction, we show below that we can extend the modified $z_1$ to a finite word $x = z_1 \cdot w$ such that %
    there is a safe cycle in $\SC(v)$ labeled with $x$, 
    yet the run  $r_x$ of  $\A_d^{(1, 1)}$ on $x$ ends in the safe component $\SC_{\it Sync}$. %
    Then, we can proceed as in item 1 to point to a contradiction %
    by considering the finite word $x\cdot \it sleep$, and the run  of $\A_d^{(1 ,1)}$ on it. 
    Specifically, 
    the latter run either ends in $(1, 1)$ closing a non-safe cycle labeled with $x\cdot \it sleep$, contradicting the fact that $x\cdot\it sleep$ is labeled along some safe cycle in $\SC(v)$, or it ends in the state $(1, 0)$ in which case we can modify $r_0 = (1, 0)$ and $\sigma_1  = \mathsf{AP}\setminus \{\it extra\}$. Note that as $\SC(v)$ does not track $\it extra$ values, the latter modification does not affect safe cycles in $\SC(v)$ that are labeled with $x\cdot \it sleep$.

    To conclude the proof, we show how to extend the modified $z_1$ to a finite word $x = z_1 \cdot w$ with the desired properties. %
    Note that the state $ vec =  [1, 1, \ldots, 1, 1, v] \in \SC(v)$ has a safe run on the modified $z_1 = \sigma_1\cdot \sigma_2 \cdots \sigma_5$ that ends in some state $q\in \SC(v)$. 
    Indeed, by reading $ \sigma_1\cdot \sigma_2 = (\mathsf{AP})^2$ from $vec$ we stay in $vec$ and traverse only safe transitions. At this point, $vec$ stores the correct past values, and thus can trap the rest of the modified $z_1$ in $\SC(v)$. 
    Now either $r_5 \in \SC_{\it Sync}$, or $r_5 \in \SC(v')$ for some vertex $v'\neq v$. In 
    both cases, there is a finite word $w_q$ such that the run of $q$ on $w_q$ is safe and ends in some state $q'\in \SC(v)$, yet the run of $r_5$ on $w_q$ visits $\SC_{\it Sync}$ after reading the last letter of $w_q$. %
    Indeed, in the former case, we can take $w_q = \epsilon$, and in the latter case, 
    we can  apply Lemma~\ref{trap lemm} on the state $q$ and the rejecting transition $t = \zug{r_4, \sigma_5, r_5}$. Indeed, all the conditions of the Lemma are met, specifically, $q,r_4 \in \SC(v)$, $v\neq v_1$,  $\{v_1, \it extra \} \subseteq \sigma_5$,  $r_5\neq q$, and $r_5\notin \SC_{\it Sync}$.
    Now as $\A_d$ is normal, we can consider a finite safe run of the form $q'\xrightarrow{w'_q} vec$ where in addition
    $w'_q$ is such that the run of $r_5$ on $w_q\cdot w'_q$  stays in the safe component $\SC_{\it Sync}$ once it enters it. Note that the additional constraint on $w'_q$ can be fulfilled as every state in $\SC(v)$  has a safe loop labeled with $\it sleep$, and $\SC(v)$ does not track the $\it extra$ values. Thus, to enforce the run of $r_5$ on $w_q\cdot w'_q$ to stay in the safe component $\SC_{\it Sync}$, we can 
    consider injecting $\it sleep$ letters in $w'_q$ regularly and modify the $\it extra$ values in every non-$\it sleep$ letter in $w'_q$ as necessary. 
    So far, we have the run $vec \xrightarrow{z_1} q \xrightarrow{w_q} q' \xrightarrow{w'_q} vec$ and the run $(1, 1) \xrightarrow{z_1} r_5 \xrightarrow{w_q} s \xrightarrow{ w'_q} s'$, where the former run is a safe cycle inside $\SC(v)$, and the latter run  enters $\SC_{\it Sync}$ for the first time in the state $s$ and stays there.
    So by  taking $w = w_q\cdot w'_q$ and $x = z_1 \cdot w$, we're done.

    \item
    The state $r_4$ is a state in $\SC(v_1)$, yet does not encode the correct past values 
    with respect to $z_1$:  
we can follow the proof sketch of item 2 but with a careful modification to the $\it extra$ values in some letters  while (possibly) introducing an additional safe transition along the extended run $r$. Specifically, let $z_1 = \sigma_1 \cdot \sigma_2 \cdots \sigma_4  = (\mathsf{AP})^4$. Then,
    consider to not modify the letter $\sigma_3$, but only to extend $z_1$ and the run $r$ of $\A_d^{(1, 1)}$ on it by reading the letter $\sigma_5$ from the state $r_4$, where $\sigma_5$ is defined next.
    As the state $r_4$ does not remember that the past values of the vertices in $\eta(v_1) = \{ v_1, v'_1\}$ are all ones, %
    then it  remembers that the past-one value of $v_1$ is either 0 or 1. In the former case, we define $\sigma_5 = \mathsf{AP}\setminus \{ \it extra\}$, %
    and in the latter case, we define $\sigma_5 = \mathsf{AP}$. %
    Thus, $\sigma_5(\it extra)$ equals the past-one value of $v_1$ as encoded in $r_4$. In addition, $v_1\in \sigma_5$.
    So we have in total the following extended run
    $r =    (1, 1)  \xrightarrow{\mathsf{AP}} (2, 1) \xrightarrow{\mathsf{AP}} (3, 1) \xrightarrow{\mathsf{AP}} (4, 1) \xrightarrow{\mathsf{AP}} r_4 \xrightarrow{\sigma_5} r_5 $.
    The fact that $r_4\in \SC(v_1)$, and the definition of $\sigma_5$ imply that  the transition $\zug{r_4, \sigma_5, r_5}$ is $\it extra$-consistent.
    We argue next that we can assume w.l.o.g that the transition $ r_4 \xrightarrow{\sigma_5} r_5$ is rejecting. Indeed, otherwise, if the latter transition is safe, then as $r_4$ does not encode all the past values as ones,  as $\SC(v_1)$ %
    requires consistency only in the past-one value of $v'_1$ and the past-two value of $v_1$, and as $\sigma_5$ satisfies all proposition values in $\mathsf{AP}\setminus \{ \it extra\}$, it follows that $r_4$ encodes the past-one value of $v_1$ as $0$, and therefore $r_5 = [1, 1, 0, v_1]$ is the vector that assigns 1's to the past-one values of the two vertices in $\eta(v_1)$, and assigns 0 to the past-two value of $v_1$. %
    In this case, as $\A_d$ is normal,
    we can force a traversal of an $\it extra$-consistent rejecting transition by  defining the letter $\sigma_6 = \mathsf{AP}$, %
    and reading $\sigma_6$ from $r_5$.  
    Note that also here $  \sigma_6(\it extra) $ equals the past-one value of $v_1$ as encoded in $r_5$, and $v_1\in \sigma_6$.

    By the above, we may assume that the extended run $r =    (1, 1)  \xrightarrow{\mathsf{AP}} (2, 1) \xrightarrow{\mathsf{AP}} (3, 1) \xrightarrow{\mathsf{AP}} (4, 1) \xrightarrow{\mathsf{AP}} r_4 \xrightarrow{\sigma_5} r_5 $  
     traverses the rejecting $\it extra$-consistent transition $\zug{r_4, \sigma_5, r_5}$ moving to the state $r_5$.  
    Now to be able to carry out the proof as in item 2, we need to show that $r_5$ does not belong to $\SC(v_1)$, and this is immediate from Lemma~\ref{enters sync lemm}. Indeed, the rejecting transition $\zug{r_4, \sigma_5, r_5}$ is not only $\it extra$-consistent but also is such that $v_1\in \sigma_5$, and so if $r_5\in \SC(v_1)$, then the latter transition must enter $\SC_{\it Sync}$, contradicting the fact that it visits only states in $\SC(v_1)$.
    Note that also here, since $\SC(v_1)$ does not track $\it extra$ values, we have that $z_1 = \sigma_1\cdot \sigma_2 \cdots \sigma_5$ can be extended to an infinite word in $L_{v_1}$. Then, let $q$ be a state such that there is a safe run of the form $vec \xrightarrow{z_1} q$ inside $\SC(v_1)$.
    Now  if $r_5$ is not already in $\SC_{\it Sync}$, then as $v_1\in \sigma_5$,  we can apply also here Lemma~\ref{trap lemm} on the state $q$ and the rejecting $\it extra$-consistent transition $t = \zug{r_4, \sigma_5, r_5}$ to obtain a finite word $w_q$ where $q$ has a safe run on $w_q$ and the run of $r_5$ on $w_q$ enters $\SC_{\it Sync}$ upon reading the last letter of $w_q$. Hence, as $\SC(v_1)$ does not track $\it extra$ values, we can obtain, as in item 2, a finite 
    word $x$ labeled on some safe cycle inside $\SC(v_1)$, yet the run of $\A_d^{(1, 1)}$ on $x$ ends in $\SC_{\it Sync}$. \qedhere
\end{enumerate}
\end{proof}

The following lemma forms the inductive step of our iterative construction of the word $z = z_1\cdot z_2\cdots z_n$:

\begin{lemma}\label{step lemm}
    Consider $1\leq i\leq n$ and a word $z = z_1 \cdot  z_2\cdot z_3 \cdots z_i \in (2^{\mathsf{AP}})^*$ \color{black} starting with $z_1 = (\mathsf{AP})^4$, where  the run  $r_z = (1, 1)\xrightarrow{z_1} s_1 \xrightarrow{z_2} s_2 \cdots \xrightarrow{z_i} s_i$ of $\A_d^{(1, 1)}$ on $z$ is such that:

    \begin{enumerate}
       \item  $\SC(s_1) = \SC(v_1)$, \color{black}
       for all $j\in \{ 2, 3, \ldots, i\}$,  $\SC(s_j) = \SC(v_j)$ for some vertex $v_j$, and $p=(v_1, v_2, \ldots, v_i)$ is a simple path in $G$.

        \item 
        For all $j \in [i]$, the run $r_z$ enters $\SC(v_j)$ once upon reading the last letter of $z_j$ by moving to the  state $s_j$ that encodes the correct past values 
        of every vertex in $\eta(v_j)$
        with respect to  the prefix $z_1\cdot z_2\cdots z_j$. In addition, $r_z$ stays in the same safe component while reading the infix $z_j[1, |z_j|-1]$.

        \item 
        For every vertex $u \in V\setminus \{ v_1, v_2, \ldots, v_i\}$, the word $z$ can be extended to an infinite word in $L_u$. In addition, every letter in $z$ contains $v_1$ and $\it extra$.
    \end{enumerate}

    If $p$ is not a Hamiltonian-path, then 
    there is a word $z_{i+1}$ such that  every letter in $z_{i+1}$ contains $v_1$ and $\it extra$, and the run $r$ of $\A_d^{s_i}$ on $z_{i+1}$  leaves $\SC(s_i)$ and enters a state $s_{i+1}$ upon reading the last letter of $z_{i+1}$, where %
    $s_{i+1}$ is in $\SC(v_{i+1})$ for some vertex $v_{i+1}$ not visited by $p$,  $E(v_i, v_{i+1})$, and $z \cdot z_{i+1}$ can be extended to a word in $L_{v'}$ for all $v'\in V\setminus\{ v_1, v_2, \ldots, v_{i+1}\}$.
    In addition, $s_{i+1}$ encodes the correct past values %
    with respect to $z_1\cdot z_2\cdots z_i \cdot z_{i+1}$.
\end{lemma}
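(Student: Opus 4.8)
The plan is to take $z_{i+1}$ to be a single letter $\sigma$ and to exploit the synchronizing role of $\SC_{\it Sync}$ exactly as in the base case (Lemma~\ref{base lemm}). First I would fix a vertex $u\in V\setminus\{v_1,\ldots,v_i\}$, which exists since $p$ is not Hamiltonian, and construct $\sigma$ so that: (i) $v_1\in\sigma$ and $\it extra\in\sigma$; (ii) if $v_i=v_1$ (equivalently $i=1$, as $v_1$ has no incoming edges and $p$ is simple) the transition from $s_i$ on $\sigma$ is $\it extra$-consistent, which is possible because Lemma~\ref{base lemm} guarantees the stored past-one value of $v_1$ in $s_1$ is $1=\sigma(\it extra)$; (iii) $\sigma$ disagrees with the memory of $s_i$ in the $\check v_i$-coordinate, so that $s_i$ has no safe $\sigma$-transition and the deterministic run of $\A_d$ must follow a rejecting transition to $s_{i+1}=\delta'(s_i,\sigma)$; and (iv) $z\cdot\sigma$ is still extendable to an infinite word in $L_{v'}$ for every still-unvisited $v'$. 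Requirement~(iv) is compatible with~(iii) because the only consistency constraint among all the $L_{v'}$ that mentions $\check v_i$ is the one belonging to $L_{v_i}$ itself, and $v_i$ is already visited; the remaining $\hat{\cdot}$- and $\check{\cdot}$-constraints imposed by the unvisited vertices touch pairwise disjoint propositions and can be met simultaneously while flipping $\check v_i$.

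Next I would show that the run actually leaves $\SC(v_i)$. The transition $\zug{s_i,\sigma,s_{i+1}}$ is rejecting and, by~(i) and~(ii), satisfies the hypotheses of Lemma~\ref{enters sync lemm} -- condition~(1) when $v_i=v_1$ and condition~(2) when $v_i\neq v_1$. Applying that lemma to the single-transition run $s_i\xrightarrow{\sigma}s_{i+1}$ shows that if $s_{i+1}$ were in $\SC(v_i)$ the run would have to visit $\SC_{\it Sync}$, which is impossible; hence $s_{i+1}\notin\SC(v_i)$.

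The heart of the argument is to pin down $s_{i+1}$ precisely: it must lie in $\SC(v_{i+1})$ for some unvisited neighbor $v_{i+1}$ of $v_i$ and encode the correct past values with respect to $z\cdot\sigma$. I would rule out every other possibility by closing a \emph{non-safe} cycle through the start state $(1,1)$ whose $\omega$-word lies in $L(\A^{G^{v_1}})=L(\A_d)$, which contradicts Lemma~\ref{rej loop lemm}. If $s_{i+1}\in\SC_{\it Sync}$, or $s_{i+1}\in\SC(v')$ for a visited $v'$, I would route the run back to $(1,1)$ -- through $\SC_{\it Sync}$ where necessary, using that each of its states has a safe $\it sleep$-transition -- along a word whose non-$\it sleep$ letters are chosen $L_u$-consistently, so that the closed cycle is non-safe while its $\omega$ is a mix of $(\it sleep)^\omega$ and a word in $L_u$; the $\it sleep$-padding together with the all-true first letter of $z_1=(\mathsf{AP})^4$ makes the periodic seam $L_u$-consistent. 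If instead $s_{i+1}\in\SC(v')$ for an unvisited $v'$ that is either wrongly encoded or not a neighbor of $v_i$, I would invoke Lemma~\ref{trap lemm} on the correct state $q^*\in\SC(v')$ (reached by a safe $L_{v'}$-continuation of $z\cdot\sigma$, available by~(iv)) and the rejecting transition entering $s_{i+1}$, obtaining a word that keeps $q^*$ safe but drives $s_{i+1}$ into $\SC_{\it Sync}$; closing this as above yields a non-safe cycle whose $\omega$ is a mix of $(\it sleep)^\omega$ and a word in $L_{v'}$, again a contradiction. The edge requirement $E(v_i,v_{i+1})$ then comes out for free: the correct value of $p_2(v_{i+1})$ demanded by $\SC(v_{i+1})$ is the value of $v_{i+1}$ in the last letter of $z$, and this is information that $s_i$ carries only for vertices in $\eta(v_i)$, so a correctly-encoded successor forces $v_{i+1}\in\eta(v_i)$.

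I expect the main obstacle to be precisely this last case analysis, and in particular the correct-encoding part, which mirrors items~2 and~3 of Lemma~\ref{base lemm}. The delicate point is to engineer the auxiliary return word so as to simultaneously (a) keep it $L_u$- (respectively $L_{v'}$-) consistent, (b) respect the $\it extra$-consistency and the $\it sleep$-regularity that $\SC_{\it Sync}$ imposes along the routing, and (c) close the periodic seam at $(1,1)$ -- possibly after replacing the initial state and first letter of the cycle, as in the $k=0$ subcase of Lemma~\ref{base lemm}, so that the resulting $\omega$-word genuinely lies in the language. Once all alternatives are contradictory, the surviving possibility delivers the desired $s_{i+1}$, and restricting the extendability in~(iv) to $V\setminus\{v_1,\ldots,v_{i+1}\}$ together with~(i) re-establishes all three invariants for the next inductive step.
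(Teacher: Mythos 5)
Your overall skeleton (force a rejecting transition while keeping the word extendable for every unvisited vertex, then case-split on where the deterministic run lands and derive contradictions from non-safe cycles) matches the paper, and your treatment of the cases $s_{i+1}\in\SC(v_i)$ (via Lemma~\ref{enters sync lemm}) and $s_{i+1}\in\SC_{\it Sync}$ (routing back to $(1,k)$ with $\it sleep$/$\it extra$ control, fixing $k$ as in Lemma~\ref{base lemm}) is sound. But the remaining cases have genuine gaps. First, your invocation of Lemma~\ref{trap lemm} is illegal: that lemma requires the ``correct'' state $q$ and the source $s^{-1}$ of the rejecting transition to lie in the \emph{same} safe component $\SC(v)$, whereas you pair $q^*\in\SC(v')$ with the transition $\zug{s_i,\sigma,s_{i+1}}$, whose source lies in $\SC(v_i)\neq\SC(v')$. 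The paper repairs this by reading one or two \emph{further} letters to force a new rejecting transition whose source is $s_{i+1}$ itself, and applies Lemma~\ref{trap lemm} to that transition. Second, the non-neighbor sub-case is not ``for free'': nothing prevents the (unknown, arbitrary-but-deterministic) pruning $\A_d$ from sending the rejecting transition to a state of $\SC(v_{i+1})$ with $\neg E(v_i,v_{i+1})$ that happens to be \emph{correctly} encoded, in which case your $q^*$ equals $s_{i+1}$ and Lemma~\ref{trap lemm} (which needs $s\neq q$) is useless. The paper excludes this by flipping the value of $v_{i+1}$ in an already-read letter \emph{without changing the run}, which is possible only because that letter is consumed by a \emph{safe} transition of $\SC(v_i)$, and safe transitions depend only on propositions essential to $v_i$. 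This is precisely why the paper's $z_{i+1}$ has \emph{two} letters -- a safe step $\sigma_{l+1}$ followed by the forced rejecting step $\sigma_{l+2}$ -- so that the past-two value of $v_{i+1}$ recorded at $s_{i+1}$ comes from the freely modifiable letter $\sigma_{l+1}$. With your single-letter $z_{i+1}$, that value comes from the last letter of $z_i$, which was read on a rejecting transition whose target may depend on \emph{every} proposition of that letter, so the modification trick is unavailable and your construction cannot rule out a correctly-encoded jump to a non-neighbor.

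The other gap is the visited-component case. When $s_{i+1}\in\SC(v')$ for a visited $v'$, your plan to ``route the run back to $(1,1)$ through $\SC_{\it Sync}$'' has no mechanism: $\A_d$'s rejecting transitions are not under your control, and $\it sleep$ is a safe self-loop on vertex components, so from $\SC(v')$ you can neither force entry into $\SC_{\it Sync}$ nor steer the run to $(1,1)$. You also cannot fall back on Lemma~\ref{trap lemm} here, because for a visited $v'$ the word $z$ already violates the $L_{v'}$-consistency (that is exactly how the run left $\SC(v')$), so no correctly-encoded partner state in $\SC(v')$ exists to keep safe. The paper's case 4 argues differently: since $s_{i+1}$ and $s_j$ lie in the same safe component, which is safely strongly connected, it appends a safe run $s_{i+1}\xrightarrow{y}s_j$ all of whose letters contain $v_1$ and $\it extra$ (with an explicit two-step construction when that component is $\SC(v_1)$), closing a non-safe cycle at $s_j$ that avoids $\SC_{\it Sync}$ and whose $\omega$-word is $\it extra$-consistent -- contradicting Lemma~\ref{rej loop lemm}. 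Note also a citation slip: a non-safe cycle at $(1,1)$ labelled by a word whose $\omega$ is in the language contradicts determinism plus language-equivalence of all states directly; Lemma~\ref{rej loop lemm} is the tool you need for non-safe cycles that \emph{avoid} $\SC_{\it Sync}$, which is exactly the situation your proposal fails to engineer in this case.
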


\begin{proof}
Assume that the simple path $p = (v_1, v_2, \ldots, v_i)$ is not Hamiltonian, and let $z=z_1 \cdot z_2\cdot z_3 \cdots z_i = \sigma_1\cdot \sigma_2 \cdots \sigma_l$.
    We define $z_{i+1}$ as the two-letter word $\sigma_{l+1} \cdot \sigma_{l+2}$, where  $\sigma_{l+1} =   V\cup \{ \it extra\} \cup \{ \hat v \mid v \in V \cap \sigma_l \} \cup \{ \check v \mid v \in V \cap \in \sigma_{l-1}\}$, and $\sigma_{l+2} = V \cup \{ \it extra\} \cup \{ \hat v \mid v \in V \cap \sigma_{l+1} \} \cup \{ \check v \mid v \in V, (v \neq v_i \wedge v \in \sigma_{l}) \vee ( v = v_i \wedge v \notin \sigma_{l})\}$. 
    As $s_i$ encodes the correct past values of vertices in $\eta(v_i)$ w.r.t $z$, then the transition $\zug{s_i, \sigma_{l+1}, s}$ is a safe transition inside $\SC(v_i)$; in particular, the state $s$ encodes the past-two value of $v_i$ as $\sigma_l (v_i)$. Therefore, by reading $\sigma_{l+2}$ from $s$, we traverse a rejecting transition $\zug{s, \sigma_{l+2}, s_{i+1}}$.
Note that it is still the case that for all $u\in V\setminus \{ v_1, v_2, \ldots ,v_i\}$, the word $z_1 \cdot z_2 \cdots z_{i} \cdot z_{i+1}$ can be extended to a word in $L_u$. Indeed, the safe component $\SC(u)$ does not track the past-two value of any vertex distinct from $u$. In addition, by definition, every letter in $z_{i+1}$ contains $v_1$ and $\it extra$.
So far, we have extended the run $r_z$ to the following run:
$$r_z =  (1, 1)\xrightarrow{z_1} s_1 \xrightarrow{z_2} s_2 \cdots \xrightarrow{z_i} s_i \xrightarrow{z_{i+1}} s_{i+1} = $$ $$ = (1, 1)\xrightarrow{z_1} s_1 \xrightarrow{z_2} s_2 \cdots \xrightarrow{z_i} s_i \xrightarrow{\sigma_{l+1}} s \xrightarrow{\sigma_{l+2}} s_{i+1}$$ where  the transition $\zug{s_i, \sigma_{l+1}, s}$ is safe, and the transition $\zug{s, \sigma_{l+2}, s_{i+1}}$ is rejecting.
To conclude the proof, we show that there exists a vertex $v_{i+1}\notin \{ v_1, v_2, \ldots, v_i\}$, and the state $s_{i+1} \in \SC(v_{i+1})$  encodes the correct past values %
with respect to $z_1\cdot z_2 \cdots z_i \cdot z_{i+1}$, and $E(v_i, v_{i+1})$.
We proceed with assuming by contradiction that the latter does not hold,  distinguish between cases, and point to a contradiction in each case:

\begin{enumerate}

      \item 
    There is a vertex $v_{i+1}\in V\setminus \{ v_1, v_2, \ldots, v_i\}$ such that $\neg E(v_i, v_{i+1})$ and $s_{i+1}\in \SC(v_{i+1})$:

    First, recall the extended run $r_z =   (1, 1)\xrightarrow{z_1} s_1 \xrightarrow{z_2} s_2 \cdots \xrightarrow{z_i} s_i \xrightarrow{\sigma_{l+1}} s \xrightarrow{\sigma_{l+2}} s_{i+1}$ 
 that is defined over the  word $z_1\cdot z_2 \cdots z_i \cdot z_{i+1} = \sigma_1\cdot \sigma_2 \cdots \cdots \sigma_l \cdot \sigma_{l+1} \cdot \sigma_{l+2} $ starting with $z_1=(\mathsf{AP})^4$. 
 As $\neg E(v_i , v_{i+1})$, then the safe component $ \SC(s_i) = \SC(v_i)$ does not track past values of $v_{i+1}$.
Therefore,  modifying the belonging of $v_{i+1}$ in the letter $\sigma_{l+1}$ to the opposite of the past-two value of $v_{i+1}$ as encoded in the state $s_{i+1}$ is such that the run $r_z$ on the modified word agrees with the original run on its form, and enters a state $s_{i+1}$ that does not remember the correct past-two value of $v_{i+1}$.
    Indeed, since $s_i \xrightarrow{\sigma_{l+1}} s$ is a safe transition inside $\SC(v_i)$, and $\SC(v_i)$ does not track the past values of $v_{i+1}$, then the modified run $r_z$ on the modified word $z_1 \cdot z_2 \cdots z_{i+1}$ agrees with the non-modified run $r_z$ on the safe transition $s_i \xrightarrow{\sigma_{l+1}} s$. 
    Consider extending the modified run $r_z$ by reading, from the state $s_{i+1}$, the letter $\sigma_{l+3} = V \cup \{\it extra\} \cup \{\hat v: v \in V\cap \sigma_{l+2}\} \cup \{\check v \mid v \in V\cap \sigma_{l+1}\}$  for the modified $\sigma_{l+1}$, and note that the run $r_z$ traverses a rejecting transition $\zug{s_{i+1}, \sigma_{l+3}, s_{i+2}}$. 
    So far, we have the following run $$ r_z =   (1, 1)\xrightarrow{z_1} s_1 \xrightarrow{z_2} s_2 \cdots \xrightarrow{z_i} s_i \xrightarrow{\sigma_{l+1}} s \xrightarrow{\sigma_{l+2}} s_{i+1} \xrightarrow{\sigma_{l+3}} s_{i+2}$$
    
    Recall that the non-modified word $z_1\cdot z_2 \cdots z_i \cdot z_{i+1}$ can be extended to a word in $L_{v_{i+1}}$. 
    Then, since 
    the language $L_{v_{i+1}}$ does not specify consistency in the past-one value of $v_{i+1}$, there is no need to modify the letter $\sigma_{l+2}$, and so
    it is still the case that the  word $z_1\cdot z_2 \cdots z_i \cdot \sigma_{l+1} \cdot \sigma_{l+2}\cdot \sigma_{l+3}$ for the modified $\sigma_{l+1}$ can be extended to a word in $L_{v_{i+1}}$.  
    In addition, the rejecting transition $\zug{s_{i+1}, \sigma_{l+3}, s_{i+2}}$ is such that $\{v_1, \it extra\} \subseteq \sigma_{l+3}$.
    Therefore, since $v_{i+1}\neq v_1$, then as in item 2 of Lemma~\ref{base lemm}, we can reach a contradiction using Lemmas~\ref{enters sync lemm} and \ref{trap lemm} to
     show that we can extend the modified word $z' = z_1\cdot z_2 \cdots z_i \cdot \sigma_{l+1} \cdot \sigma_{l+2}\cdot \sigma_{l+3}$ to a finite word $x= z'\cdot w$ such that $x$ is labeled on some safe cycle in $\SC(v_{i+1})$, yet the run $r_x$ of $\A^{(1, 1)}_d$ on $x$ ends in the safe component $\SC_{\it Sync}$.

    \item 
    There is a vertex $v_{i+1}\in V\setminus \{ v_1, v_2, \ldots, v_i\}$ such that $E(v_i, v_{i+1})$ and $s_{i+1}\in \SC(v_{i+1})$, yet the state $s_{i+1}$ does not encode the correct past values %
    with respect to $z_1\cdot z_2 \cdots z_{i} \cdot z_{i+1}$: 

    First, recall the extended run $r_z =   (1, 1)\xrightarrow{z_1} s_1 \xrightarrow{z_2} s_2 \cdots \xrightarrow{z_i} s_i \xrightarrow{\sigma_{l+1}} s \xrightarrow{\sigma_{l+2}} s_{i+1}$. 
    We define the letter $\sigma_{l+3} = V \cup \{ \it extra\}
    \cup\{\hat v: v \in V\cap \sigma_{l+2}\} \cup \{\check v \mid v \in V\cap\sigma_{l+1}\} = V\cup \{ \it extra\} \cup \{ \hat v, \check v: v\in V\}$, and note that $\{v_1, \it extra\} \subseteq \sigma_{l+3}$. %
    Then, consider the following extended run:
    $$r_z =   (1, 1)\xrightarrow{z_1} s_1 \xrightarrow{z_2} s_2 \cdots \xrightarrow{z_i} s_i \xrightarrow{\sigma_{l+1}} s\xrightarrow{\sigma_{l+2}} s_{i+1} \xrightarrow{\sigma_{l+3}} s_{i+2}$$ 
    
    We argue that we can assume w.l.o.g that the transition $ s_{i+1} \xrightarrow{\sigma_{l+3}} s_{i+2}$ is rejecting. Indeed, otherwise, if the latter transition is safe, then since $\sigma_{l+3}(\hat v) = \sigma_{l+3}(\check v_{i+1}) =  1$ for every vertex $v$ that is a neighbor of $v_{i+1}$, %
    and since $s_{i+1}$ does not encode the correct past values across $\eta(v_{i+1})$ as 1's, it follows that $s_{i+1}$ encodes all past values as 1's except for the past-one value of $v_{i+1}$ which is encoded as $0$, and therefore $s_{i+2} = [1, 1, 1, \ldots, 0, v_{i+1}]$ is the vector that assigns 1's to the past-one values of every vertex in $\eta(v_{i+1})$, and assigns 0 to the past-two value of $v_{i+1}$.  Indeed, $\sigma_{l+3}$ satisfies all atomic propositions in $V$.
    In this case, as $\A_d$ is normal,
    we can force a traversal of a rejecting transition by  defining the letter $\sigma_{l+4} = V\cup \{\it extra\} \cup \{\hat v: v \in V\cap\sigma_{l+3}\} \cup \{\check v \mid v \in V\cap \sigma_{l+2}\}$, and reading $\sigma_{l+4}$ from the state $s_{i+2}$. Indeed, $\sigma_{l+4}(\check v_{i+1}) = 1$. %
    Note that also the letter $\sigma_{l+4}$ is such that $  v_1, \it extra \in \sigma_{l+4}$.

    By the above, we may assume that the extended run $r_z =   (1, 1)\xrightarrow{z_1} s_1 \xrightarrow{z_2} s_2 \cdots \xrightarrow{z_i} s_i \xrightarrow{\sigma_{l+1}} s\xrightarrow{\sigma_{l+2}} s_{i+1} \xrightarrow{\sigma_{l+3}} s_{i+2}$   
     traverses the rejecting transition $s_{i+1} \xrightarrow{\sigma_{l+3}} s_{i+2}$ moving to the state $s_{i+2}$. %
     In addition,  the definition of $\sigma_{l+3}$ is such that  $z' = z_1 \cdot z_2 \cdots z_{i} \cdot z_{i+1} \cdot \sigma_{l+3}$ can still be extended to a word in $L_{v_{i+1}}$. 
     Therefore, since $v_{i+1}\neq v_1$ and $\SC(v_{i+1})$ does not track $\it extra$ values, 
    then, as in item 2 of Lemma~\ref{base lemm}, we can reach a contradiction using Lemmas~\ref{enters sync lemm} and \ref{trap lemm} to
     show that we can extend $z'$ to a finite word $x= z'\cdot w$ such that $x$ is labeled on some safe cycle in $\SC(v_{i+1})$, yet the run $r_x$ of $\A^{(1, 1)}_d$ on $x$ ends in the safe component $\SC_{\it Sync}$.

       \item 
    The state $s_{i+1}$ belongs to the safe component $\SC_{\it Sync}$: 
    Since the simple path $p = (v_1, v_2, \ldots, v_i)$ is not Hamiltonian, then there is a vertex $v \notin \{ v_1, v_2, \ldots, v_i\}$.  
    Consider the extended run $r_z =   (1, 1)\xrightarrow{z_1} s_1 \xrightarrow{z_2} s_2 \cdots \xrightarrow{z_i} s_i \xrightarrow{z_{i+1}} s_{i+1}$, and recall that $z_1 = (\mathsf{AP})^4$, and  $z' = z_1\cdot z_2 \cdots z_i \cdots z_{i+1}$ can be extended to an infinite word in $L_v$.
    Then, let $vec \xrightarrow{z'} q$ be a safe run inside $\SC(v)$, and consider extending the latter run to a safe cycle $c$ in $\SC(v)$ by appending a safe run of the form $q \xrightarrow{w} vec$ where $w$ ends with $\it sleep$,  $w$ contains $\it sleep$ regularly, and $w$ is such that the run of $\A_d^{s_{i+1}}$ on it does not leave $\SC_{\it Sync}$. Note that $w$ exists as every state in $\SC(v)$ has a safe-loop labeled with $\it sleep$, and thus we can inject $\it sleep$ along $w$ regularly. In addition, since we can modify the $\it extra$ values across  safe transitions in $\SC(v)$ without affecting their safety, then we can make sure to modify the $\it extra$ values in $w$ so that the run of $\A_d^{s_{i+1}}$ on $w$ stays in $\SC_{\it Sync}$.

    So far we have a safe cycle $vec \xrightarrow{z'} q \xrightarrow{w} vec$ in $\SC(v)$, yet the run $r_{z'\cdot w} =  (1, 1)\xrightarrow{z_1} s_1 \xrightarrow{z_2} s_2 \cdots \xrightarrow{z_i} s_i \xrightarrow{z_{i+1}} s_{i+1} \xrightarrow{w} s$ of $(1,1)$ on $z'\cdot w$ is a non-safe run that ends in a state $s$ in $\SC_{\it Sync}$.  
    We can now reach a contradiction similarly to item 1 of Lemma~\ref{base lemm}. Specifically, since $w$ ends with $\it sleep$, then $s = (1, k)$.  Now if $k=1$, then $r_{z'\cdot w}$ is a non-safe cycle labeled with $z'\cdot w$, and we have reached a contradiction to the fact that $z' \cdot w$ is labeled on some safe cycle in $\SC(v)$. Then, if $k=0$,  we can modify the first state in $r_{z'\cdot w}$ to $(1, 0)$, and modify its labeled word $z' \cdot w$ to $ (\mathsf{AP}\setminus \{ \it extra\}) \cdot (\mathsf{AP})^3 \cdot z_2\cdot z_3 \cdots z_i \cdot z_{i+1} \cdot w$ to get that the modified run $r_{z'\cdot w}$ is a non-safe cycle from $(1, 0)$ labeled with the modified $z'\cdot w$. Note that since $\SC(v)$ does not track $\it extra$-values, then it is still the case that $vec \xrightarrow{z'} q \xrightarrow{w} vec$ is a safe run inside $\SC(v)$ for the modified $z'\cdot w$,  and we have reached a contradiction also in this case.

       \item 
    The state $s_{i+1}$ belongs to the safe component $\SC(s_j)$ for some $j\in [i]$: 
    the run $r_z =  (1, 1)\xrightarrow{z_1} s_1 \xrightarrow{z_2} s_2 \cdots \xrightarrow{z_i} s_i \xrightarrow{z_{i+1}} s_{i+1}$ is such that $s_j \xrightarrow{z_{j+1}} \cdots \xrightarrow{z_i} s_i \xrightarrow{z_{i+1}} s_{i+1}$ is a non-safe run that does not visit $\SC_{\it Sync}$, and starts and ends in the same safe component. To reach a contradiction, we show that the latter run can be extended to a non-safe  cycle $c = s_j \xrightarrow{z_{j+1}} \cdots \xrightarrow{z_i} s_i \xrightarrow{z_{i+1}} s_{i+1} \xrightarrow{y} s_j$  from $s_j$ to $s_j$, that does not visit $\SC_{\it Sync}$, by  appending a safe run of the form $r_{\it safe} = s_{i+1} \xrightarrow{y} s_j$, where every letter in $y$ contains $v_1$ and $\it extra$. Indeed, this is sufficient as by the construction of $z_1\cdot z_2\cdots z_{i+1}$, the  non-safe cycle $c$ is labeled with a word $z_{j+1} \cdot z_{j+2} \cdots z_{i+1} \cdot y$ that contains $v_1$ and $\it extra$ in every letter, and thus its omega is $\it extra$-consistent, contradicting  Lemma~\ref{rej loop lemm} which states that $c$ must enter $\SC_{\it Sync}$.
    
    It is left to show that the run $r_{\it safe}$ exists. 
If the safe component $\SC(s_j)$ is distinct from $ \SC(v_1)$, then as $v_1$ has no incoming edges in $G$, we have that $\SC(s_j)$  does not track the past values of $v_1$ and $\it extra$ and so we can assume w.l.o.g that every safe path inside $\SC(s_j)$ is labeled with a word containing $v_1$ and $\it extra$ in every letter. 
Then,  if $\SC(s_{j}) = \SC(v_1)$,  let $v_1'$ denote the single neighbor of $v_1$ in $G$. In this case, since the path $p = (v_1, v_2, \ldots, v_i)$ is simple, we have that $s_j = s_1$.
Let $s_{i+1} = [p_1(v_1'), p_1(v_1), p_2(v_1), v_1]$.
As $z_1 = (\mathsf{AP})^4$, then we have by Lemma~\ref{base lemm} that $s_j = s_1 = [1, 1, 1, v_1]$. 
Therefore, we can consider the safe run
\begin{multline*}
r_{\it safe} = \underbrace{[p_1(v_1'), p_1(v_1), p_2(v_1), v_1]}_{s_{i+1}} \xrightarrow{V\cup \{ \it extra\} \cup \{ \hat v'_1:\  p_1(v_1') = 1\}  \cup \{\check v_1: \ p_2(v_1)=1  \}} \\ [1, 1, p_1(v_1), v_1] \xrightarrow{V \cup \{ \it extra\} \cup \{ \hat v'_1 \} \cup \{\check v_1 : \ p_1(v_1) = 1\}} \underbrace{[1, 1, 1, v_1]}_{s_1 }
\end{multline*} 
 all of whose letters contain $v_1$ and $\it extra$.
\qedhere
\end{enumerate}

\end{proof}

By Lemmas~\ref{base lemm} and \ref{step lemm}, we can conclude the other direction of the reduction: 

\begin{proposition}
    If the HD-tNCW $\A^{G^{v_1}} = \zug{\Sigma, Q, q_0, \delta, \alpha}$ has an equivalent tDCW of size $|Q|$, then 
    the graph $G = \zug{V, E}$ has a Hamiltonian path starting at $v_1$. 
\end{proposition}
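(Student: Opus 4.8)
The plan is to take the hypothesized equivalent tDCW, call it $\A_d$, and to read a Hamiltonian path off a single run of $\A_d$ starting in the state $(1,1)$ of the synchronizing component. First I would recall the reduction already set up before the lemmas: since $\A^{G^{v_1}}$ is a nice minimal HD-tNCW and $\A_d$ is deterministic (hence HD) of the same size, $\A_d$ is itself a minimal HD-tNCW, and after making it normal we may apply Theorem~\ref{iso thm}; together with the facts that $\A^{G^{v_1}}$ is $\alpha$-saturated and all of its states are equivalent, this lets us assume that $\A_d$ is a nice deterministic pruning of $\A^{G^{v_1}}$ with the same safe components and with all states equivalent (so the choice of initial state, and in particular starting from $(1,1)$, is harmless). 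The goal is then to exhibit finite words $z_1, z_2, \ldots, z_n$ over $2^{\mathsf{AP}}$ such that the run $r_z = (1,1) \xrightarrow{z_1} s_1 \xrightarrow{z_2} s_2 \cdots \xrightarrow{z_n} s_n$ of $\A_d^{(1,1)}$ enters a fresh safe component $\SC(v_j)$ exactly upon reading the last letter of each $z_j$, with $(v_1, \ldots, v_n)$ a Hamiltonian path starting at $v_1$.

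The construction is an induction on the number of visited safe components, with Lemmas~\ref{base lemm} and \ref{step lemm} carrying the entire burden. For the base case I would invoke Lemma~\ref{base lemm} with $z_1 = (\mathsf{AP})^4$: it produces a run that leaves $\SC_{\it Sync}$ and enters $\SC(v_1)$ in a state $s_1$ encoding the correct past values, and it establishes the three invariants that Lemma~\ref{step lemm} requires for $i=1$, namely that $(v_1)$ is a simple path ending in $\SC(v_1)$, that $s_1$ records the correct past values of $\eta(v_1)$, and that $z_1$ extends to a word in $L_u$ for every $u \neq v_1$ while every letter of $z_1$ contains $v_1$ and $\it extra$. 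For the inductive step, as long as the simple path $p = (v_1, \ldots, v_i)$ built so far is not Hamiltonian, Lemma~\ref{step lemm} hands us a word $z_{i+1}$ whose run extends $r_z$ into a genuinely new safe component $\SC(v_{i+1})$ with $v_{i+1} \notin \{v_1, \ldots, v_i\}$, $E(v_i, v_{i+1})$, and $s_{i+1}$ again encoding the correct past values, the run leaving $\SC(v_i)$ only upon reading the last letter of $z_{i+1}$.

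The remaining work is purely bookkeeping: I would verify that the conclusion of Lemma~\ref{step lemm} re-establishes its own hypotheses for index $i+1$, so that the step iterates, and then count. The extension keeps the path simple (the new vertex is fresh and adjacent to $v_i$), directly yields the correct-past-values invariant from the lemma's conclusion, preserves extendability of $z_1 \cdots z_{i+1}$ to $L_{v'}$ for every still-unvisited $v'$ together with the property that every letter contains $v_1$ and $\it extra$, and — because the run stays in $\SC(v_i)$ until the last letter of $z_{i+1}$ and only then enters $\SC(v_{i+1})$ — also gives the entering-structure invariant. Since each application of Lemma~\ref{step lemm} consumes precisely these invariants and adds exactly one previously unvisited vertex, and since $G$ has $n = |V|$ vertices, after $n-1$ applications the simple path $p = (v_1, \ldots, v_n)$ visits every vertex and is therefore Hamiltonian, which is the desired Hamiltonian path starting at $v_1$.

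I expect the genuine difficulty to be entirely absorbed into Lemmas~\ref{base lemm} and \ref{step lemm}, whose proofs use the synchronizing role of $\SC_{\it Sync}$ (and the regular injection of $\it sleep$) to force the rejecting transitions of $\A_d$ to follow edges of $G$ and to transport the correct past values. Granting those lemmas, the only care needed at this level is to match the three invariants exactly against what the step lemma both requires and delivers, and to observe that ``$p$ is not Hamiltonian'' is precisely the stopping condition $i < n$ that guarantees a further extension; the argument then terminates by the trivial bound that a simple path in an $n$-vertex graph has length at most $n-1$.
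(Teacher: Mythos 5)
Your proposal is correct and follows essentially the same route as the paper: the paper likewise reduces to the case that $\A_d$ is a nice deterministic pruning of $\A^{G^{v_1}}$ (via Theorem~\ref{iso thm}), then builds $z = z_1 \cdots z_n$ iteratively, concluding the proposition directly from Lemmas~\ref{base lemm} and~\ref{step lemm}. Your explicit check that the step lemma's conclusion re-establishes its three hypotheses for index $i+1$, together with the counting argument, is exactly the bookkeeping the paper leaves implicit.
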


\section{The HD-negativity Problem is NP-hard: Encoding The Reduction From Section~\ref{exp red sec}}
\label{hd neg sec}

Consider the reduction defined in Section~\ref{exp red sec}, that for a given directed graph $G =  \zug{V, E}$ with a designated vertex $v_1$ satisfying Def.~\ref{def:GraphProperties}, returns the nice minimal $\alpha$-homogeneous HD-tNCW $\A^{G^{v_1}} = \zug{2^\mathsf{AP} \cup \{\it sleep\}, Q, q_0, \delta, \alpha}$. %
In this section, we show how this reduction implies NP-hardness of the HD-negativity problem.
This is involved, however, as the automaton $\A^{G^{v_1}}$ has an
alphabet whose size is exponential in $G$; in particular, 
 an explicit representation of $\A^{G^{v_1}}$ 
is not computable in polynomial time and storing it requires exponential space.
We fix this issue by encoding $\A^{G^{v_1}}$ into a nice minimal HD-tNCW $e(\A^{G^{v_1}})$ whose description is at most polynomial in the input graph and that can be computed in polynomial time from $G$ when applying it ``on-the-fly'' without ever explicitly building $\A^{G^{v_1}}$.

The starting point of the encoding $e(\A^{G^{v_1}})$ is to encode letters in $2^{\mathsf{AP}}$ as binary numbers.  Specifically, let $k = |\mathsf{AP}|$ denote the number of the atomic propositions in $\mathsf{AP} = \{ a_1, a_2, \ldots, a_k\}$. Then, we encode each letter $\sigma$ in $2^{\mathsf{AP}}$ as a binary number  $\zug{\sigma}$ of length $k$, where the $i$'th bit of $\zug{\sigma}$ is 1 if and only if $a_i\in \sigma$. It is straight-forward to see that the encoding defines a bijection between the binary numbers of length $k$, namely elements in $\{ 0, 1\}^k$, and the letters in $2^{\mathsf{AP}}$. 
For the letter $\it {sleep}$, we define $\zug{\it{sleep}} = \it{sleep}$; that is, the $\it sleep$ letter remains as is.
For a letter $\sigma\in 2^{\mathsf{AP}} \cup \{\it {sleep}\}$, we let $dec(\zug{\sigma})$ denote the decoding of $\zug{\sigma}$, that is, $dec(\zug{\sigma}) = \sigma$. 
For a word $w = \sigma_1 \cdot \sigma_2 \cdot \sigma_3 \cdots \in (2^{\mathsf{AP}} \cup \{\it {sleep}\})^\omega$, we let $\zug{w}$ denote the infinite word obtained by encoding every letter of $w$; thus, $ \zug{w} = \zug{\sigma_1}  \cdot \zug{\sigma_2} \cdot \zug{\sigma_3} \cdots $ and $dec(\zug{w}) = dec(\zug{\sigma_1})  \cdot dec(\zug{\sigma_2}) \cdot dec(\zug{\sigma_3}) \cdots  = w$.

The idea is to let $\A^{G^{v_1}}$  read the encoding $\zug{\sigma}$ instead of the letter $\sigma$. However, replacing every letter $\sigma$ with its encoding involves two technical difficulties: (1) by encoding transitions from a state $q\in Q$, the state $q$ should be able to read all encodings of letters in $2^{\mathsf{AP}}$. Therefore, in the worst case, we need to add $2^k$ internal states that we visit while reading an encoding of a letter. Also, (2) while  reading an encoding $\zug{\sigma}$ from a state $q\in Q$, we might leave $\SC(q)$ right after reading a proper prefix of the encoding, and thus lose the correspondence between the safe components of the original automaton and those of the encoded one, and in fact this makes the encoding trickier since it is not clear why the resulting automaton should be HD --  how does an HD-strategy of the encoded automaton ``know'' which successor state to move to before actually reading the whole encoding? The former challenge is resolved by using the fact that  the input graph satisfies Def.~\ref{def:GraphProperties}, allowing us to reduce the state-space of the automaton by merging internal states, and the latter challenge is resolved by adding safe cycles that loop  back to $q$ via a new letter $\$$ that is special for the respective current internal state. %
Thus, we force leaving a safe component to occur only after reading the last bit of the encoding. In particular, the decision to which $\sigma$-successor $s$ of $q$ to move to happens once we have read the whole encoding of a letter $\sigma$. 
In a sense, by allowing the automaton $e(\A^{G^{v_1}})$ to leave the current safe component only after reading the whole character encoding, we maintain the correspondence between  safe components of $\A^{G^{v_1}}$ and $e(\A^{G^{v_1}})$, and prevent any ambiguity in deciding to which successor state to proceed to.
In addition, the added safe cycles enforce the states that are reachable from $q$ upon reading an encoding of a letter in $2^{\mathsf{AP}}$, except for the last bit, to be unique for $q$. 
In fact, the way we define these safe cycles guarantees that the encoding $e(\A^{G^{v_1}})$ is already a nice minimal HD-tNCW.

Consider $\A^{G^{v_1}} = \zug{2^{\mathsf{AP}} \cup \{\textit{sleep}\}, Q, q_0, \delta, \alpha}$. Before describing the construction of $e(\A^{G^{v_1}})$, we proceed with some useful notation. 
Recall the proof of the first direction in Theorem~\ref{corr thm}. There, we showed how a Hamiltonian path $p$ starting at $v_1$ in $G$ induces a deterministic pruning $\A_d$ of $\A^{G^{v_1}}$ that recognizes the same language. Essentially, for a state $q\in Q$ and letter $\sigma \in 2^{\mathsf{AP}} \cup \{ \it sleep\}$ where all $\sigma$-transitions from $q$ are rejecting, we pruned out all  $\sigma$ transitions from $q$ except for one rejecting transition $t = \zug{q, \sigma, s}$. Similarly, to the definition of safe transitions in $\A^{G^{v_1}}$, the destination $s$ of the rejecting transition $t$ was induced deterministically based on the state $q$ and the values of some atomic propositions under the assignment $\sigma$. 
The goal is to define $e(\A^{G^{v_1}})$ in a way that deterministic prunings in it correspond to those of $\A^{G^{v_1}}$, and thus $e(\A^{G^{v_1}})$ should be able to track the proposition values of relevance in each safe component, motivating the following definition.
For a state $q\in Q$, and atomic proposition $a_i$ in $\mathsf{AP}$, we say that $a_i$ {\em is essential} for $q$ if one of the following holds:
\begin{enumerate}
    \item 
    $q \in \SC(v)$ for some vertex $v\in V$ and $a_i = v'$ for some $v'$ vertex with $d(v, v') \leq 2$, or $a_i = \hat v'$ for some vertex $v'$ with $d(v, v') = 1$, or $a_i = \check v$.
    Thus, $a_i$ corresponds to the values of vertices of distance at most $2$ from $v$, or the hat values of neighbors of $v$, or the check value of $v$.
     
    \item 
    $q \in \SC(v)$ for some vertex $v\neq v_1$ and $a_i = v_1$.

    \item 
    $q\in \SC_{\it Sync}$ and  $a_i = \it {extra}$, or $a_i\in V$ is a vertex with $d(v_1, a_i)\leq 1$.  
    Thus, $a_i$ corresponds to the values of vertices in the neighborhood of $v_1$, or the value of $\it extra$.
\end{enumerate}

\begin{remark}\label{essen rem}
    Consider a state $q\in Q$, and an atomic proposition $a_i\in \mathsf{AP}$. When $a_i$ is essential to $q$, then the assignment of $a_i$ is essential not only for determining safe transitions inside $\A^{G^{v_1}}$, but also for determining rejecting transitions going out from $q$ in a deterministic pruning $\A_d$ of $\A^{G^{v_1}}$ that recognizes $L(\A^{G^{v_1}})$ (if it exists). Thus, the definition of essential atomic propositions considers the cases of $\A_d$ switching safe components. 
    Specifically,
    item 1 in the definition specifies that in the worst case, 
    when switching from $\SC(v)$ to $\SC(u)$ where $E(v, u)$, 
    we need to consider the values of vertices that are neighbors of $u$ as essential since $\A_d$ needs to move to a state storing these values. Then, for $\A_d$ to transition to the correct state in $\SC_{\it Sync}$ from $\SC(\neq v_1)$, item 2 suggests that we need to also consider the value of $v_1$ as essential, and finally, the transition from $\SC_{\it Sync}$ to $\SC(v_1)$ is based also on the values of vertices in $\eta(v_1)$.
\end{remark}

\color{black}
We let $ess(q)\subseteq \mathsf{AP}$ denote the set of atomic propositions that are essential for the state $q$.
Theorem~\ref{HP is hard thm} suggests we can assume that every vertex of $G$ has a neighborhood of constant size; in particular, the number of vertices of distance at most $2$ from any vertex of $G$ is constant as well.
Thus, we have the following:

\begin{lemma}\label{sensing lemm}
    For every state $q\in Q$ of $\A^{G^{v_1}}$, it holds that the set $ess(q)$ of atomic propositions essential to $q$ is of constant size, and can be computed in polynomial time from the input graph $G$.
\end{lemma}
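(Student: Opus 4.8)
The plan is to prove both claims---the constant size bound and polynomial-time computability---by exploiting that $G$ is $(\leq 3)$-regular, which bounds the number of vertices within any fixed distance of a given vertex.

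First, I would establish the size bound by a direct case analysis following the three cases in the definition of $ess(q)$. The key combinatorial observation is that, since every vertex of $G$ has at most $3$ out-neighbors, the number of vertices $v'$ with $d(v, v') \leq 2$ is at most $1 + 3 + 3^2 = 13$ for any vertex $v$: there is $v$ itself, at most $3$ vertices at distance exactly $1$, and at most $3 \cdot 3 = 9$ vertices at distance exactly $2$. With this in hand, each case contributes only a constant number of essential propositions. If $q \in \SC(v)$, then case~1 contributes at most $13$ propositions of the form $v'$ (one per vertex at distance $\leq 2$ from $v$), at most $3$ of the form $\hat{v'}$ (one per neighbor of $v$), and the single proposition $\check v$; case~2 contributes at most one further proposition, namely $v_1$. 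If instead $q \in \SC_{\it Sync}$, then case~3 contributes $\mathit{extra}$ together with the vertices at distance at most $1$ from $v_1$, and since $v_1$ has a single neighbor by Def.~\ref{def:GraphProperties}, this amounts to at most $3$ propositions. Summing the contributions, $|ess(q)| \leq 13+3+1+1 = 18$ when $q \in \SC(v)$ and $|ess(q)| \leq 3$ when $q \in \SC_{\it Sync}$; in either case the bound is a constant independent of $|G|$.

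For polynomial-time computability, I would note that from a given state $q$ one can read off in constant time which safe component it inhabits: the last coordinate of the vector defining a state of $\bigcup_{v\in V}\SC(v)$ records the vertex $v$, whereas states of $\SC_{\it Sync}$ have the distinct form $(t,k)\in\{1,2,3,4\}\times\{0,1\}$. Once the relevant vertex $v$ (or $v_1$, in the $\SC_{\it Sync}$ case) is identified, computing the set of vertices at distance at most $2$ from it amounts to a bounded-depth breadth-first search in $G$, which runs in polynomial time. The hat- and check-propositions are then read off directly from the distance-$0$ and distance-$1$ vertices, and the remaining constant cases ($\check v$, $v_1$, $\mathit{extra}$) are added explicitly; thus $ess(q)$ is assembled in polynomial time.

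I do not expect a serious obstacle here, as the statement is essentially a counting argument riding on the bounded-out-degree hypothesis. The only points requiring a moment's care are that distances in the directed graph $G$ are measured along out-edges, so one must use out-degree (not total degree) when bounding the distance-$\leq 2$ ball, and that $v_1$, having no incoming edges, satisfies $d(v, v_1)=\infty$ for every $v$ and is therefore never captured by case~1---which is precisely why the separate case~2 is needed.
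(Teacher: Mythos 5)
Your proposal is correct and takes essentially the same route as the paper, which simply observes that $(\leq 3)$-regularity bounds the number of vertices within distance $2$ of any vertex by a constant and lets the lemma follow; you merely make the constants ($1+3+9=13$, etc.) and the bounded-depth search explicit. The only nitpick is your closing aside: since $d(v_1,v_1)=0$, the proposition $v_1$ \emph{is} captured by case~1 when $q\in \SC(v_1)$ itself; case~2 is needed only for the components $\SC(v)$ with $v\neq v_1$, which is also why that case excludes $v=v_1$.
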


We now describe the encoding of $\A^{G^{v_1}} = \zug{2^{\mathsf{AP}} \cup \{\it{sleep}\}, Q, q_0, \delta, \alpha}$ formally. Let $e(\A^{G^{v_1}}) = \zug{\{ 0, 1, \it {sleep}\} \cup \Sigma_\$, Q', q_0, \delta', \alpha'}$ denote the encoding of $\A^{G^{v_1}}$, to be defined below. 
The set of states $Q'$ in $e(\A^{G^{v_1}})$ includes all the states in $Q$; thus $Q\subseteq Q'$. In particular, the initial state $q_0$ of $\A^{G^{v_1}}$ is in $Q'$. 
For every state $s\in Q$,  we let $s_\epsilon$ be an alias for the state $s$. %
Recall that $\mathsf{AP} = \{ a_1, a_2, \ldots, a_k\}$.
We add to $e(\A^{G^{v_1}})$ new internal states, denoted $s_x$, corresponding to every state $s\in Q$ and nonempty binary word $x$ of length at most $k-1$  (see Example~\ref{exm 2}). 
The internal state $s_x$ is defined iteratively so that it is reachable deterministically from $s$ via a safe path upon reading the binary word $x$. Essentially, internal states corresponding to the state $s$ encode transitions from $s$ in $\A^{G^{v_1}}$. 
For starters, recall that $s_\epsilon = s$.  Assume that an internal state of the form $s_{x'}$ is already defined where $x'$ is a binary number of length at most $k-2$, and we define the $(0|1)$-successors of $s_{x'}$ as follows. If the atomic proposition $a_{|x'|+1}$ is essential for $s$, then we add distinct internal states $s_{x'\cdot 0}$ and $s_{x'\cdot 1}$ that are reachable only from $s_{x'}$ upon reading $0$ and $1$ via safe transitions, respectively. Otherwise, if $a_{|x'|+1}$ is not essential for $s$, then, we do not branch to distinct internal states from $s_{x'}$ upon reading $0$ and $1$, and instead, we proceed via a safe transition from $s_{x'}$ to a new internal state, to which we refer simultaneously as $s_{x'\cdot 0}$ and $s_{x'\cdot 1}$, upon reading $0$ or $1$.
We note that if $s_{x'} = s_{y'}$ for some binary number $y'\neq x'$, then it
is not hard to see that in this case $|x'| = |y'|$.
In addition, the definition of the $(0|1)$-successors of the internal state $s_{x'}$ depends only on the length of $x'$ and not on the actual bits appearing in $x'$. Thus, transitions from $s_{x'}$ are well-defined, and we refer to the state $s_{x'\cdot b}$ also as $s_{y'\cdot b}$ for all bits $b\in \{0, 1\}$.
Finally, from a state $s_x$, where $x$ is a binary number of length $k-1$, and a bit $b\in \{ 0, 1\}$, if $\zug{s, dec(x\cdot b), p}$ is a safe transition of $\A^{G^{v_1}}$, then we define the safe transition $\delta'(s_x, b) =  \{p\}$ in $e(\A^{G^{v_1}})$. Otherwise, if all $dec(x\cdot b)$-transitions from $s$ in $\A^{G^{v_1}}$ are rejecting, then we define $\delta'(s_x, b) = Q$, and all these transitions are rejecting in $e(\A^{G^{v_1}})$.
Note that for two binary words $x$ and $y$ of length $k-1$ with $s_x = s_y$, we have the following: for every bit $b\in \{ 0, 1\}$, it holds that $\zug{s, dec(x\cdot b), p}$ is a safe transition of $\A^{G^{v_1}}$ iff $\zug{s, dec(y\cdot b), p}$ is a safe transition of $\A^{G^{v_1}}$. Thus, transitions from internal states $s_x$ with  $|x| = k-1$ are well-defined as well. Indeed, $e(\A^{G^{v_1}})$ is defined so that distinct binary words of the same length $x$ and $y$ lead from $s$ to the same internal state only when $x$ and $y$ disagree on bits that correspond to atomic propositions that are not essential for $s$, and safe $\sigma$-transitions from $s$ in $\A^{G^{v_1}}$ are determined based only on $s$ and proposition values essential to it as specified by $\sigma$.

Observe that not only the size of $e(\A^{G^{v_1}})$ is polynomial in $G$,  thanks to Lemma~\ref{sensing lemm}, but also the above description can be implemented iteratively from $Q$ and $G$ (without ever constructing the transition function of $\A^{G^{v_1}}$ explicitly) to compute, in polynomial time, the internal states corresponding to every state $s\in Q$, and transitions from them.  
For example, for a state $s = [p_1(u_1), p_1(u_2), \ldots, p_1(u_m), p_1(v), p_2(v), v]\in \SC(v)$ for some vertex $v\in V$, and a letter $\sigma\in 2^{\mathsf{AP}}$, to check whether $\sigma$ is consistent with the state $s$ and determine the $\sigma$-successor $p$ of $s$ when $\zug{s, \sigma, p}$ is a safe transition in $\A^{G^{v_1}}$, it is sufficient to consider the values of atomic propositions that are essential to $s$ as specified by the encoding $\zug{\sigma}$. Specifically, checking consistency with $s$ amounts to checking whether for all $j\in [m]$, $\sigma(\hat u_j) = p_1(u_j)$, and  whether $\sigma(\check v) = p_2(v)$, and when the consistency check passes, we move to the state $p = [\sigma(u_1), \sigma(u_2), \ldots, \sigma(u_m), \sigma(v), p_1(v), v ]$. Note that indeed both the consistency check and the state $p$ are based only $s$ and values of atomic propositions in $ess(s)$.

Next, $\it {sleep}$-transitions of $e(\A^{G^{v_1}})$ are inherited from $\A^{G^{v_1}}$ in the expected way: for every state $s\in Q$, we define $\delta'(s, \it {sleep}) = \delta(s, \it {sleep})$, and
a transition $\zug{s, \it {sleep}, p}$ of $e(\A^{G^{v_1}})$ is rejecting only when it is rejecting in $\A^{G^{v_1}}$.
From states in $Q'\setminus Q$, reading $\it {sleep}$ leads via a rejecting transition to a rejecting sink $q_{rej}$, which we also add to $Q'$. 
We end the construction of $e(\A^{G^{v_1}})$ by defining safe-cycles that loop back to every state in $Q$ without increasing the state-space of $e(\A^{G^{v_1}})$ and while adding only polynomially many special letters to $\Sigma_\$$:
we first initialize $\Sigma_\$ = \emptyset$. Then, 
for every state $d\in Q'\setminus \{ q_{rej}\}$, we add a special letter $\$_d$ to $\Sigma_\$$, and define $\$_d$-labeled transitions  as follows. 
Let $s\in Q$ be the state that $d$ corresponds to, that is, $d = s_x$, for some binary word $x$ of length at most $k-1$.
We add the safe transition $\zug{d, \$_d, s}$, and for all states $d'\neq d$ in $Q'\setminus\{q_{rej}\}$, we add the rejecting transitions
$\delta'(d', \$_d) = Q$.
Finally, all the $\$_d$-transitions from $q_{rej}$ are rejecting transitions leading to $q_{rej}$.
In words, $q_{rej}$ remains a rejecting sink,  
 the only safe $\$_d$-transition is from the state $d = s_x$ to the state $s$ that $d$ corresponds to, and  all other $\$_d$-transitions lead nondeterministically via rejecting transitions to every state in $Q$.

\begin{example}\label{exm 2}
{\rm  Figure~\ref{B example} below describes transitions from a state $s\in Q$ of $e(\A^{G^{v_1}})$.
The state $p$ is also in $Q$, and we drew a node %
labeled with $Q$ representing all states in $Q$; thus transitions to it read as transitions to all states in $Q$.
Recall that dashed transitions are rejecting. In this example, we assume for simplicity that $\mathsf{AP} = \{a_1, a_2, a_3, a_4 \}$. Then, the set of atomic propositions essential to $s$ is given by $ess(s) = \{ 1, 3\}$. Accordingly, we branch to distinct internal states only from states of the form $s_x$ where $|x| \in \{ 0, 2\}$; thus, from the states $s_{\epsilon}, s_{00}$, and $s_{10}$. 
Transitions from $p$ and $Q$ as well as $\it {sleep}$-transitions are omitted.
Also, missing $\Sigma_\$$-transitions are rejecting and lead to $Q$. Finally, for simplicity, we assume that only letters $\sigma$ with  $\zug{\sigma}\in \{ 000, 010\} \cdot \{ 0, 1\}$ are such that
$s$ has an outgoing safe $\sigma$-transition in $\A^{G^{v_1}}$, and  we assume that all safe transitions from $s$ lead to $p$ in $\A^{G^{v_1}}$. Hence, the only internal state $s_x$ with $|x|=3$ and an outgoing safe transition labeled with a bit is the state $s_{000}$, and $(0|1)$-transitions from it lead to $p$.
\hfill \qed}

\begin{figure}[htb] 
\begin{center}
\includegraphics[width=.8\textwidth]{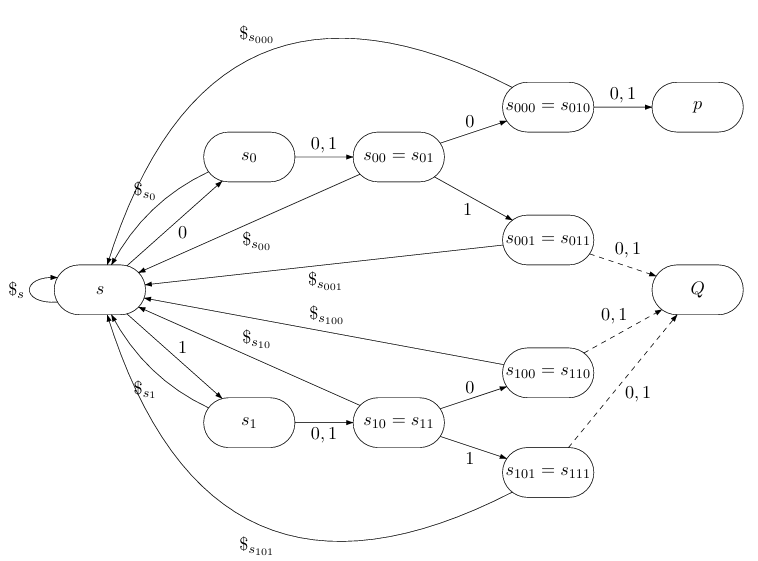}
\caption{The tNCW $e(\A^{G^{v_1}})$.}
\label{B example}
\end{center}
\end{figure}
\end{example}

Essentially, $e(\A^{G^{v_1}})$ reads $\zug{\sigma}$, for some $\sigma \in 2^{\mathsf{AP}}$, from $s\in Q$ by deterministically following a safe path to the state $s_x$ after reading a proper prefix $x$ of $\zug{\sigma}$. Then, upon reading a bit $b$ from the state $s_x$, for some $x\in \{ 0, 1\}^{k-1}$, i.e., when reading the last bit $b$ of the encoding of some letter $\zug{\sigma}$, $e(\A^{G^{v_1}})$ either proceeds via a safe transition to the only $\sigma$-successor $p$ of $s$ when $\zug{s, \sigma, p}$ is a safe transition in $\A^{G^{v_1}}$, or proceeds nondeterministically to all states $p$ in $Q$ via rejecting transitions otherwise. Note that the only nondeterminsitic choices in $e(\A^{G^{v_1}})$ are (a) upon reading the last bit from a state in $\{s_x\}_{x\in \{ 0, 1\}^{k-1}}$, (b) when reading $\it {sleep}$ from a state in $Q$ that has a nondeterministic choice upon reading $\it {sleep}$ in $\A^{G^{v_1}}$, or (c) when reading a $\$_d$-transition from an internal state distinct from $d$. In addition, the fact that $\A^{G^{v_1}}$ is $\alpha$-homogeneous implies that $e(\A^{G^{v_1}})$ is $\alpha'$-homogeneous as well. Indeed, upon reading the last letter of an encoding, we get that the induced encoded letter $\sigma$ either has a safe transition from $s$ in $\A^{G^{v_1}}$ or  rejecting transitions that lead to all states in $\A^{G^{v_1}}$, but not both. 
Also, there are no $\$_d$-labeled rejecting and safe transitions going out from the same state.

Recall that $\Sigma_\$ = \{ \$_d: d\in Q'\setminus \{ q_{rej}\}\}$ denotes the set of special letters in $e(\A^{G^{v_1}})$.
We say that a word
$w \in (\{ 0, 1, \it {sleep}\} \cup \Sigma_{\$})^\omega$
{\em respects the encoding} when $w\in ( (  (0+1)^{\leq k-1} \cdot \Sigma_{\$})^* \cdot ((0+1)^k + \it {sleep})^*)^\omega$.
Otherwise, we say that $w$ {\em violates the encoding}. For a word $w$ that respects the encoding, we define $w_{proj}$ as the word obtained from $w$ by removing all infixes of the form $(0+1)^{\leq k-1}\cdot \Sigma_{\$}$: we first write $w = z_1 \cdot z_2 \cdot z_3 \cdots$, where $z_i \in ( (0+1)^{\leq k-1} \cdot \Sigma_{\$} + (0+1)^k + \it{sleep}  )$  
for all $i\geq 1$. Then, $w_{proj}$ is obtained from $w$ by removing all infixes $z_i$ that contain a special letter from $\Sigma_\$$. 
The fact that $\it {sleep}$ transitions from internal states not in  $Q$ lead to the rejecting sink implies that
a run from a state in $Q$ on an infinite word $w$ does not reach the rejecting sink if and only if $w$ respects the encoding.

Consider a state $s\in Q$ and a letter $\sigma\in 2^{\mathsf{AP}}$.
We proceed with the following Lemma suggesting that there is a correspondence between $\sigma$-labeled transitions from $s$ in $\A^{G^{v_1}}$ and runs on $\zug{\sigma}$ from $s$ in $e(\A^{G^{v_1}})$:

\begin{lemma}\label{cr lem}
Consider the automata $\A^{G^{v_1}}= \zug{2^{\mathsf{AP}} \cup \{\it {sleep}\}, Q, q_0, \delta, \alpha}$, and $e(\A^{G^{v_1}}) = \zug{\{ 0, 1, \it{sleep}\} \cup \Sigma_\$, Q', q_0, \delta', \alpha'}$,  states $s, p\in Q$, and a letter $\sigma \in 2^{\mathsf{AP}}$. Then, 
\begin{enumerate}
    \item All runs of $(e(\A^{G^{v_1}}))^s$ on $\zug{\sigma}$ end in a state in $Q$. In addition, if a run $r = s \xrightarrow{\zug{\sigma} }p$ exists in $e(\A^{G^{v_1}})$, then it is  unique, and is of the form $r = s, s_{\zug{\sigma}[1, 1]}, s_{\zug{\sigma}[1, 2]}, \ldots, s_{\zug{\sigma}[1, k-1]}, p$.

    \item The triple $\zug{s, \sigma, p}$ is a transition of $\A^{G^{v_1}}$ iff there is a run $r = s \xrightarrow{\zug{\sigma} }p$ in $e(\A^{G^{v_1}})$. In addition, the run $r$ is safe only when the transition $\zug{s, \sigma, p}$ is safe.
    
\end{enumerate}
\end{lemma}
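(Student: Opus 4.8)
The plan is to unwind the iterative construction of $e(\A^{G^{v_1}})$ and track what happens as we read the $k$ bits $\zug{\sigma} = b_1 \cdot b_2 \cdots b_k$ from $s$, relying on the fact that $\A^{G^{v_1}}$ is safe-deterministic and $\alpha$-homogeneous (Lemma~\ref{useful prop lem}(1)). First I would establish the behavior on the first $k-1$ bits. By the inductive definition of the internal states, from every state $s_{x'}$ with $|x'|\leq k-2$ the two bit-labeled outgoing transitions are safe and lead to $s_{x'\cdot 0}$ and $s_{x'\cdot 1}$; these two destinations coincide precisely when $a_{|x'|+1}$ is not essential for $s$, but in either case reading a bit determines a single, uniquely determined successor. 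Hence reading $b_1\cdots b_{k-1}$ from $s = s_\epsilon$ forces a unique safe run $s, s_{\zug{\sigma}[1,1]}, \ldots, s_{\zug{\sigma}[1,k-1]}$ with no nondeterminism along this prefix.

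Next I would analyze the final bit $b_k$ read from $s_{\zug{\sigma}[1,k-1]}$. Since $dec(\zug{\sigma}[1,k-1]\cdot b_k) = dec(\zug{\sigma}) = \sigma$, the construction yields exactly two cases. If $\A^{G^{v_1}}$ has a safe $\sigma$-transition from $s$, then safe-determinism makes this transition $\zug{s,\sigma,p'}$ unique and $\delta'(s_{\zug{\sigma}[1,k-1]}, b_k) = \{p'\}$ via a safe transition; by $\alpha$-homogeneity this is the only situation in which a safe $\sigma$-transition from $s$ exists. Otherwise all $\sigma$-transitions from $s$ in $\A^{G^{v_1}}$ are rejecting, and by the definition of the rejecting transitions of $\A^{G^{v_1}}$ they are exactly $\{s\}\times\{\sigma\}\times Q$; correspondingly $\delta'(s_{\zug{\sigma}[1,k-1]}, b_k) = Q$ with all these transitions rejecting. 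In both cases every successor lies in $Q$, which together with the forced prefix gives part~1: all runs on $\zug{\sigma}$ from $s$ end in $Q$, and any run ending in a given state $p$ must be the unique sequence $s, s_{\zug{\sigma}[1,1]}, \ldots, s_{\zug{\sigma}[1,k-1]}, p$.

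Part~2 then follows by reading off the same case split. In the safe case a run $s \xrightarrow{\zug{\sigma}} p$ exists iff $p = p'$, matching the unique safe transition $\zug{s,\sigma,p'}$ of $\A^{G^{v_1}}$; in the rejecting case a run to $p$ exists for every $p\in Q$, matching the fact that $\zug{s,\sigma,p}$ is a rejecting transition of $\A^{G^{v_1}}$ for every $p\in Q$. This establishes the equivalence between transitions $\zug{s,\sigma,p}$ of $\A^{G^{v_1}}$ and runs $s \xrightarrow{\zug{\sigma}} p$ of $e(\A^{G^{v_1}})$. Finally, since the first $k-1$ transitions of any such run are safe, the run $r$ is safe exactly when its last transition is, which by the analysis above happens precisely when $\zug{s,\sigma,p}$ is the (unique) safe $\sigma$-transition of $\A^{G^{v_1}}$, giving the safety claim.

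The only point requiring genuine care, and the one I would flag as the main obstacle, is the well-definedness of the notation $s_x$ and of the transitions out of the length-$(k-1)$ internal states when different encoded prefixes merge. This is exactly the issue already settled in the construction: distinct equal-length binary words reach a common internal state $s_x$ only when they disagree on bits corresponding to propositions that are not essential for $s$, while the safe $\sigma$-transitions of $\A^{G^{v_1}}$ from $s$ depend only on the essential propositions of $s$ as specified by $\sigma$. I would therefore invoke that argument to justify well-definedness rather than redo it, and keep the proof itself a direct verification against the definition of $e(\A^{G^{v_1}})$.
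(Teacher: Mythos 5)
Your proposal is correct and follows essentially the same route as the paper's own proof: unwind the construction of $e(\A^{G^{v_1}})$, observe that the first $k-1$ bits are read deterministically via safe transitions through the internal states, and then case-split on the last bit (unique safe transition to the $\sigma$-successor versus rejecting transitions to all of $Q$), with well-definedness of the merged internal states delegated to the construction itself. Your write-up is somewhat more explicit than the paper's (which treats most of this as immediate from the definition), but there is no substantive difference in the argument.
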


\begin{proof}
    \begin{enumerate}
        \item 
         Is immediate from the definition of $e(\A^{G^{v_1}})$. Indeed, upon reading $\zug{\sigma}$ from $s$, we move deterministically via safe transitions to the state $s_{\zug{\sigma}[1, i]}$ after reading the $i$'th bit of $\zug{\sigma}$ for all $i\in [k-1]$. 
        Then,  upon reading the last bit of $\zug{\sigma}$, we move to a state $q\in Q$ either deterministically via a safe transition, when $\zug{s, \sigma, q}$ is a safe transition of $\A^{G^{v_1}}$, or we move to all states in $Q$ via  rejecting transitions, otherwise. Either way, as $e(\A^{G^{v_1}})$ reads the first $k-1$ bits of $\zug{\sigma}$ deterministically, we get that there is at most one run  from $s$  on $\zug{\sigma}$ that ends in $p$, and that run is of the required form.

        \item 
         Consider a triple $t = \zug{s, \sigma, p} \in Q\times \Sigma \times Q$. 
         By the definition of $e(\A^{G^{v_1}})$, we know that there is a safe run of the form $ s \xrightarrow {\zug{\sigma}[1, k-1]} s_{\zug{\sigma}[1, k-1]}$ in $e(\A^{G^{v_1}})$.
        In addition, the above run can be extended to a run $r$ from $s$ on $\zug{\sigma}$ to $p$ only when the triple $t = \zug{s, \sigma, p}$ is a transition of $\A^{G^{v_1}}$.
         The definition of the acceptance condition of $e(\A^{G^{v_1}})$ implies that the run $r$ is safe only when $\zug{s, \sigma, p}$ is a safe transition of $\A^{G^{v_1}}$, and we are done.   \qedhere 
    \end{enumerate}
\end{proof}

As $\it{sleep}$-transitions in $e(\A^{G^{v_1}})$ are inherited from $\A^{G^{v_1}}$, we conclude the following by Lemma~\ref{cr lem}:

\begin{proposition}\label{B sharp-free prop}
    Consider a state $s\in Q$. It holds that 
\begin{multline*}
L(e(\A^{G^{v_1}})^s) \cap \{0, 1, \it {sleep}\}^\omega = 
    \{ w\in \{ 0, 1, \it{sleep}\}^\omega: \\\text{ $w$ respects the encoding and $dec(w) \in L((\A^{G^{v_1}})^s)$}\}
    \end{multline*}
\end{proposition}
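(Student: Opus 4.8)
The plan is to prove the two inclusions by casing on whether the input word respects the encoding. First I would record a simplification: for a word $w\in \{0, 1, \mathit{sleep}\}^\omega$, respecting the encoding reduces to $w\in ((0+1)^k + \mathit{sleep})^\omega$, since the factor $((0+1)^{\leq k-1}\cdot \Sigma_\$)^*$ in the defining expression can only match $\epsilon$ when no $\Sigma_\$$-letters occur. Thus such a $w$ decomposes uniquely into blocks $w = b_1\cdot b_2 \cdot b_3 \cdots$ with each $b_i\in (0+1)^k \cup \{\mathit{sleep}\}$, and $dec(w) = dec(b_1)\cdot dec(b_2)\cdots$ is exactly the corresponding word over $2^{\mathsf{AP}}\cup \{\mathit{sleep}\}$, the alphabet of $\A^{G^{v_1}}$.

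For the easy half I would take $w$ that violates the encoding. By the property recorded just before Lemma~\ref{cr lem}, every run of $e(\A^{G^{v_1}})^s$ on $w$ reaches the rejecting sink $q_{rej}$; since $q_{rej}$ is absorbing with all outgoing transitions rejecting, every such run traverses infinitely many rejecting transitions and is therefore rejecting. Hence $w\notin L(e(\A^{G^{v_1}})^s)$, so neither side of the claimed equality contains $w$, and the two sides agree on all encoding-violating words.

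It then remains to treat $w$ that respects the encoding, where the goal is to show $w\in L(e(\A^{G^{v_1}})^s)$ iff $dec(w)\in L((\A^{G^{v_1}})^s)$. Here the same property guarantees that no run from $s\in Q$ reaches $q_{rej}$, so every run sits in $Q$ at the block boundaries and passes through internal states only strictly inside a block. Writing a run of $e(\A^{G^{v_1}})^s$ on $w$ via its sequence of boundary states $s = p_0, p_1, p_2, \ldots \in Q$, Lemma~\ref{cr lem} yields, block by block, that $\zug{p_{i-1}, dec(b_i), p_i}$ is a transition of $\A^{G^{v_1}}$ (using the inherited $\mathit{sleep}$-transitions when $b_i = \mathit{sleep}$), so $p_0, p_1, p_2,\ldots$ is a run of $(\A^{G^{v_1}})^s$ on $dec(w)$; conversely, any run $p_0, p_1, \ldots$ of $(\A^{G^{v_1}})^s$ on $dec(w)$ lifts to a run on $w$ by replacing each step with the corresponding unique block-run supplied by Lemma~\ref{cr lem}. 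This sets up a correspondence between the runs of $e(\A^{G^{v_1}})^s$ on $w$ and those of $(\A^{G^{v_1}})^s$ on $dec(w)$.

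The main point requiring care—and the only genuinely non-bookkeeping step—will be that this correspondence preserves acceptance. By Lemma~\ref{cr lem} together with the $\alpha'$-homogeneity of $e(\A^{G^{v_1}})$, within each block the first $k-1$ bits are read along safe transitions and only the final transition of the block can be rejecting, and it is rejecting precisely when the matched transition $\zug{p_{i-1}, dec(b_i), p_i}$ of $\A^{G^{v_1}}$ is rejecting. Consequently a run on $w$ and its corresponding run on $dec(w)$ traverse a rejecting transition in block $i$ simultaneously, so one traverses infinitely many rejecting transitions iff the other does; since both automata use the co-\buchi condition, corresponding runs are accepting together. Taking the existential quantifier over runs then gives $w\in L(e(\A^{G^{v_1}})^s)$ iff $dec(w)\in L((\A^{G^{v_1}})^s)$ for encoding-respecting $w$, which, combined with the encoding-violating case, yields the stated set equality.
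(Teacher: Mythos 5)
Your proof is correct and follows essentially the same route as the paper's: decompose encoding-respecting words into blocks in $((0+1)^k + \mathit{sleep})^\omega$, iterate Lemma~\ref{cr lem} to set up an acceptance-preserving correspondence between runs of $e(\A^{G^{v_1}})^s$ and runs of $(\A^{G^{v_1}})^s$ on the decoded word, and use the $q_{rej}$-avoidance characterization to dispose of encoding-violating words. Your explicit up-front case split on respecting versus violating the encoding is only a cosmetic reorganization of the paper's two-inclusion argument, which extracts the same facts from the existence of an accepting run.
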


\begin{proof}
\sloppypar    We first show that $ L(e(\A^{G^{v_1}})^s) \cap \{0, 1, \it {sleep}\}^\omega \subseteq  \{ w\in \{ 0, 1, \it {sleep}\}^\omega: \text{ $w$ respects the encoding and $dec(w) \in L((\A^{G^{v_1}})^s)$}\}$. 
    Consider a word $w = \sigma_1 \cdot \sigma_2 \cdot \sigma_3 \cdots \in L(e(\A^{G^{v_1}})^s)$ that does not contain special letters from $\Sigma_{\$}$, and let $r$ be an accepting run of $e(\A^{G^{v_1}})^s$ on $w$.
    As $r$ is accepting, it does not visit the rejecting sink $q_{rej}$; in particular, $w$ respects the encoding. Therefore, as $w$ does not contain special letters, we can write $w = z_1\cdot z_2 \cdot z_3 \cdots $, where $z_i \in ((0+1)^k + \it {sleep})$ for all $i \geq 1$. 
    Then, by Lemma~\ref{cr lem} and  the fact that $\A^{G^{v_1}}$ and $e(\A^{G^{v_1}})$ agree on $\it{sleep}$-transitions, the run $r$ is of the form $r = s_0 \xrightarrow{z_1} s_1 \xrightarrow{z_2} s_2 \cdots$, where $s_0 = s$, and all for all $i\geq 0$, the state $s_i$ is in  $Q$.
    Now an iterative application of Lemma~\ref{cr lem} implies that $r$ induces,
    by removing states not in $Q$, a  run $r'$ of the form $r' = s_0 \xrightarrow{dec(z_1)} s_1 \xrightarrow{dec(z_2)} s_2 \cdots$ of $(\A^{G^{v_1}})^s$ on $dec(w)$ that agrees with $r$ on acceptance. In particular, $r'$ is an accepting run, and thus $dec(w) \in L((\A^{G^{v_1}})^s)$. 

    Showing that $\{ w\in \{ 0, 1, \it{sleep}\}^\omega: \text{ $w$ respects the encoding and $dec(w) \in L((\A^{G^{v_1}})^s)$}\} \subseteq L(e(\A^{G^{v_1}})^s) \cap \{0, 1, \it{sleep}\}^\omega$ follows  similar considerations. Indeed, if $w = z_1\cdot z_2 \cdot z_3 \cdots $, where  $z_i\in ((0+1)^k + \it{sleep})$ for all $i\geq 1$, and $dec(w) = dec(z_1) \cdot dec(z_2) \cdot dec(z_3) \cdots \in L((\A^{G^{v_1}})^s)$, then
    if $r = s_0, s_1, \ldots$ is an accepting run of $(\A^{G^{v_1}})^s$ on $dec(w)$, then 
    an iterative application of Lemma~\ref{cr lem} implies that there is a run $r'$ of $e(\A^{G^{v_1}})^s$ on $w$ that agrees with $r$ on acceptance, and is of the form $r' = s_0 \xrightarrow{z_1} s_1 \xrightarrow{z_2} s_2 \cdots$, and we're done.
\end{proof}

We proceed with the following proposition suggesting that $e(\A^{G^{v_1}})$ is already a nice minimal HD-tNCW: 

\begin{proposition}
The tNCW  $e(\A^{G^{v_1}}) = \zug{\{ 0, 1, \it{sleep}\} \cup \Sigma_{\$},  Q', q_0, \delta', \alpha'}$  is a nice minimal HD-tNCW.    
\end{proposition}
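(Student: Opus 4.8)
The plan is to prove niceness and minimality separately, deriving history-determinism of all states from Theorem~\ref{are GFG thm} and minimality from the safe-language criterion of \cite{DBLP:journals/lmcs/RadiK22} used in Proposition~\ref{A is nice prop}.

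First I would record the shape of the safe components of $e(\A^{G^{v_1}})$: each safe component of $\A^{G^{v_1}}$ expands to the component containing its states $s$ together with all their internal states $s_x$, which are attached to $s$ by the deterministic safe bit-path $s \xrightarrow{x} s_x$ and the safe loop $s_x \xrightarrow{\$_{s_x}} s$, while $\{q_{rej}\}$ is a trivial component with no safe transitions. From this picture, normality is immediate, since every safe transition (a bit step, a $\$_{s_x}$-loop, or a last-bit safe transition to a $\sigma$-successor $p$, which lies in $s$'s original component because $\A^{G^{v_1}}$ is normal) stays inside one expanded component; safe-determinism is immediate, since the bit-path is deterministic, the only safe $\$$-letter out of a state $d$ is $\$_d$, and $\mathit{sleep}$-transitions are inherited and deterministic; and $\alpha'$-homogeneity holds letter by letter, inherited at the last bit from $\alpha$-homogeneity of $\A^{G^{v_1}}$ (Lemma~\ref{useful prop lem}) and by construction for the $\$$- and $\mathit{sleep}$-letters.

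Next I would pin down the equivalence classes, which are exactly $Q$ (a single class), the singleton $\{q_{rej}\}$, and the classes among internal states. The key sub-claim is that all states of $Q$ are equivalent. On words over $\{0,1,\mathit{sleep}\}$ this is Proposition~\ref{B sharp-free prop} together with the equivalence of all states of $\A^{G^{v_1}}$; on words containing special letters one uses that every rejecting transition not leading to $q_{rej}$ branches to all of $Q$, so an accepting run from an arbitrary $s \in Q$ can, at its last rejecting step, jump to whichever state is needed to follow a safe suffix. This correction argument shows that acceptance depends only on the decoded, $\$$-projected word and not on the starting $Q$-state. Internal states are not equivalent to $Q$-states because $(\mathit{sleep})^\omega$ is safe from every $Q$-state but sends every internal state to $q_{rej}$, and $q_{rej}$ is the unique state of empty language. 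Since every set of $\sigma$-successors is either a singleton or all of $Q$, and $Q$ is a single equivalence class, semantic determinism follows; and $\alpha'$-saturation follows because each rejecting transition already reaches every state equivalent to its target (all of $Q$, respectively only $q_{rej}$). As all five hypotheses of Theorem~\ref{are GFG thm} are transition-level and therefore also hold for $(e(\A^{G^{v_1}}))^q$ for every $q$, each state is HD ($q_{rej}$ trivially, being a deterministic sink); combined with reachability of all states (the $Q$-states as in $\A^{G^{v_1}}$, each $s_x$ from $s$ by reading $x$, and $q_{rej}$ by a $\mathit{sleep}$ from any internal state), this yields niceness.

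Finally, for minimality I would invoke the criterion of \cite{DBLP:journals/lmcs/RadiK22} as in Proposition~\ref{A is nice prop}, applied to distinct equivalent states. The special letters trivialise it: for every $d \in Q'\setminus\{q_{rej}\}$ the word $\$_d \cdot (\mathit{sleep})^\omega$ lies in $L_{\it safe}(d)$ via the safe step $d \xrightarrow{\$_d} s$ followed by the safe $\mathit{sleep}$-loop at $s$, but lies in $L_{\it safe}(d')$ for no $d' \neq d$, since $\$_d$ is rejecting out of every state other than $d$. Hence all states of $Q'\setminus\{q_{rej}\}$ have pairwise incomparable safe languages, so both conditions of the criterion hold; and $q_{rej}$, being the unique empty-language state, forms a singleton equivalence class and is never compared with another state. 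I expect the main obstacle to be the equivalence of the $Q$-states in the presence of the $\$$-letters, namely making the correction-via-rejecting-jumps argument precise and extracting a clean language characterization, together with ensuring that the minimality criterion is applied only to equivalent states so that the rejecting sink $q_{rej}$ does not spuriously violate condition~(2).
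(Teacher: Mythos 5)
Your proposal is correct and follows essentially the same route as the paper's proof: the same decomposition into normality/safe-determinism/$\alpha'$-homogeneity via the expanded safe components, the same two-case equivalence argument for $Q$-states (Proposition~\ref{B sharp-free prop} for $\Sigma_\$$-free suffixes, jumping via the rejecting $\$_d$-branches otherwise), HDness of all states via Theorem~\ref{are GFG thm} after establishing $\alpha'$-saturation, and minimality from pairwise incomparable safe languages witnessed by $\$_d$-prefixed safe words, with $q_{rej}$ set aside by non-equivalence. The only differences are cosmetic: you characterize the safe components directly where the paper inducts over safe paths, and you instantiate the paper's generic witness $\$_d\cdot w$ with the concrete word $\$_d\cdot(\mathit{sleep})^\omega$.
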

    
\begin{proof}
  We first show that $e(\A^{G^{v_1}})$ is  semantically deterministic.
  By definition, all nondeterministic choices in $e(\A^{G^{v_1}})$ lead to all the states in $Q$. Thus, it is sufficient to show that all the states in $Q$ are $e(\A^{G^{v_1}})$-equivalent, and this follows by  Proposition~\ref{B sharp-free prop} and  the fact that all the states in $\A^{G^{v_1}}$ are equivalent. 
  Consider states $s_1, s_2\in Q$,
  and let  $r = r_0, r_1, \ldots$ be an accepting run of $e(\A^{G^{v_1}})^{s_1}$ on an infinite word $w$.
  We show next that $s_2$ has an accepting run %
  on $w$ as well, and thus $L(e(\A^{G^{v_1}})^{s_1})\subseteq L(e(\A^{G^{v_1}})^{s_2})$. As the latter holds for all states $s_1$ and $s_2$ in $Q$, we conclude that $s_1\sim_{e(\A^{G^{v_1}})} s_2$.
  As $r$ is an accepting run of $e(\A^{G^{v_1}})^{s_1}$ on $w$, then $w$ respects the encoding. %
  In particular, we can write $w = z_1\cdot z_2 \cdot z_3 \cdots $, where  $z_i \in ((0+1)^{\leq k-1}\cdot \Sigma_{\$} + (0+1)^k + \it{sleep})$ for all $i\geq 1$.
 Also, the run $r$ is of the form $r = r_0 \xrightarrow{z_1} r_1 \xrightarrow{r_2} r_2 \cdots $, where $r_0 = s_1$, and the state $r_i$ is in $Q$ for all $i\geq 0$.
  As $r$ is accepting, there is some $j\geq 0$ such that $r[j, \infty] =r_j \xrightarrow{z_{j+1}} r_{j+1} \xrightarrow{r_{j+2}} r_{j+2} \cdots$ is a safe run from the state in $r_j\in Q$ on the suffix $z_{j+1}\cdot z_{j+2} \cdots $. 
  Now consider an arbitrary finite run $r'[0, j]$ of $e(\A^{G^{v_1}})^{s_2}$ on the prefix $z_1\cdot z_2 \cdots z_j$ of $w$. Then, $r'[0, j]$ is of the form 
  $r'[0, j] =  r'_0 \xrightarrow{z_1} r'_1 \xrightarrow{r_2} r'_2 \cdots \xrightarrow{z_j} r'_j$, where $r'_0 = s_2$, and the state $r'_i$ is in $Q$ for all $i$.
  We proceed by showing that we can always extend the finite run $r'[0, j]$ to an accepting run of $e(\A^{G^{v_1}})^{s_2}$ on $w$. %
  We distinguish between two cases. If the suffix $z_{j+1}\cdot z_{j+2} \cdots$ has no special letters, then the finite run $r'[0, j]$ can be extended to an accepting run of $e(\A^{G^{v_1}})^{r'_0} = e(\A^{G^{v_1}})^{s_2}$ on $w$ as by Proposition~\ref{B sharp-free prop} and the fact that $r_j\sim_{\A^{G^{v_1}}} r'_j$, we have that $e(\A^{G^{v_1}})^{r_j}$ and $e(\A^{G^{v_1}})^{r'_j}$ agree on words that have no special letters from $\Sigma_\$$. Otherwise, let $\$_d$ be  the first special letter appearing in the suffix  $z_{j+1}\cdot z_{j+2} \cdots$. As the run $r[j, \infty]$ on the suffix  $z_{j+1}\cdot z_{j+2} \cdots$  is safe, then it traverses the safe $\$_d$-transition $\zug{d, \$_d, s}$ where $s\in Q$ is the state that $d$ corresponds to. %
  Hence, if an arbitrary extension of the finite run $r'[0,j]$ to a run $r'$ on $w$  does not visit the state $s$ right after reading the first special letter $\$_d$ in the suffix $z_{j+1}\cdot z_{j+2} \cdots$, then we can modify the extension $r'$ to eventually follow the accepting run $r[j, \infty]$ after reading the first special letter $\$_d$. Indeed, there is a single safe $\$_d$-transition and it leads to the state $s$, and all internal states that have no outgoing safe $\$_d$-transition, have rejecting $\$_d$-transitions leading to all states in $Q$. Thus, if by reading $\$_d$ the run $r'$ did not move to the state $s$, we can modify it by letting it move to $s$ and then follow the accepting run $r$ on the suffix to be read.

  We show next that all the states of $e(\A^{G^{v_1}})$ are reachable. As $\A^{G^{v_1}}$ is nice, we get that from the initial state $q_0$, we can reach every other state $s\in Q$ in $\A^{G^{v_1}}$. Hence, an iterative application of Lemma~\ref{cr lem} and the fact that $\A^{G^{v_1}}$ and $e(\A^{G^{v_1}})$ agree on $\it{sleep}$-transitions, imply that $s$ is also reachable from $q_0$ in $e(\A^{G^{v_1}})$. Then, we can also reach, from $s$, all internal states corresponding to $s$. Finally, the rejecting sink $q_{rej}$ is also reachable upon reading a word that violates the encoding. Next, note that $e(\A^{G^{v_1}})$ is safe-deterministic since, as by definition, it encodes safe transitions of $\A^{G^{v_1}}$ deterministically via the internal states,  safe $\Sigma_{\$}$-transitions are defined deterministically as well, and $\it {sleep}$-transitions are inherited from $\A^{G^{v_1}}$ which is nice, in particular, safe-deterministic to begin with. 
  We proceed by showing that $e(\A^{G^{v_1}})$ is normal.
  For a state $s\in Q$, let $I_s = \{ s_x\}_{|x|\leq k-1}$ denote the set of internal states corresponding to $s$.
    By definition, for a state $s\in Q$, all internal states in $I_s$ belong to the same safe component in $e(\A^{G^{v_1}})$. Indeed, for all $x$, $s \xrightarrow{x} s_x \xrightarrow{ \$_{s_x}} s$ is a safe cycle in $e(\A^{G^{v_1}})$. 
    Next, we show that if there is a safe run $r$ from a state $s\in Q$ to a state $p\in Q$ labeled with the encoding of some letter in $\sigma \in 2^{\mathsf{AP}} \cup \{ \it sleep\}$, then $s$ and $p$ belong to the same safe component in $e(\A^{G^{v_1}})$.
    Indeed, either $r = \zug{s, \sigma, p}= \zug{s, \it sleep, p}$, or
    $\sigma$ is a letter in $2^{\mathsf{AP}}$ and
    $r$ is a run of the form  $r = s, s_{\zug{\sigma}[1, 1]}, s_{\zug{\sigma}[1, 2]}, \ldots, s_{\zug{\sigma}[1, k-1]}, p$ over the encoding of  $\sigma$.   The fact that $\A^{G^{v_1}}$ and $e(\A^{G^{v_1}})$ agree on $\it sleep$ transitions, and Lemma~\ref{cr lem}, imply that in both cases, $\zug{s, \sigma, p}$ is a safe transition in $\A^{G^{v_1}}$. Now the fact that $\A^{G^{v_1}}$ is normal implies that 
    there is a safe path from $p$ back to $s$ in $\A^{G^{v_1}}$ which induces by Lemma~\ref{cr lem} a safe path from $p$ back to $s$ in $e(\A^{G^{v_1}})$, and thus $p$ and $s$ belong to the same safe component.
    By the above, 
    it follows that for every two states $s, p\in Q$, if there is a safe run $r$ from a state in $I_s$ to a state in $ I_p$ that visits only states in $I_s\cup I_p$, then  all the states in $I_s\cup I_p$ belong to the same safe component. 
    Indeed, since states in $I_s$ can reach only the state $p $ in $ I_p$ upon reading a single letter, then either $r$ traverses the transition $\zug{s, \it sleep, p}$, or 
 $r$ traverses a sub-run of the form  $s_x \xrightarrow{z} s_{y} \xrightarrow{b} p $, where $z\in \{ 0, 1\}^*$, $b$ is a bit, and $s_x \xrightarrow{z} s_{y}$ is a run that does not enter $s$ from a different state in $I_s$. The latter sub-run extends the finite run $s \xrightarrow{x} s_x$ to a safe run from $s$ to $p$ over the encoding of some letter $\sigma\in 2^{\mathsf{AP}}$. In both cases, we get that $s$ and $p$ belong to the same safe component, and thus all the states in $I_s\cup I_p$ belong to the same safe component as well. 
    Therefore, by induction, 
    any internal state $p_x \in I_p$ that is reachable via an arbitrary safe path from an internal state $s_y\in I_s$ is such that $s_y$ and $p_x$ belong to the same safe component, in particular, $p_x$ can close a safe cycle back to $s_y$, and 
    thus $e(\A^{G^{v_1}})$ is normal.
  To conclude that $e(\A^{G^{v_1}})$ is a nice HD-tNCW, we show that $e(\A^{G^{v_1}})^d$ is HD for all $d\in Q'$, and this follows from Theorem~\ref{are GFG thm}. Indeed, $e(\A^{G^{v_1}})$ is $\alpha'$-homogeneous, and we show next that it is $\alpha'$-saturated. 
  Consider a rejecting transition $t = \zug{d, a, d'}$ in $e(\A^{G^{v_1}})$. We distinguish between two cases. (1) either $d' = q_{rej}$, in which case, as every state in $Q'\setminus \{ q_{rej}\}$ belongs to a nontrivial safe component, $d'$ has no $e(\A^{G^{v_1}})$-equivalent states, or (2) $d' \neq q_{rej}$, in which case $d = s_x$ for some state $s\in Q$ and binary word $x$ of length at most $k-1$. In this, case, if  $a\in \Sigma_{\$}$ is a special letter or $a\in \{0, 1\}$ is a bit, then, in both cases, by the definition of the transition function of $e(\A^{G^{v_1}})$, the state $d$ has $a$-labeled rejecting transitions that lead to every state in $Q$. As all the states in $Q$ are $e(\A^{G^{v_1}})$-equivalent, and as $\it {sleep}$-transitions separate between states in $Q$ and states not in $Q$, we get that there is no state in $Q'\setminus Q$ that is equivalent to a state in $Q$, and so $e(\A^{G^{v_1}})$ is already $\alpha'$-saturated.

  To conclude the proof, we show that $e(\A^{G^{v_1}})$ is a minimal HD-tNCW, and this is immediate from the fact that every state distinct from $q_{rej}$ is not equivalent to $q_{rej}$, and all distinct states $d$ and $d'$ in $Q'\setminus \{q_{rej}\}$ have incomparable safe languages, and nice HD-tNCWs that satisfy the latter are already minimal~\cite{DBLP:journals/lmcs/RadiK22}. Indeed, let $d = s_x$ for some $s\in Q$. Then, as only the state $d$ has a safe outgoing $\$_d$-labeled transition and it leads to $s$, it follows that the word $\$_d \cdot w$, for some word $w\in L_{\it safe}(e(\A^{G^{v_1}})^s)$, is in $L_{\it safe} (e(\A^{G^{v_1}})^d)\setminus L_{\it safe}(e(\A^{G^{v_1}})^{d'})$. Indeed, every state in $Q'\setminus \{ q_{rej}\}$ belongs to a nontrivial safe component, and thus $w$ exists.
\end{proof}

We can now prove Proposition~\ref{reduc applies prop}, suggesting that the reduction from Section~\ref{exp red sec} can output $e(\A^{G^{v_1}})$ instead of $\A^{G^{v_1}}$: 

\begin{proposition}\label{prop enc corr}
    There is a Hamiltonian path in $G$ starting at $v_1$ iff the minimal HD-tNCW $e(\A^{G^{v_1}})$
has an equivalent tDCW of the same size.
\end{proposition}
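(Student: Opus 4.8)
The plan is to transfer the already-established correctness of the exponential reduction (Theorem~\ref{corr thm}, i.e.\ Propositions giving both directions) through the polynomial encoding $e(\cdot)$, using the two structural facts the excerpt has carefully prepared: first, that $e(\A^{G^{v_1}})$ is a nice minimal HD-tNCW (the Proposition immediately preceding this one), and second, the language-preservation statement $L(e(\A^{G^{v_1}})^s)\cap\{0,1,\mathit{sleep}\}^\omega = \{w : w \text{ respects the encoding and } dec(w)\in L((\A^{G^{v_1}})^s)\}$ from Proposition~\ref{B sharp-free prop}. The key observation is that $e(\A^{G^{v_1}})$ has exactly the same \emph{number of states} semantics relative to Hamiltonicity as $\A^{G^{v_1}}$, because the encoding only rescales the alphabet and introduces bookkeeping internal states without altering which graphs admit an equivalent deterministic automaton of the target size.

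For the forward direction, I would assume $G$ has a Hamiltonian path starting at $v_1$. By Direction~1 of Theorem~\ref{corr thm}, $\A^{G^{v_1}}$ has an equivalent tDCW of size $|Q|$; since $\A^{G^{v_1}}$ and its deterministic pruning share safe components and the encoding $e(\cdot)$ acts uniformly on safe components (as exploited already for $e(\A^{G^{v_1}_p})$ via Lemma~\ref{equiv enc lemm}), applying the same encoding scheme to that pruning yields a deterministic pruning of $e(\A^{G^{v_1}})$, hence a tDCW equivalent to $e(\A^{G^{v_1}})$ of size $|e(\A^{G^{v_1}})|$. The cleanest route is to note that a Hamiltonian path makes $\A^{G^{v_1}}$ determinizable \emph{in place} (by pruning rejecting transitions), so encoding that deterministic pruning and encoding $\A^{G^{v_1}}$ produce automata with the same state set $Q'$, differing only on rejecting transitions, giving the desired size-$|e(\A^{G^{v_1}})|$ deterministic equivalent directly.

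For the reverse direction, I would assume $e(\A^{G^{v_1}})$ has an equivalent tDCW $\D$ with $|\D| = |e(\A^{G^{v_1}})|$. Since $e(\A^{G^{v_1}})$ is a nice minimal HD-tNCW and $\D$ is deterministic (hence HD) and equivalent, $\D$ is itself minimal; invoking Theorem~\ref{iso thm} and $\alpha$-saturation exactly as in the proof of Direction~2 of Theorem~\ref{corr thm}, I may assume $\D$ is a deterministic pruning of $e(\A^{G^{v_1}})$ with isomorphic safe components. The task is then to read off a Hamiltonian path, and the natural strategy is to \emph{project} $\D$ back to a deterministic pruning of $\A^{G^{v_1}}$: a deterministic choice of rejecting successor in $\D$, made upon reading the last bit of an encoded letter $\zug{\sigma}$ from a state $s_{\zug{\sigma}[1,k-1]}$, corresponds via Lemma~\ref{cr lem} to a deterministic rejecting transition $\zug{s,\sigma,p}$ in $\A^{G^{v_1}}$. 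Collecting these choices yields a deterministic pruning $\A_d$ of $\A^{G^{v_1}}$ of size $|Q|$ equivalent to $\A^{G^{v_1}}$ (using Proposition~\ref{B sharp-free prop} to match languages on $\{0,1,\mathit{sleep}\}^\omega$, which carry all of $L(\A^{G^{v_1}})$ under $dec$); then Direction~2 of Theorem~\ref{corr thm} supplies the Hamiltonian path.

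The main obstacle is the projection step in the reverse direction: one must verify that the determinism of $\D$ on the \emph{last} bit of each letter-encoding induces a well-defined deterministic rejecting choice in $\A^{G^{v_1}}$ that does not depend on the particular internal path taken, and that $\D$'s behavior on the auxiliary special letters $\Sigma_\$$ and on encoding-violating words is consistent with $\D$ being a genuine pruning of $e(\A^{G^{v_1}})$ rather than an unrelated deterministic automaton. This is exactly where the preparatory work pays off: minimality of $e(\A^{G^{v_1}})$ forces, via Theorem~\ref{iso thm}, that $\D$ reuses $e(\A^{G^{v_1}})$'s safe components and their $\$_d$-cycles verbatim, so the only freedom $\D$ has is in the cross-component rejecting transitions, precisely the transitions that project to cross-component rejecting transitions of $\A^{G^{v_1}}$. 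Once that correspondence is pinned down, the remainder follows mechanically from the already-proven Theorem~\ref{corr thm}.
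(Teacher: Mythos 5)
Your overall strategy coincides with the paper's, and your reverse direction is essentially the paper's own argument: use Theorem~\ref{iso thm} to identify the safe components of the hypothesized tDCW $\D$ with those of $e(\A^{G^{v_1}})$, argue that runs of $\D$ on encoded letters starting in states of $Q$ must end in states of $Q$ (the paper does this via semantic determinism and the fact that no internal state or $q_{rej}$ is equivalent to a state of $Q$), collapse each such run into a single transition to get a tDCW with state set $Q$ over $2^{\mathsf{AP}}\cup\{\mathit{sleep}\}$, match languages via Proposition~\ref{B sharp-free prop}, and invoke Theorem~\ref{corr thm}. That half of your proposal is sound.

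The genuine gap is in the forward direction, exactly where you write that encoding the pruning $\A_d$ yields ``the desired size-$|e(\A^{G^{v_1}})|$ deterministic equivalent directly.'' Three things are hidden there. First, applying the encoding scheme verbatim to $\A_d$ does not produce a deterministic automaton: the scheme sends every rejecting $\$_d$-labeled transition to \emph{all} states of $Q$, so the $\Sigma_\$$-transitions must be separately determinized (the paper routes $\$_d$ to the unique state that $d$ corresponds to). Second, well-definedness: distinct letters $\sigma_1\neq\sigma_2$ can satisfy $q_{\zug{\sigma_1}[1,k-1]} = q_{\zug{\sigma_2}[1,k-1]}$ and $\zug{\sigma_1}[k]=\zug{\sigma_2}[k]$ (they differ only on propositions not essential for $q$), so determinizing the last-bit transition according to $\A_d$ requires that $\A_d$'s rejecting $\sigma_1$- and $\sigma_2$-transitions from $q$ agree; this follows not from the encoding ``acting uniformly on safe components'' but from the fact that $\A_d$'s rejecting transitions depend only on essential proposition values (Remark~\ref{essen rem}), which is precisely the invariant the paper verifies. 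Third, and most substantially, equivalence: pruning rejecting transitions of an HD automaton does not in general preserve its language, so the inclusion $L(e(\A^{G^{v_1}}))\subseteq L(\D)$ must be proved for this particular pruning; the paper devotes Appendix~\ref{prop enc corr app} to it, splitting on whether the safe tail of an accepting run of $e(\A^{G^{v_1}})$ contains a special letter from $\Sigma_\$$ (in which case the unique safe $\$_d$-transition forces $\D$'s run to resynchronize with it, and safe determinism takes over) or not (in which case one projects away the $\Sigma_\$$-infixes and reduces to $L(\A_d) = L(\A^{G^{v_1}})$). Your appeal to Lemma~\ref{equiv enc lemm} does not close this, since that lemma concerns the different automaton $e(\A^{G^{v_1}_p})$ and is itself established by an argument of comparable length. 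In short, you correctly flag the subtleties of the reverse direction but treat the forward direction as immediate, when it is where most of the proof's actual work lies.
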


\begin{proof}
    We start with the easy direction, showing that the existence of a tDCW $\D$ of size $|Q'|$ equivalent to $e(\A^{G^{v_1}})$ implies the existence of a Hamiltonian path in $G$ starting at $v_1$. 
    Note that as $e(\A^{G^{v_1}})$ is a minimal HD-tNCW, and as every deterministic automaton is HD, we get that 
    $\D$ is a minimal tDCW. In particular, all the states in $\D$ are reachable, and thus it can be assumed to be nice as we can make it normal by turning safe transitions that connect distinct safe components to be rejecting. 
Now the fact that $e(\A^{G^{v_1}})$ is a nice minimal HD-tNCW and $\D$ is nice imply, by Theorem~\ref{iso thm}, that $\D$ and $e(\A^{G^{v_1}})$ have isomorphic safe components. Hence, as all states in $e(\A^{G^{v_1}})$ have distinct safe languages,
    we can write $\D = \zug{ \{ 0, 1, \it{sleep}\} \cup \Sigma_\$, Q', q_0, \delta_d, \alpha_d}$, where the transition functions $\delta'$ and $\delta_d$ of $e(\A^{G^{v_1}})$ and $\D$, respectively, agree on safe transitions, and for every state $q\in Q'$, the automata $e(\A^{G^{v_1}})^q$ and $\D^q$ are equivalent.
    Consider a letter $\sigma \in 2^{\mathsf{AP}} \cup \{ \it{sleep}\}$, and a state $q\in Q$. We claim that the run of $\D^q$ on $\zug{\sigma}$ ends in a state $s\in Q$. Indeed, as the automata $e(\A^{G^{v_1}})^q$ and $\D^q$ are equivalent and semantically deterministic,
    it follows by an iterative application of Proposition~\ref{SD prop} that the state $\delta_d(q, \zug{\sigma})$ is equivalent to the states in $\delta'(q, \zug{\sigma}) \subseteq Q$. 
    Therefore, as the rejecting sink and internal states in $Q'\setminus (Q\cup \{q_{rej}\})$ are not equivalent to any state in $Q$ in the automaton $e(\A^{G^{v_1}})$, we get that $\delta_d(q, \zug{\sigma})$ must be a state in $Q$. %
    
We can now define a tDCW $\A_d$ of size $|Q|$  equivalent to $\A^{G^{v_1}}$, and thus, by Theorem~\ref{corr thm},  
a Hamiltonian path in $G$  starting at $v_1$ exists.
The tDCW $\A_d = \zug{2^{\mathsf{AP}} \cup \{ \it{sleep}\}, Q, q_0, \eta, \beta}$ is defined as follows. For every letter $\sigma\in 2^{\mathsf{AP}} \cup \{\it{sleep}\}$ and state $q\in Q$, we define $\eta(q, \sigma) = \delta_d(q, \zug{\sigma})$, and that transition is safe only when the run $q \xrightarrow{\zug{\sigma}} \delta_d(q, \zug{\sigma})$ is safe in $\D$. Clearly,  $\A_d$ is equivalent to $\A^{G^{v_1}}$ as, for an infinite word $w\in (2^{\mathsf{AP}} \cup \{ \it{sleep}\})^\omega$, the run of $\A_d$ on $w$ is accepting only when the run of  $\D$ on $\zug{w}$ is accepting.
Hence, as $\zug{w}$ respects the encoding, and does not contain special letters from $\Sigma_\$$, then Proposition~\ref{B sharp-free prop} implies that $\D$ accepts $\zug{w}$ only when $dec(\zug{w}) = w$ is in $L(\A^{G^{v_1}})$, and we're done.

We proceed with the other direction. Assume that $G$ has a Hamiltonian path $p$ starting at $v_1$, and let $\A_d$ be the deterministic pruning of $\A^{G^{v_1}}$ that recognizes $L(\A^{G^{v_1}})$ and is induced by the Hamiltonian path $p$ as described in the proof of Theorem~\ref{corr thm}. We show that $\A_d$ induces a deterministic pruning $\D$ of $e(\A^{G^{v_1}})$ that recognizes $L(e(\A^{G^{v_1}}))$:

    \noindent As $e(\A^{G^{v_1}})$ is $\alpha'$-homogenuous and safe-deterministic, we need only to prune rejecting transitions to obtain $\D$. 
    \noindent We first determinize $\Sigma_\$$-labeled rejecting transitions  by letting a $\$_d$-labeled transition move only to the state $s\in Q$ that $d$ corresponds to, that is, $d = s_x$ for some binary word $x$ of length at most $k-1$.
    Next, we determinize rejecting transitions labeled with letters from $\{ 0, 1, \it {sleep}\}$ based on the pruning $\A_d$ of $\A^{G^{v_1}}$. %
    Consider a rejecting transition $\zug{q, \sigma, s}$ of $\A_d$. We distinguish between two cases. First, if $\sigma \in 2^{\mathsf{AP}}$, then by Lemma~\ref{cr lem}, there is a unique non-safe run of the form $r = q, q_{\zug{\sigma}[1, 1]}, q_{\zug{\sigma}[1, 2]}, \ldots, q_{\zug{\sigma}[1, k-1]}, s$ on $\zug{\sigma}$, where $r$ traverses a rejecting transition only upon reading the last bit $b = \zug{\sigma}[k]$ of the encoding $\zug{\sigma}$. Then, we determinize the rejecting $b$-labeled transitions from the internal state $q_{\zug{\sigma}[1, k-1]}$ by letting it proceed only to the state $s$. 
    Next, if $\sigma = \it{sleep}$, then we determinize  $\it{sleep}$-transitions from $q$ in $e(\A^{G^{v_1}})$ by letting it proceed to the state $s$.
    
    We now show that $\D$ is well-defined by claiming that there are no rejecting transitions $\zug{q, \sigma_1, s_1}$ and $\zug{q, \sigma_2, s_2}$ in $\A_d$ where (1) $q_{\zug{\sigma_1}[1, k-1]} = q_{\zug{\sigma_2} [1, k-1]}$, (2) $\zug{\sigma_1}[k] = \zug{\sigma_2} [k]$, and (3) $s_1\neq s_2$.
    Indeed, %
    the destination of a rejecting $\sigma$-transition from $q$ in $\A_d$
    is determined based on $q$ and the values that the letter $\sigma$ assigns to atomic propositions essential to the state $q$.  
    Therefore, as $s_1 \neq s_2$, it follows that the letters $\sigma_1$ and $\sigma_2$ assign different values to at least one atomic proposition  $a_i\in ess(q)$ that is essential to $q$, and w.l.o.g $i$ is the minimal such index in $[k]$. As $\zug{\sigma_1}[k] = \zug{\sigma_2} [k]$, it holds that $i < k$. 
    Then,
    the definition of $e(\A^{G^{v_1}})$ implies that $q_{\zug{\sigma_1}[1, k-1]} \neq q_{\zug{\sigma_2} [1, k-1]}$ as the deterministic safe runs of $e(\A^{G^{v_1}})^q$ on $\zug{\sigma_1}[1, k-1]$ and $\zug{\sigma_2} [1, k-1]$ diverge after reading the $i$'th bit corresponding to the atomic proposition $a_i$.

    We proceed by showing that if we apply the described pruning of $e(\A^{G^{v_1}})$ for every rejecting transition of $\A_d$, we get that the resulting automaton $\D$ is indeed deterministic.
    Consider an encoding $\zug{\sigma}$ of a letter $\sigma\in 2^{\mathsf{AP}}$, and let $b$ denote the last bit of $\zug{\sigma}$, and consider a rejecting $b$-transition in $\D$ from the internal state  $q_{\zug{\sigma}[1, k-1]}$ to a state in $Q$. The latter transition witness, by Lemma~\ref{cr lem}, that $\A^{G^{v_1}}$ has a rejecting $\sigma$-transition from $q$.
    Then, as $\A^{G^{v_1}}$ is $\alpha$-homogeneous, 
    and $\A_d$ is a deterministic pruning of its rejecting transitions, we have that the state $q$ has a rejecting transition $t = \zug{q, \sigma, s}$  also in $\A_d$, and therefore we must have pruned  $b$-transitions from  $q_{\zug{\sigma}[1, k-1]}$ in $e(\A^{G^{v_1}})$ when considering the transition $t$.

Finally, as $\D$ is embodied in $e(\A^{G^{v_1}})$, then $L(\D) \subseteq L(e(\A^{G^{v_1}}))$. To conclude the proof, we show in Appendix~\ref{prop enc corr app} that $L(e(\A^{G^{v_1}}))\subseteq L(\D)$. Essentially, it follows from the fact that the existence of an accepting run $r$ of $e(\A^{G^{v_1}})$ on a word $w$, and a rejecting run of $r'$ of $\D$ on the same word $w$ is impossible as the latter runs induce a word in $L(\A^{G^{v_1}})\setminus L(\A_d)$.
\end{proof}

\subsection{Missing Details in The Proof of Proposition~\ref{prop enc corr}}\label{prop enc corr app}
  We show next that $L(e(\A^{G^{v_1}}))\subseteq L(\D)$.
    Consider a word $w\in L(e(\A^{G^{v_1}}))$, and note that since $e(\A^{G^{v_1}})$ has a run on $w$ that does not visit $q_{rej}$, then $w$ respects the encoding. Thus, we can write $w = z_1 \cdot z_2 \cdot z_3 \cdots $, where $z_i \in ( (0+1)^{\leq k-1} \cdot \Sigma_{\$} + (0+1)^k + \it{sleep}  )$ for all $i\geq 1$.
    Let $r = r_0 \xrightarrow{z_1} r_1 \xrightarrow{z_2} r_2 \cdots$ be an accepting run of $e(\A^{G^{v_1}})$ on $w$, where $r_0 = q_0$, and the state $r_i$ is in $Q$ for all $i\geq 0$. Then, let $j\geq 0$ be such that $r[j, \infty] = r_j \xrightarrow{z_{j+1}} r_{j+1} \xrightarrow{z_{j+2}} r_{j+2} \cdots$ is a safe run on the suffix $z_{j+1}\cdot z_{j+2} \cdots $.
    First, note that as $\D$ is a deterministic pruning of $e(\A^{G^{v_1}})$, then the run $r'$ of $\D$ on $w$ is also of the form $r' = r'_0 \xrightarrow{z_1} r'_1 \xrightarrow{z_2} r'_2 \cdots$, where for all $i\geq 0$, $r'_i\in Q$.
    We claim that the run $r'[j, \infty] = r'_j \xrightarrow{z_{j+1}} r'_{j+1} \xrightarrow{z_{j+2}} r'_{j+2} \cdots$ is a run of $\D^{r'_j}$ on the suffix $z_{j+1}\cdot z_{j+2} \cdots $  that traverses  rejecting transitions only finitely often, and thus $\D$ accepts $w$.  
    Assume towards contradiction that $r'[j, \infty]$ traverses infinitely many rejecting transitions. 
    First, note that it cannot be the case that $r'_k \xrightarrow{z_{k+1}} r'_{k+1}$ is a non-safe run in $\D$, for some $k\geq j$ with $z_{k+1}\in (0+1)^{\leq k-1} \cdot \Sigma_\$$.
    Indeed, in this case,  $r'_k \xrightarrow{z_{k+1}} r'_{k+1}$ traverses a rejecting transition only in its last transition.  Let $\$_{s_x}$ denote the last letter in $z_{k+1}$ where $s\in Q$. The definition of rejecting $\Sigma_\$$ transitions in $\D$ implies that $r'_{k+1} = s$. Also, as $r_k \xrightarrow{z_{k+1}} r_{k+1}$ is a safe sub-run of the run $r[j, \infty]$, we get by the definition of safe $\Sigma_\$$-transitions in $e(\A^{G^{v_1}})$ that $r_{k+1} = s = r'_{k+1}$. In particular, as $r[k, \infty]$ is safe, and $e(\A^{G^{v_1}})$ is safe-deterministic and $\alpha'$-homogeneous, we get that $r'[k+1, \infty] = r[k+1, \infty]$; Indeed, a safe run from a state $q$ on a word $x$ in any pruning of $e(\A^{G^{v_1}})$ (including $e(\A^{G^{v_1}})$ itself) is the only run of $q$ on $x$. So we got in total that $r'[j, \infty]$ eventually becomes safe, and we've reached a contradiction. 
    Hence, it must be  the case that for all $k\geq j$,  if $r'_k \xrightarrow{z_{k+1}} r'_{k+1}$ traverses a rejecting transition, then    $z_{k+1}\in ((0+1)^k + \it sleep)$.
    Consider the runs $r_{proj}[j, \infty]$ and $r'_{proj}[j, \infty]$ on $(z_{j+1}\cdot z_{j+2} \cdots )_{proj}$ that are obtained from  $r[j, \infty]$ and $r'[j, \infty]$, respectively, by removing sub-runs on infixes $z_{i}$ that contain a special letter from $\Sigma_\$$. By what we've argued above, the removed sub-runs are safe, in particular, by the assumption, $r'_{proj}[j, \infty]$ traverses infinitely many rejecting transitions. 
    Note that the latter  implies also that the runs $r_{proj}[j, \infty]$ and $r'_{proj}[j, \infty]$ are infinite.
    In addition, note that the definition of safe $\Sigma_\$$-transitions implies that a removed sub-run  %
    is a run from a state $q$ in $Q$ to the same state $q$. Thus, $r_{proj}[j, \infty]$ and $r'_{proj}[j, \infty]$ are legal runs of $e(\A^{G^{v_1}})$ and $\D$, respectively. 
    Therefore, 
    by the definition of $\D$, Lemma~\ref{cr lem}, and the fact that $\A^{G^{v_1}}$  and $e(\A^{G^{v_1}})$ agree on $\text{sleep}$-transitions, we get that the runs $r_{proj}[j, \infty]$ and $r'_{proj}[j, \infty]$ on $(z_{j+1}\cdot z_{j+2} \cdots )_{proj}$ induce runs of $(\A^{G^{v_1}})^{r_j}$ and $\A^{r'_j}_d$ on $dec((z_{j+1}\cdot z_{j+2} \cdots )_{proj})$, where the first run is safe, and the other is rejecting.  Therefore, as $\A_d$ is deterministic, and all the states of $\A_d$ are equivalent to all the states of $\A^{G^{v_1}}$, we have reached a contradiction, and so $L(e(\A^{G^{v_1}})) \subseteq L(\D)$.
    
    To conclude the proof, it remains to explain  how the run $r'_{proj}[j, \infty]$  induces a rejecting run of  $\A^{r'_j}_d$ on $dec((z_{j+1}\cdot z_{j+2} \cdots )_{proj})$, and this follows from the fact that the definition of $\D$ is such that a run $r_{q\to s} = q\xrightarrow{\zug{\sigma}} s$ in $\D$, for states $q, s\in Q$ and a letter $\sigma\in 2^{\mathsf{AP}} \cup \{ \it sleep\}$, witness the existence of a transition $\zug{q, \sigma, s}$ in $\A_d$ that agrees with $r_{q\to s}$ on acceptance. 
    To see why,  we distinguish between two cases. If $\sigma = \it sleep$, then we're done as $\it sleep$-transitions in $\D$ are inherited from $\A_d$; in particular $q \xrightarrow{\zug{\it sleep}} s = q\xrightarrow{\it sleep} s$ is a transition in $\A_d$.
    Indeed, if $\zug{q, \it sleep, s}$ is rejecting in $\D$, then by definition of rejecting transitions of $\D$, $\zug{q, \it sleep, s}$ is also a rejecting transition of $\A_d$. Then, if $\zug{q, \it sleep, s}$ is safe in $\D$, then it is safe in $e(\A^{G^{v_1}})$, and thus it must be safe in $\A^{G^{v_1}}$ as well. Then, the fact that $\A_d$ is a deterministic pruning of $\A^{G^{v_1}}$'s rejecting transitions implies that $\zug{q, \it sleep, s}$
    is also a safe transition in $\A_d$. 
    We proceed to the case where  $\sigma$ is some letter in $2^{\mathsf{AP}}$, and note that  
    since $\D$ is a deterministic pruning of $e(\A^{G^{v_1}})$, then the run $r_{q \to s} = q \xrightarrow{\zug{\sigma}} s$ exists also in $e(\A^{G^{v_1}})$. Therefore, we get by Lemma~\ref{cr lem} that $r_{q \to s}= q\xrightarrow{\zug{\sigma}} s$ induces a transition $t = \zug{q, \sigma, s}$ in $\A^{G^{v_1}}$ that is safe only when $r_{q \to s}$ is safe. We need to show that  $t = \zug{q, \sigma, s}$ exists in $\A_d$.
    If $t$ is safe,  then as $\A_d$ is a deterministic pruning of $\A^{G^{v_1}}$'s rejecting transitions, we're done. Otherwise, if the transition $t = \zug{q, \sigma, s}$ is rejecting, then as $\A^{G^{v_1}}$ is $\alpha$-homogenous, we have that $\sigma$-transitions from $q$ in $\A_d$ are rejecting.
    Assume towards contradiction that  $\zug{q, \sigma, s'}$ is a rejecting transition of $\A_d$, yet $s'\neq s$. Then, on the one hand, the definition of $\D$ is such that $q_{\zug{\sigma}[1, k-1]}$ has a rejecting $\zug{\sigma}[k]$-transition to $s'$, yet on the other hand, as $t$ is not safe,  we get that $r_{q\to s}$ traverses a rejecting transition upon reading the last letter of $\zug{\sigma}$, and so $q_{\zug{\sigma}[1, k-1]}$ has also a rejecting $\zug{\sigma}[k]$-transition to $s$, and we have reached a contradiction to the fact that $\D$ is deterministic.

\section{Adapting The Encoding to The tDCW $\A^{G^{v_1}_p}$}
\label{adapt enc app}

Recall that $\A^{G^{v_1}} = \zug{2^{\mathsf{AP}} \cup \{ \it sleep\}, Q, q_0, \delta, \alpha}$, $e(\A^{G^{v_1}}) = \zug{ \{0, 1 \it sleep\} \cup \Sigma_\$, Q', q_0, \delta', \alpha'}$, and $k = |\mathsf{AP}|$.
Before adapting the encoding scheme to the tDCW $\A^{G^{v_1}_p}$, note that by Lemma~\ref{cr lem}, the encoding $e(\A^{G^{v_1}})$ preserves the safe components of $\A^{G^{v_1}}$ in the sense that two states $q$ and $s$ in $Q$ belong to the same safe component in $\A^{G^{v_1}}$ if and only if they belong to the same safe component in $e(\A^{G^{v_1}})$.
Indeed, if there is a safe run $r$ from $q$ to $s$ in $e(\A^{G^{v_1}})$ that visits only internal states corresponding to $q$ and $s$, then as safe $\Sigma_\$$ transitions from internal states corresponding to $q$ lead back to $q$, and as the first state corresponding to $s$ that $r$ visits is $s$,  it follows that $r$ eventually follows a safe run from $q$ to $s$ labeled with some word in $((0+1)^k + \it sleep)$; thus inducing a safe transition from $q$ to $s$ in $\A^{G^{v_1}}$.
As the encoding $e(\A^{G^{v_1}})$ preserves safe components, then we can think of it as an encoding of safe components -- we first encode every state $q$ in $\SC(q) \in \SC(\A^{G^{v_1}})$ and add its internal states and safe transitions from them, then we add rejecting transitions among encoded safe components, add the rejecting sink $q_{rej}$ and transitions that lead to it, and  finally we add  transitions labeled with special letters.

We show next how to adapt the encoding scheme to $\A^{G^{v_1}_p}$ to get an encoding $e(\A^{G^{v_1}_p})$ with the following properties: (1) $L(e(\A^{G^{v_1}})) = L(e(\A^{G^{v_1}_p}))$; thus, the encodings of the two automata are equivalent, and (2) $e(\A^{G^{v_1}_p})$ is a tDCW; thus, the encoding preserves determinism.  
Recall that the tDCW $\A^{G^{v_1}_p}$ is obtained from the HD-tNCW $\A^{G^{v_1}}$ by duplicating its safe components in a way that simulates the polynomial path $p = v_1, v_2, \ldots, v_l$.
Therefore, an encoding of $\A^{G^{v_1}_p}$ can be obtained from $e(\A^{G^{v_1}})$ by duplicating its safe components, connecting them (deterministically) via rejecting transitions by following the path $p$, and finally letting a rejecting transition labeled with a special letter $\$_{s_x}\in \Sigma_\$$ move deterministically only to the copy of the state $s\in Q$ that lies in the minimal safe component that contains a copy of $s$. 
Formally, for a safe component $S\in \SC(\A^{G^{v_1}})$, let $e(S)$ denote its (deterministic) encoding in $e(\A^{G^{v_1}})$. Thus, $e(S)$ contains all the states in $S$, their corresponding internal states, and all safe transitions between them (including safe $(\Sigma_\$|\it{sleep})$-labeled transitions).
Note that as $e(\A^{G^{v_1}})$ preserves safe components of $\A^{G^{v_1}}$, the safe transitions from a state of the form $q_x$ for some $q\in S$ lead only to states in $e(S)$. 
Then, the encoding  $e(\A^{G^{v_1}_p})$ is defined over the alphabet $\{ 0, 1, \it sleep\} \cup \Sigma_\$$, and on top of the safe components $\{e(\SC(v_i))\times \{i\}\}_{i\in [l]} \cup \{ e(\SC_{\it Sync}) \times \{l+1\}\} \cup \{ \{q_{rej}\}\}$, where $e(S(v_i))\times \{i\}$ is a copy of the encoding of the safe component corresponding to the $i$'th vertex $v_i$ of the path $p$.
Note that for convenience, also here%
, we label the states in $e(\SC_{\it Sync})$ with the index $l+1$.
So far, we have defined the safe components on which the encoding $e(\A^{G^{v_1}_p})$ is defined on top.
Before defining rejecting transitions between these safe components, let us first show that they already encode
safe transitions of $\A^{G^{v_1}_p}$.
Consider a state $\zug{q, i}$ of $\A^{G^{v_1}_p}$, and  a letter $\sigma \in 2^{\mathsf{AP}}\cup \{ \it sleep\}$, where $t = \zug{q, i} \xrightarrow{\sigma} \zug{q', i}$ is a safe transition of $\A^{G^{v_1}_p}$. Then, let $S$ denote the safe component of $q$ in $\A^{G^{v_1}}$, and note that
by the definition of $\A^{G^{v_1}_p}$, we have that $\zug{q, \sigma, q'}$ is a safe transition of $\A^{G^{v_1}}$. Therefore,  
as the safe component $e(S)\times \{ i\}$ is  a copy of $e(S)$, then  if $\sigma \in 2^{\mathsf{AP}}$, we get by Lemma~\ref{cr lem}  that $e(S)\times \{ i\}$ contains the safe run $\zug{q, i}, \zug{q_{\zug{\sigma}[1, 1]}, i}, \zug{q_{\zug{\sigma}[1, 2]}, i}, \ldots, \zug{q_{\zug{\sigma}[1, k-1]}, i} , \zug{q', i}$ over $\zug{\sigma}$, and if $\sigma = \it sleep$, we have that the transition $t = \zug{q, i}  \xrightarrow{\it sleep} \zug{q', i}$ belongs to $e(S)\times \{ i\}$ as $\it sleep$ transitions are inherited as is. %
We define next rejecting transitions between the safe components of $e(\A^{G^{v_1}_p})$  by simulating rejecting transitions of $\A^{G^{v_1}_p}$. Recall that $\A^{G^{v_1}_p}$ has no rejecting $\it sleep$  transitions. Then, for  every rejecting transition 
$t = \zug{q, i} \xrightarrow{\sigma} \zug{q', (i \ mod (l+1))+1}$ of $\A^{G^{v_1}_p}$, where $\sigma\in 2^{\mathsf{AP}}$,  we add the rejecting transition $\zug{q_{\zug{\sigma}[1, k-1]}, i} \xrightarrow{ \sigma[k]} \zug{q',  (i \ mod (l+1))+1}$ in $e(\A^{G^{v_1}_p})$. Thus, rejecting transitions in $e(\A^{G^{v_1}_p})$ follow those in $\A^{G^{v_1}_p}$ by moving to the appropriate state in the encoding of the next safe component as induced by $\A^{G^{v_1}_p}$'s structure. As we argue in Lemma~\ref{equiv enc lemm}, the encoding of rejecting transitions of $\A^{G^{v_1}_p}$ does not introduce nondeterministic choices in $e(\A^{G^{v_1}_p})$.
We proceed with defining rejecting $\it sleep$ and $\Sigma_\$$ transitions in $e(\A^{G^{v_1}_p})$.
Let $Q_{v_1, p}$ denote the state-space of $\A^{G^{v_1}_p}$.
Recall that  $\it{sleep}$-labeled transitions in $\A^{G^{v_1}_p}$ are safe transitions, and so they are already encoded (from states in $Q_{v_1, p}$) and inherited from the original automaton as is. Then, similarly to the encoding $e(\A^{G^{v_1}})$, we let $\it{sleep}$-labeled transitions from states not in $Q_{v_1, p}$ be rejecting transitions that
lead to the rejecting sink $q_{rej}$.
Finally, recall that only the  special letters from $\Sigma_\$ = \{\$_d: d \in Q'\setminus \{q_{rej}\} \}$ are added to the alphabet of the encoding $e(\A^{G^{v_1}_p})$; in particular, both encodings $e(\A^{G^{v_1}})$ and $e(\A^{G^{v_1}_p})$ are defined over the same alphabet.
First, as stated previously, note that safe $\Sigma_\$$-labeled transitions are already defined, and so we only need to define rejecting $\Sigma_\$$-transitions.  
Consider a state $\zug{q_y, i}$ that belongs to a safe component of the form $e(\SC(v_i)) \times \{ i\}$ or belongs to $e(\SC_{\it Sync}) \times \{ l+1\}$, and consider a special letter $\$_{s_x}$ for some $s\in Q$ and a binary word $x$ of length at most $k-1$. If $\zug{q_y, i}$ has no safe outgoing-transition labeled with $\$_{s_x}$ in $e(\A^{G^{v_1}_p})$, then we add (deterministically) the rejecting transition $ \zug{\zug{ q_y, i }, \$_{s_x},\zug{s, i_s}  }$ in $e(\A^{G^{v_1}_p})$, where $i_s$ is the minimal index $j$ in $[l+1]$ such that $\zug{s, j}$ belongs to the encoded safe component whose states are labeled with the index $j$. Also, as expected, $\Sigma_\$$-labeled transitions from the rejecting sink $q_{rej}$ are rejecting and lead to $q_{rej}$. To conclude the construction of $e(\A^{G^{v_1}_p})$ we choose its initial state to be identical to the initial state of $\A^{G^{v_1}_p}$; in particular its initial state is in $Q_{v_1, p}$.

Thus, the encoding $e(\A^{G^{v_1}_p})$ is induced naturally from the encoding $e(\A^{G^{v_1}})$ and the path $p$, it is identical to $e(\A^{G^{v_1}})$ in the way it encodes safe components. The exception is how it defines rejecting transitions among them: for non-special letters, $e(\A^{G^{v_1}_p})$ follows the transition function of $\A^{G^{v_1}_p}$, and for rejecting transitions labeled with a special letter $\$_{s_x}$, it always proceeds
to the same state $\zug{s, i_s}$ in the same safe component to preserve determinism.

\subsection{Proof of Lemma~\ref{equiv enc lemm}}
\label{equiv enc lemm app}
We start with showing that $e(\A^{G^{v_1}_p})$ is deterministic.
First, the fact that the encoding $e(\A^{G^{v_1}})$ is safe-deterministic implies that the encoding $e(\A^{G^{v_1}_p})$ is defined on top of deterministic safe components. In addition, rejecting $\it sleep$
 and rejecting $\Sigma_\$$ transitions are defined deterministically as well.
Therefore, to conclude that $e(\A^{G^{v_1}_p})$ is deterministic,
 it is sufficient to show that encoding rejecting transitions of $\A^{G^{v_1}_p}$ does not introduce nondeterministic choices in $e(\A^{G^{v_1}_p})$, namely, we show next that there are no two  transitions labeled with the same bit from a state of the form $\zug{q_{\zug{\sigma}[1, k-1]}, i}$ for some $q\in Q, \sigma\in 2^{\mathsf{AP}}$ and $i\in [l+1]$, yet lead to distinct states.
 First, %
 the definition of rejecting $\sigma$-transitions from $\zug{q, i}$ in $\A^{G^{v_1}_p}$ is  based only on $q$ and the values $\sigma$ assigns to atomic propositions essential to $q$.
Then, the fact that $e(\A^{G^{v_1}})$ is well-defined implies that for every two letters $\sigma_1, \sigma_2\in 2^{\mathsf{AP}}$ with $\zug{\sigma_1}[k] = \zug{\sigma_2}[k]$, we have  that if $q_{\zug{\sigma_1}[1, k-1]} = q_{\zug{\sigma_2}[1, k-1]}$, then  the letters $\sigma_1$ and $\sigma_2$ disagree only on bits corresponding to atomic propositions not essential to $q$. Therefore, as safe $\sigma$-transitions from $q$ in $\A^{G^{v_1}}$ are defined also based only on $q$ and atomic propositions essential to $q$, it follows that %
$\zug{\zug{q, i}, \sigma_1, \zug{s, i}}$ is a safe(rejecting) transition in $\A^{G^{v_1}_p}$ if and only if $\zug{\zug{q, i}, \sigma_2, \zug{s, i}}$ is a safe(rejecting, respectively) transition in $\A^{G^{v_1}_p}$. %
Therefore, for every distinct letters $\sigma_1$ and $\sigma_2$ in $2^{\mathsf{AP}}$  with  $\zug{\sigma_1}[k] = \zug{\sigma_2}[k]$,  and
$q_{\zug{\sigma_1}[1, k-1]} = q_{\zug{\sigma_2}[1, k-1]}$, it holds that the transitions 
$t_1 = \zug{q, i} \xrightarrow{\sigma_1} \zug{p_1, i_1}$ and $t_2 = \zug{q, i} \xrightarrow{\sigma_2} \zug{p_2, i_2}$ are transitions of $\A^{G^{v_1}_p}$ that lead to the same state and agree on acceptance, in particular,  $t_1$ and $t_2$ do not induce nondeterministic choices from $\zug{q_{\zug{\sigma_1}[1, k-1]}, i}$ upon reading $\zug{\sigma_1}[k]$ in $e(\A^{G^{v_1}_p})$.

We proceed with showing that $e(\A^{G^{v_1}_p})$ is equivalent to $e(\A^{G^{v_1}})$.
Recall that $Q$ denotes the state-space of $\A^{G^{v_1}}$ and $Q_{v_1, p}$ denotes the state-space of $\A^{G^{v_1}_p}$.
Recall also that a word
$w \in (\{ 0, 1, \it{sleep}\} \cup \Sigma_{\$})^\omega$
respects the encoding when $w\in ( (  (0+1)^{\leq k-1} \cdot \Sigma_{\$})^* \cdot ((0+1)^k + \it{sleep})^*)^\omega$.
Then, note that also in the encoding $e(\A^{G^{v_1}_p})$, since only $\it{sleep}$-transitions from internal states not in $Q_{v_1, p}$  lead to the rejecting sink $q_{rej}$, 
since transitions labeled with special letters from internal states lead to $Q_{v_1, p}$, and since reading the last letter of an encoding $\zug{\sigma}[k]$ from an internal state of the form $\zug{q_{\zug{\sigma}[1, k-1]}, i}$ leads to a state in $Q_{v_1, p}$, we get that 
that a run from a state in $Q_{v_1, p}$ on an infinite word $w$ does not reach the rejecting sink if and only if $w$ respects the encoding. 
Now as $e(\A^{G^{v_1}_p})$ encodes transitions of $\A^{G^{v_1}_p}$ similarly to the way $e(\A^{G^{v_1}})$ encodes transitions of $\A^{G^{v_1}}$, then it is easy to verify that Lemma~\ref{cr lem} and Proposition~\ref{B sharp-free prop} extend to the setting of $e(\A^{G^{v_1}_p})$. In particular, we have the following:

\begin{lemma}\label{cr lem 2}
Consider the tDCW $\A^{G^{v_1}_p}$, its encoding $e(\A^{G^{v_1}_p})$,  states  $\zug{s, i_1}, \zug{q, i_2} \in Q_{v_1, p}$, and a letter $\sigma \in 2^{\mathsf{AP}}$. Then, 
\begin{enumerate}
    \item \sloppypar The run of $e(\A^{G^{v_1}_p})^{\zug{s, i_1}}$ on $\zug{\sigma}$ ends in a state in $Q_{v_1, p}$. In addition, if a run $r = \zug{s, i_1} \xrightarrow{\zug{\sigma} } \zug{q, i_2}$ exists in $e(\A^{G^{v_1}_p})$, then it is  unique, and is of the form $r = \zug{s, i_1}, \zug{s_{\zug{\sigma}[1, 1]}, i_1}, \zug{s_{\zug{\sigma}[1, 2]}, i_1}, \ldots, \zug{s_{\zug{\sigma}[1, k-1]}, i_1}, \zug{q, i_2}$.

    \item The triple $\zug{\zug{s, i_1}, \sigma, \zug{q, i_2}}$ is a transition of $\A^{G^{v_1}_p}$ iff there is a run $r = \zug{s, i_1} \xrightarrow{\zug{\sigma} } \zug{q, i_2}$ in $e(\A^{G^{v_1}_p})$. In addition, the run $r$ is safe only when the transition $\zug{\zug{s, i_1}, \sigma, \zug{q, i_2}}$ is safe.
    
\end{enumerate}
\end{lemma}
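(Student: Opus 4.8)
The plan is to transport the proof of Lemma~\ref{cr lem} essentially verbatim to the duplicated-and-indexed setting, exploiting the fact that, by construction, $e(\A^{G^{v_1}_p})$ is defined on top of the safe components $\{e(\SC(v_i))\times \{i\}\}_{i\in [l]} \cup \{e(\SC_{\it Sync})\times \{l+1\}\}$, each of which is a \emph{literal} copy of an encoded safe component $e(S)$ of $e(\A^{G^{v_1}})$ with every internal state carrying a fixed index. In particular, within a copy $e(S)\times \{i\}$ the safe sub-automaton reading encodings of letters in $2^{\mathsf{AP}}$ is identical to the one inside $e(S)$, only with the index $i$ appended; and since $e(\A^{G^{v_1}})$ preserves safe components, safe transitions out of an internal state $\zug{s_x, i}$ never leave $e(\SC(s))\times \{i\}$, so the index stays constant along every safe internal path. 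This is exactly the structural invariant that made the original Lemma~\ref{cr lem} go through.

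For item~1, I would argue as in the proof of Lemma~\ref{cr lem}(1): reading $\zug{\sigma}$ from $\zug{s, i_1}$, the automaton moves deterministically via safe transitions through the states $\zug{s_{\zug{\sigma}[1, j]}, i_1}$ for $j\in [k-1]$, all of which lie in the single indexed copy $e(\SC(s))\times \{i_1\}$ by the invariant above. Upon reading the last bit $\zug{\sigma}[k]$, it takes the unique transition prescribed by $\A^{G^{v_1}_p}$ (safe, staying in index $i_1$, or rejecting, following the wiring that mirrors $\A^{G^{v_1}_p}$), landing in a state of $Q_{v_1, p}$. Since $e(\A^{G^{v_1}_p})$ has already been shown deterministic and its first $k-1$ steps on any encoding are safe, the resulting run is unique and has exactly the stated shape $\zug{s, i_1}, \zug{s_{\zug{\sigma}[1, 1]}, i_1}, \ldots, \zug{s_{\zug{\sigma}[1, k-1]}, i_1}, \zug{q, i_2}$.

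For item~2, I would split on whether the transition of $\A^{G^{v_1}_p}$ is safe or rejecting. A safe transition $\zug{\zug{s, i_1}, \sigma, \zug{q, i_2}}$ forces $i_1 = i_2$ and is encoded by the safe run through $e(\SC(s))\times \{i_1\}$ exactly as Lemma~\ref{cr lem}(2) encodes the corresponding safe transition $\zug{s, \sigma, q}$ inside $e(\SC(s))$; a rejecting transition is encoded by the rejecting transition added on the last bit $\zug{\sigma}[k]$ from $\zug{s_{\zug{\sigma}[1, k-1]}, i_1}$ to $\zug{q, i_2}$ that was defined to mirror $\A^{G^{v_1}_p}$. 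Conversely, a run on $\zug{\sigma}$ ending in $\zug{q, i_2}$ uses exactly one safe-or-rejecting transition on its last bit and is deterministic up to that point, so it witnesses precisely one transition of $\A^{G^{v_1}_p}$; the safe-iff-safe clause is immediate from how the acceptance set $\alpha'$ of the encoding was defined, the encoded run being rejecting on its last transition exactly when the simulated transition is rejecting. The main thing to verify carefully --- and the only place where the bookkeeping differs from the original proof --- is the invariant from the first paragraph, namely that the deterministic prefix of length $k-1$ cannot cross between indexed copies; this reduces to the safe-component-preservation property of $e(\A^{G^{v_1}})$ recalled at the start of Appendix~\ref{adapt enc app}, so no genuinely new argument is needed beyond re-indexing.
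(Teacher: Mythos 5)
Your proposal is correct and follows essentially the same route as the paper's proof: both walk the deterministic safe prefix through the indexed internal states $\zug{s_{\zug{\sigma}[1,j]}, i_1}$, resolve the last bit according to the (safe or rejecting) $\sigma$-transition of $\A^{G^{v_1}_p}$, and invoke the already-established determinism of $e(\A^{G^{v_1}_p})$ for uniqueness and the run-to-transition correspondence. The only cosmetic difference is that the paper argues the converse direction (run implies transition agreeing on acceptance) by contradiction with determinism, whereas you read it off directly from the fact that every last-bit transition was added to mirror a specific transition of $\A^{G^{v_1}_p}$; both are valid.
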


\begin{proof}
    The proof is similar to that of Lemma~\ref{cr lem} and is immediate from  the definitions. 
    Specifically,
    when reading $\zug{\sigma}$ from $\zug{s, i_1}$,  we move deterministically to the state $\zug{s_{\zug{\sigma}[1, j]}, i_1}$ via a safe transition after reading the $j$'th letter of $\zug{\sigma}$ for all $j\in [k-1]$.
    Then, by reading $\zug{\sigma}[k]$ from the state $\zug{s_{\zug{\sigma}[1, k-1]}, i_1}$, we either move to the state  $\zug{q, i_1}\in Q_{v_1, p}$  via a safe transition, when $\zug{\zug{s, i_1}, \sigma, \zug{q, i_1}}$ is a safe transition of $\A^{G^{v_1}_p}$, or traverse a rejecting transition that leads to the state $\zug{q, (i_1 \ mod (l+1)) + 1}$, when $\zug{\zug{s, i_1} , \sigma, \zug{q, (i_1 \ mod (l+1)) + 1}}$ is a rejecting transition of $\A^{G^{v_1}_p}$.  The fact that the run of $\zug{s, i_1}$ on $\zug{\sigma}$ is unique follows from the fact that $e(\A^{G^{v_1}_p})$ is deterministic. 
    Finally, note that the existence of the run $r = \zug{s, i_1} \xrightarrow{\zug{\sigma} } \zug{q, i_2}$ in $e(\A^{G^{v_1}_p})$ implies the existence of the transition $ \zug{\zug{s, i_1}, \sigma, \zug{q, i_2}}$ in $\A^{G^{v_1}_p}$ that agrees with $r$ on acceptance. Indeed, if reading $\sigma$ from $\zug{s, i_1}$ in $\A^{G^{v_1}_p}$ does not lead to $\zug{q, i_2}$ or does not agree with $r$ on acceptance, then the definition of $e(\A^{G^{v_1}_p})$ implies that there is a nondeterministic choice upon reading $\zug{\sigma}[k]$ from $\zug{s_{\zug{\sigma}[1, k-1]}, i_1}$; one choice is induced by the $\sigma$-transition of $\zug{s, i_1}$ in $\A^{G^{v_1}_p}$, and the other choice is traversed by the run $r$.
\end{proof}

\begin{proposition}\label{B sharp-free prop 2}
    Consider a state $\zug{s, i}\in Q_{v_1, p}$. It holds that 
    \begin{multline*}
     L(e(\A^{G^{v_1}_p})^{\zug{s, i}}) \cap \{0, 1, \it{sleep}\}^\omega = 
    \{ w\in \{ 0, 1, \it{sleep}\}^\omega: \\ \text{ $w$ respects the encoding and $dec(w) \in L((\A^{G^{v_1}_p}) ^{\zug{s, i}})$}\}
    \end{multline*}
\end{proposition}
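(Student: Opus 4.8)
The plan is to replay the proof of Proposition~\ref{B sharp-free prop} almost verbatim, now substituting $\A^{G^{v_1}_p}$, $e(\A^{G^{v_1}_p})$ and the state-space $Q_{v_1, p}$ for $\A^{G^{v_1}}$, $e(\A^{G^{v_1}})$ and $Q$, and invoking Lemma~\ref{cr lem 2} wherever the original proof used Lemma~\ref{cr lem}. The two facts that make this substitution legitimate are already in place: first, as established in the discussion preceding Lemma~\ref{cr lem 2}, a run from a state in $Q_{v_1, p}$ on an infinite word $w$ avoids the rejecting sink $q_{rej}$ if and only if $w$ respects the encoding; and second, $\A^{G^{v_1}_p}$ and $e(\A^{G^{v_1}_p})$ agree on $\it{sleep}$-transitions (these are inherited, not encoded). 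I would prove the two inclusions separately.

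For the inclusion $L(e(\A^{G^{v_1}_p})^{\zug{s, i}}) \cap \{0, 1, \it{sleep}\}^\omega \subseteq \{ w : w \text{ respects the encoding and } dec(w) \in L((\A^{G^{v_1}_p})^{\zug{s, i}})\}$, I would fix a word $w = \sigma_1 \cdot \sigma_2 \cdots$ containing no special letters from $\Sigma_\$$ and an accepting run $r$ of $e(\A^{G^{v_1}_p})^{\zug{s, i}}$ on $w$. Since $r$ is accepting, it never reaches $q_{rej}$, so $w$ respects the encoding; as $w$ uses no special letters, I may write $w = z_1 \cdot z_2 \cdots$ with each $z_j \in ((0+1)^k + \it{sleep})$. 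By Lemma~\ref{cr lem 2}(1) together with the $\it{sleep}$-agreement, the run $r$ visits a state of $Q_{v_1, p}$ at every block boundary, so $r = t_0 \xrightarrow{z_1} t_1 \xrightarrow{z_2} t_2 \cdots$ with $t_0 = \zug{s, i}$ and every $t_j \in Q_{v_1, p}$. An iterative application of Lemma~\ref{cr lem 2}(2) then shows that collapsing the internal states yields a run $r' = t_0 \xrightarrow{dec(z_1)} t_1 \xrightarrow{dec(z_2)} t_2 \cdots$ of $(\A^{G^{v_1}_p})^{\zug{s, i}}$ on $dec(w)$ that agrees with $r$ on acceptance — each encoded block is safe exactly when the corresponding $\A^{G^{v_1}_p}$-transition is safe, and each $\it{sleep}$-block matches trivially. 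Hence $r'$ is accepting and $dec(w) \in L((\A^{G^{v_1}_p})^{\zug{s, i}})$.

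For the reverse inclusion I would argue symmetrically: given $w$ that respects the encoding, contains no special letters, and satisfies $dec(w) \in L((\A^{G^{v_1}_p})^{\zug{s, i}})$, I would take an accepting run $r$ of $(\A^{G^{v_1}_p})^{\zug{s, i}}$ on $dec(w) = dec(z_1) \cdot dec(z_2) \cdots$ and lift it block by block using Lemma~\ref{cr lem 2}, inserting the (uniquely determined) chain of internal states inside each encoded block and keeping $\it{sleep}$-transitions unchanged. The resulting run $r'$ of $e(\A^{G^{v_1}_p})^{\zug{s, i}}$ on $w$ agrees with $r$ on the safe/rejecting status of every block, hence on acceptance, giving $w \in L(e(\A^{G^{v_1}_p})^{\zug{s, i}})$.

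I do not expect any genuine obstacle here, since Lemma~\ref{cr lem 2} already packages the per-letter correspondence (including the preservation of safety). The only point requiring a little care is the bookkeeping at block boundaries: one must note that Lemma~\ref{cr lem 2}(1) forces the run to be in $Q_{v_1, p}$ precisely after each completed encoded letter and after each $\it{sleep}$, so the factorization $r = t_0 \xrightarrow{z_1} t_1 \cdots$ is well-defined and the internal states never leak across block boundaries. The handling of $\it{sleep}$-blocks must be kept separate from that of encoded blocks, because $\it{sleep}$ is not encoded; but this case is immediate from the $\it{sleep}$-agreement between $\A^{G^{v_1}_p}$ and $e(\A^{G^{v_1}_p})$.
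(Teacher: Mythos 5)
Your proposal is correct and follows essentially the same route as the paper: the paper's own proof of this proposition simply states that the argument for Proposition~\ref{B sharp-free prop} goes through verbatim after replacing Lemma~\ref{cr lem} by Lemma~\ref{cr lem 2} and $Q$ by $Q_{v_1, p}$, relying (as you do) on the facts that runs from $Q_{v_1,p}$ avoid $q_{rej}$ exactly on encoding-respecting words and that $\it{sleep}$-transitions are inherited rather than encoded. Your additional bookkeeping about block boundaries is a harmless elaboration of the same argument, not a different approach.
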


\begin{proof}
The proof is very similar to that of Proposition~\ref{B sharp-free prop} and is immediate from  the definitions and the fact that $\A^{G^{v_1}_p}$ and $e(\A^{G^{v_1}_p})$ agree on $\it sleep$ transitions. Specifically, the latter proof applies also here if we replace Lemma~\ref{cr lem} with Lemma~\ref{cr lem 2}, and replace $Q$ with $Q_{v_1, p}$. 
\end{proof}

Recall that $Q$ denotes the state-space of the HD-tNCW $\A^{G^{v_1}}$, and $Q_{v_1, p}$ denotes the state-space of the tDCW $\A^{G^{v_1}_p}$.
We show next that  $L(e(\A^{G^{v_1}})) = L(e(\A^{G^{v_1}_p}))$. 
We show first that $L(e(\A^{G^{v_1}_p})) \subseteq L(e(\A^{G^{v_1}}))$.
    Consider a word $w\in L(e(\A^{G^{v_1}_p}))$. As there is an accepting run of $e(\A^{G^{v_1}_p})$ on $w$,   we have that $w$ respects the encoding. In particular, we can write $w = z_1\cdot z_2 \cdot z_3 \cdots$, where  for all $j\geq 1$, it holds that $z_j \in ((0+1)^{\leq k-1} \cdot \Sigma_\$ + (0+1)^k + \it{sleep})$.
    Let $r = \zug{r_0, i_0} \xrightarrow{z_1} \zug{r_1, i_1} \xrightarrow{z_2} \zug{r_2, i_2} \cdots$ be an accepting run of $e(\A^{G^{v_1}_p})$ on $w = z_1\cdot z_2 \cdot z_3 \cdots $. 
    By Lemma~\ref{cr lem 2}, the fact that $\A^{G^{v_1}_p}$ and $e(\A^{G^{v_1}_p})$ agree on $\it{sleep}$-transitions, and the fact that the initial state of $e(\A^{G^{v_1}_p})$ is in $Q_{v_1, p}$, we get that for all $j\geq 1$, it holds that $\zug{r_{j-1}, i_{j-1}}$ is a state in $Q_{v_1, p}$; in particular, $r_{j-1} $ is a state in $Q$.
    As $r$ is accepting, there is some $t\geq 0$ such that $r[t, \infty] = \zug{r_t, i_t} \xrightarrow{z_{t+1}} \zug{r_{t+1}, i_{t+1}} \xrightarrow{z_{t+2}} \zug{r_{t+2}, i_{t+2}} \cdots$ is a safe run on the suffix $z_{t+1}\cdot z_{t+2}\cdot z_{t+3}\cdots$ of $w$; in particular, $r[t, \infty]$ is stuck in some safe component $S \times \{ i\}$ of $e(\A^{G^{v_1}_p})$, and so $\{i \}= \{i_t, i_{t+1},i_{t+2}, \ldots \} $. Note that by definition, by projecting the run $r[t, \infty]$ on the left coordinate, we get a run of $r_t$ that is stuck in the safe component $S\in \SC(e(\A^{G^{v_1}}))$ on $z_{t+1}\cdot z_{t+2}\cdot z_{t+3}\cdots$. 
    Now consider an arbitrary run $r' = r'_0 \xrightarrow{z_1} r'_1 \xrightarrow{z_2} r'_2 \cdots \xrightarrow{z_t} r'_t$ of $e(\A^{G^{v_1}})$ on the prefix $z_1\cdot z_2\cdots z_t$ of $w$. We claim that $r'$ can be extended to an accepting run of $e(\A^{G^{v_1}})$ on $w$, and thus $w\in L(e(\A^{G^{v_1}}))$.
    Indeed, consider an arbitrary extension $r'_w$ of the finite run $r'$ to an infinite run on $w$: $r'_w = r'_0 \xrightarrow{z_1} r'_1 \xrightarrow{z_2} r'_2 \cdots \xrightarrow{z_t} r'_t \xrightarrow{z_{t+1}} r'_{t+1} \xrightarrow{z_{t+2}} r'_{t+2} \cdots$. 
    Note that by Lemma~\ref{cr lem} and the fact $\A^{G^{v_1}}$ and its encoding agree on $\it{sleep}$-transitions, we have that $r'_{j-1}$ is a state in $Q$ for all $j\geq 1$.
    Now if $r'_w$ is accepting, then we're done. Otherwise, let $j\geq t$ be such that $r'_j \xrightarrow{z_{j+1}} r'_{j+1}$ traverses a rejecting transition of $e(\A^{G^{v_1}})$. By the definition of $e(\A^{G^{v_1}})$, the run $r'_w$ can be modified to move to any state in $Q$ after reading the infix $z_{j+1}$ from the state $r'_j$. Indeed, while reading the infix  $z_{j+1}$ from $r'_j$, the encoding $e(\A^{G^{v_1}})$ traverses a rejecting transition upon reading the last letter of $z_{j+1}$,  and by definition, we can modify the run by proceeding nondeterministically to any  state in $Q$ upon reading the last letter of $z_{j+1}$.  In particular, the run $r'_w$ can be modified to move to the state $r_{j+1} \in S\cap Q$  instead of moving to the state $r'_{j+1}$. 
    From this point, the run $r'_w$ can stay in the safe component $S$ while reading the suffix $z_{j+2}\cdot z_{j+3} \cdots$ by simply following the projection of the safe run  $r[j+1, \infty ] = \zug{r_{j+1}, i_{j+1}} \xrightarrow{z_{j+2}} \zug{r_{j+2}, i_{j+2}} \xrightarrow{z_{j+3}} \zug{r_{j+3}, i_{j+3}} \cdots$ on the left coordinate. 
    In particular, $r'_w$ is accepting.

    We show next that $L(e(\A^{G^{v_1}}))\subseteq L(e(\A^{G^{v_1}_p}))$. Consider a word $w\in L(e(\A^{G^{v_1}}))$,  note that $w$ respects the encoding, and
    let $r' = r'_0 \xrightarrow{z_1} r'_1 \xrightarrow{z_2} r'_2, \ldots$ be an accepting run of $e(\A^{G^{v_1}})$ on $w$, where for all $j\geq 1$, it holds that $z_j \in ((0+1)^{\leq k-1} \cdot \Sigma_\$ + (0+1)^k + \it{sleep})$, and $r'_{j-1}$ is a state in $Q$.
    As $r'$ is accepting, there is some $t\geq 0$ such that $r'[t, \infty] = r'_t \xrightarrow{z_{t+1}} r'_{t+1} \xrightarrow{z_{t+2}} r'_{t+2}, \ldots$ is a safe run on the suffix $z_{t+1}\cdot z_{t+2}\cdots$ of $w$.
    Consider the run $r = r_0 \xrightarrow{z_1} r_1 \xrightarrow{z_2} r_2, \ldots $ of the tDCW $e(\A^{G^{v_1}_p})$ on $w$, and note that $r_{j-1}\in Q_{v_1, p}$ for all $j\geq 1$. We need to show that $r$ is accepting. 
    We distinguish between two cases. The first case is when the suffix $z_{t+1}\cdot z_{t+2}\cdots$ of $w$ is such that there is $j\geq t+1$ with  $z_j \in (0+1)^{\leq k} \cdot \$_{s_x}$, where $s\in Q$ and $x$ is a binary word of length at most $k-1$. 
   By the definition of $e(\A^{G^{v_1}_p})$, when the run $r$ reads the infix $z_j$, it moves a state of the form $r_j = \zug{s, i}$. Thus, $r_j$ is a copy of the state $s\in Q$. %
   Now as $r'[t, \infty]$ is safe, it follows by the definition of safe transitions that are labeled with special letters in $e(\A^{G^{v_1}})$ that $r'_j = s$. Hence,
   as the safe component $\SC(\zug{s, i})$ is a copy of the safe component $\SC(s)$, and as $r'_j = s$ and $r_j = \zug{s, i}$, it follows that from this point, 
   the run $r$ stays in the safe component $\SC(\zug{s, i})$, and thus is accepting.
   We proceed to the other case, namely the case where the suffix $z[t+1, \infty] = z_{t+1}\cdot z_{t+2}\cdots$ of $w$ has no special letters from $\Sigma_\$$. 
   Recall that $r'_t$ and $r_t$ are  states in $Q$ and $Q_{v_1, p}$, respectively. As the suffix $z[t+1, \infty]$ has no special letters,  $z[t+1, \infty ]\in L(e(\A^{G^{v_1}})^{r'_t})$, and as all states in $\A^{G^{v_1}}$ and all the states in $\A^{G^{v_1}_p}$ are equivalent,  we get by  Propositions~\ref{tDCW is equiv pro}, \ref{B sharp-free prop}, and \ref{B sharp-free prop 2} that 
   
    $$z[t+1, \infty]\in L(e(\A^{G^{v_1}})^{r'_t}) \cap \{0, 1, \it{sleep}\}^\omega $$ 
    $$= 
    \{ w\in \{ 0, 1, \it{sleep}\}^\omega: \text{ $w$ respects the encoding and $dec(w) \in L((\A^{G^{v_1}})^{r'_t})$}\}
    $$ 
    $$ =
    \{ w\in \{ 0, 1, \it{sleep}\}^\omega: \text{ $w$ respects the encoding and $dec(w) \in L((\A^{G^{v_1}_p})^{r_t})$}\} 
    $$ 
    $$=  L(e(\A^{G^{v_1}_p})^{r_t}) \cap \{0, 1, \it{sleep}\}^\omega$$

    In particular, the run of $e(\A^{G^{v_1}_p})^{r_t}$ on the suffix $z[t+1, \infty] = z_{t+1}\cdot z_{t+2} \cdots$ is accepting, and we're done.

\end{document}